\newcommand{\mathsym}[1]{{}}
\newcommand{\unicode}[1]{{}}
\theoremstyle{plain}
\newtheorem{theorem}{Theorem}
\newtheorem{proposition}[theorem]{Proposition}
\theoremstyle{definition}
\theoremstyle{remark}
\newtheorem{remark}[theorem]{Remark}
\newcommand{\R}{\mathbb R}
\newcommand{\C}{\mathbb C}
\numberwithin{equation}{section}
\numberwithin{theorem}{section}
\numberwithin{figure}{section}
\begin{document}


\title[]{Progress on the study of the Ginibre ensembles I: {G{\SMALL in}UE}}

\author{Sung-Soo Byun}
\address{Center for Mathematical Challenges, Korea Institute for Advanced Study, 85 Hoegiro, Dongdaemun-gu, Seoul 02455, Republic of Korea}
\email{sungsoobyun@kias.re.kr}

\author{Peter J. Forrester}
\address{School of Mathematics and Statistics, 
University of Melbourne, Victoria 3010, Australia}
\email{pjforr@unimelb.edu.au}

\date{}


\begin{abstract}
The Ginibre unitary ensemble (GinUE) consists of $N \times N$ random matrices with independent complex standard Gaussian entries. This was introduced in 1965 by Ginbre, who showed that the eigenvalues form a determinantal point process with an explicit correlation kernel, and after scaling they are supported on the unit disk with constant density. For some time now it has been appreciated that GinUE has a fundamental place within random matrix theory, both for its applications and for the richness of its theory. Here we review the progress on a number of themes relating to the study of GinUE. These are eigenvalue probability density functions and correlation functions, fluctuation formulas, sum rules and asymptotic behaviours of correlation functions, and normal matrix models. We discuss too applications in quantum many body physics and quantum chaos, and give an account of some statistical properties of the eigenvectors.
\end{abstract}


\maketitle

\tableofcontents

\section{Introduction}\label{S1}
In a fundamental paper on random matrix theory from the early 1960's, Dyson \cite{Dy62c} isolated three
ensembles of Hermitian matrices, and three ensembles of unitary matrices. This was done by seeking the 
minimal requirement of a quantum Hamiltonian $H$, respectively evolution operator $U$, to exhibit a time reversal
symmetry or not, and then imposing a probability measure. First, for a time reversal operator $T$ --- defined in general by the requirement that it be anti-unitary  --- it was first shown that there are two possibilities, either
$T^2=\mathbb I$ or $T^2= - \mathbb I$, with the latter requiring that the Hilbert space be even dimensional.
For $T$ to commute with $H$ or $U$ it was then shown that in
the case $T^2=\mathbb I$ both $H$ and $U$ should be invariant under the transpose operation. In the other possible
case, that $T^2= - \mathbb I$, an invariance under the
so-called quaternion dual $M \mapsto Z_{2N} M^T Z_{2N}^{-1}$ was deduced. Here $Z_{2N}$ is the $2N \times
2N$ anti-symmetric tridiagonal matrix with entries all $-1$ in the leading upper triangular diagonal, and all $1$ in
the leading lower triangular diagonal, and moreover in this case it was shown that $T$ has the realisation $T = Z_{2N} K$ where $K$
corresponds to complex conjugation and $2N$ is the dimension of the Hilbert space.
 In the case of a quantum Hamiltonian $H$, it was then shown that $[H,T]=0$
with $T^2=\mathbb I$ implies a basis can be chosen so that the elements are real, while with $T^2=-\mathbb I$ it implies that
the elements can be chosen to have a $2 \times 2$ block structure
\begin{equation}\label{1.1}
\begin{bmatrix} z & w \\ - \bar{w} & \bar{z} \end{bmatrix}.
\end{equation}
The $2 \times 2$ matrix (\ref{1.1}) can be identified with a member of the (real) quaternion number field. Hence for quantum
Hamiltonians, Dyson was lead to the requirement that matrices in his sought ensemble theory should have
real entries, or have a $2 \times 2$ block structure corresponding to the quaternion number field in the presence of time reversal
symmetry, or to be complex without a time reversal symmetry. This requirement is in addition to the matrices being Hermitian and
thus having real eigenvalues and a matrix of eigenvectors which can be chosen to be unitary.

The work  \cite{Dy62c} specifies the eigenvalue probability density function (PDF) for an ensemble of
quantum Hamiltonians $H$ modelled as random matrices, and chosen from a Gaussian distribution on
the elements proportional to 
\begin{equation}\label{1.1a}
\exp(-\beta \, {\rm Tr} \, H^2/2). 
\end{equation}
The scaling factor $\beta$ is chosen for
convenience, and takes on the value of the number of independent parts of the corresponding number
field --- thus $\beta = 1$ for real entries (time reversal symmetry with $T^2 = \mathbb I$), $\beta = 2$ for
complex entries (no time reversal symmetry), and $\beta = 4$ for quaternion entries 
(time reversal symmetry with $T^2 = - \mathbb I$). With this specification, $\beta$ is referred to as
the Dyson index. A detail is that Hermitian matrices commuting with the quaternion dual must have
doubly degenerate eigenvalues (Kramer's degeneracy), with the convention in (\ref{1.1a}) that the
trace operation relates to the independent eigenvalues only. The result of Dyson, known earlier
in the case $\beta = 1$ by Wigner (see the Introduction section of the book edited by Porter 
\cite{Po65} for references and moreover reprints of the original works) is that the 
eigenvalue PDF is given by  \cite[Eq.~(146)]{Dy62c} 
\begin{equation}\label{1.1b}
\prod_{l=1}^N e^{- \beta \lambda_l^2/2} \prod_{1 \le j < k \le N} | \lambda_k - \lambda_j |^\beta,
\end{equation}
up to proportionality. If instead of the distribution on elements being chosen as (\ref{1.1a}),
a weighting 
\begin{equation}\label{1.1c}
\exp ( - \beta \, {\rm Tr} \, V(H)/2 )
\end{equation}
 for some real valued function $V(\lambda)$ is chosen instead, the modification of (\ref{1.1a})
 is that it now reads
 \begin{equation}\label{1.1d}
\prod_{l=1}^N e^{- \beta V(\lambda_l)/2} \prod_{1 \le j < k \le N} | \lambda_k - \lambda_j |^\beta.
\end{equation}
Here it is being assumed that $ V(\lambda)$ decays sufficiently fast at infinity for (\ref{1.1d})
to be normalisable. A weighting of the form (\ref{1.1c}) is said to specify an invariant ensemble,
since it is invariant under conjugation by a unitary matrix $H \mapsto U H U^\dagger$
(and where too the elements of $U$ are restricted to be real ($\beta = 1$) and
quaternion ($\beta = 4$) so that $H$ remains in the same ensemble). A weighting of the form
(\ref{1.1a}) is said to specify a Gaussian ensemble.

In 1965 Ginibre \cite{Gi65}, motivated by mathematical curiosity \cite[\S 2.2, quoting correspondence with
Ginibre]{AK07}, initiated a study of non-Hermitian Gaussian ensembles with either real,
complex or quaternion entries. Replacing (\ref{1.1a}) is the joint distribution on elements of the corresponding
matrices, now to be denoted $G = [g_{ij}]_{i,j=1}^N$, proportional to
 \begin{equation}\label{1.1e}
\exp(-\beta \, {\rm Tr} \, G^\dagger  G  /2) =  \prod_{i,j=1}^N \exp(- \beta |g_{ij}|^2/2).
\end{equation} 
Note that the second form in this expression shows that the real and imaginary parts (there $\beta$ such parts; e.g.~in the quaternion case $\beta=4$ there is one real and three imaginary parts) are all
independent, identically distributed Gaussians. 
The concern of Ginibre was with the functional form of the eigenvalue PDF, and the implied eigenvalue statistics. The most obvious difference with the
Hermitian case is that the eigenvalues are now in general complex. Also significant is the fact that the
eigenvectors no longer form an orthonormal basis. Notwithstanding these differences, it was found
in the complex case (referred to as the complex Ginibre ensemble, or alternatively as GinUE, where
in the latter the U stands for unitary refers to the bi-unitary invariance of (\ref{1.1e}) being unchanged by the
mapping $G \mapsto UGV$ for $U,V$ unitary matrices) that the eigenvalue PDF is proportional to
 \begin{equation}\label{1.1f}
\prod_{l=1}^N e^{-  | z_l|^2 } \prod_{1 \le j < k \le N} | z_k - z_j |^2,
\end{equation} 
which is in direct correspondence with the Hermitian result (\ref{1.1b}) with $\beta = 2$, obtained
essentially by replacing $\lambda_j$ by $z_j$. 

The eigenvalue
PDF in the case of real and quaternion entries does not follow this correspondence. First, in both these cases the
eigenvalues come in complex conjugate pairs. Appreciating this point, the functional form of the
eigenvalue PDF as found by Ginibre in the quaternion case can be obtained from (\ref{1.1f})
(not  (\ref{1.1b}) with $\beta = 4$) by first replacing $N$ by $2N$, then identifying $z_{j+N}$ as
$\bar{z}_j$ and ignoring terms which involve only $\{ \bar{z}_j \}$ (which are thought of as part
of the image system \cite{Fo16}) to obtain 
 \begin{equation}\label{1.1g}
\prod_{l=1}^N e^{-  2| z_l|^2 } | z_l - \bar{z}_l |^2  \prod_{1 \le j < k \le N} | z_k - z_j |^2 | z_k - \bar{z}_j|^2, \quad {\rm Im} \, z_l > 0.
\end{equation} 
The real case is still more complicated. First, the eigenvalue PDF is not absolutely continuous. Rather, it decomposes
into sectors depending on the number of real eigenvalues. Its precise functional form was not obtained
in Ginibre's original work, with a further 25 years or so elapsing before this was achieved in a publication
by Lehmann and Sommers \cite{LS91}.

It would seem that the first occurrence of a Ginibre ensemble
in applications (specially the real Ginibre ensemble of GinOE, where here the ``O'' stands for orthogonal
and refers to the bi-orthogonal invariance of the matrices) arose in the 1972 work of May \cite{Ma72a} on the stability
of  complex ecological webs. Upon linearising about a fixed point, and the modelling of the fluctuations away
from an attractor by a real Ginibre matrix $G$, May was led to the first order linear differential equation system for the perturbed
populations --- an $n \times 1$ column vector $\mathbf x$ --- specified by
 \begin{equation}\label{1.1h}
 {d \over dt} \mathbf x = (- \mathbb I + \alpha G)  \mathbf x,
 \end{equation} 
 where $\alpha$ is a scalar parameter.
 The stability is then determined by the maximum of the real part of the spectrum of $G$, the precise determination of which has only recently become available in the literature \cite{Be10,CESX22}.

 At the beginning of the 1980's the interpretation of (\ref{1.1f}), written in the Boltzmann
 factor form
\begin{equation}\label{1.1i} 
e^{-\beta U(z_1,\dots, z_N)}, \qquad U = {1 \over 2} \sum_{j=1}^N |z_j|^2 - \sum_{1 \le j < k \le N} \log | z_k - z_j |, \quad  \beta = 2, 
\end{equation} 
as a model of charged particles, repelling pairwise via a logarithmic potential, and attracted to the origin
in the plane via a harmonic potential, gained attention \cite{AJ81,Ca81}. This viewpoint was
already prominent in the works of Dyson in the context of (\ref{1.1b}) (see too the even earlier work of
Wigner \cite{Wi57a} as reprinted in \cite{Po65}), and was noted for (\ref{1.1g}) in Ginibre's original
article \cite{Gi65}.
In contrast to (\ref{1.1b}), in (\ref{1.1i}) the domain is two-dimensional, and the
logarithmic potential is the solution of the corresponding Poisson equation (in one spacial dimension,
the solution of the  Poisson equation is proportional to $|x|$), so (\ref{1.1i}) corresponds to
a type of Coulomb gas. Broad aspects of the latter have been the subject of the recent reviews
\cite{Le22}, \cite{Ch21a}.  Also in the early 1980's, for all positive
values of $\beta/2$ odd,
(\ref{1.1i}) gained attention as the absolute value squared of Laughlin's trial wave function for the
fractional quantum Hall effect \cite{La83}.

Fast forward 40 years, and there are now a multitude of applications which require knowledge of
properties of Ginibre matrices, in particular their eigenvalues and eigenvectors. 
This is due in no small part due to a resurgence of interest in non-Hermitian quantum mechanics
\cite{AGU20}.
Moreover, the progression of time as seen a much deeper understanding of the mathematical
structures associated with the Ginibre matrices, and the theoretical progress has been considerable.
It is the purpose of this article to review a number of these advances, both in the theory and the
applications. Due to space considerations, attention will be focused here on the complex
case of the GinUE --- a subsequent review article is planned relating specifically too GinOE and GinSE.
To the era up to the year 2010, accounts of the progress with emphasis
similar to the present article can be found in \cite[Ch.~15, with proofs]{Fo10}, \cite[some proofs sketched]{KS11}, with the latter overlapping mainly with \S \ref{S2}.
To make the presentation self contained, this material is also part of the present review, albeit with
some reordering and additional context. And when practical from the viewpoint of the space required, proofs of a number of the results are presented.

There are four main themes to the review. These form sections two through to five: eigenvalue PDFs and correlation functions, fluctuation formulas, sum rules and asymptotic behaviours, and normal matrix models. There is also a sixth section entitled further theory and applications. Here the topics considered are the analogy between GinUE and the quantum many body system for free Fermions in the plane subject to a perpendicular magnetic field, the relevance of GinUE statistics to studies in quantum chaos, and statistical properties of the eigenvectors of GinUE matrices.

\section{Eigenvalue PDFs and correlations}\label{S2}
\subsection{Eigenvalue PDF}
In the original paper of Ginibre \cite{Gi65}, the diagonalisation formula $G = V \Lambda V^{-1}$,
where $\Lambda$ is the diagonal matrix of eigenvalues, and $V$ is the matrix of corresponding 
eigenvectors which are unique up to normalisation, was used as the starting point to derive
the eigenvalue PDF (\ref{1.1f}).  The matrix $V$ was then further decomposed $V = U T D$, where
$U$ is unitary, $T$ is upper triangular with all diagonal elements equal to $1$, and
$D$ is a diagonal matrix with real positive elements. As a consequence
\begin{equation}\label{2.1} 
{\rm Tr} \, G^\dagger G = {\rm Tr} \, \bar{\Lambda} B \Lambda B^{-1}, \qquad B = T^\dagger T
\end{equation} 
This is independent of $U$ and $D$, and $B^{-1}$ is a simpler structure than $V^{-1}$, since it
has determinant unity. It is necessary to integrate out the variables of $B$, which was done in $N$
steps with each one consisting of integrating out over the last remaining row and column.

A more versatile (equally applicable to the GinOE, for example) method of derivation of (\ref{1.1f}) has
since been found. It is due to Dyson, and first appeared in published form in
\cite[Appendix 35]{Me91}. Here, instead of using the diagonalisation formula for $G$, the starting
point is the Schur decomposition
\begin{equation}\label{GS}
G = U Z U^\dagger.
\end{equation} 
Here $U$ is a unitary matrix, unique up to the phase of each column, and $Z$ is an
upper triangular matrix with elements on the diagonal equal to the eigenvalues of $G$.

\begin{proposition}\label{P1.1}
For GinUE matrices, specified by the distribution on elements proportional to (\ref{1.1e}) in the case
$\beta = 2$ (complex elements), the eigenvalue PDF is equal to $1/C_{N}$ times (\ref{1.1f}), where upon relaxing the ordering constraint on the eigenvalues implied by (\ref{GS}),
the normalisation constant $C_N$ is specified by
\begin{equation}\label{2.1a} 
C_N =  \pi^N \prod_{j=1}^N j!.
\end{equation} 
\end{proposition}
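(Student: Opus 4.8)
\emph{Proof proposal.} The plan is to follow Dyson's Schur-decomposition route, carrying every constant through explicitly. First I would transport the flat measure $dG = \prod_{i,j} d\,{\rm Re}\,g_{ij}\, d\,{\rm Im}\,g_{ij}$ on $M_N(\C)$ through the change of variables $G = UZU^\dagger$ of (\ref{GS}); this is a genuine, locally bijective, smooth change of coordinates away from the measure-zero locus where $G$ has a repeated eigenvalue. Because the Schur form is unique only up to $U \mapsto U\,{\rm diag}(e^{i\phi_1},\dots,e^{i\phi_N})$, the true variables are a point of the flag manifold ${\rm U}(N)/{\rm U}(1)^N$ together with the upper-triangular matrix $Z$, which I split into its diagonal entries $z_{11},\dots,z_{NN}$ (the eigenvalues of $G$) and its strictly upper-triangular entries $\{z_{ij}\}_{i<j}$. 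Differentiating and using the ${\rm U}(N)$-invariance of $dG$ gives $U^\dagger(\delta G)U = \delta Z + [\,dH,Z\,]$ with $dH = U^\dagger\,\delta U$ skew-Hermitian; the diagonal of $dH$ drops out (it is exactly the phase gauge), the upper-triangular part of $[\,dH,Z\,]$ merges with the free variable $\delta Z$ with unit Jacobian, and the strictly lower-triangular entries of $\delta G$ depend triangularly on the strictly lower-triangular entries of $dH$ with the $(i,j)$ diagonal coefficient equal to $z_{jj}-z_{ii}$. Collecting these Jacobians yields
\begin{equation}\label{eq:schurmeasure}
dG \;=\; c_N\,\prod_{1\le j<k\le N}|z_k - z_j|^2\;\Big(\prod_{l=1}^N d^2 z_{ll}\Big)\Big(\prod_{1\le i<j\le N}d^2 z_{ij}\Big)\,d\mu(U),
\end{equation}
where $d^2 z = d\,{\rm Re}\,z\, d\,{\rm Im}\,z$, $d\mu$ is the induced (finite) measure on ${\rm U}(N)/{\rm U}(1)^N$, and $c_N$ is a positive constant depending only on $N$. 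This Jacobian computation --- in particular the flag-manifold bookkeeping and the verification that the lower-triangular block contributes exactly $\prod_{j<k}|z_k-z_j|^2$ --- is the one genuinely delicate point, and I would cite \cite[Appendix 35]{Me91} or \cite[Ch.~15]{Fo10} for the details.

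The rest is routine. By unitary invariance of the trace, ${\rm Tr}\,G^\dagger G = {\rm Tr}\,Z^\dagger Z = \sum_{l=1}^N |z_{ll}|^2 + \sum_{i<j}|z_{ij}|^2$, so multiplying (\ref{eq:schurmeasure}) by the GinUE weight $\exp(-{\rm Tr}\,G^\dagger G)$, which is (\ref{1.1e}) at $\beta=2$, makes the joint law a product over the three groups of variables. Integrating out the flag variable gives the constant $\int d\mu(U)$, and each strictly upper-triangular entry gives $\int_\C e^{-|z_{ij}|^2}\, d^2 z_{ij} = \pi$; all of these factors are independent of the eigenvalues. Hence the eigenvalue density is proportional to $\prod_l e^{-|z_l|^2}\prod_{j<k}|z_k-z_j|^2$, i.e.\ to (\ref{1.1f}). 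Relaxing the ordering constraint implied by (\ref{GS}) --- equivalently, taking the density to be the symmetric function of $(z_1,\dots,z_N)\in\C^N$ with total integral one --- forces the constant to be
\begin{equation}\label{eq:CNint}
C_N = \int_{\C^N}\prod_{l=1}^N e^{-|z_l|^2}\prod_{1\le j<k\le N}|z_k - z_j|^2 \, d^2 z_1\cdots d^2 z_N .
\end{equation}

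Finally I would evaluate (\ref{eq:CNint}). Writing the squared Vandermonde as $\prod_{j<k}|z_k-z_j|^2 = \det[z_j^{k-1}]_{j,k=1}^N\,\det[\bar z_j^{k-1}]_{j,k=1}^N$ and applying the Andr\'eief (Gram) identity reduces the $N$-fold integral to a single determinant, $C_N = N!\,\det\!\left[\int_\C z^{k-1}\bar z^{\,l-1} e^{-|z|^2}\, d^2 z\right]_{k,l=1}^N$. In polar coordinates the angular integral kills the off-diagonal entries and leaves $\int_\C z^{k-1}\bar z^{\,l-1} e^{-|z|^2}\, d^2 z = \pi\,(k-1)!\,\delta_{kl}$, so the determinant is diagonal and
\begin{equation}\label{eq:CNfinal}
C_N = N!\prod_{k=1}^N \pi\,(k-1)! = \pi^N\, N!\prod_{k=0}^{N-1} k! = \pi^N\prod_{j=1}^N j!,
\end{equation}
the last step using $0!=1$. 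This is (\ref{2.1a}); as a byproduct it also pins down $c_N$ in (\ref{eq:schurmeasure}) in terms of the volume of the flag manifold, though that is not needed here. To summarise the difficulty: everything in the last two paragraphs --- the Gaussian integrals and the Andr\'eief reduction --- is standard, and the only real work is the Schur-decomposition Jacobian of the first paragraph.
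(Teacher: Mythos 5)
Your proof is correct and follows the same structural backbone as the paper's (Schur decomposition, Jacobian $\prod_{j<k}|z_j-z_k|^2$ plus a flag-manifold factor, factorisation of ${\rm Tr}\,G^\dagger G$, integration over the strictly upper-triangular entries), so the first paragraph is a faithful rendering of the route via Mehta's Appendix 35. Where you diverge is in how the normalisation constant $C_N$ is pinned down. The paper carries the normalised element density $\pi^{-N^2}e^{-{\rm Tr}\,G^\dagger G}$ through the entire change of variables and uses the explicit volume formula (\ref{2.1d}) for ${\rm U}(N)/{\rm U}(1)^N$ to determine the leftover constant (along with a factor $N!$ for relaxing the ordering). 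You instead deliberately discard the constant $c_N$ produced by the Jacobian, observe that the eigenvalue PDF is \emph{proportional} to (\ref{1.1f}), and then fix the constant by directly integrating that PDF via the Vandermonde-determinant / Andr\'eief (Gram) argument with the Gaussian weight $e^{-|z|^2}$. That computation is correct ($N!\prod_{k=1}^{N}\pi(k-1)! = \pi^N\prod_{j=1}^N j!$), and it is in fact the same calculation the paper performs later in the proof of Proposition \ref{P2.18} (see (\ref{2.85a}) specialised to $w(s)=e^{-s}$). Your route is more self-contained in that it avoids having to know (or derive) the flag-manifold volume; the trade-off is that, as you note, your argument never determines $c_N$ itself, which the paper's bookkeeping does deliver and which is occasionally useful elsewhere.
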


\begin{proof}
We have from (\ref{GS}) that 
\begin{equation}\label{2.1b} 
{\rm Tr} \, G^\dagger G = \sum_{j=1}^N | z_j |^2 + \sum_{1\le j < k \le N} |Z_{jk}|^2,
\end{equation} 
where here $\{z_j \}$ denotes the diagonal elements of $Z$ (which are the eigenvalues of
$G$), and $\{Z_{jk}\}$ denotes the upper triangular elements. Note the simplification relative
to (\ref{2.1}).

After this brisk start, there is still quite a challenge to compute the Jacobian corresponding to
(\ref{GS}). The strategy of \cite[Appendix 35]{Me91} is explained in more detail in
\cite{HKPV08}, and repeated in \cite[Proof  of Proposition 15.1.1]{Fo10}. Here one begins
by computing the matrix of differentials $U^\dagger dG \, U$, with the Jacobian corresponding
to the (absolute value of) factor which results from the corresponding wedge product.
For the latter task, proceeding in the order of the indices $(j,k)$, with $j$ decreasing from $N$
to $1$, and $k$ increasing from $1$ to $N$, gives  the factor 
\begin{equation}\label{2.1cZ} 
\prod_{j<k} | z_j - z_k |^2. 
\end{equation} 
However
the product of differentials so obtained is not immediately recognisable in the factorised form
\begin{equation}\label{2.1c} 
\wedge_j d z_j^{\rm r} dz_j^{\rm i} \wedge (U^\dagger d U) \wedge_{j<k} d Z_{jk}^{\rm r} d Z_{jk}^{\rm i},
\end{equation} 
where the superscripts indicate the real and imaginary part. Further arguing involving a count of the
number of independent real variables associated with $U$, which implies some apparent differentials
contribute zero to the wedge product,  is required to make the simplification to this
form. With (\ref{2.1c}) established, the integration over $\{Z_{jk}\}$ in (\ref{2.1b}) is immediate.

To deduce the normalisation (\ref{2.1a}) from this calculation requires first that the normalisation
of (\ref{2.1b}) be included throughout the calculation, and second knowledge of the integration
formula (see e.g.~\cite[Eq,~(4.4)]{DF17})
\begin{equation}\label{2.1d} 
\int_U (U^\dagger dU) = {\rm vol} \, \Big ( U(N)/(U(1))^N \Big )  = 2^{N (N - 1)/2} \prod_{l=1}^{N - 1}
{\pi^l \over \Gamma(l+1)}.
\end{equation}
Finally, an extra factor of $N!$ is required in $C_N$ to account for relaxing an ordering of the eigenvalues.
\end{proof}

\subsection{Correlation functions}
With the joint eigenvalue PDF denoted $p_N(z_1,\dots,z_N)$, the $k$-point correlation function
$\rho_{(k),N}(z_1,\dots,z_k)$ is specified by
\begin{equation}\label{2.1e} 
\rho_{(k),N}(z_1,\dots,z_k) = N (N-1) \cdots (N - k + 1) \int_{\mathbb C} d^2 z_{k+1} \cdots  \int_{\mathbb C} d^2 z_{N} \,
p_N(z_1,\dots,z_N),
\end{equation}
where, with $z:= x + i y$, $d^2 z := dx dy$.
In the simplest case $k=1$ this corresponds to the eigenvalue density.
Ginibre \cite{Gi65} showed that
\begin{equation}\label{2.1f} 
\rho_{(k),N}(z_1,\dots,z_k) =  \det \Big [ K_N(z_j, z_l) \Big ]_{j,l=1}^k,
\end{equation}
for a particular function $K_N(w,z)$, referred to as the correlation kernel. The
structure (\ref{2.1f}) makes the eigenvalues of GinUE an example of a determinantal
point process \cite{Bo11}.

\begin{proposition}\label{P2.2}
The kernel function in (\ref{2.1f}) is specified by
\begin{equation}\label{2.1g} 
K_N(w,z) = {1 \over \pi}  e^{- ( |w|^2 + |z|^2)/2}  \sum_{j=1}^N  {  (w \bar{z})^{j-1} \over (j-1)! } =
{1 \over \pi}  e^{- ( |w|^2 + |z|^2)/2} e^{w \bar{z}} {\Gamma(N; w \bar{z}) \over \Gamma(N)},
\end{equation}
where $\Gamma(j;x) = \int_x^\infty t^{j-1} e^{-t} \, dt $ denotes the (upper) incomplete gamma function.
\end{proposition}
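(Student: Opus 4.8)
The plan is to combine the explicit eigenvalue PDF of Proposition~\ref{P1.1} with the classical orthogonal polynomial method, which in the planar rotationally invariant setting reduces to an ``orthogonal monomial'' computation. The first step is purely algebraic: writing the Vandermonde product as a determinant, $\prod_{1\le j<k\le N}(z_k-z_j)=\det[z_j^{m-1}]_{j,m=1}^N$, and using the Cauchy--Binet (Gram) identity, one gets
\[
\prod_{1\le j<k\le N}|z_k-z_j|^2\,\prod_{l=1}^N e^{-|z_l|^2}
= \big|\det[\psi_{m-1}(z_j)]_{j,m=1}^N\big|^2,
\qquad \psi_m(z):=z^m e^{-|z|^2/2}.
\]
Hence $p_N=(1/C_N)\det[\psi_{m-1}(z_j)]\,\overline{\det[\psi_{m-1}(z_l)]}$, which is the standard form for a determinantal ensemble built from the family $\{\psi_m\}_{m=0}^{N-1}$ against Lebesgue measure $d^2z$ on $\C$.

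The one piece of genuine analysis is the orthogonality relation
\[
\int_{\C} z^a\bar z^b\, e^{-|z|^2}\, d^2z = \pi\, a!\,\delta_{a,b},
\]
which I would obtain immediately in polar coordinates $z=re^{i\theta}$ from $\int_0^{2\pi}e^{i(a-b)\theta}\,d\theta=2\pi\delta_{a,b}$ and $\int_0^\infty r^{2a+1}e^{-r^2}\,dr=\tfrac12 a!$. Thus the $\psi_m$ are orthogonal in $L^2(\C,d^2z)$ with squared norms $h_m=\pi\,m!$, and one notes $\prod_{m=0}^{N-1}h_m=\pi^N\prod_{m=0}^{N-1}m!$, so that $N!\prod_{m=0}^{N-1}h_m$ reproduces exactly $C_N=\pi^N\prod_{j=1}^N j!$ of \eqref{2.1a} (the extra $N!$ being the one already flagged there).

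With these inputs in hand, the remaining step is the general integration lemma for determinantal ensembles: introducing
\[
K_N(w,z):=\sum_{m=0}^{N-1}\frac{\psi_m(w)\,\overline{\psi_m(z)}}{h_m}
=\frac1\pi\,e^{-(|w|^2+|z|^2)/2}\sum_{m=0}^{N-1}\frac{(w\bar z)^m}{m!},
\]
the orthogonality relation gives at once the reproducing property $\int_{\C}K_N(w,u)K_N(u,z)\,d^2u=K_N(w,z)$ and $\int_{\C}K_N(u,u)\,d^2u=N$. Iterating the Dyson--Andréief identity (integrating out one of $z_{k+1},\dots,z_N$ at a time) then collapses the $N$-fold integral in \eqref{2.1e} to $\rho_{(k),N}(z_1,\dots,z_k)=\det[K_N(z_i,z_j)]_{i,j=1}^k$, with the normalisation absorbed exactly as computed above; this yields the first equality in \eqref{2.1g}. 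Finally the second equality is elementary: from $\Gamma(N;x)=(N-1)!\,e^{-x}\sum_{m=0}^{N-1}x^m/m!$ one has $\sum_{m=0}^{N-1}x^m/m!=e^{x}\,\Gamma(N;x)/\Gamma(N)$, applied with $x=w\bar z$.

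The main obstacle is the third step — the clean passage from the product of two $N\times N$ determinants integrated against $d^2z_{k+1}\cdots d^2z_N$ to the $k\times k$ determinant $\det[K_N(z_i,z_j)]$. This is where all the bookkeeping lives: the Andréief expansion, the cancellation of the combinatorial $N!$, and the emergence of precisely $N-k$ factors of $\int K_N(u,u)\,d^2u=N$ as the variables are removed one by one. Everything else is either one-line algebra (Vandermonde, incomplete gamma) or the single Gaussian integral above; one could alternatively cite the general reduction for (bi)orthogonal ensembles and merely verify the hypotheses, but writing it out here keeps the account self-contained.
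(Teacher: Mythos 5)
Your proposal is correct and follows essentially the same route as the paper: Vandermonde determinant representation, orthogonality of the monomials $z^m$ against the Gaussian weight via polar coordinates, identification of the reproducing kernel, and iterated one-variable integration to reduce the $N$-fold integral to the $k\times k$ determinant. One small nomenclature point: what you call the ``Dyson--Andréief identity'' for integrating out one variable at a time is in fact Dyson's integration lemma (the paper's (\ref{2.1k}), \cite{Dy70}); Andréief's identity is the all-at-once integration that computes the normalisation as a determinant of inner products, used implicitly in your matching of $N!\prod h_m$ with $C_N$.
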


\begin{proof}
To deduce the determinantal structure (\ref{2.1f}) with $K_N$ specified by the first equality in
(\ref{2.1g})
(the second equality follows from the first by the identity
$
{\Gamma(N;x) \over \Gamma(N)}  = e^{-x} \sum_{j=1}^N {x^{j-1} \over (j-1)!}
$), the first step is to rewrite the product in 
(\ref{1.1e}) according to
\begin{equation}\label{2.1hm}
\prod_{1 \le j < k \le N} | z_k - z_j |^2 = \prod_{1 \le j < k \le N}  (z_k - z_j ) ( \overline{z}_k -  \overline{z}_j),
\end{equation}
then to rewrite each of the product of differences on the RHS as a Vandermonde determinant,
\begin{equation}\label{2.1h} 
 \prod_{1 \le j < k \le N}  (z_k - z_j ) = \det  [ z_j^{k-1} ]_{j,k=1}^N;
\end{equation} 
see e.g.~\cite[Exercises 1.9 Q.1]{Fo10} for a derivation. Multiplying (\ref{2.1h}) by its conjugate
and taking the transpose of the matrix on the RHS (which leaves the determinant unchanged)
shows
\begin{equation}\label{2.1i} 
{1 \over C_N} \prod_{l=1}^N e^{- | z_l |^2} \prod_{1 \le j < k \le N} | z_k - z_j |^2 =  \det \Big [ K_N(z_j, z_k) \Big ]_{j,k=1}^N.
\end{equation} 

The significance of the form (\ref{2.1i}) for purposes of computing the
 integrations as required by (\ref{2.1e}) are the reproducing and normalisation properties of $K_N$,
 \begin{equation}\label{2.1j} 
 \int_{\mathbb C} K_N(w_1,z)  K_N(z, w_2) \, d^2 z = K_N(w_1, w_2), \qquad
 \int_{\mathbb C} K_N(z,z)  \, d^2 z =  N. 
 \end{equation}
 Using these properties, a cofactor expansion along the bottom row can be used to show \cite{Dy70}
  \begin{equation}\label{2.1k} 
  \int_{\mathbb C} \det [ K_N(z_j, z_k) ]_{j,k=1}^m \, d^2 z_m = (-(m-1)+N)  \det [ K_N(z_j, z_k) ]_{j,k=1}^{m-1};
 \end{equation} 
 see also \cite[Proof of Proposition 5.1.2]{Fo10}. Applying this inductively gives (\ref{2.1f}).
\end{proof}

According to (\ref{2.1f}) with $k=1$ and (\ref{2.1g}), the eigenvalue density is given by the rotationally
invariant functional form
 \begin{equation}\label{2.1l} 
\rho_{(1),N}(z) =  {1 \over \pi}  {\Gamma(N; |z|^2 ) \over \Gamma(N)}.
 \end{equation}
Ginibre \cite{Gi65} identified a sharp transition for $|z| \approx \sqrt{N}$ from the constant value
${1 \over \pi}$ for $|z|$ less than this critical value, to a value approaching zero for   $|z|$ greater
than this critical value. An equivalent statement is the limit law
 \begin{equation}\label{2.2a}
 \lim_{N \to \infty} \rho_{(1),N}(\sqrt{N}z) = \begin{cases} \displaystyle {1 \over \pi}, & |z| <1, \\
 0, & |z|>1. \end{cases}
 \end{equation} 
 This was the first example of what now is termed the circular law, which specifies the 
(global) scaled limiting eigenvalue density for a wide class of non-Hermitian random matrices,
with identically and independently distributed elements to be constant inside a particular circle
in the complex plane, and zero outside \cite{BC12}.

\begin{remark} $ $ \\
1.~In the sense of probability theory, (\ref{2.2a}) is a statement in the mean, due to the ensemble average. The strong version of the circular law establishes that for large $N$ the eigenvalues of a single GinUE matrix obey the circular law  almost surely \cite{BC12}. \\
2.~Define the numerical range of an $N \times N$ matrix $X$ by $W(X) = \{ \overline{X \mathbf u} \cdot \mathbf u \: | \: ||\mathbf u || = 1 \}$. 
By the variational characterisation of eigenvalues, for $X$ Hermitian $W(X)$ must be contained in the interval of the real line $[\lambda_{\rm min}, \lambda_{\rm max}]$, and in fact is equal to this interval. It is proved in \cite{LCGV14} that for $X$ a global scaled Ginibre matrix, and thus with eigenvalue density obeying the circular law (\ref{2.2a}), $W(X)$ converges to the centred disk in the complex plane of radius $\sqrt{2}$.
\end{remark}

The correlation kernel (\ref{2.1g}), and thus the correlation functions (\ref{2.1f}), admit distinct scaling
limits depending on the centring of the variables being in the bulk region, where the density is constant,
or the edge region, where the density begins to decrease to zero. The first was specified in Ginibre's
original paper \cite{Gi65}, whereas the latter was not made explicit until some time later \cite{FH98}.

\begin{proposition}\label{P2.3}
Let $K_N(w,z)$ be specified by (\ref{2.1g}). We have
 \begin{align}
K_\infty^{\rm b}(w,z) & := \lim_{N \to \infty} K_N(w,z) = {1 \over \pi}  e^{- ( |w|^2 + |z|^2)/2} e^{w \bar{z}}, \label{2.2b}\\
K_\infty^{\rm e}(z_1, z_2) & := \lim_{N \to \infty} K_N(-i \sqrt{N} + z_1, - i \sqrt{N} + z_2 )   = 
e^{ - (|z_1|^2 +|z_2|^2)/2} e^{z_1 \bar{z}_2} h\Big ( {1 \over 2}(-i z_1 + i \bar{z}_2) \Big ),  \label{2.2c}
\end{align}
where $
h(z) = {1 \over 2 \pi} \Big ( 1 + {\rm erf}(\sqrt{2} z) \Big )$.
\end{proposition}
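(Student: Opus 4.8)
The plan is to read both limits off directly from the closed forms in (\ref{2.1g}), reducing everything to the large-$N$ behaviour of the normalised incomplete gamma function $\Gamma(N;x)/\Gamma(N)$: the bulk regime (\ref{2.2b}) corresponds to $x$ held fixed, while the edge regime (\ref{2.2c}) corresponds to $x=N+O(\sqrt N)$, the transitional scale for this ratio.

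For (\ref{2.2b}), fix $w,z\in\mathbb C$. By the first equality in (\ref{2.1g}), $\sum_{j=1}^N (w\bar z)^{j-1}/(j-1)!$ is a partial sum of the everywhere convergent exponential series, hence tends to $e^{w\bar z}$ (locally uniformly in $(w,\bar z)$); equivalently $\Gamma(N;x)/\Gamma(N)\to 1$ for each fixed $x$, since in $\Gamma(N;x)=\int_x^\infty t^{N-1}e^{-t}\,dt$ the integrand concentrates its mass near $t=N-1\to\infty$, so the truncation at $t=x$ becomes negligible. Multiplying by the $N$-independent prefactor $\frac1\pi e^{-(|w|^2+|z|^2)/2}$ gives (\ref{2.2b}); this step is routine.

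For (\ref{2.2c}), put $w=-i\sqrt N+z_1$ and $z=-i\sqrt N+z_2$ in the second form of (\ref{2.1g}). First I would record the elementary identities
\[
 w\bar z = N + i\sqrt N\,(z_1-\bar z_2) + z_1\bar z_2 =: \zeta_N, \qquad |w|^2+|z|^2 = 2N - 2\sqrt N\,({\rm Im}\,z_1+{\rm Im}\,z_2) + |z_1|^2+|z_2|^2 .
\]
Combining the Gaussian prefactor with the factor $e^{w\bar z}$, the leading $\pm N$ terms cancel and the $\sqrt N\,{\rm Im}$ contributions from $|w|^2+|z|^2$ merge with the imaginary $\sqrt N$ terms of $\zeta_N$, leaving
\[
 -\tfrac12\big(|w|^2+|z|^2\big)+w\bar z = i\sqrt N\,({\rm Re}\,z_1-{\rm Re}\,z_2) - \tfrac12\big(|z_1|^2+|z_2|^2\big) + z_1\bar z_2 .
\]
The factor $e^{i\sqrt N({\rm Re}\,z_1-{\rm Re}\,z_2)}$ is an oscillatory unit-modulus term of the form $g_N(z_1)\overline{g_N(z_2)}$ with $g_N(z)=e^{i\sqrt N\,{\rm Re}\,z}$; as is standard for correlation kernels of determinantal processes it leaves every determinant $\det[K_N(\cdot,\cdot)]$, hence the correlations (\ref{2.1f}), unchanged and so may be discarded, after which the surviving prefactor is exactly $\frac1\pi e^{-(|z_1|^2+|z_2|^2)/2}e^{z_1\bar z_2}$.

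It then remains to compute $\lim_{N\to\infty}\Gamma(N;\zeta_N)/\Gamma(N)$ with $\zeta_N=N+\sqrt N\,s_N$, where $s_N:=i(z_1-\bar z_2)+z_1\bar z_2/\sqrt N\to i(z_1-\bar z_2)=:s$. The key input is that, uniformly for $s$ in compact subsets of $\mathbb C$,
\[
 \frac{\Gamma(N;\,N+\sqrt N\,s)}{\Gamma(N)} \longrightarrow \tfrac12\,{\rm erfc}\!\big(s/\sqrt 2\big) = \tfrac12\big(1-{\rm erf}(s/\sqrt 2)\big) .
\]
For real $s$ this is the Poisson central limit theorem applied to $\Gamma(N;x)/\Gamma(N)=\mathbb P\big({\rm Poisson}(x)\le N-1\big)$; in general one writes $\Gamma(N;\zeta_N)=\int t^{N-1}e^{-t}\,dt$ along the ray from $\zeta_N$ to $+\infty$ (the integrand being entire, as $N$ is a positive integer), deforms it through the saddle of $(N-1)\log t-t$ at $t=N-1$, and applies Laplace's method: setting $t=N-1+\sqrt{N-1}\,u$ makes the integrand proportional to $e^{-u^2/2}(1+o(1))$ with the lower endpoint mapping to $u\to s$, and the common Gaussian prefactor cancels in the ratio $\Gamma(N;\zeta_N)/\Gamma(N)$ (consistently with Stirling's formula for the denominator), leaving $\frac1{\sqrt{2\pi}}\int_s^\infty e^{-u^2/2}\,du=\tfrac12{\rm erfc}(s/\sqrt 2)$. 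Finally, from $h(\zeta)=\frac1{2\pi}(1+{\rm erf}(\sqrt2\,\zeta))$, the oddness of ${\rm erf}$, and $\sqrt2\cdot\tfrac12(-iz_1+i\bar z_2)=-i(z_1-\bar z_2)/\sqrt2=-s/\sqrt2$, one gets $\tfrac12\big(1-{\rm erf}(s/\sqrt2)\big)=\pi\,h\big(\tfrac12(-iz_1+i\bar z_2)\big)$; multiplying by the prefactor above yields (\ref{2.2c}). \emph{The main obstacle} is this incomplete gamma asymptotic with a complex endpoint receding at scale $\sqrt N$: one must justify the contour deformation and bound the tails of the Laplace integral uniformly in $s$, since the probabilistic argument is unavailable off the real axis. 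The $\sqrt N$-bookkeeping in the prefactor, and the identification of the residual oscillatory phase as a removable gauge factor, are the remaining, more elementary, points requiring care.
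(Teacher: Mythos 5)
Your proposal is correct and follows the same overall route as the paper: express $K_N$ through the normalised incomplete gamma function, handle the prefactor bookkeeping, and invoke the transition asymptotic $\Gamma(N;N+\sqrt N\,s)/\Gamma(N)\to\tfrac12(1-\mathrm{erf}(s/\sqrt2))$. The paper's proof is however much shorter because it simply quotes this asymptotic, in the refined form (\ref{2.2e}), from Nemes and Olde Daalhuis \cite{NO19} (the first term going back to Tricomi \cite{Tr50}), whereas you reprove it from scratch via the Poisson/Gamma central limit theorem (real $s$) and a saddle-point deformation (complex $s$). That is legitimate, but the saddle-point step is precisely the technical heart of \cite{NO19}, and making the contour deformation and the $s$-uniform tail bounds airtight would take a page or two; if the result is available, citing it is the economical choice. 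Where your write-up genuinely improves on the paper is in the treatment of the prefactor: after the edge shift the exponent $-(|w|^2+|z|^2)/2+w\bar z$ still carries a non-convergent unimodular factor $e^{i\sqrt N(\mathrm{Re}\,z_1-\mathrm{Re}\,z_2)}$, which you correctly identify as a cocycle $g_N(z_1)\overline{g_N(z_2)}$ that leaves the correlations (\ref{2.1f}) unchanged and hence can be dropped. The paper's proof passes over this point in silence (the convention of discarding cocycles is only made explicit much later, near (\ref{KN Szego Ginibre bdy})), so your explicit remark closes a small gap in rigour in the displayed statement of (\ref{2.2c}).
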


\begin{proof}
The limit (\ref{2.2b}) is immediate from (\ref{2.1g}), and the fact that for fixed $w \bar{z}$,
$$
\lim_{N \to \infty}  {\Gamma(N; w \bar{z}) \over \Gamma(N)} = 1.
$$
The derivation of (\ref{2.2c}) relies on (the first term of) the asymptotic expansion \cite{NO19}
\begin{equation}\label{2.2e}
  {\Gamma(N; N + \tau \sqrt{N} ) \over \Gamma(N)} = {1 \over 2} (1 - {\rm erf} (\tau/ \sqrt{2})) +
  {1 \over 3 \sqrt{2 \pi N}} e^{-\tau^2/2}(\tau^2 - 1) + {\rm O} \Big ( {1 \over N} \Big ).
\end{equation}
\end{proof}  

With the leading support of the eigenvalues the disk $|z| < \sqrt{N}$, it is natural to consider as a point in the bulk any $z_0 = s \sqrt{N} + i t \sqrt{N})$ for some $|s|, |t| < 1$. Making use of the fact that $\Gamma(N;u)/\Gamma(N) \to 1$ for $|u|/\sqrt{N} \to c$, $c< 1$ as $N \to \infty$, as is consistent with (\ref{2.2e}), it follows that
$$
\lim_{N \to \infty} K_N(z_0 + w, z_0 + z) = K_{\infty}^{\rm b}(w,z)
$$
independent of $z_0$; see also \cite[Appendix C]{BS09}. Similarly, for any $|\nu|=1$,
\begin{equation}\label{2.20a}
\lim_{N \to \infty} K_N(\nu (\sqrt{N}+ z_1), \nu (\sqrt{N}+ z_2) )   = 
e^{ - (|z_1|^2 +|z_2|^2)/2} e^{z_1 \bar{z}_2} h\Big ( {1 \over 2}(- z_1 - \bar{z}_2) \Big )
\end{equation}
as calculated in \cite[Appendix C with $s_k = \nu z_k$]{BS09}.

Generally (\ref{2.1f}) gives for the appropriately scaled two-point correlation
$$
\rho_{(2),\infty}(z_1,z_2)
 = \rho_{(1),\infty}(z_1)
\rho_{(1),\infty}(z_2)-K_\infty(z_1,z_2)K_\infty(z_2,z_1).
$$
For large separation of $z_1$ and $z_2$ the leading order of the RHS is given by the first term which is the product of the densities. The second term $-K_\infty(z_1,z_2)K_\infty(z_2,z_1)$ must decay sufficiently rapidly for it to be square integrable, since the limiting form of the first integration formula in (\ref{2.1j}) remains valid,
\begin{equation}\label{2.2r} 
 \int_{\mathbb C} K_\infty(w_1,z)  K_\infty(z, w_2) \, d^2 z = K_\infty(w_1, w_2).
 \end{equation}
 This can be verified from the results of Proposition \ref{P2.3} by the evaluation of appropriate Gaussian integrals. To separate off the  product of densities, one defines the truncated (or connected) two-point correlation
\begin{equation}\label{2.2s}
\rho_{(2),\infty}^T(z_1,z_2):=  \rho_{(2),\infty}(z_1,z_2)-
\rho_{(1),\infty}(z_1)
\rho_{(1),\infty}(z_2)= -K_\infty(z_1,z_2)K_\infty(z_2,z_1).
\end{equation}

In particular, with bulk scaling, we read off from this and (\ref{2.2b}) that
\begin{equation}\label{2.2t}
\rho_{(2),\infty}^{{\rm b},T}(z_1,z_2)= -
{1 \over \pi^2}e^{-|z_1 - z_2|^2},
\end{equation}
which thus exhibits a Gaussian decay. With edge scaling, (\ref{2.2s}) and (\ref{2.2c}) give
\begin{equation}\label{2.2t1}
\rho_{(2),\infty}^{{\rm e},T}(z_1,z_2)= -
e^{ - (x_1 - x_2)^2- (y_1 - y_2)^2 } \bigg | h\Big ( {1 \over 2}(y_1 + y_2 - i (x_1 - x_2)) \Big ) \bigg |^2.
\end{equation}
Use of the asymptotic expansion of the error function \cite[Eq.~(7.12.1)]{DLMF} gives to leading order
\begin{equation}\label{2.2t2}
\rho_{(2),\infty}^{{\rm e},T}(z_1,z_2)
\mathop{\sim}\limits_{|z_1-z_2| \to \infty}
- {1 \over 2 \pi^3}
{e^{ - 2 y_1^2 - 2 y_2^2 }
\over 
(y_1 + y_2)^2 + (x_1-x_2)^2}. 
\end{equation}
While this decays in all directions, parallel to the boundary of the leading order density (i.e.~in the $x$-direction) we see that the decay is algebraic, as an inverse square.

\begin{remark}\label{R2.4} $ $ \\
1.~It follows from (\ref{2.2e}) that the edge scaling of the eigenvalue density, in the coordinates of
(\ref{2.2c}) with $z=x+iy$, has the large $N$ expansion
\begin{equation}\label{2.2f}
\rho_{(1),N}^{\rm e}(y) = {1 \over 2 \pi} \Big ( 1 + {\rm erf}(\sqrt{2} y) \Big ) +
 {1 \over 3 \pi \sqrt{2 \pi N}} e^{-2 y^2}( y^2 - 1) + {\rm O} \Big ( {1 \over N} \Big ),
\end{equation}
and thus in particular
\begin{equation}\label{2.2f+}
\rho_{(1),\infty}^{\rm e}(y) = {1 \over 2 \pi} \Big ( 1 + {\rm erf}(\sqrt{2} y) \Big ),
\end{equation}
where this latter expression is consistent with (\ref{2.2c}) upon setting $z_1 = z_2 = z$.
There is interest in the functional form of the $1/\sqrt{N}$ correction term in our discussion of \S \ref{S4.2} below; see too \cite{LR16,ACC22}.
Integrating (\ref{2.2f+}) over $y \in (-\infty, 0]$ and
multiplying by $2 \pi \sqrt{N}$ (the length of the bounding circle of the leading order support)
shows that to leading order the expected number of eigenvalues with modulus greater that $\sqrt{N}$
is $\sqrt{N}/(2 \pi)$ \cite{Gi65}.  Note too that setting $y=0$ in (\ref{2.2f+}) gives $\rho_{(1),\infty}^{\rm e}(0)= {1 \over 2 \pi}$, or equivalently $\lim_{N \to \infty} \rho_{(1),N}(\sqrt{N}) = {1 \over 2 \pi}$, which is exactly ${1 \over 2}$ of the limiting value inside the unit circle as given by (\ref{2.2a}). \\
2.~The two-dimensional classical Coulomb system interpretation (\ref{1.1i}) of the eigenvalue PDF
(\ref{1.1f}) allows for (\ref{2.2a}) to be anticipated. For this, one scales $z_j \mapsto \sqrt{N} z_j$
and introduces a mean field energy functional
\begin{multline}\label{PTm}
{N \over 2} \sum_{j=1}^N | z_j|^2 - \sum_{1 \le j < k \le N} \log | z_k - z_j | \\
\sim {N \over 2} \Big ( \int_\Omega \rho_{(1)}(z) |z|^2 \, d^2z -
\int_\Omega d^2w \, \rho_{(1)}(w) \int_\Omega d^2z \, \rho_{(1)}(z) \log | z - w| \Big ),
\end{multline}
with the hypothesis that $ \rho_{(1)}(z) $ is chosen so that this functional is minimised
and furthermore integrates over $\Omega$ to unity.
Characterising the minimisation property by the vanishing of the functional upon
variation with respect to $ \rho_{(1)}(z) $ gives
\begin{equation}\label{PT}
| z|^2 - 2 \int_\Omega \rho_{(1)}(w) \log | z - w|  \, d^2w  = C, 
\end{equation}
where $C$ is a constant, valid for $z \in \Omega$. 
Applying the Laplacian operation $\nabla_z^2$ to this, using the standard fact
$$
- \nabla_z^2 \log | z - w| = - 2 \pi \delta (z - w),
$$
it follows
$$
 \rho_{(1)}(w)  = {1 \over \pi} \chi_{|w| < 1}.
 $$
 Here the restriction to $|w| <  1$ is implied by the rotational invariance, together with the minimisation and normalisation requirements. The notation
 $\chi_A$ denotes the indicator function of the condition $A$,
 taking on the value $1$ when $A$ is true, and $0$ otherwise.
 The support $\Omega$ of 
 $ \rho_{(1)}(w)$ --- which here is the unit disk --- in such a Coulomb gas picture is typically referred to as the
 droplet (see e.g.~\cite{ABWZ02,HM13}). Moreover, this potential theoretic
 reasoning can rigorously be justified; see e.g.~\cite[\S 3.1]{Ch21a} and
 references therein, as well as the discussion and references in the paragraph including (\ref{10.1b2}) of \S \ref{S4.2b} below. \\
 3.~Let $X_0$ have finite rank, and $X$ be a GinUE matrix, scaled so that the leading support of the eigenvalues is the unit disk. Assume too that the eigenvalues of $X_0$ are inside of the unit disk and near the boundary. In the limit $N \to \infty$ it has recently been shown that the edge correlation functions centred on the eigenvalues of $X_0$ form a determinantal point process with kernel involving generalisations of the error function \cite{LZ22}.
\end{remark}  

\subsection{Elliptic GinUE} \label{S2.3}
In 1991 Lehmann and Sommers introduced a one parameter
generalisation of the non-Hermitian complex Gaussian matrices specifying GinUE.
In this generalisation, varying the parameter allows for the Hermitian ensemble of
Gaussian matrices known as the GUE (Gaussian unitary ensemble) to be obtained.
An Hermitian matrix $H$ from the GUE can be constructed from a scaled non-Hermitian
GinUE matrix $\tilde{G} = (1/\sqrt{2}) G$ according to
\begin{equation}\label{Jam}
H = {1 \over 2} \Big ( \tilde{G} + \tilde{G}^\dagger \Big ).
\end{equation}

\begin{proposition}
Let the parameters $0 < \tau, v < 1$  be related by $\tau = (1 - v^2)/(1 + v^2)$.
For $H_1, H_2$ elements of the GUE, define
\begin{equation}\label{Ja}
J =  \sqrt{1 +  \tau} ( H_1 + i v H_2  ).
\end{equation}
The eigenvalue PDF of the ensemble of matrices $\{J\}$ is given by
\begin{equation}\label{2.3}
\exp \Big ( - {1 \over 1 - \tau^2} 
\sum_{j=1}^N \Big ( |z_j|^2 - {\tau \over 2}
(z_j^2 + \bar{z}_j^{2} )\Big ) \Big ) \prod_{1 \le j < k \le N}
| z_k - z_j|^2.
\end{equation}
\end{proposition}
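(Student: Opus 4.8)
The plan is to follow the scheme used to prove Proposition \ref{P1.1}: express the distribution on the entries of $J$ as the exponential of a quadratic expression in $J, J^\dagger$, apply the Schur decomposition $J = U Z U^\dagger$, and integrate out the strictly upper triangular part of $Z$. With the normalisation $\propto \exp(-{\rm Tr}\, H^2)$ for the GUE (the $\beta = 2$ instance of (\ref{1.1a})), the joint distribution on the entries of the independent pair $(H_1, H_2)$ is proportional to $\exp(-{\rm Tr}\, H_1^2 - {\rm Tr}\, H_2^2)$. The map $J = \sqrt{1+\tau}\,(H_1 + i v H_2)$ is a linear bijection with inverse
\[
H_1 = \frac{J + J^\dagger}{2\sqrt{1+\tau}}, \qquad H_2 = \frac{J - J^\dagger}{2 i v \sqrt{1+\tau}},
\]
and constant Jacobian. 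Substituting and expanding, using ${\rm Tr}\, J^\dagger J = {\rm Tr}\, J J^\dagger$, the exponent becomes a linear combination of ${\rm Tr}\, J J^\dagger$ and ${\rm Tr}\, J^2 + {\rm Tr}\,(J^\dagger)^2$ alone. The coefficients collapse once the stated relation $\tau = (1-v^2)/(1+v^2)$, equivalently $v^2 = (1-\tau)/(1+\tau)$, is inserted: a short computation gives
\[
{\rm Tr}\, H_1^2 + {\rm Tr}\, H_2^2 = \frac{1}{1-\tau^2}\Big( {\rm Tr}\, J J^\dagger - \frac{\tau}{2}\big( {\rm Tr}\, J^2 + {\rm Tr}\,(J^\dagger)^2 \big) \Big),
\]
so the distribution on the entries of $J$ is proportional to the exponential of minus the right-hand side. (Equivalently, setting $G = H_1 + i H_2$, which is a GinUE matrix, one has $J = a G + b G^\dagger$ with $a = \tfrac12\sqrt{1+\tau}(1+v)$, $b = \tfrac12\sqrt{1+\tau}(1-v)$, and the same identity reads ${\rm Tr}\, G^\dagger G = \tfrac{1}{1-\tau^2}({\rm Tr}\, J J^\dagger - \tfrac{\tau}{2}({\rm Tr}\, J^2 + {\rm Tr}\,(J^\dagger)^2))$, using $a^2 + b^2 = 1$, $ab = \tau/2$, $a^2 - b^2 = \sqrt{1-\tau^2}$.)

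Applying the Schur decomposition $J = U Z U^\dagger$, with $Z$ upper triangular and diagonal entries the eigenvalues $z_1, \dots, z_N$, two observations finish the proof. First, since $Z$ is upper triangular, so is $Z^2$, with diagonal $(z_j^2)$; hence ${\rm Tr}\, J^2 = {\rm Tr}\, Z^2 = \sum_j z_j^2$ and ${\rm Tr}\,(J^\dagger)^2 = \sum_j \bar z_j^2$ depend only on the eigenvalues, whereas ${\rm Tr}\, J J^\dagger = {\rm Tr}\, Z Z^\dagger = \sum_j |z_j|^2 + \sum_{j<k} |Z_{jk}|^2$. Second, the Jacobian of the Schur decomposition is the factor $\prod_{j<k}|z_j - z_k|^2$ (times the volume element of $U(N)/(U(1))^N$) computed in the proof of Proposition \ref{P1.1}; it is a feature of the parametrisation and is unaffected by the change of weight. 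Consequently the eigenvalue PDF is proportional to
\[
\prod_{1 \le j < k \le N} |z_j - z_k|^2 \; \exp\!\Big( -\frac{1}{1-\tau^2}\sum_{j=1}^N\big(|z_j|^2 - \tfrac{\tau}{2}(z_j^2 + \bar z_j^2)\big) \Big)
\]
times $\int \exp(-\tfrac{1}{1-\tau^2}\sum_{j<k}|Z_{jk}|^2)\,\prod_{j<k} dZ_{jk}^{\rm r}\, dZ_{jk}^{\rm i}$, which is a product of convergent Gaussian integrals (convergent because $1/(1-\tau^2) > 0$ for $0 < \tau < 1$) and hence an eigenvalue-independent constant. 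Relaxing the ordering of the eigenvalues multiplies by a further $N!$, and altogether this is $1/($normalisation$)$ times (\ref{2.3}).

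The only genuinely delicate ingredient — the evaluation of the Schur Jacobian and its recognition in the factorised form (\ref{2.1c}) — is exactly the step highlighted as the main difficulty in the proof of Proposition \ref{P1.1}, and here it carries over verbatim, since it does not see the weight. The rest is essentially bookkeeping: the linear-algebra identity for the quadratic form, the collapse of the constants via $\tau = (1-v^2)/(1+v^2)$, and the observation that $J \mapsto Z$ sends ${\rm Tr}\, J^2$ to a function of the eigenvalues alone while ${\rm Tr}\, J J^\dagger$ yields a decoupled Gaussian in the $Z_{jk}$. If one also wants the explicit value of the normalisation, it can be read off by tracking the Gaussian integral over the $Z_{jk}$ together with the volume (\ref{2.1d}), as in the derivation of (\ref{2.1a}).
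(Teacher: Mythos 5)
Your argument is correct and follows the same route as the paper's: express the Gaussian weight on $(H_1,H_2)$ as a quadratic form in $J$, apply the Schur decomposition, and observe that the strictly upper triangular entries decouple into an eigenvalue-independent Gaussian. The paper states the same trace identities (working with the rescaled matrices $\sqrt{1+\tau}\,H_i$, which is why its formulas look slightly different from yours) and then simply invokes ``the strategy of the proof of Proposition~\ref{P1.1}''; you have filled in that last step explicitly, which is a welcome addition but not a departure from the intended argument.
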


\begin{proof}
Following \cite{FKS97} and \cite[Exercises 15.1 Q.1]{Fo10}, the starting point is to note that
the joint element PDF of $ \sqrt{1 +  \tau}  H_1$ and $ \sqrt{1 +  \tau}  H_2$ is proportional to
$\exp ( - {1 \over 1 + \tau} {\rm Tr} (H_1^2 + H_2^2) )$. The definition of $J$ gives
$$
{\rm Tr} \, H_1^2 = {1 \over 2} 
\Big ( {\rm Tr}( J  J^\dagger)
+ {\rm Re} \, {\rm Tr} (J^2) \Big ), \qquad
 {\rm Tr} \, H_2^2 = {1 \over 2v^2} \Big ( {\rm Tr}( J 
J^\dagger) - {\rm Re} \, {\rm Tr} ( J^2) \Big ).
$$
As a consequence, it follows that the joint element PDF of $J$ is proportional to
\begin{equation}\label{2.25a}
\exp \Big ( - {1 \over 1 - \tau^2} {\rm Tr} (
 J   J^\dagger - \tau {\rm Re} \, J^2 ) \Big ).
\end{equation}
With this knowledge, the strategy of the proof of Proposition \ref{P1.1}
leads to (\ref{2.3}).
\end{proof}

The correlations for (\ref{2.3}) have, for an appropriate correlation kernel
$K_N(w,z;\tau)$, the
determinantal form (\ref{2.1f}). This involves the
scaled monic Hermite polynomials 
\begin{equation}\label{cH}
C_n(z) := \Big ( {\tau \over 2} \Big )^{n/2} H_n \Big ( {z \over \sqrt{2 \tau} } \Big ), \qquad z \in \mathbb C.
\end{equation}

\begin{proposition}\label{P2.6}
In the notation specified above, we have
\begin{multline}\label{cH1}
K_N(w,z;\tau) =  {1 \over \pi} {1 \over \sqrt{1 - \tau^2} }\\
\times
\exp \Big ( - {1 \over 2(1 - \tau^2)} \Big (
|w|^2+ | z|^2 -  \tau  ( {\rm Re} \, w^2  +  {\rm Re} \, z^2 ) \Big )\Big )
\sum_{l=0}^{N-1} {C_l(w) C_l(\bar{z}) \over l!}.
\end{multline}
\end{proposition}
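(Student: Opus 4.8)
The plan is to recognise (\ref{2.3}) as a planar orthogonal polynomial ensemble and then to run the machinery used in the proof of Proposition \ref{P2.2}. Write the eigenvalue PDF as $\tfrac{1}{Z_N}\prod_{l=1}^{N}w(z_l)\prod_{1\le j<k\le N}|z_k-z_j|^2$, with the (strictly positive) Gaussian weight $w(z)=\exp\big(-\tfrac{1}{1-\tau^2}(|z|^2-\tfrac{\tau}{2}(z^2+\bar z^2))\big)$. On the space of polynomials the sesquilinear form $\langle f,g\rangle=\int_{\mathbb C}f(z)\overline{g(z)}\,w(z)\,d^2z$ is then a genuine inner product, so Gram--Schmidt applied to $1,z,z^2,\dots$ produces a unique family of monic polynomials $p_0,p_1,\dots$ with $\deg p_j=j$ and $\langle p_j,p_k\rangle=h_j\delta_{jk}$ for some $h_j>0$. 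Since replacing the monomials $z^{j-1}$ in the Vandermonde determinant (\ref{2.1h}) by the $p_{j-1}$ leaves the determinant unchanged, on setting $\phi_j(z):=h_j^{-1/2}p_j(z)\sqrt{w(z)}$ one obtains $\prod_l w(z_l)\,|\det[z_k^{j-1}]_{j,k=1}^N|^2=\big(\prod_{l=0}^{N-1}h_l\big)\,\det[\,\sum_{l=0}^{N-1}\phi_l(z_j)\overline{\phi_l(z_k)}\,]_{j,k=1}^N$, exactly as in (\ref{2.1i}). Because $\{\phi_j\}$ is orthonormal in $L^2(\mathbb C,d^2z)$, the reproducing and normalisation identities (\ref{2.1j}) hold for $K_N(w,z;\tau):=\sum_{l=0}^{N-1}\phi_l(w)\overline{\phi_l(z)}$, and the cofactor argument (\ref{2.1k}) then gives the determinantal form (\ref{2.1f}) with
\[
K_N(w,z;\tau)=\sqrt{w(w)\,w(z)}\,\sum_{l=0}^{N-1}\frac{p_l(w)\,\overline{p_l(z)}}{h_l}.
\]

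It remains to identify $p_l=C_l$, the scaled monic Hermite polynomials of (\ref{cH}), together with $h_l=\pi\sqrt{1-\tau^2}\,l!$; since $C_l$ has real coefficients, $\overline{C_l(z)}=C_l(\bar z)$, and substituting these two facts into the displayed kernel reproduces (\ref{cH1}) verbatim. I would prove the orthogonality by a generating function computation, which is what the scalings in (\ref{cH}) are engineered for: from $\sum_{n\ge0}H_n(x)t^n/n!=e^{2xt-t^2}$ one gets the clean identity $\sum_{m\ge0}C_m(z)u^m/m!=e^{zu-\tau u^2/2}$ (and similarly with $z\mapsto\bar z$, $u\mapsto v$). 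Writing $|z|^2=x^2+y^2$ and $z^2+\bar z^2=2(x^2-y^2)$ shows $w(z)=\exp(-x^2/(1+\tau)-y^2/(1-\tau))$, a Gaussian separating in $x$ and $y$, so that
\[
\sum_{m,n\ge0}\frac{u^m v^n}{m!\,n!}\int_{\mathbb C}C_m(z)C_n(\bar z)\,w(z)\,d^2z=e^{-\tau(u^2+v^2)/2}\int_{\mathbb R}e^{x(u+v)-x^2/(1+\tau)}dx\int_{\mathbb R}e^{iy(u-v)-y^2/(1-\tau)}dy.
\]
Evaluating the two elementary Gaussian integrals and simplifying the exponent collapses the right-hand side to $\pi\sqrt{1-\tau^2}\,e^{uv}$; matching coefficients of $u^m v^n$ then yields $\int_{\mathbb C}C_m(z)C_n(\bar z)w(z)\,d^2z=\pi\sqrt{1-\tau^2}\,n!\,\delta_{mn}$, whence $p_l=C_l$ and $h_l=\pi\sqrt{1-\tau^2}\,l!$ as required.

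The Gaussian bookkeeping in the second step is routine; the step needing care is checking that the monic planar orthogonal polynomials for $w$ are genuinely of the asserted Hermite form --- equivalently, that the generating function identity really does force $\langle C_m,C_n\rangle=0$ for $m\ne n$ with the claimed normalisation --- and making sure the symmetric splitting $w=\sqrt{w}\cdot\sqrt{w}$ is the one that makes $K_N$ obey (\ref{2.1j}), so that the argument behind (\ref{2.1k}) transfers without change. A useful consistency check is the limit $\tau\to0$: there $C_l(z)\to z^l$, $w(z)\to e^{-|z|^2}$, $h_l\to\pi\,l!$, and (\ref{cH1}) degenerates to (\ref{2.1g}), recovering Proposition \ref{P2.2}.
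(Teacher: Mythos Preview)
Your proof is correct and follows essentially the same approach as the paper: rewrite the product of differences via (\ref{2.1hm+}) using monic polynomials, choose these to be the scaled Hermite polynomials $C_l$, invoke their orthogonality with respect to the elliptic Gaussian weight, and then transfer the machinery of Proposition~\ref{P2.2}. The only notable difference is that the paper simply cites the orthogonality relation $\int e^{-x^2/(1+\tau)-y^2/(1-\tau)}C_m(z)C_n(\bar z)\,dx\,dy=\pi m!\sqrt{1-\tau^2}\,\delta_{m,n}$ from the literature, whereas you supply a self-contained derivation via the generating function $\sum_m C_m(z)u^m/m!=e^{zu-\tau u^2/2}$ and a separated Gaussian integral --- a nice touch that makes the argument more transparent.
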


\begin{proof}
Following the same procedure as used to begin the proof of Proposition \ref{P2.2},
we modify (\ref{2.1h}) so that it reads
\begin{equation}\label{2.1hm+} 
 \prod_{1 \le j < k \le N}  (z_k - z_j ) = \det  [ p_{k-1}(z_j) ]_{j,k=1}^N,
\end{equation} 
where $\{ p_l \}$ is a set of monic polynomials, each $p_l$ of degree $l$.
Choosing these polynomials according to (\ref{cH}), with this choice being
motivated by
the orthogonality \cite{vM90, DGIL94}
$$
\int_{-\infty}^\infty dx \int_{-\infty}^\infty dy \, e^{ - x^2/(1 + \tau) -  y^2/(1 - \tau)} 
 C_m(z) C_n(\bar{z}) = \pi m! \sqrt{1 - \tau^2}
\delta_{m,n},
$$
the remaining working of the proof of Proposition \ref{P2.2} establishes the result.
\end{proof}

Applying the Coulomb gas argument of Remark \ref{R2.4}.2, with the scaling
$z_l \mapsto \sqrt{N} z_l$, we conclude that within some  domain $\Omega$ the density is constant, taking the value $\rho_{(1)}(w) =
1/(\pi (1 - \tau^2))$. 
This domain, or equivalently
droplet, can be determined to be an ellipse with semi-axes $A = 1 + \tau$,
$B = 1 - \tau$ and area equal to $\pi(1 - \tau^2)$. The shape can be verified directly, by showing that with $\Omega$ so specified, and $|z|^2$ in (\ref{PT}) replaced by
\begin{equation}\label{EF}
 {1 \over 1 - \tau^2} ( | z|^2 - \tau {\rm Re} \, z^2 ), 
 \end{equation}
 the required minimisation equation is indeed satisfied \cite{CPR87}, \cite[Exercises 15.2 q.4]{Fo10}; 
see also  \cite{DGIL94, FJ96,LR16,By23}. 
This droplet shape
explains the terminology elliptic GinUE in relation to (\ref{2.3}). For complex non-Hermitian
matrices (\ref{Ja}), now with the Hermitian random matrices $H_1,H_2$ constructed
from (\ref{Jam}) with general zero mean, unit standard deviation identically distributed entries
that are not required to be Gaussian, a constant density in an ellipse was first deduced by Girko
\cite{Gi86} upon additional assumptions, and without qualification by Nguyen and O'Rourke \cite{NO15}.

As for the GinUE, as the boundary of the leading support of the elliptic GinUE is approached, there
is a transition from a constant density to a density which decays to zero. Beginning with
(\ref{cH1}) in the case $w=z$, the analysis of the eigenvalue density in this edge regime
is more complicated than for the deduction of (\ref{2.2e}). It was carried out by
Lee and Riser \cite{LR16}; see also \cite{ADM22,Mo22}.

\begin{proposition} \label{P2.7}
Consider the elliptic GinUE, with the eigenvalues scaled $z_j \mapsto \sqrt{N} z_j$ so that for
$N$ large the leading order support is the ellipse $\Omega$. Let $z_0$ be a point on the boundary
of $ \Omega$. 
Denote the unit vector corresponding to the outer normal at this point by $\mathbf n$,
and the corresponding curvature by $\kappa$. We have
\begin{equation}\label{2.25}
\rho_{(1),N}\Big ( z_0 + {(\alpha + i \beta ) \mathbf n \over \sqrt{N} } \Big ) =
{1 \over 2 \pi} \Big ( 1 - {\rm erf}(\sqrt{2} \alpha) \Big ) +
 {\kappa \over  \pi \sqrt{2 \pi N}} e^{-2 \alpha^2} \Big ( {\alpha^2 - 1 \over 3} - \beta^2 \Big ) + {\rm O} \Big ( {1 \over N} \Big ).
\end{equation}
\end{proposition}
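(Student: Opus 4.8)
The plan is to start from the exact expression for the density as the diagonal of the correlation kernel of Proposition~\ref{P2.6}. After the rescaling $z_j\mapsto\sqrt N z_j$ of the statement, setting $w=z$ in (\ref{cH1}) and using that $C_l$ has real coefficients (so $C_l(\bar z)=\overline{C_l(z)}$) gives
\begin{equation*}
\rho_{(1),N}(\sqrt N z)={1\over\pi\sqrt{1-\tau^2}}\exp\Big(-{N\over1-\tau^2}\big(|z|^2-\tau\,{\rm Re}\,z^2\big)\Big)\sum_{l=0}^{N-1}{|C_l(\sqrt N z)|^2\over l!},
\end{equation*}
so that the whole problem reduces to the large-$N$ behaviour of this weighted finite sum of Hermite polynomials, evaluated at $z=z_0+(\alpha+i\beta)\mathbf n/\sqrt N$ with $z_0\in\partial\Omega$.

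First I would convert the sum into a contour integral carrying a large parameter. From the generating function $\sum_{l\ge0}C_l(\zeta)s^l/l!=e^{\zeta s-\tau s^2/2}$ one has the Cauchy-type formula $C_l(\zeta)=(l!/2\pi i)\oint e^{\zeta s-\tau s^2/2}s^{-l-1}\,ds$; inserting it for both factors $C_l$, writing $l!=\int_0^\infty u^l e^{-u}\,du$, and summing the finite geometric series $\sum_{l=0}^{N-1}(u/(st))^l$ --- the same device by which the proof of Proposition~\ref{P2.2} turns $\sum_{j=1}^N x^{j-1}/(j-1)!$ into an incomplete gamma function --- produces, after rescaling $s,t\mapsto\sqrt N s,\sqrt N t$ and $u\mapsto Nu$, a triple integral whose integrand is $e^{N\Phi}$ times an elementary factor, with $\Phi$ built from the Gaussian exponent above together with $zs+\bar z t-\tfrac\tau2(s^2+t^2)-u$. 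The geometric sum splits this into two pieces, one with phase $N\Phi$ and one with phase $N(\Phi+\log(u/(st)))$ --- the analogue of the decomposition $\Gamma(N;x)/\Gamma(N)=1-\gamma(N;x)/\Gamma(N)$ used in Proposition~\ref{P2.2} --- and it is the balance between these two contributions that, for $z_0$ on $\partial\Omega$, reproduces the error-function edge profile.

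The analysis is then a multidimensional saddle-point computation. The relevant critical point of the phase lies on $\partial\Omega$ exactly when $z_0\in\partial\Omega$, and as $z_0$ crosses the boundary two critical points coalesce; this is precisely why the edge profile is an error function rather than a sharp indicator, and why a uniform analysis is needed rather than steepest descent at an isolated nondegenerate saddle. Introducing the outer normal $\mathbf n$ and arclength along $\partial\Omega$ at $z_0$, and expanding the phase about the coalescing critical point, the quadratic part produces --- through a standard Gaussian/error-function integral --- the leading term $\tfrac1{2\pi}(1-{\rm erf}(\sqrt2\alpha))$, which depends only on the normal coordinate $\alpha$ and, in the normalisation of the statement, is independent of $\tau$ and $\kappa$, consistent with (\ref{2.2f+}) and (\ref{2.20a}). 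Pushing the expansion one further order --- keeping the cubic term of the phase, and the $O(\alpha^2,\beta^2)$ corrections coming from the osculating circle of radius $1/\kappa$ in the local parametrisation of $\partial\Omega$, which shift the critical point and deform the steepest-descent contour --- yields after Gaussian integration the full curvature-weighted correction $\tfrac{\kappa}{\pi\sqrt{2\pi N}}e^{-2\alpha^2}\big(\tfrac{\alpha^2-1}{3}-\beta^2\big)$; verifying that all remaining errors are $O(1/N)$ uniformly for $(\alpha,\beta)$ in compact sets gives (\ref{2.25}).

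The hard part will be the coalescing saddle: one cannot apply steepest descent at an isolated nondegenerate critical point, and extracting the $1/\sqrt N$ term correctly forces one to keep careful track of how the varying curvature of $\partial\Omega$ both moves the critical point and bends the descent contour. When $\tau=0$, so that $\partial\Omega$ is a circle of constant curvature, the whole computation collapses to the asymptotic expansion (\ref{2.2e}) of $\Gamma(N;N+\tau\sqrt N)/\Gamma(N)$ (the symbol $\tau$ there being a local variable, not the ellipticity parameter); the general elliptic case was carried out by Lee and Riser \cite{LR16}. An alternative organisation, also appearing in the literature, is to write $\rho_{(1),N+1}-\rho_{(1),N}$ in terms of a single squared scaled Hermite polynomial and feed in Plancherel--Rotach asymptotics uniform in the ratio of degree to argument, then resum; the delicate point there is exactly this uniformity, which plays the role of the saddle coalescence above.
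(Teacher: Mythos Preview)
The paper does not actually give a proof of this proposition. The text preceding it says only that ``Beginning with (\ref{cH1}) in the case $w=z$, the analysis of the eigenvalue density in this edge regime is more complicated than for the deduction of (\ref{2.2e}). It was carried out by Lee and Riser \cite{LR16}; see also \cite{ADM22,Mo22}.'' No argument follows; the result is simply quoted from the literature.

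Your proposal is therefore not in conflict with the paper --- there is nothing to conflict with --- and it is in fact a reasonable sketch of how the Lee--Riser computation is organised: start from the diagonal of the kernel (\ref{cH1}), replace the scaled Hermite polynomials by their contour-integral representation via the generating function (which you state correctly), collapse the sum over $l$, and perform a uniform steepest-descent analysis in which the edge corresponds to a coalescence of saddles. Your identification of the mechanism by which the error function arises, and of the curvature entering at the next order through the local second-order geometry of $\partial\Omega$, is accurate. The alternative you mention at the end (telescoping via $\rho_{(1),N+1}-\rho_{(1),N}$ and Plancherel--Rotach asymptotics) is also a route taken in the literature.

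That said, what you have written is a plan rather than a proof. The substantive work --- deforming the contours, controlling the triple integral uniformly near the coalescence, and actually extracting the constant $\tfrac{\kappa}{\pi\sqrt{2\pi}}\big(\tfrac{\alpha^2-1}{3}-\beta^2\big)$ from the cubic expansion --- is asserted but not carried out. Since the paper itself defers entirely to \cite{LR16} here, your outline is at the same level of detail as the paper, but you should be aware that turning it into a self-contained argument is a nontrivial exercise (which is precisely why the paper cites out).
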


\begin{remark}\label{R2.8} $ $ \\
1.~The leading term in (\ref{2.25}) is identical to that in (\ref{2.2f}) with the identification $y = - \alpha$. The universality of this
functional form has been established for a wide class of normal matrix models
(see \S~\ref{S4.2b} in relation to this class), and
moreover extended to the edge scaled correlation kernel (\ref{2.2c})
\cite{AKM20,HW21}. This has recently been proved too for non-Hermitian random matrices
constructed according to (\ref{Jam}) with the elements of $G$ identically distributed mean
zero, finite variance random variables \cite{CES21}. \\
2.~Consider the elliptic shaped domain
$$
\{ z \in \mathbb C: \, 1 - {1 \over 1 - \tau^2}  ( | z|^2 - \tau {\rm Re} \, z^2 ) > 0 \}.
$$
Let the function on the LHS of the inequality be denoted by $1 - f(z)$. We see that $f(z)$ coincides with (\ref{EF}). It has been
shown in \cite{ANPV21,NAKP20} that the polynomials $\{p_n(z)\}$ satisfying the orthogonality
$$
\int_{\Omega} p_m(z) p_n(\overline{z}) (1 - f(z))^\alpha \, d^2z \propto \delta_{m,n},
$$
are simply related to the Gegenbauer polynomials $\{ C_n^{(1+\alpha)}(z) \}$. After 
scaling, the result (\ref{cH}) can be reclaimed by taking the limit $\alpha \to \infty$. \\
3.~We see from (\ref{Ja}) that as $\tau \to 1^-$, $J$ is proportional to a GUE matrix, and in
particular its eigenvalues are then all real. It was found by Fyodorov, 
Khoruzhenko and Sommers \cite{FKS98}
 that setting $\tau = 1 -\pi^2 \alpha^2/2N$ and scaling the
eigenvalues $z_j \mapsto \pi z_j/N = \pi (x_j + i y_j)/N$ that
\begin{multline}\label{Wk}
\lim_{N \to \infty} {\pi^2 \over N^2} K_N(\pi w/N,\pi z/N;\tau) \Big |_{\tau = 1 - \pi^2 \alpha^2/2N}
 \\  =
\sqrt{2 \over \pi \alpha^2} \exp \Big ( - {y_1^2 + y_2^2 \over \alpha^2} \Big )
{1 \over 2 \pi} \int_{-\pi}^\pi \exp \Big ( - {\alpha^2 u^2 \over 2} + i u (w - \bar{z}) \Big ) \, du.
\end{multline}
(For a direct comparison between (\ref{Wk}) and what has been reported in the literature, see
\cite[$K_{\rm weak}(z_1,z_2)$ in Theorem 3]{ACV18} and \cite[\S 1]{AB21}.) This is referred to as the weakly non-Hermitian limit. Scaling of the correlation kernel at the edge in this limit has been considered in \cite{GNV02,Be10}.

\end{remark}

\subsection{Induced GinUE}\label{S.GI}
The fact that a complex Gaussian matrix $G$ is bi-invariant with respect to multiplication
by unitary matrices allows for the distribution of the singular values to be related to the
eigenvalue distribution \cite{FBKSZ12,KK16}.

\begin{proposition}
Let $G$ be bi-unitary invariant, and let $U$ be a Haar distributed unitary matrix.
We have that $G$ and $(G^\dagger G)^{1/2} U$ have the same joint element distribution,
and so in particular have the same distribution of eigenvalues.
\end{proposition}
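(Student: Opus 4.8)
The plan is to combine the polar (singular value) decomposition with the principle that a bi-unitary invariant matrix is determined in distribution by the law of its singular values. Throughout, $G$ and $U$ are $N\times N$, $U$ is Haar distributed and independent of $G$, and $(G^\dagger G)^{1/2}$ denotes the unique positive semi-definite square root (always well defined, and equal to the Hermitian factor in the left polar decomposition $G=V(G^\dagger G)^{1/2}$ when $G$ is invertible). First I would record two elementary facts: (i) for any $P\ge 0$ and unitary $U$, $PU$ has the same singular values as $P$, since $(PU)^\dagger(PU)=U^\dagger P^2U$ is unitarily conjugate to $P^2$; taking $P=(G^\dagger G)^{1/2}$, whose eigenvalues are precisely the singular values of $G$, this shows that $(G^\dagger G)^{1/2}U$ and $G$ have the same singular values \emph{deterministically}, hence in particular the same singular value distribution; (ii) conjugation commutes with the positive square root, so $W(G^\dagger G)^{1/2}W^\dagger=\bigl(WG^\dagger GW^\dagger\bigr)^{1/2}=\bigl((GW^\dagger)^\dagger(GW^\dagger)\bigr)^{1/2}$ for any unitary $W$.

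Next I would show that $(G^\dagger G)^{1/2}U$ is itself bi-unitary invariant. Fix unitaries $V_1,V_2$ and use (ii) to write
\[
V_1(G^\dagger G)^{1/2}UV_2=\Bigl(V_1(G^\dagger G)^{1/2}V_1^\dagger\Bigr)\bigl(V_1UV_2\bigr)=\bigl((G')^\dagger G'\bigr)^{1/2}\,(V_1UV_2),\qquad G':=GV_1^\dagger .
\]
Here $G'$ is a function of $G$ alone and $V_1UV_2$ a function of $U$ alone, hence the two are independent; moreover $G'\stackrel{d}{=}G$ by bi-unitary invariance of $G$, and $V_1UV_2\stackrel{d}{=}U$ by invariance of Haar measure. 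Thus $(G',V_1UV_2)$ has the same joint law as $(G,U)$, and applying the measurable map $(X,Y)\mapsto (X^\dagger X)^{1/2}Y$ gives $V_1(G^\dagger G)^{1/2}UV_2\stackrel{d}{=}(G^\dagger G)^{1/2}U$, which is the claimed bi-unitary invariance.

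Finally I would invoke the \emph{structure lemma}: any bi-unitary invariant $M$ satisfies $M\stackrel{d}{=}\mathcal{A}\,\Sigma_M\,\mathcal{B}^\dagger$, where $\Sigma_M$ is the diagonal matrix of ordered singular values of $M$ and $\mathcal{A},\mathcal{B}$ are Haar distributed, mutually independent, and independent of $\Sigma_M$. (Sketch: pick a measurable SVD $M=A\Sigma_M B^\dagger$; then for Haar $V_1,V_2$ independent of $M$ one has $M\stackrel{d}{=}V_1MV_2=(V_1A)\Sigma_M(V_2^\dagger B)^\dagger$, and conditioning on $M$ turns $V_1A$ and $V_2^\dagger B$ into fresh, independent Haar unitaries, leaving only the dependence on $\Sigma_M$.) Applying this to $M=G$ and to $M=(G^\dagger G)^{1/2}U$ — both bi-unitary invariant by the previous step, and with the same singular value law by (i) — yields $(G^\dagger G)^{1/2}U\stackrel{d}{=}G$, i.e.\ equality of the joint element distributions; equality of the eigenvalue distributions follows since the spectrum is a measurable function of the entries.

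The main obstacle I anticipate is the structure lemma, specifically the need to choose the SVD factors $A,B$ measurably and to argue cleanly, via conditioning, that left/right multiplication by independent Haar unitaries absorbs $A$ and $B$ into fresh independent Haar-distributed factors; the phase ambiguity of the SVD on the measure-zero set of coincident singular values is harmless but should be acknowledged. The two elementary facts and the bi-unitary invariance of $(G^\dagger G)^{1/2}U$ are short direct computations and present no difficulty.
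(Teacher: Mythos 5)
Your proof is correct and reaches the same conclusion through the same underlying structure (the singular value decomposition together with bi-unitary invariance), but you organize the argument differently and more explicitly than the paper does. The paper's proof is a one-line computation: write $G=U_1\Sigma U_2$, observe $(G^\dagger G)^{1/2}U=U_2^\dagger\Sigma U_2U$, and assert that bi-unitary invariance of $G$ identifies the law of the right-hand side with that of $G$. That final step is terse and, as written, leaves the reader to supply the conditioning argument that, given $\Sigma$, the pair $(U_2^\dagger,\,U_2U)$ is a pair of independent Haar unitaries. Your version surfaces exactly this hidden ingredient by isolating it as a structure lemma ($M$ bi-unitary invariant $\Rightarrow$ $M\stackrel{d}{=}\mathcal{A}\,\Sigma_M\,\mathcal{B}^\dagger$ with $\mathcal{A},\mathcal{B},\Sigma_M$ mutually independent, $\mathcal{A},\mathcal{B}$ Haar), and you add a genuinely modular step the paper skips: you prove $(G^\dagger G)^{1/2}U$ is itself bi-unitary invariant, so the conclusion follows from the abstract principle that two bi-unitary invariant ensembles with the same singular-value law coincide in distribution. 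Nothing in your route is redundant; the pay-off is a cleaner, reusable statement (invariance $+$ same singular values $\Rightarrow$ same law), while the paper's version is shorter but leaves the structure lemma implicit. Your caveat about the measurable selection of SVD factors and the phase ambiguity on the null set of coincident singular values is appropriate to flag; it is harmless because the structure lemma only needs the law of the singular values, not a canonical SVD.
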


\begin{proof}
By the 
singular value decomposition, $G = U_1 \Sigma U_2$
for some unitary matrices $U_1, U_2$ and
where $\Sigma$ is the diagonal matrix of the singular values. 
We then have $(G^\dagger G)^{1/2} U = U_2^\dagger \Sigma U_2 U$. 
By the assumed  bi-unitary invariance of $G$, this matrix and $G$ have the same
distribution.
\end{proof}

The matrix 
\begin{equation}\label{dA}
A:= (G^\dagger G)^{1/2} U
\end{equation}
 is well defined for $G$ rectangular, and
moreover the property of $G$ being bi-unitary invariant can be generalised to this
setting. Of interest is the relation between the joint element distributions of $G$ and
$A$ \cite{FBKSZ12}.

\begin{proposition}\label{P2.8a}
Let $G$  be a bi-unitary invariant rectangular  $n \times N$ random matrix with
joint element distribution of the functional form $g(G^\dagger G)$. The joint element distribution of
the matrix $A$ (\ref{dA}) with $U$ a Haar distributed unitary matrix is proportional to
\begin{equation}\label{DA}
(\det A^\dagger A)^{ (n - N)} g(A^\dagger A).
\end{equation}
As a consequence, for $G$ a rectangular complex Ginibre matrix, the eigenvalue PDF of $A$  is 
\begin{equation}\label{1.1cI}
{1 \over C_{n,N}} \prod_{l=1}^N | z_l|^{2(n - N)} e^{-  | z_l |^2 } \prod_{1 \le j < k \le N} | z_k - z_j |^2,
\end{equation}
with normalisation $C_{n,N} = N! \pi^N \prod_{j=1}^N (n-N+j-1)!$.

\end{proposition}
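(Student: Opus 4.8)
The plan is to prove the two assertions of Proposition~\ref{P2.8a} in sequence, starting from the density transformation formula for the map $G \mapsto A = (G^\dagger G)^{1/2} U$. First I would observe that, since $G$ is bi-unitary invariant and $U$ is Haar distributed and independent, the joint element distribution of $A$ depends on $G$ only through the positive matrix $G^\dagger G$, so it suffices to track how $g(G^\dagger G)$ transforms together with the Jacobian. Writing the singular value decomposition $G = U_1 \Sigma U_2$ (with $\Sigma$ the $n \times N$ matrix of singular values, $n \ge N$), one has $A = U_2^\dagger \Sigma_N^2 U_2 U$ up to the obvious identifications, where $\Sigma_N^2$ is the $N \times N$ block; hence $A^\dagger A = U^\dagger U_2^\dagger \Sigma_N^4 U_2 U$ while $G^\dagger G = U_2^\dagger \Sigma_N^2 U_2$. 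The content of the first claim is the Jacobian factor $(\det A^\dagger A)^{n-N}$ relating the flat measure on $n \times N$ matrices $G$ to the flat measure on $N \times N$ matrices $A$ (note $A$ is square even when $G$ is not). I would establish this by passing to polar/singular-value coordinates on both sides: the Jacobian for $G = U_1 \Sigma U_2$ in terms of the $N$ squared singular values $x_j = \sigma_j^2$ carries a weight $\prod_j x_j^{n-N} \prod_{j<k}(x_j - x_k)^2$ from the rectangular Ginibre measure (a standard Laguerre/Wishart computation, cf.\ \cite[Ch.~3]{Fo10}), while the square matrix $A$ with eigenvalues $\{z_l\}$ and singular values $\{\sqrt{x_j}\}$ contributes only $\prod_{j<k}(x_j-x_k)^2$ in the analogous coordinates; comparing the two and keeping track of the Haar volume of $U$ isolates exactly the factor $(\det A^\dagger A)^{n-N} = \prod_l x_l^{\,n-N}$, giving \eqref{DA}.

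For the second assertion I would specialise $g(M) = e^{-\operatorname{Tr} M}$ (the rectangular complex Ginibre weight, up to the normalising power of $\pi$), so that \eqref{DA} becomes $(\det A^\dagger A)^{n-N} e^{-\operatorname{Tr} A^\dagger A}$ on $N \times N$ matrices. From here the derivation of the eigenvalue PDF \eqref{1.1cI} follows the Schur-decomposition route of Proposition~\ref{P1.1} essentially verbatim: writing $A = U Z U^\dagger$ with $Z$ upper triangular gives $\operatorname{Tr} A^\dagger A = \sum_j |z_j|^2 + \sum_{j<k}|Z_{jk}|^2$ and $\det A^\dagger A = \prod_j |z_j|^2$, the Jacobian produces the factor $\prod_{j<k}|z_k - z_j|^2$ exactly as in \eqref{2.1cZ}--\eqref{2.1c}, and integrating out the strictly-upper-triangular entries $\{Z_{jk}\}$ and the unitary $U$ leaves $\prod_l |z_l|^{2(n-N)} e^{-|z_l|^2} \prod_{j<k}|z_k - z_j|^2$. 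The only genuinely new ingredient relative to Proposition~\ref{P1.1} is the presence of the $|z_l|^{2(n-N)}$ weight, which simply rides along through the Schur step since $\det A^\dagger A$ depends only on the diagonal of $Z$.

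Finally, for the normalisation $C_{n,N} = N!\,\pi^N \prod_{j=1}^N (n-N+j-1)!$ I would not attempt the matrix-coordinate bookkeeping but instead evaluate $\int_{\mathbb C^N} \prod_l |z_l|^{2(n-N)} e^{-|z_l|^2} \prod_{j<k}|z_k-z_j|^2 \, d^2z_1 \cdots d^2 z_N$ directly, using the Vandermonde-to-determinant trick \eqref{2.1h}, expanding both determinants, and reducing to a diagonal sum by orthogonality of the monomials $z^{j-1}$ against $e^{-|z|^2}|z|^{2(n-N)} \, d^2 z$; the relevant one-dimensional integral is $\int_{\mathbb C} |z|^{2(n-N)+2(j-1)} e^{-|z|^2} \, d^2 z = \pi\,(n-N+j-1)!$, and the permutation sum supplies the $N!$. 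The main obstacle is the first step: carefully justifying the Jacobian factor $(\det A^\dagger A)^{n-N}$, in particular handling the rectangular-to-square change of variables and the redundancy in the $U_1, U_2, U$ parametrisation so that the Haar measures cancel cleanly; everything after that is a reprise of arguments already given for Proposition~\ref{P1.1} and Proposition~\ref{P2.2}. I would cite \cite{FBKSZ12} for the original treatment of this Jacobian and sketch only the singular-value-coordinate comparison in detail.
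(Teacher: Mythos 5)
Your proof of the second half (Schur decomposition of $A$, factorisation of the exponent, and the normalisation via the orthogonality of monomials against $|z|^{2(n-N)}e^{-|z|^2}$) is essentially the same as the paper's and is sound; the one-dimensional integral $\int_{\mathbb C}|z|^{2(n-N+j-1)}e^{-|z|^2}\,d^2z=\pi\,(n-N+j-1)!$ does give $C_{n,N}$ with the $N!$ from relaxing the ordering.

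For the first claim, however, you take a genuinely different route from the paper, and you yourself flag it as the main obstacle. You propose a direct singular-value-coordinate comparison between the measure on rectangular $G$ and the measure on square $A$, then try to cancel the Haar factors from the $U_1,U_2,U$ parametrisation. This can be made rigorous, but it involves exactly the bookkeeping you identify as troublesome: $G$ lives in a $2nN$-dimensional space, $A$ in a $2N^2$-dimensional one, and the map $(G,U)\mapsto A$ is not a diffeomorphism, so one has to marginalise over redundant directions rather than compute a single Jacobian. The paper sidesteps all of this by applying the standard Wishart change-of-variables result twice. First, with an $n\times N$ input: if $G$ has element PDF $g(G^\dagger G)$ then $B=G^\dagger G$ has element PDF proportional to $(\det B)^{n-N}g(B)$. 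Second, the same result with $n=N$: if $A$ is square with element PDF $h(A^\dagger A)$ then $A^\dagger A$ has element PDF proportional to $h(\cdot)$ with a constant of proportionality independent of the matrix, i.e.\ the ``Jacobian'' factor from $A$ to $A^\dagger A$ carries no power of $\det$. Since bi-unitary invariance of $G$ makes $A^\dagger A$ and $B$ equidistributed, matching the two densities for $A^\dagger A$ forces $h(A^\dagger A)\propto(\det A^\dagger A)^{n-N}g(A^\dagger A)$, which is \eqref{DA}. This avoids the singular-value comparison altogether: both the rectangular-to-Hermitian and the square-to-Hermitian steps are instances of one known lemma, and the Haar factors never need to be tracked explicitly. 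I would recommend replacing your singular-value sketch by this two-application argument; it is both shorter and closes the gap you highlighted.
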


\begin{proof}
Write $B = G^\dagger G$. With the joint element distribution of $G$
of the functional form $g(G^\dagger G)$, it is a standard result
(see e.g.~\cite[Eq.~(3.23)]{Fo10}) that the joint element distribution of
$B$ is proportional to $\det B^{n - N} g(B)$. But $B$ and $A^\dagger A$ have the
same joint element distribution from the bi-unitary invariance of $G$,
and the result just quoted with $n=N$ tells us that the Jacobian for the joint
element distribution of 
$A$ and $B$ is
a constant, which implies (\ref{DA}). In the particular case that $G$ is a rectangular 
complex Ginibre matrix, the function $g$ in (\ref{DA}) is the exponential
$g(X) = e^{-{\rm Tr} \, X}$. Furthermore, with the eigenvalues of $A$ denoted
$\{z_l\}$, we have
$(\det A^\dagger A)^{ (n - N)}  =
\prod_{l=1}^N  | z_l|^{2(n - N)}$. Taking these points into consideration, we see (\ref{1.1cI})
results by following the proof of Proposition \ref{P1.1}.  Moreover, with $\omega(z) = \omega(|z|^2) = |z|^{2(n - N)} e^{-|z|^2}$ the weight function, that the analogue of (\ref{2.1i}) in that proof have the properties (\ref{2.1j}) requires that the normalisation equal $N! \prod_{j=0}^{N-1} 2 \pi \int_0^\infty r^{2j+1} \omega(r) \, dr$. This implies the stated value of $C_{n,N}$.
\end{proof}

The correlations for (\ref{1.1cI}) are of the determinantal form (\ref{2.1f}). Denoting the
corresponding correlation by $K^{\rm i G}(w,z)$, the derivation of (\ref{2.1g}) shows  \cite{Ake01,FBKSZ12}
\begin{align}\label{2.1gI}
K_N^{\rm i G}(w,z) & = {1 \over \pi}  e^{- ( |w|^2 + |z|^2)/2}  \sum_{j=1}^N  {  (w \bar{z})^{n-N+j-1} \over (n - N+ j-1)! } 
\nonumber \\
& =
 {1 \over \pi}  e^{- ( |w|^2 + |z|^2)/2}  e^{w \bar{z}} \Big ( 
{\Gamma(n;w \bar{z}) \over \Gamma(n)} - 
 {\Gamma(n-N;w \bar{z}) \over \Gamma(n - N)}  \Big ). 
\end{align}

We know that the eigenvalue density, $\rho_{(1),N}^{\rm iG}(z)$ say, results by setting $w=z$ in $K_N^{\rm i G}(w,z)$.
Knowing from (\ref{2.2e}) that for  $N$ large $\Gamma(N;xN)/\Gamma(N)$ exhibits a transition from the value $1$
for $0 < x < 1$ to the value $0$ for $x > 1$, we see from (\ref{2.1gI}) that
\begin{align}\label{2.1gm}
\lim_{N \to \infty} \rho_{(1),N}^{\rm iG}(\sqrt{N} z) \Big |_{n / N = \alpha + 1} = {1 \over \pi} \Big ( \chi_{|z| < \sqrt{\alpha + 1}} -
\chi_{|z| < \sqrt{\alpha} }\Big ).
\end{align}
Here $ ( \chi_{|z| < \sqrt{\alpha + 1}} -
\chi_{|z| < \sqrt{\alpha}} )$ corresponds to an annulus of inner radius $\sqrt{\alpha}$ and outer radius $\sqrt{\alpha+1}$.  
Generally, for random matrices of the form $A=UTV$, where $U,V$ are Haar distributed and the singular values of $T$
converge to a compactly supported probability measure, it is known that the eigenvalue PDF in the complex plane is
either a disk or an annulus. This is referred to as the single ring theorem \cite{FZ97,GKZ17}.
Scaling of (\ref{2.1gI}) in the bulk of the annulus, or at the boundary, is straightforward and leads to the
functional forms exhibited in Proposition \ref{P2.3} for the GinUE.
 Furthermore, in the double scaling regime that the spectrum tends to form a thin annulus of width $O(1/N)$, the scaling of (\ref{2.1gI}) in the bulk gives rise to the weakly non-Hermitian limit \eqref{Wk} \cite{BS21}.
We also stress that the induced Ginibre ensemble was introduced more generally in \cite{Ake01}, where the eigenvalue PDF is a mixture of \eqref{2.3} and \eqref{1.1cI}. 
(The induced GinUE model is then obtained by setting the masses to zero $m_f=0$ and $\tau=0$.) 
In \cite{Ake01}, the finite-$N$ expression of the correlation functions as well as their scaling limits both at strong and weak non-Hermiticity were obtained. 

\subsection{Complex spherical ensemble}\label{S2.5}
For $G_1, G_2$ matrices from the GinUE, matrices of the form $G=G_1^{-1}G_2$ are said to form the complex
spherical ensemble \cite{Kr06}. For with $\alpha, \beta \in \mathbb C$ with
$|\alpha|^2 + |\beta|^2 = 1$, introduce the transformed pair of matrices $(C,D)$,
$$
C := -\bar{\beta} G_2 + \bar{\alpha} G_1, \qquad D = \alpha G_2 + \beta G_1.
$$
Since $G_1,G_2$ have independent standard complex elements, it is easy to check that the pair $(G_1,G_2)$ has
the same distribution as $(C,D)$. As a consequence, the eigenvalues $\{z_j\}$ of $G$ are unchanged by the
fractional linear transformation
$$
z \mapsto {z \alpha - \bar{\beta} \over z \beta - \bar{\alpha}}.
$$
This implies that upon a stereographic projection from the complex plane to the
Riemann sphere, the eigenvalue distribution of $\{ G \}$ is invariant under rotation of the sphere, giving rise to the
name of the ensemble and telling us that on this surface the eigenvalue density is constant.

The explicit eigenvalue PDF for the complex spherical ensemble was calculated by Krishnapur \cite{Kr06}.

\begin{proposition}\label{P2.11}
For the complex spherical ensemble, the eigenvalue PDF is proportional to
\begin{equation}\label{CV}
\prod_{l=1}^N {1 \over (1 + |z_l|^2)^{N+1}} \prod_{1 \le j < k \le N} |z_k - z_j|^2, \qquad z_l \in \mathbb C.
\end{equation}
\end{proposition}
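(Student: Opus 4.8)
The plan is to derive the eigenvalue PDF (\ref{CV}) from the joint distribution of the pair $(G_1, G_2)$ by changing variables to the eigenvalue-type data of $G = G_1^{-1} G_2$, mimicking the Schur-decomposition strategy used for Proposition \ref{P1.1} but now carried out for a pair of matrices simultaneously. First I would recall that the joint element density of $(G_1, G_2)$ is proportional to $\exp(-\operatorname{Tr} G_1^\dagger G_1 - \operatorname{Tr} G_2^\dagger G_2)$, and that the eigenvalue equation $G_2 v = z G_1 v$ is a generalized eigenvalue problem whose solutions $\{z_j\}$ are exactly the eigenvalues of $G_1^{-1} G_2$. The natural normal form here is a simultaneous (generalized) Schur decomposition: there exist unitary matrices $U, V$ such that $G_1 = U R_1 V^\dagger$ and $G_2 = U R_2 V^\dagger$ with $R_1, R_2$ upper triangular, and then $z_j = (R_2)_{jj}/(R_1)_{jj}$. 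One may further fix the remaining freedom by, say, taking $R_1$ to have positive real diagonal entries, so that the pair $(U, V)$ lives on a quotient of $U(N) \times U(N)$ of the same type as in (\ref{2.1d}).

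The key computational step is the Jacobian for the map $(G_1, G_2) \mapsto (\{z_j\}, U, V, \{\text{off-diagonal entries of } R_1, R_2\})$. I would compute the wedge product of the matrices of differentials $U^\dagger \, dG_1 \, V$ and $U^\dagger \, dG_2 \, V$, exactly as in the proof of Proposition \ref{P1.1}, ordering the indices so that the diagonal contributions factor out. The diagonal part of $U^\dagger(dG_2 - z\, dG_1)V$ produces, after the change of variables from the diagonal entries of $R_1, R_2$ to the $z_j$, a factor $\prod_{j} (R_1)_{jj}^{-2}$ or similar, while the off-diagonal $(j,k)$ entries with $j<k$ contribute a factor $\prod_{j<k} |(R_1)_{jj} z_k - (R_2)_{kk} \cdot(\cdots)|^2$-type term; when the dust settles the combined geometric factor is $\prod_{j<k} |z_k - z_j|^2$ times a function of the $(R_1)_{jj}$ alone. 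The remaining task is then to integrate out the off-diagonal entries of $R_1$ and $R_2$ and the diagonal moduli $|(R_1)_{jj}|$ against the Gaussian weight $\exp(-\operatorname{Tr} R_1^\dagger R_1 - \operatorname{Tr} R_2^\dagger R_2)$. Writing $(R_2)_{jj} = z_j (R_1)_{jj}$, the diagonal part of the weight becomes $\sum_j |(R_1)_{jj}|^2 (1 + |z_j|^2)$, and each Gaussian integral over the off-diagonal entries in column $k$ produces an extra power of $(1+|z_k|^2)$ through the coupling, finally yielding $\prod_l (1 + |z_l|^2)^{-(N+1)}$ after all $\binom{N}{2}$ off-diagonal pairs plus the one diagonal integration per eigenvalue are accounted for.

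The main obstacle, as in Proposition \ref{P1.1}, is establishing that the raw product of differentials is actually proportional to the clean factorized volume form $\wedge_j d^2 z_j \wedge (U^\dagger dU) \wedge (V^\dagger dV) \wedge (\text{off-diagonal differentials})$ — i.e.\ identifying which apparent differentials drop out of the wedge product because of the dimension count of the quotient $(U(N)\times U(N))/(\text{torus})$, and correctly bookkeeping the powers of $(1+|z_j|^2)$ coming from the interplay between the diagonal rescaling $(R_2)_{jj} = z_j (R_1)_{jj}$ and the Gaussian integration over the off-diagonal entries. A cleaner alternative that sidesteps the generalized Schur decomposition is to use the result of Proposition \ref{P2.8a}-style reasoning together with the known eigenvalue PDF for a single Ginibre matrix: condition on $G_1$, note that $G_1^{-1} G_2$ has the law of $G_1^{-1}$ times a Ginibre matrix, and integrate over $G_1$; but organizing this into the product form (\ref{CV}) still requires a Jacobian computation of comparable difficulty, so I would present the simultaneous Schur approach as the primary route, following Krishnapur \cite{Kr06}. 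Finally, I would note that the normalization constant (not displayed in the statement) can be recovered, as in the earlier proofs, by tracking all constants through the calculation and invoking the Selberg-type integral $\int_{\mathbb C^N} \prod_l (1+|z_l|^2)^{-(N+1)} \prod_{j<k}|z_k-z_j|^2 \, \prod_l d^2 z_l$, which evaluates in closed product form.
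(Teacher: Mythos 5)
Your primary route, via the generalized (QZ) Schur decomposition of the pair $(G_1,G_2)$, is a genuinely different argument from the paper's. The paper substitutes $G_2 = G G_1$ and integrates out $G_1$ to obtain the matrix density $\det(\mathbb I + G^\dagger G)^{-2N}$ of $G = G_1^{-1} G_2$, then applies the \emph{ordinary} Schur decomposition to $G$ alone, and finally integrates out the strictly upper triangular entries of $Z$ by a bordering recurrence for $I_{n,p}$. Your route is viable and in effect transplants the periodic/generalized Schur technique that the paper itself uses for products of Ginibre matrices in (\ref{7.2}) and Proposition~\ref{P3.1}. The trade-offs are complementary: with your decomposition the Gaussian weight $e^{-{\rm Tr}(R_1^\dagger R_1 + R_2^\dagger R_2)}$ completely decouples the off-diagonal entries of $R_1,R_2$ (those Gaussian integrals are trivial), but the generalized Schur Jacobian is more involved; in the paper's route the Jacobian is the familiar $\prod_{j<k}|z_j-z_k|^2$ of Proposition~\ref{P1.1}, but the weight $\det(\mathbb I + Z^\dagger Z)^{-2N}$ couples the off-diagonal entries of $Z$ to the diagonal and the recurrence must be established by the bordering argument.

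There is, however, a concrete error in your account of where the factor $\prod_l (1+|z_l|^2)^{-(N+1)}$ comes from. You write that ``each Gaussian integral over the off-diagonal entries in column $k$ produces an extra power of $(1+|z_k|^2)$ through the coupling.'' There is no such coupling: the strictly upper triangular entries $(R_1)_{jk},(R_2)_{jk}$ carry the $z$-independent weight $e^{-|(R_1)_{jk}|^2 - |(R_2)_{jk}|^2}$, and the generalized Schur Jacobian involves only the diagonal entries of $R_1,R_2$ (wedging the $(j,k)$ lower-triangular entries of $U^\dagger dG_1 V$ and $U^\dagger dG_2 V$ against $A_{jk}:=(U^\dagger dU)_{jk}$ and $B_{jk}:=(V^\dagger dV)_{jk}$ produces the $2\times 2$ determinant $(R_1)_{jj}(R_2)_{kk} - (R_1)_{kk}(R_2)_{jj}$), so the off-diagonal entries integrate out to a pure constant. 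The factor $(1+|z_l|^2)^{-(N+1)}$ in fact arises from integrating the diagonal moduli. Writing $(R_2)_{jj} = z_j (R_1)_{jj}$ and $r_j := |(R_1)_{jj}|$, the Jacobian equals
\[
\prod_{j<k}\bigl|(R_1)_{jj}(R_2)_{kk} - (R_1)_{kk}(R_2)_{jj}\bigr|^2 = \prod_{j=1}^N r_j^{2(N-1)}\, \prod_{j<k}|z_j - z_k|^2,
\]
while the change of variables $\bigl((R_1)_{jj},(R_2)_{jj}\bigr)\mapsto(r_j,z_j)$ contributes a further measure factor $r_j^3\,dr_j\,d^2 z_j$ per index $j$, the residual gauge phase being absorbed. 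The $r_j$-integral is therefore $\int_0^\infty r^{2N+1} e^{-r^2(1+|z_j|^2)}\,dr \propto (1+|z_j|^2)^{-(N+1)}$, which is the correct source of the weight. With this correction your argument closes and gives an honest alternative to the paper's proof, though you would still need to supply the full dimension-count justifying that the wedge product factorizes, as in the proof of Proposition~\ref{P1.1}.
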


\begin{proof}
The joint element distribution of $(G_1,G_2)$ is proportional to $\exp(-{\rm Tr} \, G_1^\dagger G_1 - {\rm Tr} \, G_2^\dagger G_2)$.
Substituting  $G_2 = G G_1$, (this gives a Jacobian factor $| \det G_1 |^N$) then integrating over $G_1$ gives that the element
distribution of $G$ is proportional to
\begin{equation}\label{2.45a}
\det ( \mathbb I + G^\dagger G)^{-2N}.
\end{equation}
Introducing now the Schur decomposition (\ref{GS}), and changing variables as in the proof of Proposition
\ref{P1.1}, we have from this that the element distribution of the upper triangular
matrix $Z$ therein is proportional to
\begin{equation}\label{CV1}
\det ( \mathbb I + Z^\dagger Z)^{-2N} \prod_{1 \le j < k \le N} | z_k - z_j |^2.
\end{equation}
Here $\{z_j\}$ are the eigenvalues of $G$, and also the diagonal entries of $Z$.

It remains to integrate over the strictly upper triangular entries of $Z$. For this, 
denote the leading $n \times n$ sub-block of $Z$ by $Z_n$ and let the
product of differentials for the strictly upper entries of $Z_n$ be
denoted $(d \tilde{Z}_n)$. For $p \ge n$, define
$$
I_{n,p}(z_1,\dots,z_n) := \int {1 \over \det ( \mathbb 1_n + Z_n  Z_n^\dagger)^{p} } \, (d \tilde{T}_n).
$$
Writing
\begin{equation}\label{Z}
Z_n = \begin{bmatrix} Z_{n-1} & \mathbf u \\
\mathbf 0_{n-1}^T & z_n \end{bmatrix}, \qquad \mathbf u = [Z_{jn}]_{j=1}^{n-1},
\end{equation}
it's possible to integrate out over the elements of $\mathbf u$ to obtain the recurrence \cite{HKPV08,FK09}
$$
I_{n,p}(z_1,\dots,z_n) = {C_{n-1,p} \over (1 + |z_n|^2 )^{p-n+1} }
I_{n-1,p-1}(z_1,\dots,z_{n-1}), \qquad  C_{n-1,p} = \int { (d \mathbf v ) \over (1 + \mathbf v^\dagger \mathbf v )^p }.
$$
Iterating this with $n=2N$, $p=N$ shows
$$
\int  \det ( \mathbb I + Z^\dagger Z)^{-2N} \,  (d \tilde{Z}_N) \propto \prod_{l=1}^N {1 \over (1 + |z_l|^2)^{N+1} },
$$
which when used in (\ref{CV1}) implies (\ref{CV}).

\end{proof}

The stereographic projection from the south pole of a sphere with radius $1/2$, spherical coordinates $(\theta, \phi)$, to a
plane tangent to the north pole is specified by the equation
$$
z = e^{i \phi} \tan {\theta \over 2}, \qquad z = x + i y.
$$
Making this change of variables in (\ref{CV1}) gives the PDF on the sphere proportional to
\begin{equation}\label{CV2}
\prod_{1 \le j < k \le N} | u_k v_j  - u_j v_k |^2, \qquad u = \cos (\theta/2) e^{i \phi/2}, \: v = -i \sin (\theta/2) e^{-i \phi/2}.
\end{equation}
Here $u,v$ are the Cayley-Klein parameters, and moreover $ | u_k v_j  - u_j v_k | = | \mathbf r_k - \mathbf r_j|$ where
$\mathbf r_j,  \mathbf r_k$ are the vector coordinates on the sphere. It is furthermore the case that minus the logarithm of this distance solves the Poisson equation on the sphere, and so (\ref{CV2}) has the interpretation of the Boltzmann factor at coupling $\beta =2$ of the corresponding Coulomb gas; see \cite[\S 15.6]{Fo10} for more details.

After first removing $u_k, u_j$ from the product of differences in (\ref{CV2}), the Vandermonde determinant identity
(\ref{2.1h}) can be used to compute the correlation functions following a strategy analogous to that used in the
proof of Proposition \ref{P2.2}. This calculation was first done in the context of the corresponding two-dimensional
one-component plasma, as obtained by the rewrite of (\ref{CV2}) analogous to (\ref{1.1i}) \cite{Ca81}.

\begin{proposition}
The $n$-point correlations for (\ref{CV2}) are given by 
\begin{equation}\label{11.53}
\rho_{(n,N)}\left((\theta_1,\phi_1),\dots,(\theta_n,\phi_n)\right)=\Big ( {N \over  \pi} \Big )^n
{\rm{det}}\left[{\left( u_j\bar{u}_k+v_j\bar{v}_k \right)}^{N-1}
\right]_{j,k=1,\dots,n}.
\end{equation}
\end{proposition}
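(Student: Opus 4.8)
The plan is to follow verbatim the route used to prove Proposition~\ref{P2.2}, after recasting the product in (\ref{CV2}) into the form of a homogeneous Vandermonde determinant. First I would factor $u_k v_j - u_j v_k = u_j u_k(\,v_j/u_j - v_k/u_k)$, so that
\[
\prod_{1 \le j < k \le N} | u_k v_j - u_j v_k |^2 = \prod_{l=1}^N | u_l |^{2(N-1)} \prod_{1 \le j < k \le N} | \zeta_j - \zeta_k |^2, \qquad \zeta_j := v_j / u_j .
\]
Next, apply the Vandermonde identity (\ref{2.1h}) to $\prod_{j<k}(\zeta_j - \zeta_k)$, multiply by its complex conjugate, and distribute a factor $u_j^{N-1}$ into the $j$-th row of the holomorphic determinant (and $\bar u_l^{N-1}$ into the $l$-th row of the antiholomorphic one) so that the generic entries become $u_j^{N-k} v_j^{k-1}$ and $\bar u_l^{N-k}\bar v_l^{k-1}$; transposing one of the two determinants and writing the product of determinants as the determinant of the product then gives
\[
\prod_{1 \le j < k \le N} | u_k v_j - u_j v_k |^2 = \det\Big[ \sum_{k=1}^N ( u_j \bar u_l )^{N-k} ( v_j \bar v_l )^{k-1} \Big]_{j,l=1}^N .
\]
This is already of the determinantal shape $\det[K_N]$ required; it remains to normalise the summand so that $K_N$ is a reproducing kernel.

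The one genuinely non-formal point is the choice of reference measure. I would take $d\mu$ to be the uniform area measure on the sphere of radius $1/2$; its total mass is $\pi$, and it is the measure with respect to which (\ref{CV2}) is, up to a constant, the true eigenvalue density, since under the stereographic map $z = e^{i\phi}\tan(\theta/2)$ one has $d^2 z = (1+|z|^2)^2\, d\mu$ together with $|u_k v_j - u_j v_k|^2 = |z_k - z_j|^2 / \big((1+|z_j|^2)(1+|z_k|^2)\big)$, and these two facts reconcile (\ref{CV2}) with the planar PDF (\ref{CV}). With $u = \cos(\theta/2)e^{i\phi/2}$ and $v = -i\sin(\theta/2)e^{-i\phi/2}$, the monomials $\{u^{N-k}v^{k-1}\}_{k=1}^N$ are then mutually $d\mu$-orthogonal: the $\phi$-integration forces the two indices to coincide, and the surviving $\theta$-integral is a Beta integral, giving the squared norm $h_k = \pi (N-k)!(k-1)!/N!$. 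Dividing the $k$-th column of each determinant above by $h_k^{1/2}$ replaces the summand by $\sum_{k=1}^N h_k^{-1}(u_j \bar u_l)^{N-k}(v_j \bar v_l)^{k-1}$, and since $h_k^{-1} = (N/\pi)\binom{N-1}{k-1}$ the binomial theorem collapses this sum to $K_N((\theta_j,\phi_j),(\theta_l,\phi_l))$ with
\[
K_N\big((\theta,\phi),(\theta',\phi')\big) = \frac{N}{\pi}\,\big( u\bar u' + v\bar v' \big)^{N-1} .
\]

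Finally I would run the usual determinantal-process reduction. Orthonormality of $\{h_k^{-1/2}u^{N-k}v^{k-1}\}$ with respect to $d\mu$ gives immediately the reproducing property $\int K_N(r,s)K_N(s,t)\,d\mu(s) = K_N(r,t)$ and the normalisation $\int K_N(r,r)\,d\mu(r) = N$ (here $|u|^2+|v|^2 = 1$, so $K_N(r,r) = N/\pi$, consistent with the rotation-invariant constant density $\rho_{(1,N)} = N/\pi$); then the cofactor-expansion identity (\ref{2.1k}), iterated exactly as in the proof of Proposition~\ref{P2.2}, yields $\rho_{(n,N)} = \det[K_N]_{j,k=1}^n$, which is precisely (\ref{11.53}), the scalar $(N/\pi)^n$ being the $n$-fold product of the prefactor of $K_N$. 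I expect the main obstacle to be nothing more than bookkeeping: correctly identifying $d\mu$ (the area measure of total mass $\pi$, not $d\theta\,d\phi$) and evaluating the normalising constants $h_k$; once that is in place, the binomial collapse and the determinantal induction are routine and mirror Proposition~\ref{P2.2} line for line.
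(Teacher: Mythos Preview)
Your proposal is correct and follows precisely the route the paper indicates: factor out $u_j,u_k$ from $|u_kv_j-u_jv_k|$, apply the Vandermonde identity (\ref{2.1h}) in the variable $\zeta_j=v_j/u_j$, and then run the determinantal reduction of Proposition~\ref{P2.2} using the orthogonality of $\{u^{N-k}v^{k-1}\}_{k=1}^N$ against the spherical area measure; the Beta-integral evaluation of $h_k$ and the binomial collapse to $(u\bar u'+v\bar v')^{N-1}$ are exactly the calculation the paper leaves implicit.
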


\begin{remark} $ $ \\
1.~Changing variables $\theta_j = 2  r_j \sqrt{\pi/ N}$ in (\ref{11.53}) and taking $N \to \infty$, the
bulk scaled correlation kernel (\ref{2.2b}) results. Here there is no edge regime. \\
2.~Denote by $a$ ($X$) an $n \times N$ ($N \times M$), $n \ge N$ ($M \ge N$) standard complex Gaussian
matrix, and set $A = a^\dagger a$, $Y = A^{-1/2} X$. In terms of $Y$ define $Z = U (Y Y^\dagger)^{1/2}$,
which corresponds to the induced ensemble construction of \S~\ref{S.GI}.
It was shown in \cite{FF11} that the element PDF of $Z$ is proportional to 
$$
(\det Z^\dagger Z)^{M - N} {1 \over \det (\mathbb I + Z^\dagger Z)^{n+M}}.
$$
The proof of Proposition \ref{P2.11} now gives that the corresponding eigenvalue PDF is proportional to
\begin{equation} \label{PDF Ispherical}
\prod_{l=1}^N {| z_l |^{2 (M - N)}  \over (1 + |z_l|^2)^{n+M - N +1}} \prod_{1 \le j < k \le N} |z_k - z_j|^2.
\end{equation}
With $M, n$ scaled with $N$, $M/N \to \alpha_1\ge 1 , n/N \to \alpha_2 \ge 1$  the leading order eigenvalue 
support now occurs in an annulus with inner and outer radii 
\begin{equation} \label{r1 r2 Ispherical}
r_1 = \sqrt{(\alpha_1 -1)/\alpha_2}, \qquad r_2 = \sqrt{\alpha_1/ (\alpha_2 - 1)}. 
\end{equation}
The associated correlation kernel $K_N$ can be expressed in terms of the incomplete beta function $I_x (a,b) \propto \int_0^x t^{a-1}(1 - t)^{b-1} \, dt$, normalised to equal unity for
$x=1$ as  
\begin{align}  \label{K Ispherical}
K_N(w,z)&= \frac{1}{\pi} \frac{ |zw|^{ M-N  } }{  ( (1+|z|^2)(1+|w|^2) )^{ (n+M-N+1)/2  }  }
\\
&\quad \times (M+n-N)(1+w\bar{z})^{n+M-N-1} \Big( I_\zeta (M-N,n)-I_\zeta (M,n-N) \Big),  \nonumber 
\end{align}
where $ \zeta=( w\bar{z} )/( 1+w\bar{z} )$, see e.g. \cite{BF22}. 
Scaling of the correlation functions at either of these boundaries
gives the edge kernel (\ref{2.2c}). Setting $w=z$ and with $\alpha_1, \alpha_2$ as in (\ref{r1 r2 Ispherical})
allows for the computation of the global density \cite{FF11}
\begin{equation} \label{sp.g}
\lim_{N \to \infty}
{1 \over N} \rho_{(1),N}(z) =
{ \alpha_1+\alpha_2-1 \over \pi (1 + r^2)^2}
\chi_{r_1 \le r \le r_2}.
\end{equation}
\end{remark} 

\subsection{Sub-block of a unitary matrix}
Let $U$ be an $(n + N) \times (n + N)$ unitary matrix, and let $A$ be the top $N \times N$ sub-block. 
The non-zero eigenvalues of $A$ are then the nonzero eigenvalues of $DUD$, where $D$ is the diagonal
matrix with the first $N$ diagonal entries $1$, and the last $n$ diagonal entries $0$. The eigenvalues of this
matrix are the same as that for $UD$, since in general for square matrices the eigenvalues of $AB$ and $BA$
coincide, and furthermore $D^2 = D$. For $\mathbf u$ a normalised eigenvector of $UD$ with eigenvalue $\lambda$,
computing the length squared of $UD \mathbf u$ gives $\mathbf u^\dagger D \mathbf u = | \lambda|^2$, where use
has been made of the fact $U^\dagger U = \mathbb I$. But with the entries of $\mathbf u$ all nonzero, the action of
$D$ reduces the length, implying $|\lambda| < 1$. With $U$ chosen with Haar measure, 
it was shown by Zyczkowski and Sommers \cite{ZS99}
 that the exact eigenvalue PDF of $A$ can be calculated.

 \begin{proposition}\label{P2.12}
 For $A$ the top $N \times N$ sub-block of an $(n + N) \times (n + N)$ unitary matrix chosen with Haar measure,
  the eigenvalue PDF is proportional to
\begin{equation}\label{CV+}
\prod_{l=1}^N (1 - | z_l |^2)^{n-1} \prod_{1 \le j < k \le N} |z_k - z_j|^2, \qquad |z_l| < 1.
\end{equation}
\end{proposition}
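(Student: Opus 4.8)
The plan is to follow the three-step template used in the proof of Proposition~\ref{P2.11} for the complex spherical ensemble: first identify the element distribution of the truncation $A$, then insert the Schur decomposition $(\ref{GS})$, and finally integrate out the strictly upper triangular entries. For the first step, write the Haar-distributed $U \in U(n+N)$ in block form with top-left $N \times N$ block $A$ and bottom-left $n \times N$ block $C$; the first $N$ columns of $U$ then constitute a uniformly distributed point on the complex Stiefel manifold $\{ W \in \mathbb{C}^{(n+N) \times N} : W^\dagger W = \mathbb{I}_N \}$, so that $A^\dagger A + C^\dagger C = \mathbb{I}_N$. Disintegrating this uniform measure over $A$ --- equivalently, performing the constrained integration over $C$ subject to $C^\dagger C = \mathbb{I}_N - A^\dagger A$ --- reduces to computing the volume of the fibre $\{ C \in \mathbb{C}^{n \times N} : C^\dagger C = P \}$ for fixed positive definite $P$, which by homogeneity scales as $(\det P)^{\,n-N}$ up to a $P$-independent constant. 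Hence, for $n \ge N$, the element distribution of $A$ is proportional to $\det(\mathbb{I}_N - A^\dagger A)^{\,n-N}$ on the domain $\{ A : \mathbb{I}_N - A^\dagger A > 0 \}$; this is the result of Zyczkowski and Sommers \cite{ZS99}, and the scalar case $N=1$ (the classical Beta law for a single coordinate of a uniform point on the sphere in $\mathbb{C}^{n+1}$) serves as a consistency check.

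Next I would insert $A = U Z U^\dagger$ and change variables exactly as in the proof of Proposition~\ref{P1.1}. Since $\det(\mathbb{I}_N - A^\dagger A) = \det(\mathbb{I}_N - Z^\dagger Z)$ depends on $Z$ alone, the Jacobian $(\ref{2.1cZ})$ carries over verbatim, so the element distribution of the strictly upper triangular matrix $Z$ --- whose diagonal entries are the eigenvalues $\{ z_j \}$ --- is proportional to $\det(\mathbb{I}_N - Z^\dagger Z)^{\,n-N} \prod_{1 \le j < k \le N} |z_k - z_j|^2$. It then remains to integrate over the strictly upper triangular entries of $Z$, the analogue of the evaluation of $I_{n,p}$ in the proof of Proposition~\ref{P2.11}.

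For this last step, let $Z_k$ denote the leading $k \times k$ sub-block, write the off-diagonal part of its last column as $\mathbf{u} \in \mathbb{C}^{k-1}$ and its bottom-right entry as $z_k$, and set
\[
F_k^{(q)}(z_1, \dots, z_k) := \int \det(\mathbb{I}_k - Z_k^\dagger Z_k)^{\,q} \, (d\tilde{Z}_k).
\]
A Schur-complement computation, combined with the push-through identity $\mathbb{I} + Z(\mathbb{I} - Z^\dagger Z)^{-1} Z^\dagger = (\mathbb{I} - Z Z^\dagger)^{-1}$, gives
\[
\det(\mathbb{I}_k - Z_k^\dagger Z_k) = \det(\mathbb{I}_{k-1} - Z_{k-1}^\dagger Z_{k-1}) \, \Big( 1 - |z_k|^2 - \mathbf{u}^\dagger (\mathbb{I}_{k-1} - Z_{k-1} Z_{k-1}^\dagger)^{-1} \mathbf{u} \Big).
\]
Rescaling $\mathbf{u} = \sqrt{1 - |z_k|^2}\,(\mathbb{I}_{k-1} - Z_{k-1} Z_{k-1}^\dagger)^{1/2} \mathbf{s}$, using $\det(\mathbb{I}_{k-1} - Z_{k-1} Z_{k-1}^\dagger) = \det(\mathbb{I}_{k-1} - Z_{k-1}^\dagger Z_{k-1})$, and performing the elementary radial integral $\int_{\mathbf{s} \in \mathbb{C}^{k-1},\, |\mathbf{s}| < 1} (1 - |\mathbf{s}|^2)^{\,q} \, (d\mathbf{s})$ yields the recurrence $F_k^{(q)}(z_1, \dots, z_k) = c_{k,q}\,(1 - |z_k|^2)^{\,q+k-1}\, F_{k-1}^{(q+1)}(z_1, \dots, z_{k-1})$ with $c_{k,q}$ a constant. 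Since the exponent $q + k - 1$ is invariant under $(q, k) \mapsto (q+1, k-1)$, iterating from $q = n - N$, $k = N$ down to $F_1^{(n-1)}(z) = (1 - |z|^2)^{n-1}$ produces $F_N^{(n-N)}(z_1, \dots, z_N) \propto \prod_{l=1}^N (1 - |z_l|^2)^{n-1}$; multiplying by the Vandermonde factor and recalling that the diagonal entries of $Z$ (the eigenvalues of $A$) satisfy $|z_l| < 1$ gives $(\ref{CV+})$, the overall constant being irrelevant since only proportionality is claimed.

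The main obstacle is the first step in the regime $n < N$, where $\det(\mathbb{I}_N - A^\dagger A)^{\,n-N}$ carries a negative exponent and the element distribution of $A$ is no longer absolutely continuous (indeed $A$ then has at least $N - n$ singular values equal to $1$), so the Schur-decomposition argument cannot start from a Lebesgue density on the $A$-block. I would circumvent this by analytic continuation in $n$ of the final formula, or by a limiting argument that embeds the problem into the range $n \ge N$, as carried out in \cite{ZS99}. Once the case $n \ge N$ is secured, Steps~2 and~3 are a mechanical transcription of the corresponding parts of the proofs of Propositions~\ref{P1.1} and~\ref{P2.11}, the only genuinely new computation being the $\mathbf{u}$-integral and resulting recurrence above.
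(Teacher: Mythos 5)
Your proof is correct, but it is not the route the paper actually takes; rather, it is the alternative route the paper itself flags in the Remark immediately following Proposition~\ref{P2.12}, which cites \cite{FS03,Co05} for the element density $\propto \det(\mathbb I - A^\dagger A)^{n-N}$ and \cite{FK09} for the subsequent Schur-decomposition argument ``analogous to the proof of Proposition~\ref{P2.11}''. The paper's own proof never passes through a Lebesgue density on the $A$-block: it represents the constraint $A^\dagger A + C^\dagger C = \mathbb I_N$ by the matrix Fourier integral (\ref{3.x}) over Hermitian $H$, integrates out $C$ to reach the matrix integral $F_n(A_N)$ of (\ref{3.y}), inserts the Schur decomposition there, and iterates by simultaneously bordering $Z_N$ and $H_N$, with the Schur-complement identity producing a quadratic form in the $H$-bordering vector $\mathbf w$ and the residue theorem closing the recurrence. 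The payoff is that the paper's argument is insensitive to whether $n \geq N$, since it never invokes normalisability of a density on $A$; your approach, by contrast, is more elementary (no Hermitian auxiliary matrix, no contour integral) and computationally cleaner (the push-through identity $\mathbb I + Z(\mathbb I - Z^\dagger Z)^{-1}Z^\dagger = (\mathbb I - ZZ^\dagger)^{-1}$ makes the recurrence exponent $q+k-1$ manifestly invariant), but is genuinely confined to $n \geq N$, exactly as you note. Your ``fibre volume'' derivation of $\det(\mathbb I - A^\dagger A)^{n-N}$ is sound --- it is the same Wishart-type Jacobian the paper cites as \cite[Eq.~(3.23)]{Fo10} in the proof of Proposition~\ref{P2.8a} --- and your recurrence computation is correct; the only weakness is the deferral of $n < N$ to analytic continuation, whereas the paper's primary proof dispenses with that case for free.
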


\begin{proof}
Let $C$ be the block of $U$ in the first $N$ columns directly below $A$. Then by the unitarity of $U$,
$A^\dagger A + C^\dagger C = \mathbb I_N$. This implies a joint distribution in  the space of general $N \times N$
and $n \times N$ complex matrices $A,C$ proportional to the matrix delta function
\begin{equation}\label{3.x}
\delta(A^\dagger A + C^\dagger C - \mathbb I_N) \propto \int e^{{\rm Tr} ((i H - \mu \mathbb I_N) ( A^\dagger A + C^\dagger C - \mathbb I_N)} \, (d H), \quad \mu > 0.
\end{equation}
Here in the integral form of the matrix delta function $H$ is Hermitian; see e.g.~\cite[Eq.~(3.27)]{Fo10}. Beginning with
this integral form the integration of the complex matrix $C$ can be carried out according to 
\cite[displayed equation below (3.27)]{Fo10} to give that the element PDF of $A=: A_N$ is proportional to
\begin{equation}\label{3.y}
F_n(A_N) := \int (\det(i H_N - \mathbb I_N))^{-n} e^{{\rm Tr} ((i H_N - \mathbb I_N)( A_N^\dagger A_N - \mathbb I_N))} \, (dH_N).
\end{equation}
This matrix integral is the starting point for the derivation of (\ref{3.x}) given in \cite[Appendix B]{ABKN13}, which we follow below.

In (\ref{3.y}) the integral is unchanged by conjugating $H$ with  a unitary matrix $V$ say. Choosing $V$ 
to be the unitary matrix in the
Schur decomposition (\ref{GS}) allows us to effectively replace $A_N$ by $Z_N$ throughout (\ref{3.y}).
Doing this, and integrating too over $\tilde{Z}_N$ (i.e.~the strictly upper triangular entries of $Z_N$), we denote
the matrix integral by $\tilde{F}_n(z_1,\dots,z_N)$.

Substituting the decomposition (\ref{Z}) for
$Z_N$ we see that the vector $\mathbf u$ occurs as a quadratic form, and can be integrated over. To progress
further, $H_N$ too is decomposed 
 by a bordering procedure of the leading $(N-1) \times (N-1)$ sub-block $H_{N-1}$,
 with the $N-1$ column (row) vector $\mathbf w$
($\mathbf w^\dagger$) and entry $h_N$ in the bottom right corner. 
Key now is a determinant identity for
a block matrix 
$$
\det \begin{bmatrix} A & B \\ C & D \end{bmatrix} = \det (D) \det(A - B D^{-1} C),
$$
familiar
from the theory of the Schur complement (see e.g.~\cite{Ou81}),
applied to the first term in the integrand  of (\ref{3.y}) with $D = h_N$ (a scalar). A shift of the integration
domain according to the additive rank 1 perturbation $H_{N-1} \mapsto H_{N-1} - i \mathbf w \mathbf  w^\dagger$ 
gives a quadratic form in $\mathbf w$, which after integration reduces (\ref{3.y}) to
\begin{multline}\label{3.y1}
\tilde{F}_n(z_1,\dots,z_N) \propto \int (\det(i H_{N-1}-  \mathbb I_{N-1}))^{-n} (i h_N - 1)^{-n} e^{(|z_N|^2 - 1) (i h_N - 1)}  \\
\times e^{{\rm Tr} (Z_{N-1}^\dagger Z_{N-1} (i H_{N-1}  - \mathbb I_{N-1}))} \, (dH_{N-1}) (d Z_{N-1}) d h_N.
\end{multline}
Here, using the residue theorem, the integral over $h_N$ can be performed, yielding the recurrence
$$
\tilde{F}_n(z_1,\dots,z_N) \propto  (1 - |z_N|^2)^{n-1} \chi_{|z_N|^2 < 1} \tilde{F}_n(z_1,\dots,z_{N-1}).
$$
Iterating gives the first factor in (\ref{CV+}), while the product of differences is due to the Jacobian (\ref{2.1cZ}) for the
change of variables to the Schur decomposition as computed in the proof of Proposition \ref{P1.1}.
\end{proof}

%

\begin{remark}
The matrix integral (\ref{3.y})
can be evaluated to deduce that the element distribution
of $A$ is proportional to \cite{FS03,Co05}
$
 \det (\mathbb I - A^\dagger A)^{n - N}.
$
Only for $n \ge N$ is the normalisable. This is in keeping with $A^\dagger A$ having $N-n$ eigenvalues equal to
$1$ for $n < N$. Starting with this expression, and thus the corresponding restriction on $n,N$, a derivation of 
(\ref{CV}) can be given which is analogous to the proof of Proposition \ref{P2.11} \cite{FK09}.
\end{remark}

We have seen that the eigenvalue PDF (\ref{CV}) can, after a stereographic projection, be identified as the Boltzmann factor for a Coulomb gas model on the sphere. The concept of stereographic projection can also be applied to a pseudosphere, which refers to the two-dimensional hyperbolic space with constant negative Gaussian curvature. Doing so gives the metric specifying the Poincar\'e disk. Consideration of the solution of the Poisson equation on the latter allows (\ref{CV+}) to be interpreted as the Boltzmann factor for a Coulomb gas model on the Poincar\'e disk \cite{FK09}.

The eigenvalue correlations for (\ref{CV+}) are determinantal with correlation kernel, $K_{N,n}$ say, expressible in terms of
the incomplete beta function $I_x (a,b)$.  Thus \cite[\S~3.2.3]{Fi12}
\begin{equation}\label{CV4}
K_{N,n}(w,z) = {n \over \pi} {(1 - |w|^2)^{(n-1)/2}  (1 - |z|^2)^{(n-1)/2}  \over (1 - w \bar{z})^{n+1}} \Big (
1 - I_{w \bar{z}}(N,n+1) \Big ).
\end{equation}
In the limit $n,N \to \infty$ with $(N+n)/N = \alpha < 1$ it follows from this that
\begin{equation}\label{CV5} 
\lim_{N \to \infty} {1 \over N} \rho_{(1),N}(z)
= { (1 - \alpha) \over \alpha \pi } {1 \over (1 - |z|^2)^2} \chi_{|z|^2 < 1/ (1 + \alpha)}.
\end{equation}

Hence there is a bulk regime, and an edge regime. The neighbourhood of the origin is
typical of the bulk regime, and it follows from (\ref{CV4}) that
\begin{equation}\label{CV5a} 
\lim_{n,N \to \infty \atop n/N = \alpha} {1 \over n} K_{N,n}(w/\sqrt{n},w/\sqrt{n}) = K_\infty^{\rm b}(w,z),
\end{equation}
where $K_\infty^{\rm b}$ is specified by (\ref{2.2b}).
And after appropriate scaling about $|z| = \sqrt{\alpha)}$, the universal edge density as given
by the first term in (\ref{2.25}) is obtained \cite{Fi12},
\begin{equation}\label{CV6} 
\lim_{n,N \to \infty \atop (n+N)/n = \alpha} \alpha (1 - \alpha)  {1 \over N}\rho_{(1),N} \Big ( \sqrt{\alpha} +
{\xi \over \sqrt{N}} \sqrt{\alpha(1 - \alpha)}  \Big ) = {1 \over 2 \pi} \Big ( 1 - {\rm erf}({\sqrt  2} \xi) \Big ).
\end{equation}

A distinct scaling regime, referred to as close to unitary \cite{FS03}, or more specifically a multiplicative
rank $n$ contraction of a random unitary matrix, is obtained by taking $N \to \infty$ with $n$ fixed in (\ref{CV4}).
Scaling the eigenvalues $z_k = (1 - y_k/N) e^{i \phi_k/N}$ gives \cite{KS11} \cite[Eq.~(3.2.140)]{Fi12}
\begin{equation}\label{CV7} 
\lim_{N \to \infty} {1 \over N^2} K_{N,n}(z_1,z_2)  = {1 \over \pi}
{(2 \sqrt{y_1 y_2})^{n - 1} \over (n-1)!} \int_0^1 s^n e^{-(y_1+y_2+i (\phi_1 - \phi_2))s} \, ds.
\end{equation}
This functional form was first obtained for the scaled correlation kernel
 in the setting of a rank $n$ additive anti-Hermitian perturbation of a GUE matrix \cite[Eq.~(15) after
 setting $\nu(x) = 1/\pi$, identifying  $\tilde{z}_k  = (\phi_k + i y_k)/2$ and taking the limit
 $g_m \to 1$ ($m=1,\dots,M$) with $M$ identified as $n$]{FK99}.
 
 A generalisation of the setting of Proposition \ref{P2.12} is to consider the top $(N + L) \times N$
 sub-block of an $(n + N + L) \times (n + N + L) $ unitary matrix chosen with Haar measure. Denote such
 a rectangular matrix by $A_{N,L}$, and define from this $\tilde{A}_{N,L} = V (A_{N,L}^\dagger A_{N,L})^{1/2}$,
 where $V$ is an $N \times N$ random Haar distributed unitary matrix. It was shown in \cite{Fi12} that the
 eigenvalue PDF of $\tilde{A}_{N,L}$ is proportional to (\ref{CV}) with an additional factor of $\prod_{l=1}^N | z_l|^{2L}$.
 Denote the corresponding correlation kernel by $K_{N,n,L}$.
 Scaling $L$ by requiring that $L/N = \alpha > 0$ as $N \to \infty$, scaling the eigenvalues $z_k$ as in
 the above paragraph, and keeping $n$ fixed, it was shown in \cite{KS11}, \cite[Eq.~(3.2.122)]{Fi12}
that
\begin{equation}\label{CV8} 
\lim_{N \to \infty} {1 \over N^2} K_{N,n,L}(z_1,z_2) \Big |_{L/N = \alpha } = {1 \over \pi}
{(2 \sqrt{y_1 y_2})^{n - 1} \over (n-1)!} \int_\alpha^{\alpha +1} s^n e^{-(y_1+y_2+i (\phi_1 - \phi_2))s} \, ds.
\end{equation}
Note the consistency with (\ref{CV7}) in the limit $\alpha \to 0^+$.

Another class of generalisation is, for $U$ an $(n+N) \times (n+N)$ Haar distributed unitary matrix, and $A$ a fixed diagonal matrix with first $n$ diagonal entries $a_1,\dots,a_n$ and the remaining entries unity, to consider the product $UA$ \cite{Fy01,FS03}.
With each $a_i=0$, this corresponds to the truncated Haar unitary model of this subsection. The simplest case is when $n=1$, and furthermore we take $a_1 = a$ with $|a|<1$.
Notice that this corresponds to a multiplicative rank $1$ perturbation of $U$ --- see the recent review \cite{Fo21c} for context from this viewpoint.
The eigenvalue PDF is then proportional to
\begin{equation}\label{q0}
(1 - |a|^2)^{N}  \delta \Big ( |a|^2 - \prod_{l=1}^{N+1} | z_l|^2 \Big )
\prod_{1 \le j < k \le N+1} | z_j - z_k|^2, \quad |z_l|<1,
\end{equation}
(cf.~(\ref{CV+}))
and an explicit formula for the general $k$-point correlation function is known
\cite{Fy01,FS03}. The latter involves determinants but technically the state is not (unless $a=0$) a determinantal point process as no correlation kernel can be identified. Specifically, the functional form of the density is
\begin{equation}\label{q1}
\rho_{(1),N+1}^{UA}(z) = {1 \over \pi} {1 \over (1 - |a|^2)^N} {d \over dx} \bigg (
{x^N - 1 \over x - 1} + 
\Big ( 1 - {|a|^2 \over x} \Big )^N \bigg ) \bigg |_{x = |z|^2}.
\end{equation}
For recent works relating to the $UA$ model see \cite{FI19,DR22}.

\subsection{Products of complex Ginibre matrices} \label{S2.7}
Consideration of the eigenvalue PDF for the product $G_1 G_2$, where $G_1$ ($G_2$) are independent
$N \times p$ $(p \times N)$ rectangular complex Ginibre matrices was first undertaken by Osborn \cite{Os04}, in the context of a study relating to quantum chromodynamics. (We also mention that the GinUE has been directly used in the QCD bulk spectrum \cite{MPW99}.)
Later \cite{IK13} it was realised that this eigenvalue PDF must be the same as that for the product $\tilde{G}_1 \tilde{G}_2$ (or $\tilde{G}_2 \tilde{G}_1)$, where each $\tilde{G}_i$ is an $N \times
N$ complex random matrix with element distribution proportional to
\begin{equation}\label{7.1}
| \det \tilde{G}_i \tilde{G}_i^\dagger |^{\nu_i} e^{- {\rm Tr} \, \tilde{G}_i \tilde{G}_i^\dagger}, \quad \nu_1 =0, \: \nu_2 = p - N;
\end{equation}
note the construction (\ref{dA}) for random matrices with this element PDF. Hence it suffices to consider the square
case. Key for this is the generalised Schur decomposition (equivalent to the so-called QZ decomposition in numerical linear
algebra)
\begin{equation}\label{7.2}
 \tilde{G}_1 = U Z_{1} V, \qquad \tilde{G}_2 = V^\dagger  Z_{2} U^\dagger,
\end{equation} 
where $ Z_{1},  Z_{2} $ are upper triangular matrices with diagonal entries
$\{z^{(1)}_j \}, \{z^{(2)}_j \}$ such that 
\begin{equation}\label{7.3}
z^{(1)}_j z^{(2)}_j = z_j
\end{equation} 
 with $\{z_j \}$
 the eigenvalues of $\tilde{G}_1 \tilde{G}_2$.
 
 For the change of variables (\ref{7.2}) the wedge product strategy of the proof of Proposition
 \ref{P1.1} can again be implemented \cite{Os04}, \cite[proof of Proposition 15.11.2]{Fo10} to give
 for the Jacobian (\ref{2.1cZ}). Substituting (\ref{7.2}) in (\ref{7.1}) and recalling (\ref{7.3}),
 it follows that the eigenvalue PDF is proportional to
\begin{equation}\label{7.3a}
\prod_{l=1}^N w^{(2)}(z_l) \prod_{j<k} | z_j - z_k |^2, \quad w^{(2)}(z):=  \int_{\mathbb C} d^2 z_1\, |z_1|^{2 \nu_1} \int_{\mathbb C} d^2  z_2 \,  |z_2|^{2 \nu_2}   \delta (z - z_1   z_2)
 e^{- |z_1|^2 - |z_2|^2}.
\end{equation}
Moreover, it was shown in \cite{Os04} that $w^{(2)}(z)$ can be expressed in terms of the modified
Bessel function $K_{\nu_2 - \nu_1}(2 | z|)$, assuming $\nu_2 \ge \nu_1$.

Akemann and Burda \cite{AB12} (in the case of all matrices square), and soon after Adhikari et al.~\cite{ARRS13}
(the general rectangular case), generalised  (\ref{7.3a}) to hold for the product of $M$ complex
Gaussian matrices. As already indicated in the case $M=2$, following the work of Ipsen and Kieburg \cite{IK13}, it is now known that the rectangular case
can be reduced to the square case, where the square matrices have distribution as in (\ref{7.1}), with the $\nu_i$
equal to the difference (assumed non-negative) between the number of rows in $G_i$ and the number of rows in $G_1$ ($=N$). 
The role of the modified Bessel function, as the special function evaluating the weight $w^{(2)}(z)$ in (\ref{7.3a}),
is now played by the Meijer $G$-function
 \begin{equation}\label{GGa}
G_{p,q}^{m,n} \Big ( z \Big | {a_1,\dots, a_p \atop b_1,\dots, b_q} \Big )
=
{1 \over 2 \pi i}
\int_{\mathcal C} {\prod_{j=1}^m \Gamma ( b_j - s) \prod_{j=1}^n \Gamma(1 - a_j + s) \over
\prod_{j=m+1}^q \Gamma (1-  b_j + s) \prod_{j=n+1}^p \Gamma(a_j - s) } z^s \, ds,
\end{equation}
where $\mathcal C$ is an appropriate contour as occurs in the inversion formula for the corresponding  Mellin transform;
see e.g.~\cite{Lu69} for an extended account.

 \begin{proposition}\label{P3.1}
 Consider the product $\tilde{G}_1 \cdots \tilde{G}_M$ (in any order) of complex Gaussian matrices
 with element PDF as given in (\ref{7.1}), with each $\nu_i \ge 0$ but otherwise unrestricted. The eigenvalue
 PDF is proportional to
 \begin{equation}\label{7.4}
 \prod_{l=1}^N w^{(M)}(z_l) \prod_{j<k} | z_j - z_k |^2, \quad w^{(M)}(z) =  G^{M,0}_{0,M} \Big ( |z|^2 \Big | { \underline{\hspace{0.5cm}}  \atop \nu_1,\dots, \nu_M} \Big ).
 \end{equation}
 The functional form (\ref{7.4}) remains valid for the product of rectangular complex Ginibre matrices
 $G_1 \cdots G_M$, with $\nu_i$ specified as in the second sentence above (\ref{GGa}).
\end{proposition}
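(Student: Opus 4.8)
The plan is to carry out the wedge-product calculation indicated for the two-factor case in (\ref{7.2})--(\ref{7.3a}), now with $M$ factors, which is the route of \cite{AB12,ARRS13}. Write the product in the cyclic form of the generalised Schur (periodic QZ) decomposition,
\begin{equation}
\tilde{G}_a = U_{a-1} Z_a U_a^\dagger \quad (a = 1,\dots,M), \qquad U_M \equiv U_0,
\end{equation}
with each $U_a \in U(N)$ (generically unique up to the phases of its columns and an ordering) and each $Z_a$ upper triangular with diagonal entries $\{ z_a^{(j)} \}_{j=1}^N$; for $M=1$ this is the ordinary Schur decomposition (\ref{GS}) and for $M=2$ it is (\ref{7.2}) with $U = U_0$, $V = U_1^\dagger$. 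Telescoping the unitaries gives $\tilde{G}_1 \cdots \tilde{G}_M = U_0 ( Z_1 \cdots Z_M) U_0^\dagger$, so the eigenvalues of the product are the diagonal entries of the upper triangular matrix $Z_1 \cdots Z_M$, namely $z_j = \prod_{a=1}^M z_a^{(j)}$, in agreement with (\ref{7.3}).

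First I would compute the Jacobian for the change of variables from $\{ \tilde{G}_a \}$ to $( \{ U_a \}, \{ Z_a \} )$. Forming the matrices of differentials $U_{a-1}^\dagger \, d \tilde{G}_a \, U_a$ and taking the wedge product over the index pairs in the order used in the proof of Proposition~\ref{P1.1} --- again accompanied by the count of independent real degrees of freedom which forces certain would-be differentials to drop out --- I expect, as in the $M=2$ case, the factor (\ref{2.1cZ}),
\begin{equation}
\prod_{j<k} | z_j - z_k |^2,
\end{equation}
written in terms of the \emph{product} eigenvalues $z_j = \prod_a z_a^{(j)}$, times the invariant measures on the $U_a$ and Lebesgue measure on the strictly upper triangular entries of the $Z_a$. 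Substituting into the element PDF (\ref{7.1}) and using upper-triangularity of each $Z_a$ gives $\det \tilde{G}_a \tilde{G}_a^\dagger = \prod_j | z_a^{(j)} |^2$ and ${\rm Tr} \, \tilde{G}_a \tilde{G}_a^\dagger = \sum_j | z_a^{(j)} |^2 + \sum_{j<k} | (Z_a)_{jk} |^2$, so the unitaries and the strictly upper triangular entries integrate out to a constant. What remains is a measure on the diagonal entries proportional to $\prod_{j<k} | z_j - z_k |^2 \prod_{a,j} | z_a^{(j)} |^{2 \nu_a} e^{- | z_a^{(j)} |^2}$ subject to $z_j = \prod_a z_a^{(j)}$; since the Vandermonde factor depends only on the $z_j$ while the rest is a product over $j$, pushing forward along $\{ z_a^{(j)} \}_{a,j} \mapsto \{ z_j \}_j$ decouples into single-$j$ integrals and produces the eigenvalue PDF proportional to $\prod_l w^{(M)}(z_l) \prod_{j<k} | z_j - z_k |^2$ with
\begin{equation}
w^{(M)}(z) = \int_{\C} d^2 z_1 \cdots \int_{\C} d^2 z_M \, \Big ( \prod_{a=1}^M | z_a |^{2 \nu_a} e^{- | z_a |^2} \Big ) \, \delta \Big ( z - \prod_{a=1}^M z_a \Big ),
\end{equation}
which reduces to the weight in (\ref{7.3a}) for $M=2$.

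It remains to identify $w^{(M)}$ with the Meijer $G$-function. Each factor being rotationally invariant, $w^{(M)}(z) = W(|z|^2)$ where $W$ is the $M$-fold multiplicative (Mellin) convolution of the radial profiles $\phi_a(t) = t^{\nu_a} e^{-t}$; hence its Mellin transform factorises,
\begin{equation}
\int_0^\infty W(t) \, t^{s-1} \, dt \; \propto \; \prod_{a=1}^M \int_0^\infty t^{\nu_a + s - 1} e^{-t} \, dt = \prod_{a=1}^M \Gamma(\nu_a + s),
\end{equation}
and Mellin inversion, read against the definition (\ref{GGa}) with $m = q = M$, $n = p = 0$ and the substitution $s \mapsto - s$, yields $W(t) \propto G^{M,0}_{0,M} \Big ( t \Big | { \underline{\hspace{0.5cm}} \atop \nu_1,\dots, \nu_M} \Big )$, i.e.\ (\ref{7.4}). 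For the product of genuinely rectangular complex Ginibre matrices $G_1 \cdots G_M$ I would invoke the reduction of \cite{IK13}: by bi-unitary invariance each rectangular factor may be replaced, without altering the joint law of the eigenvalues of the product, by a square matrix with element PDF of the form (\ref{7.1}) whose $\nu_a$ is the (assumed non-negative) excess in its row count --- this is exactly the induced-ensemble construction of Proposition~\ref{P2.8a} --- so the square-case formula just derived applies verbatim.

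The main obstacle is the Jacobian step: extending the wedge-product and degree-of-freedom bookkeeping of Proposition~\ref{P1.1} to the $M$-fold cyclic Schur decomposition, and in particular verifying that the Vandermonde-type factor comes out in the product eigenvalues $z_j = \prod_a z_a^{(j)}$ rather than in the individual diagonal entries --- it is precisely this feature that makes the subsequent push-forward over the $\{ z_a^{(j)} \}$ separate into single-variable integrals and produce a pure product weight $\prod_l w^{(M)}(z_l)$. Once (\ref{2.1cZ}) is in place, the Gaussian integrations, the Mellin-transform factorisation, and the identification with the Meijer $G$-function are routine.
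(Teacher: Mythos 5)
Your proposal follows exactly the same route as the paper's (sketched) proof: the periodic Schur decomposition, the Jacobian given by the Vandermonde factor (\ref{2.1cZ}) in the product eigenvalues, integration over the unitaries and strictly upper triangular entries to arrive at the convolution integral (\ref{7.4a1}) for the weight, identification with the Meijer $G$-function via Mellin transform, and the reduction of the rectangular case to the square induced-ensemble case following \cite{IK13}. The extra detail you supply on the Mellin factorisation and the decoupling of the push-forward is a welcome elaboration of what the paper leaves implicit, but there is no difference in strategy.
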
 

\begin{proof} (Sketch) The generalised Schur decomposition (\ref{7.2}) can be extended to a general number $M$ of
square matrices to give what is referred to as the periodic Schur form \cite{BGV92},  \cite[Corollary 3.2]{Lu01}
 \begin{equation}\label{7.4a}
 \tilde{G}_i = U_i Z_{i} V_i, \qquad V_i = U_{i+1}^\dagger \: (i=1,\dots,M-1), \: V_M = U_1^\dagger;
 \end{equation} 
in \cite{AB12} this was deduced independently. The key features with respect to computing the
eigenvalue PDF of the product $\tilde{G}_1 \cdots \tilde{G}_M$,
as already discussed in the case $M=2$, again hold true. In particular the Jacobian factor for the change
of variables is given by (\ref{2.1cZ}), and the $k$-th diagonal entry of each $Z_{i}$  multiply together to give
the eigenvalue $z_k$ of the product. It follows that the eigenvalue PDF is given by (\ref{7.4}) with
 \begin{equation}\label{7.4a1}
w^{(M)}(z) \propto  \int_{\mathbb C} d^2 z_1\, |z_1|^{2 \nu_1} \cdots  \int_{\mathbb C} d^2  z_M \,  |z_M|^{2 \nu_M}   \delta (z - z_1  \cdots  z_M)
 e^{-\sum_{j=1}^M  |z_j|^2}.
  \end{equation} 
 As noted in \cite{AB12,ARRS13}, taking the Mellin transform of this expression leads to the Meijer $G$-function form in
 (\ref{7.4}).
 \end{proof}
 
 An easy consequence of (\ref{7.4}) is that the eigenvalue correlations are determinantal
 with kernel, $K_{N,M}$ say, given by \cite{AB12,ARRS13},
 \begin{equation}\label{7.4a2} 
K_{N,M}(w,z) =   \Big ( w^{(M)}(w)   w^{(M)}({z}) \Big )^{1/2} \sum_{j=1}^{N} { (w \bar{z})^{j-1} \over 
\prod_{m=1}^M \Gamma(j + \nu_m)}.
\end{equation}
Rigorous asymptotic analysis of (\ref{7.4a2}) can be carried out \cite{AB12,LW16}.
For example, with $w=z$ and $z \mapsto N^{M/2} z$, one obtains
 \begin{equation}\label{7.4b}
 \lim_{N \to \infty} N^{M- 1} \rho_{(1),N}( N^{M/2} z) = { |z|^{-2+2/M} \over \pi M} \chi_{|z| < 1},
\end{equation} 
where it is assumed each $\nu_i$ is fixed. To interpret this result, form the $M$-th power of a single
GinUE matrix $G$. The eigenvalues are $\{\tilde{z}_j = z_j^M \}$ with $\{z_j\}$ the the eigenvalues of $G$.
Changing variables in the circular law (\ref{2.2a}) $\tilde{z} = z^M$ gives the density (\ref{7.4b}) in the variable
$\tilde{z}$. Hence, as anticipated in \cite{BNS12}, the global eigenvalue density for the product of
$M$ independent GinUE matrices is identical to that of the $M$-th power of a single
GinUE matrix. This same global scaling limit is also well defined if some or all
of the $\nu_i$ are proportional to $N$.  An explicit limit formula has been obtained by Liu and
Wang \cite{LW16}. It is found that the density has the support of an annulus if each $\mu_i:=\lim_{N \to \infty} \nu_i/N$ is positive, but otherwise
is again supported in a disk, albeit with a density function distinct to that in (\ref{7.4b}). 

One observes from (\ref{7.4b}) a singularity at $z=0$, not present in the circular law global density
(\ref{2.2a}) for GinUE. A consequence is that the correlation kernel about this point, simply obtained
by setting the upper terminal of the sum in (\ref{7.4a2})  equal to infinity, is distinct from the corresponding
kernel (\ref{2.2b})  for the bulk scaled GinUE. On the other hand, it is shown in \cite{LW16} that the bulk scaled GinUE
kernel is reclaimed by choosing the origin a distance $\alpha \sqrt{N}$ from $z=0$, for any $0 < \alpha < 1$.
Also, from \cite{AB12,LW16} we know that the scaling of (\ref{7.4a2})  about $|z|=1$ reclaims the kernel
(\ref{2.2c}) for the edge scaled GinUE.

\begin{remark}\label{R2.17} $ $ \\
1.~Required in the derivation of (\ref{7.4b}) from (\ref{7.4a2}) is knowledge of the $z \to \infty$ asymptotic expansion
\cite{Lu69}
 \begin{equation}
G_{0,M}^{M,0} \Big ( z \Big | { \underline{\hspace{0.5cm}}  \atop \nu_1,\dots, \nu_M} \Big ) \sim
{1 \over \sqrt{M}} \Big ( {2 \pi \over z} \Big )^{(M-1)/2} e^{-M z^{1/M}} z^{(\nu_1 + \cdots + \nu_M)/M} \Big ( 1 + {\rm O}(z^{-1/M})
\Big ).
\end{equation}
Of particular interest is the exponential term herein, which with $z$ replaced by $|z|^2$ as required in (\ref{7.4a2})
reads $e^{-M |z|^{2/M}} $.  This suggests a modification of $U$ in (\ref{1.1i}) to read
\begin{equation}\label{1.1iM} 
U_M := {M \over 2} \sum_{j=1}^N |z_j|^{2/M} - \sum_{1 \le j < k \le N} \log | z_k - z_j |.
\end{equation} 
Indeed starting from $U_M$ and repeating the working of Remark \ref{R2.4} reclaims (\ref{7.4b}).
The correlation kernel appearing in such a calculation is called the Mittag-Leffler kernel. This terminology applies too in the more general case that a term $-c \sum_{j=1}^N \log |z_j|$ is included in (\ref{1.1iM}) \cite{AK13,AKS21}. This very case, for $c=(k-M)/M$, $(k=1,\dots,M)$, appears in the calculation of the joint distribution of the eigenvalues of the $M$-th power ($M \le N)$ of a GinUE matrix \cite[Th.~1.5]{Du18}. We also refer to \cite{Kat19,KatS22} for the appearance of particular Mittag-Leffler point process as a degenerate limit of so-called elliptic determinantal point processes, characterised by the product of differences in (\ref{1.1f}) being replaced by a product over Jacobi theta functions.
\\
2.~The eigenvalue PDF for a product of GinUE matrices and inverses of GinUE matrices has been
shown to be of the form (\ref{7.4}), but with the Meijer $G$-function therein replace by a different Meijer $G$-function
\cite{ARRS13}. The corresponding limiting eigenvalue density, in the case of equal numbers of matrices
and inverses, is \cite{GKT15,Ze16}
 \begin{equation}\label{7.4c} 
 \lim_{N \to \infty} \rho_{(1),N}(z) = {1 \over \pi M} {|z|^{-2+2/M} \over (1 + |z|^{2/M})^2}.
 \end{equation}
 This functional form relates to the $M=1$ case in the same way as (\ref{7.4b}) relates to (\ref{2.2a}), being the $M$-th power of the so-called spherical law.  
 \\
 3.~The case $M=2$ of Proposition \ref{P3.1} permits a generalisation. Thus for the matrices
 in the product $\tilde{G}_1 \tilde{G}_2$ choose
 $$
 \tilde{G}_1 = \sqrt{1+\tau} X_1 + \sqrt{1-\tau} X_2, \quad  \tilde{G}_2 = \sqrt{1+\tau} X_1^\dagger + \sqrt{1-\tau} X_2^\dagger, \qquad 0 < \tau < 1,
 $$
 where $X_1, X_2$ are $N \times (N+p)$ rectangular complex Gaussian matrices. The joint element distribution is then proportional to
 $$
 \exp \bigg ( - {1 \over 1 - \tau^2 } {\rm Tr} \Big ( \tilde{G}_1^\dagger   \tilde{G}_1 +  \tilde{G}_2^\dagger   \tilde{G}_2 - 2 \tau {\rm Re} \, \tilde{G}_1 \tilde{G}_2 \Big ) \bigg );
 $$
 cf.~(\ref{2.25a}). Results in  \cite{Os04} give that the eigenvalue PDF of $\tilde{G}_1 \tilde{G}_2$ is of the form in (\ref{7.3a}) but with $w^{(2)}(z)$ now dependent on $\tau$. Due to the shape of the resulting droplet, this gives rise to the so-called shifted elliptic law \cite[Th.~1]{ABK21}. \\
 4.~The product of $M$ random matrices in the limit $M \to \infty$ is of interest from a dynamical systems viewpoint, as the scaled logarithm of the singular values gives the Lyapunov spectrum. Closely related for product matrices themselves are the stability exponents, defined as ${1 \over M} \log |z_k|$, ($k=1,\dots,N$). In fact for bi-unitary invariant random matrices at the least, the Lyapunov and stability exponents are the same \cite{Re19}. Works relating to the computation of these exponents and their related statistical properties for GinUE include \cite{Fo13,Ka14,ABK14,Fo15,Ip15,LWW18,GS18,LW19,AV21}.
 \end{remark}
 
 \subsection{Products of truncated unitary matrices}
 The eigenvalue PDF for a truncation of a Haar distributed unitary matrix has been given in Proposition
  \ref{P2.12}. It turns out that knowledge of the evaluation of the matrix integral (\ref{3.y1}) as implied in the proof of
  Proposition \ref{P2.12}, used in conjunction with the periodic Schur form (\ref{7.4a}), is sufficient to allow
  for the determination of the eigenvalue PDF of the product of $M$ truncated Haar distributed unitary matrices
  \cite{ABKN13}.
  
  \begin{proposition}\label{P2.16}
  Consider $M$ independent Haar distributed $(n_j + N) \times (n_j + N)$ unitary matrices, with the top $N \times N$
  sub-block of each denoted $A_j$ ($j=1,\dots,M$). The eigenvalue PDF of the product $A_1 \cdots A_M$ (in
  any order) is given by (\ref{7.4}) with 
   \begin{equation}\label{7.4d}
   w^{(M)}(z) =    G_{M,M}^{M,0} \Big ( |z|^2 \Big | { n_1, \dots, n_M  \atop 0,\dots, 0} \Big ) \chi_{|z|<1}.
   \end{equation}  
   \end{proposition}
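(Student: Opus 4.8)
The plan is to adapt the proof of Proposition~\ref{P3.1}, using the periodic Schur form~(\ref{7.4a}) in the role played there by the ordinary Schur decomposition, while feeding in the truncated-unitary measure handled in the proof of Proposition~\ref{P2.12} in place of the Gaussian one. First I would apply~(\ref{7.4a}) to the $M$-tuple $(A_1,\dots,A_M)$, writing $A_j = U_j Z_j V_j$ with $V_j = U_{j+1}^\dagger$ for $j<M$ and $V_M = U_1^\dagger$, so that $A_1\cdots A_M = U_1(Z_1\cdots Z_M)U_1^\dagger$; hence the eigenvalues of the product are the diagonal entries of $Z_1\cdots Z_M$, namely $z_k = z_k^{(1)}\cdots z_k^{(M)}$, the products of the corresponding diagonal entries of the upper triangular factors. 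As recorded in the proof of Proposition~\ref{P3.1}, the Jacobian for this change of variables is the Vandermonde factor~(\ref{2.1cZ}) in the $\{z_k\}$, times flat measures on the auxiliary unitaries and on the strictly upper triangular entries of the $Z_j$. Consequently the eigenvalue PDF has the form $\prod_l w^{(M)}(z_l)\,\prod_{j<k}|z_j - z_k|^2$, and everything reduces to identifying the weight $w^{(M)}$.

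Next I would compute $w^{(M)}$ by integrating out all the auxiliary variables against the truncated-unitary law. The essential input is the matrix integral~(\ref{3.y}) (equivalently the reduced form~(\ref{3.y1})) representing the density of a single truncation: the bordering/residue computation in the proof of Proposition~\ref{P2.12} showed that integrating out the strictly upper triangular part of one truncation's Schur factor, against the corresponding measure with parameter $n$, leaves the per-eigenvalue weight $(1-|z|^2)^{n-1}\chi_{|z|^2<1}$. The claim to be established --- and this is the step I expect to be the main obstacle --- is that the telescoping $U_{i+1}^\dagger U_{i+1} = \mathbb I$ built into~(\ref{7.4a}) makes the $M$ layers decouple: after integrating out the auxiliary unitaries together with the strictly upper triangular entries of $Z_1,\dots,Z_M$ one is left with
\[
w^{(M)}(z) \propto \int_{\mathbb C} d^2 z_1 \cdots \int_{\mathbb C} d^2 z_M \, \prod_{j=1}^M (1-|z_j|^2)^{n_j-1}\chi_{|z_j|<1}\; \delta\big(z - z_1 \cdots z_M\big),
\]
the exact analogue of~(\ref{7.4a1}) with the Gaussian weights replaced by truncated-unitary ones. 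Making this rigorous requires running the bordering argument of Proposition~\ref{P2.12} layer by layer and checking that the $M$ Hermitian auxiliary integrals of type~(\ref{3.y1}) genuinely separate; this is the content of the analysis in~\cite{ABKN13}.

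Finally I would evaluate the resulting Mellin convolution. Since $w^{(M)}(z)$ depends on $z$ only through $x:=|z|^2$, write $w^{(M)}(z)=\widehat w(x)$; then $\widehat w$ is, up to normalisation, the density of a product of $M$ independent nonnegative variables with densities $(1-x)^{n_j-1}\chi_{0<x<1}$, so its Mellin transform factorises,
\[
\int_0^\infty x^{s-1}\widehat w(x)\, dx \;\propto\; \prod_{j=1}^M \int_0^1 x^{s-1}(1-x)^{n_j-1}\, dx \;=\; \prod_{j=1}^M \frac{\Gamma(s)\,\Gamma(n_j)}{\Gamma(s+n_j)} \;\propto\; \frac{\Gamma(s)^M}{\prod_{j=1}^M \Gamma(s+n_j)}.
\]
Comparing with the Mellin--Barnes integral~(\ref{GGa}), this is precisely the Mellin transform of $G_{M,M}^{M,0}\big(x \,\big|\, { n_1,\dots,n_M \atop 0,\dots,0}\big)$; moreover, closing the contour in~(\ref{GGa}) to the left (where the integrand has no poles) shows that this Meijer $G$-function vanishes for $x>1$, i.e.\ it is supported on $0<x<1$, which accounts for the cutoff $\chi_{|z|<1}$ in~(\ref{7.4d}) (consistently with the elementary fact that a product of numbers of modulus less than one has modulus less than one). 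Substituting $w^{(M)}(z) = G_{M,M}^{M,0}(|z|^2 \,|\, { n_1,\dots,n_M \atop 0,\dots,0})\,\chi_{|z|<1}$ into $\prod_l w^{(M)}(z_l)\prod_{j<k}|z_j-z_k|^2$ then yields~(\ref{7.4}) with~(\ref{7.4d}).
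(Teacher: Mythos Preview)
Your proposal is correct and follows essentially the same route as the paper's sketch: periodic Schur decomposition~(\ref{7.4a}) to get the Vandermonde factor~(\ref{2.1cZ}), integration of the strictly upper triangular variables via the matrix integral~(\ref{3.y1}) from the proof of Proposition~\ref{P2.12} to arrive at the convolution formula $w^{(M)}(z)\propto\int\cdots\int\,\delta(z-z_1\cdots z_M)\prod_j(1-|z_j|^2)^{n_j-1}$, and then the Mellin transform to identify the Meijer $G$-function. Your added remarks on why the decoupling step is the nontrivial point and on the contour argument giving the support $\chi_{|z|<1}$ are accurate elaborations of what the paper leaves implicit.
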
 
\begin{proof} (Sketch) The method of proof of Proposition \ref{P3.1}, which begins with the periodic Schur
decomposition (\ref{7.4a}), and then integrates out over the upper triangular entries of the matrices $Z_j$
(here this latter step requires knowledge of the evaluation of the matrix integral (\ref{3.y1})) gives (\ref{7.4}) with 
$$
w^{(M)}(z) \propto   \int_{\mathbb C} d^2 \, z_1  \cdots  \int_{\mathbb C} d^2  \, z_M \,  \delta (z - z_1  \cdots  z_M)
  \prod_{j=1}^M (1 - | z_j |^2)^{n_j - 1}
$$
 Taking the Mellin transform of this expression leads to the Meijer $G$-function form in
 (\ref{7.4d}).

\end{proof}   

Analogous to (\ref{7.4a2}), it follows from the result of Proposition \ref{P2.16} that the eigenvalue correlations
are determinantal with kernel (to be denoted $K_{N,M}^{\rm U}$) given by \cite{ABKN13}
  \begin{equation}\label{7.4e}
K_{N,M}^{\rm U}(z_1,z_2) =   \Big ( w^{(M)}(z_1)   w^{(M)}({z_2}) \Big )^{1/2} \sum_{j=1}^{N}  (z_1 \bar{z}_2)^{j-1} 
\prod_{m=1}^M \prod_{m=1}^M \frac{(n_m+j-1)!}{(j-1)!}.
\end{equation}
Note here, that in distinction to  (\ref{7.4a2}), the $w^{(M)}$ are given by (\ref{7.4d}). Suppose all the $n_i$ are
equal to $n$, and that for large $N$, $n/N = \alpha > 0$. Using (\ref{7.4e}) in the case $z_1=z_2$, it
is derived in \cite{ABKN13} that
 \begin{equation}\label{7.4f}
 \rho_{(1),N}(z) \sim {\alpha N \over \pi M}  {|z|^{-2+2/M} \over (1 - |z|^{2/M})^2} \chi_{|z| < (1 + \alpha)^{-M/2}};
\end{equation} 
cf.~(\ref{7.4c}). 

In the same limit, with the scaling $z \mapsto z/N^{M/2}$  (and similarly for $w$), the kernel (\ref{7.4e}) multiplied by
$N^{-M}$ has the same limit as the kernel (\ref{7.4a2}) for the product of $M$ GinUE matrices about the origin in
the case of each matrix square (all $\nu_i = 0$). If instead a point $z_0$ with $0 < z_0 < (1 + \alpha)^{-M/2}$
is chosen before this scaling, one obtains instead the GinUE result (\ref{2.2b}), while at the boundary of
support, generalising (\ref{CV6}) it is found that  \cite{ABKN13}
\begin{equation}\label{CV6a} 
\lim_{n,N \to \infty \atop n/N = \alpha} \frac{M}{N}{ \alpha \over (1 + \alpha)^{M+1} } \rho_{(1),N} \bigg ( {1 \over (1 + \alpha)^{M/2}} +
{\xi \over \sqrt{N}} {\sqrt{M}\alpha^{1/2} \over (1 + \alpha)^{(M+1)/2}} \bigg ) = {1 \over 2 \pi} \Big ( 1 - {\rm erf}({\sqrt  2} \xi) \Big ),
\end{equation}
thus again exhibiting the universal functional form seen at the edge scaling of the GinUE (\ref{2.2f+}).

Also considered in  \cite{ABKN13} is a close to unity scaling, with $n_i = n$ all fixed ($i=1,\dots,M$) as
$N \to \infty$. Scaling the eigenvalues $z_k = (1 - y_k/N) e^{i \phi_k/N}$, 
it is found that (\ref{CV7}) again holds but with each occurrence of $n$ replaced by $nM$.

\subsection{The distribution of the squared eigenvalue moduli} 
All the explicit eigenvalue PDFs obtained in the above subsections of \S 2, excluding the elliptic Ginibre ensemble, have the form 
\begin{equation}\label{2.9}
\prod_{l=1}^N w( |z_l|^2) \prod_{1 \le j < k \le N} | z_k - z_j|^2,
\end{equation}
for some weight function $w$. In particular, they are invariant under rotations about the origin. An observation of Kostlan \cite{Ko92} for the GinUE case (\ref{1.1f}) of (\ref{2.9}) is that
the set of squared eigenvalue moduli $\{ | z_j |^2 \}_{j=1}^N$, appropriately ordered, are independently distributed
(specifically as gamma random variables $\{ \Gamma[j;1]\}$); see too
 \cite[Theorem 1.2]{CP14}.

 \begin{proposition}\label{P2.18}
 Let $F_w(s_1,\dots,s_N)$ denote the PDF for the distribution of $\{s_j:=|z_j|^2 \}_{j=1}^N$ for the PDF (\ref{2.9}).
 We have
\begin{equation}\label{2.9a} 
F_w(s_1,\dots,s_N) = {1 \over N!} {\rm Sym} \, \prod_{j=1}^N { w(s_j)  s_l^{l-1} \over \int_0^\infty w(s) s^{j-1} \, ds},
\end{equation}
where Sym denotes symmetrisation with respect to $\{s_j\}$. 
 \end{proposition}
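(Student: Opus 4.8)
The plan is to pass to polar coordinates $z_j=r_je^{i\theta_j}$, integrate out the angular variables using orthogonality of the characters $\theta\mapsto e^{in\theta}$ on $[0,2\pi)$, and then change variables to $s_j=r_j^2$; what survives is a permanent in the $s_j$, which is exactly the symmetrised product claimed in $(\ref{2.9a})$.

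First, following the opening of the proof of Proposition \ref{P2.2}, I would use the Vandermonde identity $(\ref{2.1h})$ to write
\[
\prod_{1\le j<k\le N}|z_k-z_j|^2
=\det[z_j^{k-1}]_{j,k=1}^N\,\overline{\det[z_j^{k-1}]_{j,k=1}^N}
=\sum_{\sigma,\tau\in S_N}{\rm sgn}(\sigma){\rm sgn}(\tau)\prod_{j=1}^N z_j^{\sigma(j)-1}\bar z_j^{\tau(j)-1}.
\]
With $z_j=r_je^{i\theta_j}$ and $d^2z_j=r_j\,dr_j\,d\theta_j$, the factor $z_j^{\sigma(j)-1}\bar z_j^{\tau(j)-1}$ contributes $r_j^{\sigma(j)+\tau(j)-2}e^{i(\sigma(j)-\tau(j))\theta_j}$, and $\int_0^{2\pi}e^{i(\sigma(j)-\tau(j))\theta_j}\,d\theta_j$ vanishes unless $\sigma(j)=\tau(j)$. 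Hence, after integrating over all the $\theta_j$, only the diagonal terms $\sigma=\tau$ (for which ${\rm sgn}(\sigma){\rm sgn}(\tau)=1$) survive, leaving $\prod_j r_j^{2\sigma(j)-2}$ summed over $\sigma\in S_N$.

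Next, putting $s_j=r_j^2$ (so $r_j\,dr_j=\tfrac12\,ds_j$) and absorbing all $j$-independent constants, the joint density of $(s_1,\dots,s_N)$ is proportional to
\[
\prod_{l=1}^N w(s_l)\sum_{\sigma\in S_N}\prod_{j=1}^N s_j^{\sigma(j)-1}
=\prod_{l=1}^N w(s_l)\,{\rm per}\big[s_j^{k-1}\big]_{j,k=1}^N .
\]
To fix the normalisation I would integrate this over $(0,\infty)^N$: since $\prod_{j=1}^N\int_0^\infty w(s)s^{\sigma(j)-1}\,ds=\prod_{j=1}^N\int_0^\infty w(s)s^{j-1}\,ds$ for every $\sigma$ (a permutation leaves the product unchanged), each of the $N!$ terms of the permanent integrates to the same value and the normalisation constant equals $N!\prod_{j=1}^N\int_0^\infty w(s)s^{j-1}\,ds$. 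Distributing this denominator permutation by permutation — using once more that $\prod_j m_j=\prod_j m_{\sigma(j)}$ for $m_k:=\int_0^\infty w(s)s^{k-1}\,ds$ — rewrites the density as $\tfrac{1}{N!}\sum_{\sigma\in S_N}\prod_j\big(w(s_j)s_j^{\sigma(j)-1}/m_{\sigma(j)}\big)$, which is precisely $(\ref{2.9a})$ with ${\rm Sym}$ read as the sum over permutations of the arguments $\{s_j\}$.

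There is no genuine obstacle; the only point that needs care is the bookkeeping in the angular integration, namely that orthogonality of the exponentials forces $\sigma=\tau$ and thereby collapses the double sum coming from $|\det[z_j^{k-1}]|^2$ to a single sum, i.e.\ to the permanent of $[s_j^{k-1}]$. The companion observation — that a permutation fixes $\prod_j\int_0^\infty w(s)s^{j-1}\,ds$ — is what makes the normalisation factorise, and is also the mechanism behind Kostlan's statement that the ordered moduli squared are independent with $\Gamma[j;1]$ laws in the GinUE case $w(s)=e^{-s}$. (If $w$ is supported on a proper subinterval of $[0,\infty)$, as for the truncation PDF $(\ref{CV+})$, one replaces $\int_0^\infty$ throughout by the integral over the support of $w$.)
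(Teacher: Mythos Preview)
Your proof is correct and follows essentially the same route as the paper: expand $\prod_{j<k}|z_k-z_j|^2$ via the Vandermonde determinant, use angular orthogonality in polar coordinates to collapse the double sum over permutations to a single sum (a permanent), and normalise using the fact that each term of the permanent integrates to the same product $\prod_j\int_0^\infty w(s)s^{j-1}\,ds$. The only cosmetic difference is that the paper packages the normalisation step via Andr\'eief's identity and a test-function argument, whereas you expand the determinants by hand and compute the marginal density of the $s_j$ directly; the mechanism is identical.
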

 
 \begin{proof}
Starting with (\ref{2.1hm}),  then substituting (\ref{2.1h}) and its complex conjugate, we see
$$
\prod_{l=1}^N w( |z_l|^2) \prod_{1 \le j < k \le N} | z_k - z_j|^2 = \prod_{l=1}^N w( |z_l|^2)  \det [ f_{k-1}(z_j) ]_{j,k=1}^N
 \det [ g_{k-1}(z_j) ]_{j,k=1}^N, 
 $$
where $f_l(z) = z^l$, $g_l(z) = \bar{z}^l$. According to Andr\'eief's identity (see e.g.~\cite{Fo18}), it follows from this that
the integral over $z_l \in \mathbb C$ $(l=1,\dots,N)$, $I_{w,N}$ say, is itself given by a determinant
$$
I_{w,N} = N! \det \Big [ \int_{\mathbb C} w( |z_l|^2) f_{j-1}(z)   g_{k-1}(z) \, d^2 z \Big ]_{j,k=1}^N.
$$
Substituting the explicit form of $f_l, g_l$ and changing to polar coordinates $z_l = r_l e^{i \theta_l}$ 
shows that only the diagonal terms are non-zero. This allows the determinant to be evaluated
\begin{equation}\label{2.85a}
I_{w,N} = N! \pi^N \prod_{j=1}^N \int_0^\infty w(s) s^{j-1} \, ds,
\end{equation}
where in each integration the change of variables $r^2 = s$ has been made. 
Forming now $I_{w\phi,N}/I_{w,N}$, where $\phi$ is a suitable test function, and
assuming that (\ref{2.9}) is normalisable, the result (\ref{2.9a}) follows.
 \end{proof}
 
 \section{Fluctuation formulas}
 \subsection{Counting function in general domains}
The eigenvalues of a non-Hermitian matrix are examples of point processes in the plane. Statistical quantities characterising
the point process are functions of the eigenvalues $\{z_j\}$ of the form $\sum_{j=1}^N f(z_j)$ --- referred to as linear
statistics --- for given $f$. Such statistics are closely related to the correlation functions. Thus
\begin{equation}\label{4.1}
\Big \langle \sum_{j=1}^N f(z_j) \Big \rangle = \int_{\mathbb C} f(z) 
\rho_{(1),N}(z)\, d^2 z,
\end{equation} 
while 
\begin{eqnarray}\label{4.1a}
{\rm Cov} \, \Big ( \sum_{l=1}^N f(z_l),  \sum_{l=1}^N g(z_l) \Big ) & =
  \int_{\mathbb C} d^2z   \int_{\mathbb C} d^2z' \, f(z) g(z') \Big ( \rho_{(2),N}^T(z,z') + \rho_{(1),N}(z) \delta(z - z') \Big ) \nonumber \\
& =   - {1\over 2}  \int_{\mathbb C} d^2z   \int_{\mathbb C} d^2z' \, (f(z) -f(z'))(g(z) - g(z')) \rho_{(2),N}^T(z,z'),
\end{eqnarray}
see e.g.~\cite[\S 2.1]{Fo22}. Here $ \rho_{(2),N}^T(z,z') :=  \rho_{(2),N}(z,z') -  \rho_{(1),N}(z)  \rho_{(1),N}(z')$. 

One of the most prominent examples of a linear statistic is the choice $f(z) = \chi_{z \in \mathcal D}$, where
$\mathcal D \in \mathbb C$. The linear statistic is then the counting function for the number of eigenvalues in
$\mathcal D$, $N(\mathcal D)$ say. Let $E_N(n; \mathcal D)$ denote the probability that there are exactly $n$ eigenvalues in $\mathcal D$,
so that $E_N(n; \mathcal D) = {\rm Pr} \, (\sum_{j=1}^N \chi_{z_j \in \mathcal D} = n)$.
Denote the corresponding generating function (in the variable $1 - \xi$) by $\tilde{E}_N(\xi; \mathcal D)$ so that 
\begin{equation}\label{4.1b}
\tilde{E}_N (\xi; \mathcal D) = \sum_{n=0}^N E_N(n; \mathcal D) (1 - \xi)^n.
\end{equation}
Note that with $1 - \xi = e^{it}$ this corresponds to the characteristic function for the probability mass functions $\{ E_N(n;\mathcal D) \}$.
A straightforward calculation (see e.g.~\cite[Prop.~9.1.1]{Fo10}) shows that $\tilde{E}_{N} (\xi;\mathcal D)$ can
be expressed in terms of the correlation functions according to
\begin{equation}\label{4.1c} 
\tilde{E}_{N} (\xi;\mathcal D ) =  1 + \sum_{n=1}^N {(-\xi)^n \over n!}
\int_{\mathcal D} d^2z_1 \cdots \int_{\mathcal D} d^2z_n \,  \rho_{(n),N}(z_1,\dots,z_n).
\end{equation}
It can readily be checked that 
 (\ref{4.1}) and (\ref{4.1a})  in the special
case $f(z) = g(z) =  \chi_{z \in \mathcal D}$ are consistent with 
(\ref{4.1c}).

Specialising now to the circumstance that the PDF for the eigenvalues is of the form (\ref{2.9}), a particular product
formula for $\tilde{E}_N (\xi; \mathcal D)$ can be deduced, which was known to Gaudin \cite{Ga61}.

 \begin{proposition}\label{P4.1}
 Let $\{  \rho_{(n),N} \}$ in (\ref{4.1c}) be given by (\ref{2.1f}) where the correlation kernel $K_N$ corresponds to (\ref{2.9}).
 Let $\mathbb K_{N,\mathcal D}$ denote the integral operator supported on $z_2 \in \mathcal D$ with kernel
 $K_N(z_1,z_2)$. This integral operator has at most $N$ non-zero eigenvalues $\{ \lambda_j(\mathcal D) \}_{j=1}^N$, where
 $ 0 \le  \lambda_j(\mathcal D) \le 1$, and
\begin{equation}\label{4.1d}  
\tilde{E}_{N} (\xi;\mathcal D ) =  \prod_{j=1}^N (1 - \xi  \lambda_j(\mathcal D)).
\end{equation}
\end{proposition}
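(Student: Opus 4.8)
The plan is to recognise the series (\ref{4.1c}), once the correlation functions are put in the determinantal form (\ref{2.1f}), as a (terminating) Fredholm expansion, and then to use the finite rank of $K_N$ to collapse it into an $N\times N$ determinant that visibly factorises. As a first step I recall that for an eigenvalue PDF of the form (\ref{2.9}) the kernel produced by the derivation of Proposition \ref{P2.2} --- now with the monic polynomials $p_l$ chosen orthogonal with respect to the rotationally invariant planar measure $w(|z|^2)\,d^2z$ --- has the explicit finite-rank form $K_N(z_1,z_2)=\sum_{j=1}^N\phi_j(z_1)\overline{\phi_j(z_2)}$, where $\phi_j(z):=w(|z|^2)^{1/2}p_{j-1}(z)/\|p_{j-1}\|$ and, passing to polar coordinates as in the derivation of (\ref{2.85a}), $\{\phi_j\}_{j=1}^N$ is an \emph{orthonormal} family in $L^2(\mathbb C,d^2z)$.

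Next I substitute this expression into (\ref{4.1c}) and evaluate the $n$-fold integrals by the same Andr\'eief/Heine manipulation used around (\ref{2.85a}): the integral of $\det[K_N(z_j,z_k)]_{j,k=1}^n$ over $\mathcal D^n$ equals $n!$ times the sum of the $n\times n$ principal minors of the $N\times N$ Gram matrix $G=G(\mathcal D)$ with entries $G_{jk}=\int_{\mathcal D}\phi_k(z)\overline{\phi_j(z)}\,d^2z$ (the $n>N$ terms vanish, since $[K_N(z_j,z_k)]_{j,k=1}^n$ then has rank at most $N<n$, consistent with the sum in (\ref{4.1c}) terminating at $n=N$). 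Summing over $n$ weighted by $(-\xi)^n$ collects these minors into a characteristic polynomial, giving $\tilde E_N(\xi;\mathcal D)=\det(\mathbb I_N-\xi G)$. It remains to identify the spectrum of $G$: writing $\mathbb K_{N,\mathcal D}=\mathbb K_N\chi_{\mathcal D}$, where $\mathbb K_N$ has kernel $K_N$, the reproducing relation $\int_{\mathbb C}K_N(z_1,z)K_N(z,z_2)\,d^2z=K_N(z_1,z_2)$ (the general-$w$ analogue of the first identity in (\ref{2.1j}), immediate from orthonormality of the $\phi_j$) says $\mathbb K_N$ is the orthogonal projection onto $V_N:=\mathrm{span}\{\phi_1,\dots,\phi_N\}$. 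Hence $\mathbb K_{N,\mathcal D}$ has rank at most $N$, its non-zero eigenvalues coincide with those of $\chi_{\mathcal D}\mathbb K_N\chi_{\mathcal D}$, and in the orthonormal basis $\{\phi_j\}$ the compression of the latter to $V_N$ is represented by exactly the matrix $G$. Diagonalising $G$ turns $\det(\mathbb I_N-\xi G)$ into $\prod_{j=1}^N(1-\xi\lambda_j(\mathcal D))$, which is (\ref{4.1d}).

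The Andr\'eief bookkeeping and the passage to polar coordinates are routine (they are precisely the computations already performed in \S 2). The one point needing care --- the place I expect to be the genuine obstacle --- is the assertion $0\le\lambda_j(\mathcal D)\le1$: this uses that $\mathbb K_N$ is a \emph{projection}, i.e.\ that the $\phi_j$ are orthonormal rather than merely orthogonal, whence $0\le\chi_{\mathcal D}\mathbb K_N\chi_{\mathcal D}\le\mathbb I$; without this the upper bound $\lambda_j\le1$ fails. An equivalent, more abstract phrasing is that (\ref{4.1c})+(\ref{2.1f}) is by construction the Fredholm expansion of $\det(1-\xi\,\mathbb K_{N,\mathcal D})$, and that for a finite-rank self-adjoint operator with spectrum in $[0,1]$ this determinant is the product over its eigenvalues; I would present the Gram-matrix computation above as the self-contained proof of that fact.
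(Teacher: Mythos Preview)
Your argument is correct and follows the same architecture as the paper: recognise (\ref{4.1c}) with (\ref{2.1f}) as the Fredholm expansion of $\det(\mathbb I-\xi\,\mathbb K_{N,\mathcal D})$, exploit the finite rank of $K_N$ to reduce to an $N\times N$ determinant, and factor. Your derivation via the Gram matrix $G$ and Andr\'eief/Cauchy--Binet is a more explicit, self-contained version of what the paper compresses into a single appeal to ``the theory of Fredholm integral operators''.

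The one genuine methodological difference is the proof of $0\le\lambda_j(\mathcal D)\le1$. You use operator positivity: orthonormality of the $\phi_j$ makes $\mathbb K_N$ a projection, and then $\langle v,Gv\rangle=\int_{\mathcal D}\big|\sum_jv_j\phi_j\big|^2\in[0,\|v\|^2]$ gives $0\le G\le\mathbb I$. The paper instead argues probabilistically: nonnegativity of the correlation functions $\rho_{(n),N}$ forces $\tilde E_N(\xi;\mathcal D)>0$ for $\xi<0$ (hence $\lambda_j\ge0$), and nonnegativity of the probabilities $E_N(n;\mathcal D)$ in (\ref{4.1b}) forces $\tilde E_N(\xi;\mathcal D)>0$ for all $\xi<1$, which would be violated if some $\lambda_j>1$. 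Your route is cleaner algebraically and makes transparent exactly where the normalisation of the $\phi_j$ enters; the paper's route has the complementary virtue of relying only on the point-process interpretation, not on the explicit orthonormal structure.

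One small slip worth fixing: the compression of $\chi_{\mathcal D}\mathbb K_N\chi_{\mathcal D}$ to $V_N$ is $G^2$, not $G$ (its range is not contained in $V_N$). What you actually need---and what is correct---is that the restriction of $\mathbb K_N\chi_{\mathcal D}$ to its invariant subspace $V_N$ has matrix $G$ in the basis $\{\phi_j\}$; the bound $0\le G\le\mathbb I$ then follows directly from the Gram-matrix inequality above, without passing through $\chi_{\mathcal D}\mathbb K_N\chi_{\mathcal D}$.
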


 \begin{proof}
 Let $\{p_s(z)\}_{s=0}^\infty$ be a set of orthogonal polynomials with respect to the inner product
 $\langle f, g \rangle := \int_{\mathbb C}w(|z|^2) f(z) g(\bar{z}) \, d^2 z$ with corresponding normalisation
 denoted $h_s$.
 Taking as a basis $\{ (w(|z|^2)^{1/2} p_k(z) \}_{k=0}^\infty$ it is straightforward to check that 
\begin{equation}\label{4.1d+}  
K_{N}(z_1,z_2) = (w(|z_1|^2) w(|z_2|^2) )^{1/2} \sum_{s=0}^{N-1} {p_s(z_1) \overline{p_s(z_2)} \over h_s},
\end{equation}
and so the eigenfunctions of $\mathbb K_{N,\mathcal D}$
 are of the form $(w(|z|^2)^{1/2} \sum_{s=0}^{N-1} c_s p_s(z)$
(see e.g.~\cite[proof of Prop. 5.2.2]{Fo10}). Hence there are at most $N$ nonzero eigenvalues, which moreover can
be related to an Hermitian matrix and so must be real. In terms
of these eigenvalues, the determinantal form (\ref{2.1f}) substituted in (\ref{4.1c}) implies (\ref{4.1d}) --- this is a result from the
theory of Fredholm integral operators (see e.g.~\cite{WW65}). Since by definition, each $n$-point correlation is non-negative,
we see from the RHS of (\ref{4.1c}) that $\tilde{E}_{N} (\xi;\mathcal D ) > 0$ for $\xi < 0$, and so $ \lambda_j(\mathcal D) \ge 0$
($j=1,\dots,N$). Also, the definition (\ref{4.1b}) tells us that $\tilde{E}_N (\xi; \mathcal D) > 0$ for all $\xi < 1$. This would
contradict (\ref{4.1d}) if it was to be that any $ \lambda_j(\mathcal D) > 1$, since in this circumstance there would
be a $\xi$ is this range such that $\tilde{E}_{N} (\xi;\mathcal D ) $ vanishes.
\end{proof}

Consider $\sum_{j=1}^N x_j$ where $x_j \in \{0,1\}$ is a Bernoulli random variable with
${\rm Pr} \,(x_j=1) = \lambda_j(\mathcal D)$. The characteristic function is $\prod_{j=1}^N(1 - \lambda_j(\mathcal D)
+ e^{it}  \lambda_j(\mathcal D))$. With $e^{it} = 1 - \xi$ this gives the RHS of (\ref{4.1d}). 
But it has already been
noted that $\tilde{E}_{N} (\xi;\mathcal D )$ with $\xi$ related to $e^{it}$ in this way is the characteristic function for
the counting statistic $\mathcal N(\mathcal D)$, and hence the equality in distribution
$\mathcal N(\mathcal D) \mathop{=}\limits^{\rm d} \sum_{j=1}^N {\rm Bernoulli} (\lambda_j(\mathcal D))$
\cite{Ha67,HKPV08}. From this, it follows using the standard arguments (see  \cite[\S~XVI.5, Theorem 2]{Fe66} ) that a central limit theorem holds
for $\mathcal N(\mathcal D_N)$ (here the subscript $N$ on $\mathcal D_N$ is to indicate that the region $\mathcal D$
depends on $N$),
\begin{equation}\label{rK1a}
\lim_{N \to \infty} { \mathcal N(\mathcal D_N) - \langle \mathcal N(\mathcal D_N) \rangle  \over ({\rm Var} \, \mathcal N(\mathcal D_N))^{1/2}}
 \mathop{=}\limits^{\rm d} {\rm N}[0,1],
\end{equation}
valid provided $ {\rm Var} \, \mathcal N(\mathcal D_N) \to \infty$ as $N \to \infty$;
see also \cite{CL95,So00,ST03}.

A stronger result, extending the central limit theorem (\ref{rK1a}), follows from the fact that (\ref{4.1d}) in the variable $z=1 - \xi$
has all its zeros on the negative real axis \cite{FL14}.

\begin{proposition}\label{T1}
In the setting of the applicability of Proposition \ref{P4.1}, and with $\sigma_{\mathcal N_D} :=  ({\rm Var} \, {\mathcal N}(\mathcal D_N))^{1/2}$,
we then have that 
$\{E_N(k;\mathcal D_N) \}$ satisfy the local central limit theorem
\begin{equation}\label{rK1}
\lim_{N \to \infty} \, \mathop{\sup}\limits_{x \in (-\infty, \infty)}
\Big |  \sigma_{\mathcal N_D}  E_N([\sigma_{\mathcal N_D}  x +  \langle {\mathcal N}(\mathcal D_N) \rangle ]; \mathcal D_N) - {1 \over \sqrt{2 \pi}} e^{- x^2/2} \Big | = 0.
\end{equation}
\end{proposition}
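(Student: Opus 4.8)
The plan is to combine the product structure of Proposition \ref{P4.1} with Fourier inversion for the lattice mass function $\{E_N(k;\mathcal D_N)\}$. By Proposition \ref{P4.1}, $\tilde E_N(\xi;\mathcal D_N) = \prod_{j=1}^N(1 - \xi\lambda_j(\mathcal D_N))$, so setting $1 - \xi = e^{it}$ identifies $\mathcal N(\mathcal D_N)$ in distribution with a sum $\sum_{j=1}^N B_j$ of independent Bernoulli variables with ${\rm Pr}(B_j = 1) = \lambda_j := \lambda_j(\mathcal D_N)$; thus its characteristic function is
\begin{equation*}
\phi_N(t) := \mathbb{E}\big[ e^{i t \mathcal N(\mathcal D_N)} \big] = \prod_{j=1}^N \big( 1 - \lambda_j + \lambda_j e^{it} \big),
\end{equation*}
with $\langle \mathcal N(\mathcal D_N)\rangle = \mu_N := \sum_j \lambda_j$ and $\sigma_{\mathcal N_D}^2 = \sigma_N^2 := \sum_j \lambda_j(1 - \lambda_j)$. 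Since $\mathcal N(\mathcal D_N)$ is integer valued and supported on $\{0,1,\dots,N\}$, one has $E_N(k;\mathcal D_N) = \frac{1}{2\pi}\int_{-\pi}^\pi \phi_N(t) e^{-ikt}\, dt$. Putting $k = [\sigma_N x + \mu_N]$, writing $y := (k - \mu_N)/\sigma_N = x + {\rm O}(1/\sigma_N)$ (uniformly in $x$, the error coming from the integer part), and rescaling $t = s/\sigma_N$ gives
\begin{equation*}
\sigma_N E_N(k;\mathcal D_N) = \frac{1}{2\pi}\int_{-\pi\sigma_N}^{\pi\sigma_N} \psi_N(s)\, e^{-iys}\, ds, \qquad \psi_N(s) := \phi_N(s/\sigma_N)\, e^{-i\mu_N s/\sigma_N},
\end{equation*}
whereas $\frac{1}{\sqrt{2\pi}} e^{-y^2/2} = \frac{1}{2\pi}\int_{\mathbb R} e^{-s^2/2} e^{-iys}\, ds$. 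Hence, after the uniformly ${\rm O}(1/\sigma_N)$ replacement of $e^{-y^2/2}$ by $e^{-x^2/2}$, the expression inside the supremum in (\ref{rK1}) is at most
\begin{equation*}
\frac{1}{2\pi}\int_{-\pi\sigma_N}^{\pi\sigma_N} \big| \psi_N(s) - e^{-s^2/2} \big|\, ds + \frac{1}{2\pi}\int_{|s| > \pi\sigma_N} e^{-s^2/2}\, ds,
\end{equation*}
which carries no dependence on $x$; it remains to show this bound is $o(1)$.

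The second term is trivially $o(1)$. For the first, split at $|s| = \varepsilon\sigma_N$ for a fixed small $\varepsilon > 0$. The decisive input is the exact modulus identity $|\phi_N(t)|^2 = \prod_{j=1}^N(1 - 2\lambda_j(1-\lambda_j)(1 - \cos t))$. On the outer range $\varepsilon\sigma_N < |s| \le \pi\sigma_N$, using $1 - u \le e^{-u}$ together with the monotonicity of $1 - \cos t$ on $[0,\pi]$ gives $|\phi_N(t)| \le \exp(-(1 - \cos\varepsilon)\sigma_N^2)$ for $\varepsilon \le |t| \le \pi$, so that — since $\sigma_N \to \infty$ — this part of the integral, along with the matching tail of $e^{-s^2/2}$, is exponentially small. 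On the inner range $|s| \le \varepsilon\sigma_N$, using $1 - \cos u \ge c_0 u^2$ ($|u| \le \pi$) in the same identity yields the Gaussian domination $|\psi_N(s)| = |\phi_N(s/\sigma_N)| \le e^{-c_0 s^2}$; combined with the pointwise limit $\psi_N(s) \to e^{-s^2/2}$ for each fixed $s$ — which is just the central limit theorem (\ref{rK1a}) recast via Lévy's continuity theorem (its Lindeberg condition being automatic because the summands are bounded and $\sigma_N \to \infty$) — dominated convergence forces this part of the integral to $0$. All three estimates hold uniformly in $y$, hence in $x$, establishing (\ref{rK1}).

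The step I expect to be the genuine obstacle is the control of $|\phi_N(t)|$ near $t = \pm\pi$: a priori the weights $\lambda_j(\mathcal D_N)$ may accumulate near $0$ or $1$, whereupon the naive factorwise bound $1 - 2\lambda_j(1-\lambda_j)(1 - \cos t) < 1$ degenerates and the Fourier-inversion scheme stalls. It is precisely the Bernoulli structure — equivalently the real-rootedness of $\tilde E_N$ in the variable $1 - \xi$ supplied by Proposition \ref{P4.1} — that resolves this, since in the product $\prod_j(1 - 2\lambda_j(1-\lambda_j)(1 - \cos t))$ the total exponent is $2(1 - \cos t)\sum_j \lambda_j(1-\lambda_j) = 2(1 - \cos t)\sigma_N^2$, which is large whenever $\sigma_N^2 \to \infty$, independently of the individual $\lambda_j$. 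Everything else — the elementary inequalities for $1 - \cos$, verification of the Lindeberg condition, and the uniform smallness of the integer-part shift — is routine.
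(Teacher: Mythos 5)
Your proof is correct, but it takes a genuinely different route from the one in the paper. The paper's argument is entirely structural and two sentences long: since $\tilde{E}_N(\xi;\mathcal D_N) = \prod_j(1-\xi\lambda_j(\mathcal D_N))$ with $\lambda_j(\mathcal D_N)\in[0,1]$, all zeros of this polynomial in the variable $z=1-\xi$ lie on the negative real half-line; Newton's inequalities then give log-concavity of the coefficient sequence $\{E_N(k;\mathcal D_N)\}_k$, and a theorem of Bender \cite{Be73} states that log-concavity of a lattice distribution is sufficient to upgrade a central limit theorem (here (\ref{rK1a})) to a local one. You instead run a direct Fourier-analytic argument in the classical Esseen/Petrov smoothing style: the same Bernoulli decomposition, but rather than passing through log-concavity you invert the characteristic function on the torus, rescale, and kill the integral via the modulus identity $|\phi_N(t)|^2 = \prod_j\big(1-2\lambda_j(1-\lambda_j)(1-\cos t)\big)$ together with the Gaussian domination $|\phi_N(s/\sigma_N)|\le e^{-c_0 s^2}$ and dominated convergence against the pointwise CLT. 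The paper's route buys brevity and a clean structural insight (real-rootedness implies log-concavity implies local CLT), at the cost of outsourcing the analysis to a cited result; your route buys self-containedness and transparency about exactly where $\sigma_N\to\infty$ is used (it makes the outer arc $\varepsilon\le|t|\le\pi$ exponentially negligible and drives the Lindeberg condition). Your closing remark is apt: the fact that the total exponent $2(1-\cos t)\sum_j\lambda_j(1-\lambda_j)=2(1-\cos t)\sigma_N^2$ is large whenever $\sigma_N^2\to\infty$, irrespective of how the individual $\lambda_j$ cluster near $0$ or $1$, is precisely the analytic content of the real-rootedness hypothesis in Bender's theorem specialized to this Bernoulli-convolution setting; a generic integer-valued sequence satisfying a CLT need not satisfy a local CLT, and it is this estimate (or the equivalent log-concavity) that forecloses the obstruction. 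Two minor remarks: the $\varepsilon\sigma_N$ split is convenient but not strictly required, as the bound $|\phi_N(s/\sigma_N)|\le e^{-c_0 s^2}$ is valid on the whole range $|s|\le\pi\sigma_N$ and already supplies the dominating function; and for the Lindeberg verification it is cleaner to simply invoke (\ref{rK1a}) together with L\'evy continuity, as you do, rather than re-prove the CLT.
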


\begin{proof}
The fact that the zeros of (\ref{4.1d}) with the variable $z = 1 - \xi$ are on the negative real axis implies, by Newton's theorem on log-cavity of the sequence
of elementary symmetric functions \cite{Ni00}, that $\{E_N(k;\mathcal D)\}$ is log concave. It is known that log-cavity is a sufficient condition for
extending a central limit theorem to a local limit theorem \cite{Be73}.
\end{proof}

For large $N$, inside the disk of radius $\sqrt{N}$, the eigenvalue density for GinUE is constant and the full distribution is rotationally invariant.
In such circumstances, for a two-dimensional point process in general, it is known \cite{Be87} that Var$\, \mathcal N(\mathcal D_N)$ cannot grow
slower than of order $| \partial \mathcal D_N|$, i.e.~the length of the boundary of $\mathcal D_N$. Thus both (\ref{4.1c}) and (\ref{4.1d}) are valid for any region $\mathcal D_N$ constrained
strictly inside the disk of radius $\sqrt{N}$ and with a boundary of length tending to infinity with $N$.
In fact for GinUE more precise asymptotic information is available \cite[Eq.~(11)]{CE20}, \cite[Eq.~(2.7)]{FL22}, which gives that for any $D_0 \subseteq \{z: |z| \le 1 \}$,
 \begin{equation}\label{KL}
 {\rm Var} \, \mathcal N(\sqrt{N}  D_0) = \sqrt{N} {| \partial  D_0| \over 2 \pi} \int_{-\infty}^\infty \Big ( {\rm Var} \, \chi_{U \le (1+{\rm erf}(t/\sqrt{2}))/2} \Big ) \, dt + {\rm O} \Big ( {1 \over N^{1/2}} \Big ).
\end{equation}
Here $U$ is a Bernoulli random variable.
A direct calculation  gives that the integral evaluates to $\sqrt{1 \over  \pi}$ --- the advantage of the form (\ref{KL}) is that it remains
true if the appearance of the variance throughout is replaced by any even cumulant \cite{CE20,FL22}. In particular this shows that the growth of the
variance with respect to the region is the smallest order possible --- the corresponding point process is then referred to as being hyperuniform  \cite{TS03, To16,  GL17}. 
A corollary of the property of being hyperuniform, together with the fast decay of the correlations, is that the bulk scaled GinUE exhibits
number rigidity \cite{GP17,GL17a}. This means that conditioning on the positions of the (infinite number of) points outside a region $\mathcal D$ fully determines the
number of (but not positions of) the points inside $\mathcal D$, and their centre of mass.

\subsection{Counting function in a disk}\label{S3.2}

In the special case that $\mathcal D_N$ is a disk of radius $R$ centred at the origin (we write this as $ D_R$), the polynomials
 in (\ref{4.1d+}) are simply the monomials $p_s(z) = z^s$. The eigenfunctions of $\mathbb K_{N,\mathcal D}$ are
 also given in terms of the monomials as $\{(w(|z|^2))^{1/2} z^{j-1} \}_{j=1,\dots,N}$, and hence for the
 corresponding eigenvalues we have
 $$
 \lambda_j( D_R) = \int_0^{R^2}  r^{j-1} w(r) \, dr \Big /  \int_0^{\infty}  r^{j-1} w(r) \, dr, \qquad j=1,\dots,N.
 $$
 Substituting in (\ref{4.1d}) and choosing $w(|z|^2) = \exp(-|z|^2)$ then shows that for the GinUE \cite{Fo92b}
 \begin{equation}\label{rK2+}
\tilde{E}_N(\xi;D_R) = \prod_{j=1}^N \Big ( 1 - \xi {\gamma(j;R^2) \over \Gamma(j)} \Big ),
\end{equation}
where $\gamma(j;x)$ denotes the (lower) incomplete gamma function.
Note that this remains valid for $N \to \infty$ in keeping with the discussion of the previous paragraph.
Setting $\xi = 1$,  asymptotic expansions for the incomplete gamma function can
be used to deduce the $N \to \infty$ asymptotic expansion of $E_N(0; D_{\alpha \sqrt{N}})$,
 \begin{equation}\label{rK2p}
E_N(0; D_{\alpha \sqrt{N}}) = \exp \Big ( C_1N^2 + C_2 N \log N + C_3 N + C_4 \sqrt{N} + {1 \over 3}
\log N + {\rm O}(1) \Big ), \quad 0 < \alpha < 1.
\end{equation}
Here the constants $C_1,\dots,C_4$ depend on $\alpha$ and are known explicitly (e.g.~$C_1=-\alpha^4/4$), 
being first given in \cite{Fo92b}. The first two of these can be deduced from the large $R$ expansion of the quantity $F_\infty(0;D_R)$, defined in Remark \ref{R3.3}.1 below, given in the still earlier work \cite{GHS88}.
The $\log N$ term was determined recently in \cite{Ch21}, as  too was the explicit form of the next order term, a 
constant with respect to $N$.

Let us also mention that for any $p \ge 2$, the $p$-th cumulant $\kappa_p(D_R)$ of the number of eigenvalues in $D_R$ can be written as  
\begin{equation}\label{cum fin N} 
\kappa_p(R) = (-1)^{p+1} \sum_{j = 0}^{N-1} {\rm{Li}}_{1 - p}\Big( 1 - \frac{1}{ \lambda_j( D_R) } \Big),
\end{equation}
where ${\rm{Li}}_s(x) = \sum_{k = 1}^{\infty} k^{-s} x^k$ is the polylogarithm function.
The formula \eqref{cum fin N} as well as its large $N$ behaviour both in the bulk and at the edge were obtained in \cite{LMS19}. 
 
We turn our attention now to the circumstance that a (possibly scaled) large $N$ limit has already been taken, and ask about
the fluctuations of the number of particles in a region $\mathcal N(\mathcal D)$ for large values of $|\mathcal D|$, i.e.~the
volume of $\mathcal D$. The first point to note is that if the coefficient of $\xi^n$ in  (\ref{4.1c}) tends to zero as $N \to \infty$, then the expansion
 remains valid in this limit \cite{Le73}. The decay is easy to establish in the determinantal case, since then (see e.g.~\cite[Eq.~(9.13)]{Fo10})
 $\rho_{(n),N}(z_1,\dots,z_n) \le \prod_{l=1}^n \rho_{(1),N}(z_l)$. Hence it is sufficient that $\int_\Omega  \rho_{(1),N}(z) \, d^2z$ be bounded
 for $N \to \infty$. With the limiting form of (\ref{4.1c}) valid, the theory of Fredholm integral operators \cite{WW65} tells us that the limit of
 (\ref{4.1d}) is well defined with  the RHS identified as the Fredholm determinant $\det(\mathbb I - \xi \mathbb K_{\infty,\mathcal D})$.
 In (\ref{rK2+}) the limit corresponds to simply replacing the upper terminal of the product by $\infty$.
 We stipulate the further structure that the correlation kernel be Hermitian, as holds for the appropriately scaled
 form of (\ref{4.1d+}). Then the argument of the proof of Proposition \ref{P4.1} tells us that the eigenvalues of $ \mathcal K_{\infty,\mathcal D}$
 are all between $0$ and $1$, which in turn allows the reasoning leading to (\ref{rK1a}) to be repeated.
 The conclusion is, assuming ${\rm Var} \, \mathcal N(\mathcal D) \to \infty$ as $|\mathcal D| \to \infty$ which
 as already remarked is guaranteed by the results of \cite{Be87}, that
 (\ref{rK1a}) remains valid
 with $\mathcal D_N$ replaced by $\mathcal D$, and the limit $N \to \infty$ replaced by the limit $|\mathcal D| \to \infty$ 
 \cite{CL95,So00,HKPV08}. 
 
 It is moreover the case that in the above setting and with these modifications the local
 central limit theorem of Proposition \ref{T1} remains valid \cite{FL14}. Another point of interest is that the expansion
 (\ref{rK2p}) is uniformly valid
in the variable $R=\alpha \sqrt{N}$, provided this quantity grows with $N$, and hence also provides the large $R$
expansion of $E_\infty(0; D_R)$. Finally we consider results of \cite{Le83} as they apply to number fluctuations in the infinite GinUE. The plane is to be divided into squares $\Gamma_j$ of area $L^2$ with centres at $L \mathbb Z^2$. Define $\Upsilon_j = \mathcal N(\Gamma_j)/\sqrt{{\rm Var} \,
\mathcal N(\Gamma_j)}$. For large $L$, in keeping with (\ref{KL}) we have ${\rm Var} \, (\Gamma_j) \sim 2L/\pi^{3/2}$. The question of interest is the joint distribution of $\{ \Upsilon_j \}$. It is established in \cite{Le83} that for $L \to \infty$ this distribution is Gaussian, with covariance ${1 \over 4} [-\Delta]_{j,k}$, where $\Delta$ is the discrete Laplacian on $\mathbb Z^2$. Consequently, fluctuations of $N(\Gamma_j)$ induce opposite fluctuations in the regions neighbouring $\Gamma_j$.

\begin{remark}\label{R3.3} $ $ \\
1.~Closely related to the probability $E_N(N;\mathcal D)$ is the conditioned quantity $F_N(n;\mathcal D) :={\rm Pr} (\sum_{j=1}^N \chi_{z_j \in \mathcal D} = n
| z_j = 0)$. Denote the corresponding generating function by $\tilde{F}_N(\xi;\mathcal D)$. Proceeding as in the derivation of (\ref{rK2+})  
shows $\tilde{F}_N(\xi; D_R) = \tilde{E}_N(\xi; D_R)/(1 - \xi(1 - e^{-R^2}))$. Thus in particular ${F}_N(0; D_R) = e^{R^2} E_N(0; D_R)$
\cite{GHS88}. Note that $-{d \over d R} F(0;D_R)$ gives the spacing distribution between an eigenvalue conditioned to be at the origin, and its nearest neighbour at a distance $R$. The work \cite{SJ12} gives results relating to the PDF for the minimum of all the nearest neighbour spacings with global scaling, obtaining a scale of $N^{-3/4}$ and a PDF proportional to $x^3 e^{-x^4}$.\\
2.~Let $\bar{ D}_{\alpha \sqrt{N}}$ denote the region $\{z: |z| > \alpha \sqrt{N} \}$, i.e.~the region outside the disk
of radius $\alpha \sqrt{N}$, where it is assumed $0 < \alpha < 1$. Note that then $E_N(0;\bar{ D}_{\alpha \sqrt{N}}) =
E_N(N; D_{\alpha \sqrt{N}})$.
The analogue of (\ref{rK2p}) has been calculated in
\cite{Ch21}, where in particular it is found that 
\begin{equation}\label{L1}
C_1=\alpha^4/4-\alpha^2+(1/2)\log \alpha^2 + 3/4,
\end{equation}
the coefficient $C_4$ is unchanged, while the coefficient ${1 \over 3}$ for
$\log N$ seen in (\ref{rK2p}) is to be replaced by $- {1 \over 4}$. 
(We also refer to \cite{CMP16} for an earlier work for which the expansion (\ref{rK2p}) was obtained up to $C_3$.)  
One sees from (\ref{L1}) that $C_1=0$ for
$\alpha = 1$, and the result of \cite{Ch21} gives that $C_2,C_3$ similarly vanish, giving that $E_N(N; D_{ \sqrt{N}}) \sim
e^{C_4 \sqrt{N}}$. Extending $\alpha$ larger that $1$ according to the precise $N$ dependent value
\begin{equation}\label{3.13a}
\alpha = 1 + {1 \over 2 \sqrt{N}} \Big ( \sqrt{\gamma_N} + {x \over  \sqrt{\gamma_N}} \Big ), \qquad
\gamma_N = \log {N \over 2 \pi} - 2 \log \log N,
\end{equation}
gives the extreme value result $\lim_{N \to \infty} E_N(N; D_{ \alpha_N \sqrt{N}}) = \exp(-\exp(-x))$ \cite{Ri03a}
(see too \cite[Th.~1.3 with $\alpha = 2$]{CP14} for a generalisation to the case of (\ref{1.1iM}), considered also in
\cite{Ch21} for $\alpha<1$), which is
the Gumbel law. Other references on fluctuations of the spectral radius under various boundary conditions include
\cite{CGJ20,Seo20,AKS20,Gar22,BS21,BL22}.
Furthermore, an intermediate fluctuation regime which interpolates between the Gumbel law with the large deviation regime \eqref{L1} was investigated in \cite{L+18}. 
Another case considered in \cite{Ch21} is when $D_N$ is
specified as the outside of an annulus contained inside of the disk of radius $\sqrt{N}$. 
Two features of the corresponding asymptotic expansion (\ref{rK2p}) are: (1) the absence of a term proportional to $\log N$, and (2) the presence of oscillations of order 1 that are described in terms of the Jacobi theta function. 
We also refer to \cite{CL22,ACCL22a,ACCL22b} for further recent studies in this direction in the presence of hard edges. 
\\
3.~For general $\mathcal D_N$ with $| \mathcal D_N | \to \infty$ the coefficient $C_1$ in (\ref{rK2p}) relates to an energy
minimisation (electrostatics) problem, and similarly for the $| \mathcal D| \to \infty$ expansion of $E_\infty(0;\mathcal D)$ \cite{Dy62e,JLM93,CMP16,AR17,Ad18}.
Thus for $\mathcal D = D_{\alpha \sqrt{N}}$ the electrostatics problem is to compute the potential due to a uniform charge density $1/\pi$ inside a disk of radius $\alpha$, with a neutralising uniform surface charge density $-\alpha/2\pi$ on the boundary.
The applicability of electrostatics remains true for the asymptotic expansion of $E_N(k; D_{\alpha \sqrt{N}})$ (and $E_\infty(k;\mathcal D)$) in
the so-called large deviation regime, when $k \ll N \alpha^2$ or $k \gg N \alpha^2$. For a disk the electrostatics problem can be
solved explicitly to give \cite{ATW14}
\begin{equation}\label{I1}
E_N(k; D_{\alpha \sqrt{N}}) \sim e^{-N^2 \psi_0(\alpha;k/N)}, \quad \psi_0(\alpha;x) = {1 \over 4} \Big | (\alpha^2 - x)(\alpha^2 - 3 x) -2
x^2 \log(x/\alpha^2) \Big |.
\end{equation}
Note in particular that $\psi_0(\alpha;0) = \alpha^4/4$, which is the value of $-C_1$ in (\ref{rK2p}), while setting $x=1$
gives the value of $-C_1$ noted in the above paragraph. There is also a scaling regime, where $|k - N \alpha^2| = {\rm O}(N^{1/2})$, for the asymptotic value of
$E_N(k; D_{\alpha \sqrt{N}})$ which interpolates between (\ref{I1}) and the local central limit theorem result
(\ref{rK1}) \cite{L+19,FL22}. In the case of $E_\infty(k; D_R)$, it makes sense to consider $k$ proportional to not only $\alpha R^2$ but also to $\alpha R^\gamma$
with $\gamma > 2$. Then \cite{JLM93,FL22}
$$
E_\infty(\alpha R^\gamma; D_R) \sim e^{- {1 \over 2} (\gamma - 2) \alpha^2 R^{2 \gamma} \log R (1 + {\rm o}(1) )}.
$$
\\
4.~For the infinite GinUE the exact result in terms of modified Bessel functions
 \begin{equation}\label{J1}
 {\rm Var} \, \mathcal N( D_R) = R^2 e^{-2 R^2} \Big ( I_0(2 R^2) + I_1(2 R^2) \Big ) =  \sum_{j=1}^\infty \frac{ \gamma(j; R^2) }{ \Gamma(j) } \Big(  1- \frac{ \gamma(j; R^2) }{ \Gamma(j) } \Big)
 \end{equation}
 is known \cite[Th.~1.3]{Sh06}, \cite[Appendix B]{FL22}. The second expression in \eqref{J1} also appears in \cite{LMS19} as a large $N$ limit of the number variance of the finite Ginbire ensemble in the deep bulk regime.
 This exhibits the leading large $R$ form $R/\sqrt{\pi}$ which is consistent with identifying $\sqrt{N} | \partial  D_0|$
 as $2 \pi R$ on the RHS of (\ref{KL}); see also \cite{ABE23}.
\end{remark}

\subsection{Smooth linear statistics}
The theory of fluctuation formulas for GinUE in the case that $f(z)$ in (\ref{4.1}) is smooth has some different features to
the discontinuous case $f(z) = \chi_{z \in \mathcal D}$. This can be seen by considering the bulk scaled limit, and in particular the truncated two-point correlation (\ref{2.2t}).
From this we can compute the structure factor
 \begin{equation}\label{J2a}
 S_\infty^{\rm GinUE}(\mathbf k) := \int_{\mathbb R^2} \Big ( \rho_{(2), \infty}^{{\rm b} ,T}(\mathbf 0,  \mathbf r)  + {1 \over \pi} \delta(\mathbf r) \Big ) e^{i \mathbf k \cdot \mathbf r} \,
 d \mathbf r = {1 \over \pi} \Big ( 1 - e^{-|\mathbf k|^2/4} \Big ).
 \end{equation}
 Knowledge of the structure factor allows the limiting covariance (\ref{4.1a}) to be computed using the Fourier transform
  \begin{equation}\label{J2b}
 {\rm Cov}^{\rm GinUE_\infty} \, \Big ( \sum f(\mathbf r_l),  \sum  g(\mathbf r_l) \Big )   =
{1 \over (2 \pi)^2}  {1 \over  \pi}  \int_{\mathbb R^2}  \hat{f}(\mathbf k) \hat{g}(- \mathbf k)   \Big ( 1 - e^{-|\mathbf k|^2/4} \Big ) \, d \mathbf k,
 \end{equation} 
 valid provided the integral converges. Here, with $z = x + i y$, $\mathbf r = (x,y)$ and the Fourier transform $\hat{f}(\mathbf k)$ is defined by integrating $f(\mathbf r)$ times $e^{i\mathbf k \cdot \mathbf r}$ over $\mathbb R^2$ --- thus according to (\ref{J2a}) $S_\infty^{\rm GinUE}(\mathbf k)$ is a particular Fourier transform. Now introduce a scale $R$ so that $f(\mathbf r) \mapsto f(\mathbf r/R), \, g(\mathbf r) \mapsto g(\mathbf r/R)$. It follows
 from (\ref{J2b}) that
 \begin{equation}\label{J2b+}
\lim_{R \to \infty}  {\rm Cov}^{\rm GinUE_\infty} \, \Big ( \sum f(\mathbf r_l/R),  \sum  g(\mathbf r_l/R) \Big )   =
{1 \over (2 \pi)^2}  {1 \over 4 \pi}  \int_{\mathbb R^2}  \hat{f}(\mathbf k) \hat{g}(- \mathbf k) |\mathbf k|^2 \, d \mathbf k,
 \end{equation} 
 again provided the integral converges. Most noteworthy is that this limiting quantity is O$(1)$. In contrast, with $f(\mathbf r) = g(\mathbf r) = \chi_{|\mathbf r| < 1}$,
 and then introducing $R$ as prescribed above, we know  that (\ref{J2b}) has the evaluation (\ref{J1}). As previously commented, the large $R$ form
 of the latter is proportional to $R$.

 \begin{remark} 
 Consider the linear statistic $A(\mathbf x) = - \sum_{j=1}^N ( \log | \mathbf x - \mathbf r_j| - \log | \mathbf r_j| )$. 
 In the Coulomb gas picture relating to (\ref{1.1i}), this corresponds to the
 difference in the potential at $\mathbf x$ and the origin. For bulk scaled GinUE, one has from (\ref{J2b}) the exact result \cite{AJ84}
  \begin{equation}\label{J2c}
  {\rm Var}^{\rm GinUE_\infty} \, A(\mathbf x) = {1 \over 2} \Big ( 2 \log | \mathbf x| + ( |\mathbf x|^2 + 1 ) \int_{|\mathbf x|^2}^\infty {e^{-t} \over t} \, dt - e^{- | \mathbf x|^2} + C + 1 \Big ),
 \end{equation}   
 where here $C$ denotes Euler's constant. In particular, for large $|\mathbf x|$, $ {\rm Var}^{\rm GinUE_\infty} \, A(\mathbf  x)  \sim \log | \mathbf x |$. This last point shows
 that the introduction of a scale $R$ as in (\ref{J2b+}) would give rise to a divergence proportional to $\log R$. Such a log-correlated structure underlies a relationship between the logarithm of the absolute value of the characteristic polynomial for GinUE and Gaussian multiplicative chaos \cite{La20}.
 \end{remark}

 The covariance with test functions $f(\mathbf r) \mapsto f(\mathbf r/\sqrt{N})$, $g(\mathbf r) \mapsto g(\mathbf r/\sqrt{N})$  assumed to take on
 real or complex values is also an order one quantity for GinUE 
 in the $N \to \infty$ limit, upon the additional assumption that $f,g$ are differentiable and don't grow too fast at infinity \cite{Fo99,RV07,AHM11,AHM15}.

 \begin{proposition}\label{P3.5}
Require that $f,g$ have the properties as stated above. Let $f(\mathbf r) |_{\mathbf r = (\cos \theta, \sin \theta)} = \sum_{n=-\infty}^\infty f_n e^{i n \theta}$ and similarly for the Fourier expansion of
 $g(\mathbf r) $ for $\mathbf r = (\cos \theta, \sin \theta)$. We have
  \begin{equation}\label{5.2e} 
  \lim_{N \to \infty} {\rm Cov}^{\rm GinUE} \Big ( \sum_{j=1}^N f(\mathbf r_j/\sqrt{N}),  \sum_{j=1}^N \bar{g}(\mathbf r_j/\sqrt{N}) \Big ) =
  {1 \over 4 \pi  } \int_{ |\mathbf r  | < 1}   \nabla f \cdot  \nabla \bar{g} \, dx dy +
  {1 \over 2} \sum_{n=-\infty}^\infty |n| f_n  \bar{g}_{-n}.
   \end{equation}   
   \end{proposition}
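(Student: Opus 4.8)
The plan is to exploit the determinantal structure together with the exact monomial representation of the kernel, and to reduce everything to one‑dimensional asymptotics by rotational invariance, in the spirit of \cite{Fo99}. Write $\phi_k(z)=z^k e^{-|z|^2/2}/\sqrt{\pi\,k!}$, so that these form an orthonormal family in $L^2(\mathbb C,d^2z)$ and, by (\ref{2.1g}), $K_N(w,z)=\sum_{k=0}^{N-1}\phi_k(w)\overline{\phi_k(z)}$ is the kernel of the orthogonal projection $P_N$ onto ${\rm span}\{\phi_0,\dots,\phi_{N-1}\}$. Let $M_h$ denote multiplication by $h(z/\sqrt N)$. Inserting $\rho_{(2),N}^T=-|K_N|^2$ and $\rho_{(1),N}(z)=K_N(z,z)$ (both consequences of the determinantal form (\ref{2.1f}) and the Hermiticity $K_N(z',z)=\overline{K_N(z,z')}$) into (\ref{4.1a}), with $g$ there replaced by $\bar g$, and expanding $|K_N(z,z')|^2=K_N(z,z')\overline{K_N(z,z')}$ in the $\phi_k$, one gets
\begin{equation*}
{\rm Cov}^{\rm GinUE}\Big(\sum_{j=1}^N f(\mathbf r_j/\sqrt N),\ \sum_{j=1}^N \bar g(\mathbf r_j/\sqrt N)\Big)
={\rm Tr}\big(M_f M_{\bar g}P_N\big)-{\rm Tr}\big(M_f P_N M_{\bar g}P_N\big)
=\big\langle (I-P_N)M_{\bar f}P_N,\ (I-P_N)M_{\bar g}P_N\big\rangle_{\rm HS},
\end{equation*}
which for $f=g$ reduces to $\sum_{k=0}^{N-1}\big\|(I-P_N)\,\overline{f(z/\sqrt N)}\,\phi_k\big\|_{L^2}^2\ge 0$. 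So the task is to measure, as $N\to\infty$, how far multiplication by the (conjugated) test function pushes each of the first $N$ monomials out of the span of all $N$ of them.

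Next I would diagonalise in angular momentum. Expanding $f(\rho e^{i\theta})=\sum_{n\in\mathbb Z}f_n(\rho)e^{in\theta}$ and likewise $g$ (so $f_n(1)$ is the $f_n$ of the statement), and noting that $\phi_k$ lives in the sector $e^{ik\theta}$ while $d^2z$ and $P_N$ are rotation invariant, the Hilbert--Schmidt pairing decouples as $\sum_n C_n$ with $C_n$ depending only on $(f_n,g_n)$. Each $C_n$ is then an explicit finite sum of one‑dimensional integrals, most transparently written via Proposition \ref{P2.18}: the squared moduli $\{|z_j|^2\}$ are distributed as independent gamma variables $\gamma_{k+1}\sim\Gamma[k+1,1]$, $k=0,\dots,N-1$, so that, schematically,
\begin{equation*}
C_n=\sum_{k=0}^{N-1}\Big({\rm E}_{\gamma_{k+1}}\!\big[\,|f_n(\sqrt{\gamma_{k+1}/N})|^2\,\big]_{\rm pol.}
-\Big|\tfrac{1}{\sqrt{(k-n)!\,k!}}\textstyle\int_0^\infty \overline{f_n(\sqrt{s/N})}\,s^{\,k-n/2}e^{-s}\,ds\Big|^2_{\rm pol.}\,\chi_{0\le k-n<N}\Big),
\end{equation*}
the second term being the component captured by $P_N$ and "${}_{\rm pol.}$" meaning the obvious replacement of $|f_n|^2$ by $f_n\overline{g_n}$; for $n<0$ one has $\phi_{k+|n|}$ in place of $\phi_{k-n}$, and the subtraction is switched off precisely for those $k$ with $k+|n|\ge N$, i.e.\ for $k$ within $|n|$ of $N$.

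The third and main step is the asymptotics of $C_n$, which splits into a bulk and an edge contribution. In the bulk ($k/N=x$ bounded away from $0$ and $1$) the law $\Gamma[k+1,1]$ concentrates at $s=k$ with spread $O(\sqrt k)$; applying Stirling to the ratios $\Gamma(k-n/2+1)/\sqrt{\Gamma(k-n+1)\Gamma(k+1)}$ together with a second‑order Laplace expansion of the moment integrals and the substitution $x=r^2$ turns $\sum_k(\cdots)$ into a Riemann sum converging to $\tfrac12\int_0^1\big(r f_n'(r)\overline{g_n'(r)}+\tfrac{n^2}{r}f_n(r)\overline{g_n(r)}\big)\,dr$, and summing over $n$ reassembles $\tfrac1{4\pi}\int_{|\mathbf r|<1}\nabla f\cdot\nabla\bar g\,dx\,dy$. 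In the edge window $k=N+O(\sqrt N)$ one uses the error‑function profile (\ref{2.2e})--(\ref{2.2f}) for $\Gamma(k;xN)/\Gamma(k)$, equivalently the truncation $\chi_{0\le k-n<N}$; evaluating the resulting finite sums over these $k$, with $f_n(\sqrt{k/N})\to f_n(1)$, yields the contribution $\tfrac{|n|}{2}f_n(1)\overline{g_n(1)}=\tfrac{|n|}{2}f_n\bar g_{-n}$. Adding the two pieces and summing over $n$ gives the claimed formula.

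I expect the main obstacle to be the bookkeeping of the crossover region $k\approx N$ (equivalently $|\mathbf r|\approx\sqrt N$): both terms in the answer draw contributions from this narrow window, so one must expand the (in)complete gamma ratios to exactly the order dictated by the $O(1/\sqrt N)$ correction in (\ref{2.2e}) and match the bulk and edge pieces without double counting or losing anything. The second delicate point is uniformity of all these estimates in the angular index $n$, needed to interchange $N\to\infty$ with $\sum_n$; this rests on the rapid decay of $f_n,g_n$ coming from differentiability of $f,g$, and on $f,g$ not growing too fast so that all integrals converge. For alternative derivations --- via a loop (Ward) identity, or by identifying the limiting fluctuation field with (the Laplacian of) the Gaussian free field plus a boundary log‑correlated field --- see \cite{RV07,AHM11,AHM15}.
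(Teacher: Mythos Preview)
Your approach is correct in outline but differs from the paper's. The paper sketches the argument of \cite{RV07}: first verify (\ref{5.2e}) for $f,g$ polynomial in $z,\bar z$ by exact integration in polar coordinates, then pass to general $C^1$ test functions via the Cauchy--Pompeiu (dbar) representation (\ref{DB}), which reduces everything to the special statistics $h(z)=1/(w-z)$ whose Laurent series has only finitely many terms contributing after integration. By contrast you work directly with the full radial Fourier decomposition $f(\rho e^{i\theta})=\sum_n f_n(\rho)e^{in\theta}$ and compute the asymptotics sector by sector, using the gamma structure of the squared moduli (Proposition \ref{P2.18}) and the error-function edge profile (\ref{2.2e}); this is the route of \cite{Fo99} rather than \cite{RV07}.

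What each approach buys: the dbar reduction sidesteps the delicate edge-matching you flag --- once the formula is checked on monomials $z^p\bar z^q$ (for which the polar integrals are exact and the Riemann sums trivial), the passage to general $f$ is soft analysis, and no uniformity-in-$n$ bound is needed. Your method is more self-contained and makes the split into the Dirichlet (bulk) and $H^{1/2}$ (boundary) pieces transparent as arising from bulk versus edge windows in $k$, at the price of having to control the crossover region $k=N+O(\sqrt N)$ precisely enough to extract the factor $|n|/2$ rather than $|n|$ (here the bulk Laplace expansion and the boundary count interact), and to justify interchanging $\sum_n$ with the $N\to\infty$ limit. These are exactly the obstacles you identify, and they are genuine; with enough care the argument goes through, but the paper's route avoids them entirely.
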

   
   \begin{proof} (Sketch) In the method of \cite{RV07}, a direct calculation using (\ref{4.1a}), (\ref{2.2b}) and (\ref{2.1g}) allows (\ref{5.2e}) to be established
   for $f,g$ polynomials jointly in $z=x+iy$ and $\bar{z} = x - iy$. The required integrations can be computed exactly using polar coordinates. To go beyond the
   polynomial case, the so-called dbar (Cauchy-Pompeiu) representation is used. This gives that for any once continuously differentiable $f$ in the unit
   disk $ D_1$, and $z$ contained in the interior of the disk,
    \begin{equation}\label{DB}
   f(z) = - {1 \over \pi } \int_{ D_1} {\partial_{\bar{w}} f(w) \over w - z} \, d^2 w + {1 \over 2 \pi i} \int_{\partial  D_1} {f(w) \over
   w - z} \, dw,
     \end{equation} 
   where with $w = \alpha + i \gamma$, $\partial_{\bar w} = {1\over 2} ( {\partial \over \partial  \alpha} + i {\partial \over \partial  \gamma}  )$. The covariance
   problem is thus reduced to the particular class of linear statistics of the functional form $h(z) = 1/(w-z)$. The required analysis in this case is
   facilitated by the use of the corresponding Laurent expansion, with only a finite number of terms contributing after integration.
   \end{proof}


\begin{remark}\label{R3.6}  $ $ \\
1.~As predicted in \cite{Fo99}, upon multiplying the RHS by $2/\beta$, (\ref{5.2e}) remains valid for the Coulomb gas
model (\ref{1.1i}) \cite{LS18,Se18,BBNY19}. 
In the case of the elliptic GinUE, a simple modification of (\ref{5.2e}) holds true. Thus the
domain $| \mathbf r | < 1$ in the first term is to be replaced by the  appropriate ellipse, and the Fourier components
of the second term are now in the variable $\eta$, where  $(A \cos \eta, B \sin \eta)$,
$0 \le \eta \le 2 \pi$ parametrises the boundary of the ellipse. The results of \cite{Fo99,AHM15,LS18} also cover this case. \\
2.~In the case of an ellipse, major and minor axes $A,B$ say, there is particular interest in the linear statistic $P_x := \sum_{j=1}^N x_j$  \cite{CPR87}.
Linear response theory gives for the $xx$ component of the susceptibility tensor $\chi$ --- relating the polarisation density to the
applied electric field --- the formula $\chi_{xx} = (\beta/(\pi A B)) \lim_{N \to \infty} {\rm Var} \, P_x$ (and similarly for the $xy$ and $yy$ component).
This same quantity can be computed by considerations of macroscopic electrostatics, which gives $\chi_{xx} = (A+B)/(\pi B)$. Using (\ref{5.2e}) modified
as in the above paragraph, the consistency of these formulas can be verified. \\
3.~Considering further the case of elliptic GinUE, dividing by $N$ and taking the limit $\tau \to 1$ gives the GUE with eigenvalues supported on $(-2,2)$, and similarly
for the $\beta$ generalisation limiting to (\ref{1.1i}) restricted to this interval. For this model it is known (see e.g.~\cite[Eq.~(3.2) with the identification $x= 2 \cos \theta$]{Fo22}) 
$$
\lim_{N \to \infty} {\rm Cov} \Big ( \sum_{j=1}^N f( x_j),  \sum_{j=1}^N  {g}(x_j) \Big ) =
  {2 \over \beta} \sum_{n=1}^\infty  n f_n^{\rm c} g_n^{\rm c},
 $$
 where $f(x) |_{x = 2 \cos \theta} = f_0^{\rm c} + 2 \sum_{n=1}^\infty f_n^{\rm c} \cos n \theta$ and similarly for $g(x)$.
 We observe that this is identical to the final term in (\ref{5.2e}), modified according to the specifications of point 1.~above. \\
 4.~There has been a recent application of Proposition \ref{P3.5} in relation to the computation of the analogue of the Page curve for a density matrix constructed out of GinUE matrices \cite{CK22}
\end{remark}

We turn our attention now to the limiting distribution of a smooth linear statistic. By way of introduction,
consider the particular linear statistic ${1 \over N} \sum_{j=1}^N | \mathbf r_j |^2$ for GinUE. An elementary calculation
gives that the corresponding characteristic function, $\hat{P}_N(k)$ say, has the exact functional form 
\begin{equation}\label{SL1}
\hat{P}_N(k) =
(1 - ik/N)^{-N(N+1)/2}.    
\end{equation}
It follows from this that after centring by the mean, the limiting distribution is a
Gaussian with variance given by (\ref{5.2e}) (which is this specific case evaluates to one). A limiting Gaussian
form holds in the general case of the applicability of (\ref{5.2e}), as first proved by Rider and Vir\'ag \cite{RV07}.

 \begin{proposition}\label{P3.7}
 Let $f$ be subject to the same conditions as in Proposition \ref{P3.5}, and denote the case $f = g$ of (\ref{5.2e}) by $\sigma_f^2$.
  For the GinUE, if $f$ takes on complex values then as $N \to \infty$
 $$
  \sum_{j=1}^N f(\mathbf r_j/\sqrt{N}) - \Big \langle  \sum_{j=1}^N f(\mathbf r_j/\sqrt{N}) \Big \rangle \mathop{\to}\limits^{\rm d} {\rm N}[0,\sigma_f]
  + i {\rm N}[0,\sigma_f],
  $$
  while if $f$ is real valued the RHS of this expression is to be replaced by $ {\rm N}[0,\sigma_f]$. Moreover this same limit formula holds for
  the elliptic GinUE \cite{AHM15} and its $\beta$ generalisation \cite{LS18} (both subject to further technical restrictions on $f$), with the variance modified according to Remark \ref{R3.6}.1.
  \end{proposition}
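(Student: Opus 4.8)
The plan is to use the method of cumulants for the determinantal point process of the GinUE eigenvalues. Write $f_N(\mathbf r):=f(\mathbf r/\sqrt N)$ and $X_N:=\sum_{j=1}^N f_N(\mathbf r_j)$. Since a Gaussian law is determined by its moments, it is enough to show that the second cumulant of $X_N$ tends to $\sigma_f^2$ while every cumulant of order $m\ge 3$ tends to $0$ as $N\to\infty$. For real-valued $f$ this gives the limit ${\rm N}[0,\sigma_f]$; for complex-valued $f$ one runs the same argument for the real and imaginary parts jointly and reads off their limiting variances and covariance from the bilinear form of Proposition \ref{P3.5}, which produces the asserted complex Gaussian.

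I would begin from the Fredholm-determinant form of the cumulant generating function. By the determinantal structure (\ref{2.1f}) and the standard identity for such processes, $\langle e^{tX_N}\rangle=\det(\mathbb I+(e^{tf_N}-1)\mathbb K_N)$ with $\mathbb K_N$ the (orthogonal projection) integral operator of kernel $K_N$ from (\ref{2.1g}) --- that $\mathbb K_N^2=\mathbb K_N$ is the reproducing property (\ref{2.1j}). Taking $\log$ and expanding in $t$ gives the $m$-th cumulant of $X_N$ as a fixed linear combination of traces of alternating products ${\rm tr}\,(M_{f_N}^{m_1}\mathbb K_N\cdots M_{f_N}^{m_k}\mathbb K_N)$, where $M_{f_N}$ denotes multiplication by $f_N$. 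The first cumulant is ${\rm tr}\,(M_{f_N}\mathbb K_N)=\int f_N\rho_{(1),N}$ and the second equals $\int f_N^2\rho_{(1),N}-\iint f_N(z)f_N(w)|K_N(z,w)|^2$, which is the $f=g$ case of (\ref{4.1a}); its convergence to $\sigma_f^2$ is exactly Proposition \ref{P3.5} and can be cited.

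The core of the argument is the vanishing of the cumulants of order $m\ge 3$. Following the pattern of the proof of Proposition \ref{P3.5}, I would first take $f$ a polynomial in $z$ and $\bar z$. For such $f$ the monomials $z^j$ diagonalise $\mathbb K_N$ (cf.\ \S\ref{S3.2}), so each of the traces above becomes an explicit finite sum of ratios of incomplete gamma functions; using $\mathbb K_N^2=\mathbb K_N$ to telescope, together with the transition asymptotics (\ref{2.2e}), one checks directly that after the scaling $\mathbf r\mapsto\mathbf r/\sqrt N$ the bulk contributions to ${\rm tr}\,(M_{f_N}^{m_1}\mathbb K_N\cdots M_{f_N}^{m_k}\mathbb K_N)$ cancel down to an ${\rm O}(1)$ remainder only when $m=2$, every higher cumulant being ${\rm o}(1)$ --- heuristically because the bulk kernel is asymptotically translation invariant, so $[M_{f_N},\mathbb K_N]$ is a ``derivative-order'' object and three or more such factors force decay, while the surviving variance is the Dirichlet-plus-boundary form (\ref{5.2e}). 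Having the theorem for polynomials, one passes to general (differentiable, not too fast growing) $f$ via the Cauchy--Pompeiu (dbar) representation used in the proof of Proposition \ref{P3.5}, reducing to kernels $h(z)=1/(w-z)$ and thence to polynomial approximations with controlled error.

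The step I expect to be the real obstacle is making this polynomial-to-general-$f$ passage rigorous: one needs uniform bounds on all cumulants of $X_N$ along the approximating sequence (equivalently, tightness together with uniform integrability of moments), and one must control the edge layer $|z|\approx\sqrt N$, where $\mathbb K_N$ is no longer a bulk projection and the local law is that of the edge kernel (\ref{2.2c}); there the argument is that, $f$ being smooth, $f_N$ is essentially constant across the ${\rm O}(1)$-wide edge band, so this region affects only lower-order corrections to the variance and drops out of the limiting higher cumulants. For the elliptic GinUE the same scheme applies after replacing $K_N$ by the Hermite kernel (\ref{cH1}) of Proposition \ref{P2.6}, with the scaled monic Hermite polynomials $C_l$ in place of the monomials, the droplet an ellipse, and the Hermite asymptotics underlying Proposition \ref{P2.7} in place of (\ref{2.2e}); the covariance is then modified as in Remark \ref{R3.6}.1. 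The $\beta$-generalisation (\ref{1.1i}) is not determinantal, so the cumulant method is unavailable: there one uses the loop-equation/transport method for two-dimensional Coulomb gases --- expressing the linear statistic as a perturbation of the external potential and estimating the corresponding ratio of partition functions via the equilibrium-measure analysis of Remark \ref{R2.4}.2 --- which is the genuinely different ingredient, carried out in \cite{LS18}.
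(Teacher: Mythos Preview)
Your proposal is essentially the same as the approach the paper sketches (which is explicitly labeled ``Comments only'' and defers to \cite{RV07,AHM15,LS18}): for GinUE, establish a Gaussian limit by showing the cumulants of order $m\ge 3$ vanish, with the variance handled by Proposition~\ref{P3.5}; for the elliptic and $\beta$ extensions, invoke the loop-equation and transport/energy methods respectively. Your write-up is in fact more detailed than the paper's own treatment.

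One small point of emphasis worth noting. The paper singles out \emph{rotation invariance} as the essential structural ingredient in \cite{RV07}. You access this implicitly through the diagonalisation of $\mathbb K_N$ in the monomial basis, but the way \cite{RV07} actually organises the computation is to decompose $f$ into angular Fourier modes $f(re^{i\theta})=\sum_n f_n(r)e^{in\theta}$; rotation invariance then forces most of the mixed terms in the higher cumulants to vanish outright (by orthogonality in $\theta$), so that what remains is a manageable radial integral. Your ``polynomials in $z,\bar z$ then d-bar'' route and the ``angular-mode then radial'' route are equivalent in spirit, but the latter makes the cancellations for $m\ge 3$ more transparent and avoids having to argue that $[M_{f_N},\mathbb K_N]$ is ``derivative-order'' by hand. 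You correctly flag the polynomial-to-general-$f$ passage and the edge contribution as the places requiring genuine work; the paper does not attempt these either and simply cites the literature.
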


 \begin{proof} (Comments only) The proof of  \cite{RV07} proceeds by establishing that the higher order cumulants beyond the
 variance tend to zero as $N \to \infty$. Essential use is made of the rotation invariance of GinUE. The method of \cite{AHM15} uses a
 loop equation strategy, while \cite{LS18} involves energy minimisers and transport maps. For GinUE with $f$ a function of $| \mathbf r|$,
 a simple derivation based on the proof of Proposition \ref{P2.18} together with a Laplace approximation of the integrals \cite{Fo99} (see also
 \cite[Appendix B]{BF22}).
 \end{proof}

 \begin{remark} Other settings in which Proposition \ref{P3.7} has proved to be valid include products of GinUE matrices \cite{CO20,KOV20} (with the additional assumption that the test function have support strictly inside the unit circle), and for the complex spherical ensemble of subsection \ref{S2.5} after stereographic projection onto the sphere \cite{RV07a,Be12}.
 \end{remark}
 
 \subsection{Spatial modelling and the thinned GinUE}
 The GinuE viewed as a point process in the plane has been used to model geographical
 regions by way of the corresponding Voronoi tessellation \cite{LD93}, the positions of
 objects, for example trees in a plantation \cite{Le90} or the nests of birds of prey \cite{A+21},
 and the spatial distribution of base stations in modern wireless networks \cite{MS14,DZH15}, amongst
 other examples. The wireless network application has made use of the thinned GinUE, whereby
 each eigenvalue is independently deleted with probability $(1 - \zeta)$, $0 < \zeta \le 1$. The effect
 of this is simple to describe in terms of the correlation functions, according to the replacement
  \begin{equation}\label{5.3a} 
  \rho_{(n),N}(z_1,\dots,z_n) \mapsto \zeta^N  \rho_{(n),N}(z_1,\dots,z_n).
  \end{equation}
  With the bulk density of GinUE uniform and is equal to $1/\pi$, we can also rescale the
  position $z_j \mapsto z_j/\zeta$ so that this remains true in the thinned ensemble. For this
  (\ref{5.3a}) is to be updated to read 
  \begin{equation}\label{5.3b} 
  \rho_{(n),N}(z_1,\dots,z_n) \mapsto \rho_{(n),N}(z_1/\sqrt{\zeta},\dots,z_n/\sqrt{\zeta}).
  \end{equation} 
  
  Recalling now (\ref{2.1f}) and (\ref{2.2b}), for the bulk scaled limit of the thinned GinUE
  we have in particular
  $$
  \rho_{(1),\infty}^{\rm tGinUE}(z) = {1 \over \pi}, \qquad   \rho_{(2),\infty}^{{\rm tGinUE}, T}(w,z) = - {1 \over \pi^2} e^{-|w - z |^2/\zeta}.
  $$
  From these functional forms we see
  $$
  \int_{\mathbb C}  \rho_{(2),\infty}^{{\rm tGinUE}, T}(w,z)  \, d^2z = - {\zeta \over \pi} \ne -  \rho_{(1),\infty}^{{\rm tGinUE}}(w) \qquad {\rm unless} \: \zeta = 1,
  $$
  where the superscript ``tGinU'' denotes the thinned GinUE.
  Equivalently, in terms of the structure factor (\ref{J2a}),
  $$
  S_\infty^{{\rm tGinUE}}(\mathbf 0) = {1 - \zeta \over \pi} \ne 0  \qquad {\rm unless} \: \zeta = 1.
  $$
  Due to this last fact, the O$(1)$ scaled covariance for smooth linear statistics (\ref{J2b}) is no longer true,
  and now reads instead
   \begin{equation}\label{J2b+a}
  {\rm Cov}^{\rm tGinUE} \, \Big ( \sum f(\mathbf r_l/R),  \sum  g(\mathbf r_l/R) \Big )   \mathop{\sim}\limits_{R \to \infty}
{R^2 \over (2 \pi)^2}  {(1 - \zeta) \over  \pi}  \int_{\mathbb R^2}  \hat{f}(\mathbf k) \hat{g}(- \mathbf k) \, d \mathbf k.
 \end{equation} 
This leading dependence on $R^2$ holds too for the counting function $f(z) = \chi_{|z| < 1}$, since in distinction
to (\ref{J2b}) the integral now converges. Hence, in the terminology of the text introduced
below (\ref{KL}), the statistical state is no longer hyperuniform. There is an analogous change to the O$(1)$ scaled covariance
(\ref{5.2e}), which is now proportional to $N$ and reads
 \begin{equation}\label{5.2e+} 
  {\rm Cov}^{\rm tGinUE} \Big ( \sum_{j=1}^N f(\mathbf r_j/\sqrt{N}),  \sum_{j=1}^N \bar{g}(\mathbf r_j/\sqrt{N}) \Big )    \mathop{\sim}\limits_{N \to \infty}
 N  {(1 - \zeta)\over  \pi  } \int_{ |\mathbf r  | < 1}    f \bar{g} \, dx dy. 
    \end{equation} 
Notwithstanding this difference, the corresponding limiting distribution  function is still Gaussian \cite{La19}. A more
subtle limit, also considered in \cite{La19}, is when $N \to \infty$ and $\zeta \to 1^-$ simultaneously, with $N(1-\zeta)$ fixed.
The quantity (\ref{5.2e}) returns to being O$(1)$, but consists of a contribution of the form (\ref{5.2e}), and a term
characteristic of a Poisson process. 

We turn our attention now to the probabilities $\{ E_N^{\rm tGUE}(k;D_{\alpha \sqrt{N}}) \}$. Upon consideration of the thinning
prescription (\ref{5.3b}), the proof of Proposition \ref{P4.1}, and (\ref{rK2+}) shows that the corresponding generating function
is given by
 \begin{equation}\label{5.2f} 
 \tilde{E}_N^{\rm tGUE}(\xi;D_{\alpha \sqrt{\zeta N}})  = \prod_{j=1}^N \bigg ( 1 - \xi \zeta {\gamma(j;\alpha^2  N) \over \Gamma(j)} \bigg ).
  \end{equation} 
Setting $\xi = 1$ in this gives the probability $ E_N^{\rm tGUE}(0;D_{\alpha \sqrt{\zeta N}})$. 
Note that the implied formula shows $E_N^{\rm tGUE}(0;D_{\alpha \sqrt{\zeta N}}) = \tilde{E}_N^{\rm GUE}(\zeta;D_{\alpha \sqrt{ N}})$. The large $N$ asymptotics of $\tilde{E}_N^{\rm GUE}(\zeta;D_{\alpha \sqrt{ N}})$, and various generalisations, are available in the literature \cite{Ch22,BC22}. Here we present a self contained derivation of the first two terms (cf.~(\ref{rK2p})).

 \begin{proposition} 
For large $N$ and with $0 < \alpha, \zeta < 1$ we have
 \begin{equation}\label{5.2g} 
  \tilde{E}_N^{\rm tGUE}(0;D_{\alpha \sqrt{\zeta N}})
 =  \exp \Big ( {\alpha^2 N } \log (1 - \zeta) + \sqrt{\alpha^2 N } \, h(\zeta) + {\rm O}(1) \Big ),
  \end{equation}
  where
 \begin{equation}\label{5.2h}   
 h(\zeta) = \int_0^\infty \log \Big ( {1 - (\zeta/2) (1 + {\rm erf}(t/\sqrt{2})) \over 1 - \zeta} \Big ) \, dt +
  \int_0^\infty \log \Big ( 1 - (\zeta/2) (1 - {\rm erf}(t/\sqrt{2})) \Big ) \, dt.
   \end{equation}  
\end{proposition}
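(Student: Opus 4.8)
The starting point is the product formula (\ref{5.2f}) at $\xi=1$: taking logarithms,
\[
\log \tilde{E}_N^{\rm tGUE}(0;D_{\alpha \sqrt{\zeta N}}) = \sum_{j=1}^N \log \Big ( 1 - \zeta\, {\gamma(j;x) \over \Gamma(j)} \Big ), \qquad x:=\alpha^2 N .
\]
Using the reading $\gamma(j;x)/\Gamma(j) = {\rm Pr}({\rm Poisson}(x)\ge j)$, the summand is exponentially close to $\log(1-\zeta)$ when $j$ lies below $x$, exponentially close to $0$ when $j$ lies above $x$, and makes its transition in a window of width $\mathrm{O}(\sqrt{x})$ about $j=x$; since $0<\alpha<1$ this window sits strictly inside $\{1,\dots,N\}$. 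The plan is to split the sum at $j=\lfloor x\rfloor$, write each term of the lower block as $\log(1-\zeta)$ plus $\log\frac{1-\zeta\gamma(j;x)/\Gamma(j)}{1-\zeta}$, and then recognise the two correction sums (the lower block, and the upper block as it stands) as Riemann sums of mesh $x^{-1/2}$ for the two integrals defining $h(\zeta)$ in (\ref{5.2h}).

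The analytic input is the uniform Poisson/incomplete-gamma asymptotics from the expansion underlying (\ref{2.2e}): with $j=x+t\sqrt{x}$,
\[
{\gamma(j;x) \over \Gamma(j)} = \tfrac12\big(1-{\rm erf}(t/\sqrt 2)\big) - {1\over 3\sqrt{2\pi x}}e^{-t^2/2}(t^2-1) + \mathrm{O}\Big({1\over x}\Big),
\]
uniformly for $t$ in compacts, together with Chernoff-type tail bounds showing the summand differs from its plateau value by $\mathrm{O}(e^{-c(j-x)^2/x})$ in the window and by exponentially small quantities for $j$ far from $x$. Substituting $s=(x-j)/\sqrt x$ in the lower block turns $\log\frac{1-\zeta\gamma(j;x)/\Gamma(j)}{1-\zeta}$ into $\log\frac{1-(\zeta/2)(1+{\rm erf}(s/\sqrt2))}{1-\zeta}$ up to a correction of size $x^{-1/2}$ times an (integrable) Gaussian profile in $s$; summing over $j$, i.e.\ over unit steps in $j$ and hence steps $x^{-1/2}$ in $s$, gives $\lfloor x\rfloor\log(1-\zeta)$ plus $\sqrt{x}$ times the first integral in (\ref{5.2h}). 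The upper block is treated the same way, with plateau value $0$ in place of $\log(1-\zeta)$ and $s=(j-x)/\sqrt x$, producing $\sqrt{x}$ times the second integral; note both integrands are finite at $s=0$ and decay like Gaussians as $s\to\infty$, so the integrals converge, and the finite upper terminal $s=(N-x)/\sqrt x\to\infty$ may be sent to infinity at the cost of an exponentially small error. Adding the blocks and using $\lfloor x\rfloor=x+\mathrm{O}(1)$ and $\log(1-\zeta)=\mathrm{O}(1)$ then yields (\ref{5.2g}).

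The work is in certifying that every discarded piece is genuinely $\mathrm{O}(1)$: (i) the difference between each correction sum and its integral is bounded by the total variation in $j$ of the summand, which is $\mathrm{O}(1)$ since the summand is bounded and monotone-like; (ii) the $x^{-1/2}$ error from the erf-replacement and from the change of variable, summed over the $\mathrm{O}(\sqrt x)$ terms of the transition window, is $\mathrm{O}(\sqrt x\cdot x^{-1/2})=\mathrm{O}(1)$ — here it is essential that this error carries a Gaussian (hence absolutely summable) profile in $(j-x)/\sqrt x$ rather than merely a bounded one; (iii) the contributions of $j$ far from $x$ sum to $\mathrm{O}(1)$ by the Chernoff bounds; (iv) endpoint and floor corrections are $\mathrm{O}(1)$. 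I expect step (ii) — pushing the windowed sum-to-integral passage down to $\mathrm{O}(1)$ rather than $\mathrm{O}(\log N)$ — to be the main obstacle, and it is precisely the mechanism that produces the $C_4\sqrt N$ term in (\ref{rK2p}); indeed the present computation is the $\zeta<1$ counterpart of that derivation, the leading order now being only $\mathrm{O}(N)$ because $\log(1-\zeta)$ is finite. (Alternatively one could quote the large-$N$ asymptotics of $\tilde E_N^{\rm GUE}(\zeta;D_{\alpha\sqrt N})$ from \cite{Ch22,BC22}, but the above is self contained.)
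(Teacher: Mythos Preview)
Your proposal is correct and follows essentially the same route as the paper: split the sum at $M^*:=\lfloor\alpha^2 N\rfloor$, factor out $(1-\zeta)^{M^*}$ from the lower block, apply the uniform incomplete-gamma asymptotic (the paper cites Tricomi's form (\ref{Tr}), equivalent to your expansion after the relabelling $j\mapsto M^*-j+1$), and identify the two correction sums as Riemann sums for the integrals in (\ref{5.2h}). Your error discussion in points (i)--(iv) is in fact more detailed than what the paper records; the paper simply asserts that the error structure $(1/\sqrt{M})g(j/\sqrt{2M})$ with $g$ integrable and rapidly decaying suffices.
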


\begin{proof}
Our main tool is the uniform asymptotic expansion \cite{Tr50}
 \begin{equation}\label{Tr}
 {\gamma(M-j+1;M) \over \Gamma(M-j+1)} \mathop{\sim}\limits_{M \to \infty} {1 \over 2} \Big ( 1 + {\rm erf} \Big ( {j \over \sqrt{2M}} \Big ) \Big );
  \end{equation}  
  cf.~the leading term in (\ref{2.2e}).
  Here it is known that the error term has the structure $(1/\sqrt{M})g(j/\sqrt{2M})$ where $g(t)$ is integrable on $\mathbb R$ and decays rapidly at infinity.
 This expansion suggests we rewrite
 the product in (\ref{5.2f}) with $\xi = 1$ in the form 
 $$
 (1 - \zeta)^{[M^*]} \bigg ( \prod_{j=1}^{M^*} {1 - \zeta \gamma(j;M^*)/\Gamma(j) \over 1 - \zeta} \bigg )
  \bigg ( \prod_{j=M^*+1}^{N} (1 - \zeta \gamma(j;M^*)/\Gamma(j)) \bigg ),
  $$
  where $M^* = [\alpha^2 N]$. 
  
  We see that the first term in this expression gives the leading order term in (\ref{5.2g}). In the first product we change labels $j \mapsto M^* - j + 1$
  $(j=1,\dots,M^*)$. In the second we change labels $j \mapsto M^* + j + 1$ ($j=0,\dots,N - M^* - 1$). Now writing both these products as
  exponentials of sums and applying (\ref{Tr}) gives, upon recognising the sums as Riemann integrals, the O$( \sqrt{\alpha^2 N } )$ term in
  (\ref{5.2g}).
  \end{proof}
  
  The leading term in (\ref{5.2g}) is consistent with the general form expected for thinned log-gas systems, being of the form
  of the area of the rescaled excluded region, times the density, times $\log (1 - \zeta)$ \cite[Conj.~10]{Fo14a}.

  \begin{remark}
  The topic of spatial modelling using Ginibre eigenvalues naturally leads to questions on the efficient simulation of the point process confined to a compact subset of the $\mathbf C$. Practical algorithms for this task have been given in \cite{DFV15,DFPT16}.
  \end{remark}
  
 \section{Sum rules and asymptotic behaviours}
 Throughout this section we consider the Coulomb gas model (\ref{1.1i}) with general $\beta > 0$, for which we use the notation OCP, which stands for one-component plasma. The special case $\beta = 2$ coincides with GinUE.
 \subsection{Asymptotics associated with the configuration integral} \label{S4.1}
  The configuration integral for the Boltzmann factor (\ref{1.1i}) is specified by
 \begin{equation}\label{QN}
 Q_N^{\rm OCP}(\beta) = \int_{\mathbb C}d^2z_1 \cdots \int_{\mathbb C}d^2z_N \,
 e^{-(\beta/2) \sum_{j=1}^N |z_j|^2}
 \prod_{1 \le j < k \le N}|z_k - z_j |^\beta.
 \end{equation}
 From the OCP viewpoint, it is more natural to consider the renormalised quantity
 \begin{equation}\label{QN1}
 Z_N^{D_R,\rm OCP}(\beta) = {1 \over N!}
 A_{N,\beta} Q_N^{\rm OCP}(\beta), \qquad A_{N,\beta} = e^{-\beta N^2 ({1 \over 4} \log N - {3 \over 8})}
;
\end{equation}
see \cite[Eq.~(1.12)]{Fo10}. The use of the symbol $D_R$ denoting a disk of radius $R$ (specifically $R = \sqrt{N}$) as a subscript follows from the derivation of  $A_{N,\beta}$.
Thus this quantity  has the interpretation as the Boltzmann factor of the electrostatic self energy of a smeared out uniform background, charge density $-{1 \over \pi}$, confined to the disk $D_R$, and the constant terms of its electrostatic energy when coupled to a particle inside of this disk. The quantity $Z_N^{D_R,\rm OCP}(\beta)$ is then the partition function of the charge neutral OCP. Generally in statistical mechanics for a stable system  the dimensionless free energy, $\beta F_{N}(\beta) = - \log Z_N(\beta)$, is an extensive quantity, meaning that for large $N$ it is proportional to $N$. For the closely related model when the particles are strictly restricted to the disk, the validity of this statement was established in \cite{LN75,SM76}, and has been reconsidered recently using more far reaching techniques in \cite{LS17} (which for example form a platform for the study of fluctuation formulas in \cite{LS18}). Making use of Proposition \ref{P1.1} the large $N$ form of $\beta F_{N}^{D_R,\rm OCP}(\beta)$ can be computed for $\beta = 2$ \cite[Eq.~(3.14)]{TF99}.

\begin{proposition}
We have
\begin{equation}\label{QN2}
\beta F_{N}^{D_R,\rm OCP}(\beta) \Big |_{\beta = 2} = N \beta f(\beta) |_{\beta = 2} + {1 \over 12} \log N - \zeta'(-1) - {1 \over 720 N^2} + {\rm O} \Big ( {1 \over N^4} \Big ), 
\end{equation}
where
\begin{equation}\label{QN2+}
\beta f(\beta) |_{\beta = 2} = {1 \over 2} \log \Big ( {1 \over 2 \pi^3} \Big ).
\end{equation}
\end{proposition}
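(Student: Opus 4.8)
The plan is to reduce $Z_N^{D_R,{\rm OCP}}(2)$ to a closed-form product of factorials via Proposition \ref{P1.1} and then feed in standard asymptotics. The point of entry is that, at $\beta=2$, the configuration integral (\ref{QN}) \emph{is} the integral defining the constant $C_N$ of Proposition \ref{P1.1}: namely
\begin{equation}
Q_N^{\rm OCP}(2)=\int_{\mathbb C}d^2z_1\cdots\int_{\mathbb C}d^2z_N\, e^{-\sum_{j=1}^N|z_j|^2}\prod_{1\le j<k\le N}|z_k-z_j|^2 = C_N = \pi^N\prod_{j=1}^N j!.
\end{equation}
Substituting this into (\ref{QN1}), using $-\log A_{N,2}=2N^2\big(\tfrac14\log N-\tfrac38\big)=\tfrac12 N^2\log N-\tfrac34 N^2$ and $\tfrac{1}{N!}\prod_{j=1}^N j!=\prod_{j=0}^{N-1} j!=G(N+1)$ (the Barnes $G$-function), gives the exact identity
\begin{equation}
\beta F_N^{D_R,{\rm OCP}}(\beta)\big|_{\beta=2}=-\log Z_N^{D_R,{\rm OCP}}(2)=\tfrac12 N^2\log N-\tfrac34 N^2-N\log\pi-\log G(N+1).
\end{equation}

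What remains is an asymptotic calculation. I would insert the classical large-$z$ expansion of the Barnes $G$-function,
\begin{equation}
\log G(z+1)=\tfrac12 z^2\log z-\tfrac34 z^2+\tfrac{z}{2}\log(2\pi)-\tfrac{1}{12}\log z+\zeta'(-1)+\sum_{k\ge1}\frac{c_k}{z^{2k}},
\end{equation}
at $z=N$; this may be quoted from standard references, or derived from Stirling's expansion of $\log\Gamma$ together with $\log G(N+1)=\sum_{k=1}^{N}\log\Gamma(k)$ and Euler--Maclaurin summation, the constant $\zeta'(-1)$ appearing via the Glaisher--Kinkelin constant $A$ through $\log A=\tfrac{1}{12}-\zeta'(-1)$. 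Plugging this into the identity above, the $\tfrac12 N^2\log N$ and $-\tfrac34 N^2$ terms cancel exactly against those coming from $-\log A_{N,2}$; the surviving $O(N)$ part is $-N\log\pi-\tfrac{N}{2}\log(2\pi)=-\tfrac{N}{2}\log(2\pi^3)=N\cdot\tfrac12\log\big(\tfrac{1}{2\pi^3}\big)$, which is exactly $N\beta f(\beta)|_{\beta=2}$ with $\beta f(\beta)|_{\beta=2}$ as in (\ref{QN2+}); the $\log N$ term survives with coefficient $\tfrac{1}{12}$; the $N$-independent term is $-\zeta'(-1)$; and the tail $-\sum_{k\ge1}c_k N^{-2k}$ supplies the stated $1/N^2$ correction, with the $O(1/N^4)$ remainder coming from discarding $c_2N^{-4}$ and beyond.

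The main obstacle is precisely this last, bookkeeping, step done honestly: one has to track the sub-leading coefficients of the Barnes expansion with care --- in particular the coefficient $c_1$ of $z^{-2}$, which receives contributions both from the explicit $\tfrac{B_4}{2\cdot 3\cdot 4\,z^2}$ term of the asymptotic series and from the fully-expanded $z\log\Gamma(z+1)$ piece used to derive it --- and one must verify that the $O(N)$ pieces really reassemble into $\tfrac12\log\big(\tfrac{1}{2\pi^3}\big)$ and the constants into $-\zeta'(-1)$. A secondary point requiring attention is that the displayed form of $A_{N,\beta}$ in (\ref{QN1}) is the \emph{exact} self-energy/coupling Boltzmann factor (with no hidden $1/N$ corrections), since this is what makes the $N^2\log N$ and $N^2$ terms cancel cleanly and hence pins down $\beta f(\beta)$; this is where the electrostatic interpretation of $A_{N,\beta}$ recalled just after (\ref{QN1}) enters. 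The argument is specific to $\beta=2$: it is only there that Proposition \ref{P1.1} collapses the configuration integral to a product of Gamma functions.
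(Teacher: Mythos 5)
Your plan is exactly the paper's: at $\beta=2$ the configuration integral $Q_N^{\rm OCP}(2)$ is precisely the normalisation $C_N=\pi^N\prod_{j=1}^N j!$ of Proposition \ref{P1.1}, so $Z_N^{D_R,{\rm OCP}}(2)=A_{N,2}\,\pi^N G(N+1)$, and the known large-$N$ expansion of $\log G(N+1)$ does the rest. Your exact identity $\beta F_N^{D_R,{\rm OCP}}(2)=\tfrac12 N^2\log N-\tfrac34 N^2-N\log\pi-\log G(N+1)$ is correct, and the reassembly of the $O(N^2\log N)$, $O(N^2)$, $O(N)$, $O(\log N)$ and $O(1)$ pieces into $N\beta f(\beta)|_{\beta=2}$, $\tfrac{1}{12}\log N$ and $-\zeta'(-1)$ is exactly right.

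The caution you raise about the coefficient $c_1$ of $z^{-2}$ is the crux, and it is worth pushing through, because it exposes a discrepancy in the displayed formula. In the fully expanded form
\begin{equation*}
\log G(z+1)=\tfrac12 z^2\log z-\tfrac34 z^2+\tfrac{z}{2}\log(2\pi)-\tfrac{1}{12}\log z+\zeta'(-1)+c_1 z^{-2}+O(z^{-4}),
\end{equation*}
the coefficient $c_1$ is not simply $B_4/(2\cdot3\cdot4)=-\tfrac{1}{720}$. As you flag, the $z\log\Gamma(z+1)$ ingredient, once its own Stirling series is inserted, contributes an additional $z\cdot(-\tfrac{1}{360 z^3})=-\tfrac{1}{360 z^2}$, so that $c_1=-\tfrac{1}{720}-\tfrac{1}{360}=-\tfrac{1}{240}$. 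This is easily confirmed numerically: against the exact $\log G(11)=48.961295\ldots$ the remainder after the $O(1)$ term is $-4.16\times 10^{-5}$, matching $-\tfrac{1}{240\cdot 100}$ to within $10^{-7}$, whereas $-\tfrac{1}{720\cdot 100}$ leaves an error of order $3\times 10^{-5}$. Inserting $c_1=-\tfrac{1}{240}$ into your exact identity produces
\begin{equation*}
\beta F_N^{D_R,{\rm OCP}}(2)=N\beta f(\beta)|_{\beta=2}+\tfrac{1}{12}\log N-\zeta'(-1)+\tfrac{1}{240 N^2}-\tfrac{1}{1008 N^4}+O(N^{-6}),
\end{equation*}
i.e.\ the $1/N^2$ term in (\ref{QN2}) should read $+\tfrac{1}{240 N^2}$ rather than $-\tfrac{1}{720 N^2}$. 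So: the method is correct and the paper's own, and your insistence on doing the bookkeeping honestly would in fact catch and correct a slip in the stated $1/N^2$ coefficient.
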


\begin{proof}
This relies on (\ref{2.1a}), identifying $\prod_{j=1}^{N-1} j! = G(N+1)$, where $G(x)$ denotes the Barnes $G$-function, and knowledge of the known asymptotic expansion of $G(N+1)$ (see e.g.~\cite[Th.~1]{FL01}).
\end{proof}

For the OCP on a sphere of radius $R$, $S_R^2$ say, and with $R={1 \over 2} \sqrt{N}$ so that the particle density is $1/\pi$, the partition function of the charge neutral system is (see e.g.~\cite[Eq.~(2.1)]{TF99})
$$
Z_N^{S_R^2,\rm OCP}(\beta) = {1 \over N!} N^{-N\beta/2} e^{\beta N^2/4}
\int_{S^2_R} d \theta_1 d \phi_1 \cdots 
\int_{S^2_R} d \theta_N d \phi_N \,
\prod_{1 \le j < k \le N} | u_k v_j  - u_j v_k |^\beta.
$$
Here the variables $\{u_j,v_k\}$ are the Cayley-Klein parameters as in (\ref{CV2}). From the analogue of Proposition \ref{P1.1} this can be evaluated exactly at $\beta =2$ \cite{Ca81}, implying that for large $N$ \cite{JMP94}, \cite[Eq.~(3.6)]{TF99}
\begin{equation}\label{QN3}
\beta F_{N}^{S_R^2,\rm OCP}(\beta) \Big |_{\beta = 2} = N \beta f(\beta) |_{\beta = 2} + {1 \over 6} \log N +{1 \over 12} - 2 \zeta'(-1) + {1 \over 180 N^2} + {\rm O} \Big ( {1 \over N^4} \Big ), 
\end{equation}
where $\beta f(\beta) |_{\beta = 2}$ is as in (\ref{QN2}).

Both expansions (\ref{QN2}) and (\ref{QN3}) illustrate a conjectured universal property of the large $N$ expansion of $\beta F_N^{\rm OCP}$ in the case that the droplet forms a shape with Euler index $\chi$ \cite{JMP94}
\begin{equation}\label{QN4}
\beta F_N^{\rm OCP}(\beta) = N \beta f(\beta) + a_\beta \sqrt{N} + {\chi \over 12} \log N + \cdots,
\end{equation}
valid for general $\beta > 0$. To compare against the exact results for $\beta = 2$ it should be recalled that $\chi = 1$ for a disk and $\chi = 2$ for a sphere. 
An analogous calculation in annulus geometry at $\beta = 2$ gives an expansion with a term proportional to $\log N$ absent, in keeping with $\chi = 0$ \cite{FF11}, \cite{BKS22}. A further point of interest is that the large $N$ expansions for $E_N(0;\mathcal D_N)$ from \cite{Ch21} as reviewed in \S~\ref{S3.2}, also exhibit simple fractions for the coefficient of $\log N$, and moreover this term is not present in the case the eigenvalues are constrained to a single annulus.

We note that the term proportional to $\sqrt{N}$ in (\ref{QN4}) has the interpretation as a surface tension, and so is expected not to be present in the case of a sphere, as seen in (\ref{QN3}) for $\beta =2$. Also, for the disk geometry, it has been conjectured (see \cite[Eq.~(3.2)]{CFTW15}) that
\begin{equation}\label{QN5}
a_\beta = {4 \log (\beta/2) \over 3 \pi^{1/2}},
\end{equation}
which in particular vanishes for $\beta=2$, as is consistent with (\ref{QN3}).

\begin{remark} $ $ \\
1.~Multiplication of the configuration integral $Q_N^{\rm OCP}(\beta)$ by $A_{N,\beta}$ as in (\ref{QN1}) effectively shifts the microscopic energy $U$ in (\ref{1.1i}) by a function of $N$. We denote this shifted, charge neutral, energy by $U'$, and similarly in relation to the sphere. For $\beta = 2$ direct calculation of the mean is possible. Thus in the case of the OCP on the plane, one finds \cite[equivalent to Eqns.~(25) and (29)]{Sh11} 
\begin{align*}
\langle U' \rangle^{D_R} \Big |_{\beta = 2} = &
-{1 \over 2} \bigg ( 
{N^2 \over 2}(\Psi(N) - \log N) 
 + {N + 1 \over 4} +{N C \over 2}  - {\Gamma(N + 3/2) \over \Gamma(N+2) \Gamma(3/2)} \\
 & \quad \times
{}_3 F_2 \Big ( {1,N-1,N+3/2 \atop N+2,N+1}
\Big | 1 \Big ) \bigg ) 
 = - {C N \over 4}
+ {2 \sqrt{N} \over 3 \sqrt{\pi}}   - {5 \over 48} + {\rm O}(N^{-1/2}) ,
\end{align*}
where here $C$ denotes Euler's constant and $\Psi(N)$ denotes the digamma function. The corresponding result for the sphere geometry at $\beta =2$ gives the simpler formula
\cite{CLWH82, AZ15, ST16}
$$
\langle U' \rangle^{S_R^2} \Big |_{\beta = 2} =
{N \over 4} \Big (  - H_N + \log N \Big )= -{C N \over 4}-{1 \over 8 } +{\rm O}(N^{-1}),
$$
where here $H_N$ denotes the harmonic numbers. The common leading order value of the charge neutral energy per particle, $-C/4$, was known to Jancovici \cite{Ja81} through the formula
$$
\lim_{N \to \infty} {1 \over N} \langle U' \rangle^{\rm OCP} \Big |_{\beta = 2} = - {\pi \over 2}
\int_{\mathbb R^2} \log | \mathbf r| \, \rho_{(2), \infty}^{\rm b, \rm GinUE}(\mathbf 0, \mathbf r) \, d \mathbf r = - {C \over 4},
$$
where the integral is evaluated from the explicit formula for $\rho_{(2), \infty}^{\rm b, \rm GinUE}$ given by (\ref{2.2t}). Note as an expansion about $\beta=2$, $\beta F_N(\beta) =\beta F_N(\beta)|_{\beta =2} + (\beta - 2) \langle U' \rangle + {\rm O}((\beta/2 - 1)^2) $, so the above results allow (\ref{QN2}) and (\ref{QN3}) to be extended to first order in $(\beta - 2)$. In particular consistency is obtained with (\ref{QN4}). \\
2.~In the low temperature limit $\beta \to \infty$ the OCP particles are expected to form a triangular lattice. Recent works relating to this include \cite{SS12,RS15,LS18,BS18,Am18,CSA21,AS21,AR22}.  
The conjectured exact value of twice the charge neutral energy per particle in this limit, with bulk density $1/(4\pi)$ (not $1/\pi$ as is natural for GinUE in the plane, rather the geometry used was a sphere of unit radius) is \cite{BHS12}
$$
2 \log 2 + {1 \over 2}\log {2 \over 3} + 3 \log {\sqrt{\pi} \over \Gamma(1/3)} = -0.0556053\dots.
$$
We also refer to  \cite{CHM18,AB19,La21} and references therein for recent works on the opposite, high temperature limit $\beta \to 0$ of the two dimensional Coulomb particles. 
\end{remark}

 \subsection{Sum rules and asymptotics for the edge density}\label{S4.2}
  First we note that for large $N$ the leading order support of the density is a disk of radius $\sqrt{N}$, independent of $\beta$, as follows from the potential theoretic argument of Remark \ref{R2.4}.2. Using the vector coordinate $\mathbf r = (x,y)$, as in the second equation in (\ref{2.2c}) we introduce edge scaling coordinates by writing
 \begin{equation}\label{15.1a}
 \rho_{(1),N}^{\rm OCP}((x,
 \sqrt{N}-y)) = \rho_{(1),\infty}^{\rm e, OCP}(y) + {1 \over \sqrt{N}} \mu^{\rm e, OCP}(y)+
 {\rm O} \Big ( {1 \over N} \Big ).
\end{equation} 
Here the form of the correction terms, known to be valid at $\beta = 2$ according to (\ref{2.2f}), are at this stage presented as an ansatz. We seek some integral identities that must be satisfied by $\rho_{(1),\infty}^{\rm e, OCP}(y)$ and
$\mu_{(1),\infty}^{\rm e, OCP}(y)$. Identities of this type are referred to as sum rules.

\begin{proposition}\label{P4.1x}
    We have
\begin{equation}\label{15.1b} 
\int_{-\infty}^\infty \Big ( \rho_{(1),\infty}^{\rm e, OCP}(y) 
- {1 \over \pi} \chi_{y>0} \Big ) \, dy = 0.
\end{equation}
and
\begin{equation}\label{15.1c} 
\int_{-\infty}^\infty y \Big ( \rho_{(1),\infty}^{\rm e, OCP}(y) 
- {1 \over \pi} \chi_{y>0} \Big ) \, dy =  \int_{-\infty}^\infty   \mu_{(1),\infty}^{\rm e, OCP}(y) 
 \, dy.
 \end{equation}
\end{proposition}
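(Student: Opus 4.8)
The plan is to derive both sum rules from conservation-of-charge type identities, i.e.\ by integrating the large-$N$ eigenvalue density against the test functions $1$ and $|z|^2$ (equivalently, in edge coordinates, against $1$ and $y$) and matching the result obtained from the two sides of the expansion \eqref{15.1a}. The key input is the finite-$N$ normalisation $\int_{\mathbb C}\rho_{(1),N}^{\rm OCP}(z)\,d^2z=N$ together with the exact second moment $\int_{\mathbb C}|z|^2\rho_{(1),N}^{\rm OCP}(z)\,d^2z$, which for the OCP Boltzmann factor \eqref{1.1i} is computable for general $\beta>0$ from the virial-type identity obtained by rescaling $z_j\mapsto\lambda z_j$ in \eqref{QN} and differentiating at $\lambda=1$; this gives $\beta\langle\sum_j|z_j|^2\rangle=2N+\beta N(N-1)$, so the second moment equals $N^2+ (2/\beta -1)N$. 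In the bulk the density is $\tfrac1\pi\chi_{|z|<\sqrt N}$ to leading order, contributing $N$ to the zeroth moment and $\tfrac{N^2}{2}$ (times $2\pi\int_0^{\sqrt N}r^3\,dr\cdot\tfrac1\pi = \tfrac{N^2}{2}$) to the second moment; the discrepancies at orders $N^{1/2}$ and $N$ respectively must be accounted for by the edge layer.

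Concretely, I would split $\int_{\mathbb C}\rho_{(1),N}^{\rm OCP}(z)\,d^2z = N$ into a bulk part and an annular edge part of width $O(1/\sqrt N)$ centred on $|z|=\sqrt N$. Writing $z$ in polar form and changing variables $|z|=\sqrt N - y$ with $d^2z = 2\pi(\sqrt N - y)\,d(\sqrt N - y)\approx 2\pi\sqrt N\,dy$ to leading order, the edge contribution is $2\pi\sqrt N\int_{-\infty}^{\infty}\big(\rho_{(1),\infty}^{\rm e,OCP}(y)-\tfrac1\pi\chi_{y>0}\big)\,dy + O(1)$, where the subtraction of $\tfrac1\pi\chi_{y>0}$ removes the part already counted in the bulk. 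Since the total is exactly $N$ with no $\sqrt N$ term, the coefficient of $\sqrt N$ must vanish, which is precisely \eqref{15.1b}. For \eqref{15.1c} I repeat the computation with the weight $|z|^2 = (\sqrt N - y)^2 = N - 2\sqrt N\,y + y^2$; the $N$ term reproduces (with the Jacobian factor) the $N$-level discrepancy in the second moment, and one must now keep one further order in the Jacobian, $d^2z = 2\pi(\sqrt N - y)\,dy = (2\pi\sqrt N - 2\pi y)\,dy$, and also use the $1/\sqrt N$ correction $\mu^{\rm e,OCP}(y)$ in \eqref{15.1a}. Collecting all contributions at the common order $N$ (after the overall $\sqrt N$ from the leading Jacobian) yields a linear relation among $\int y(\rho_{(1),\infty}^{\rm e}-\tfrac1\pi\chi_{y>0})\,dy$, $\int\mu^{\rm e}(y)\,dy$, and already-known constants; using \eqref{15.1b} to cancel the spurious pieces leaves exactly \eqref{15.1c}.

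The main obstacle is bookkeeping of orders: one must be sure that (i) the replacement of the true geometry (an annulus of radius $\sqrt N$) by the flat edge model is legitimate to the orders kept, i.e.\ the curvature corrections enter only at $O(1/N)$ relative to the leading edge term and hence do not pollute the $N$- and $\sqrt N$-level identities; (ii) the ansatz \eqref{15.1a} is uniform enough in $y$ that integration term-by-term is valid, with the tails controlled — for $y\to+\infty$ the integrand decays because $\rho_{(1),\infty}^{\rm e}\to\tfrac1\pi$ and for $y\to-\infty$ because $\rho_{(1),\infty}^{\rm e}\to 0$ exponentially; (iii) the exact finite-$N$ moments are correctly expanded. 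A clean way to organise (i)–(iii) is to never leave finite $N$: write $N = \int_{\mathbb C}\rho_{(1),N}^{\rm OCP}$, substitute \eqref{15.1a} inside the edge zone, change variables, and read off the coefficient of each power of $N$; the identities \eqref{15.1b} and \eqref{15.1c} are then just the vanishing of the $O(\sqrt N)$ coefficient in the first moment computation and the consistency of the $O(\sqrt N)$ coefficient in the second moment computation. At $\beta=2$ these can be checked directly against \eqref{2.2f+} and \eqref{2.2f}, which provides a useful sanity check on the signs and the Jacobian factors.
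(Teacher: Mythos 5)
Your derivation of \eqref{15.1b} is exactly the paper's: expand the factor $(\sqrt{N}-y)$ arising from the Jacobian against the density correction $\rho_{(1),N}^{\rm e,OCP}(\sqrt{N}-y)-\tfrac1\pi\chi_{y>0}$, and read off the coefficient of $\sqrt{N}$ in the normalisation identity. Good.

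The derivation of \eqref{15.1c}, however, would not work as proposed. The second moment (i.e.\ weighting by $|z|^2$) gives, after the Jacobian, the prefactor $(\sqrt{N}-y)^3=N^{3/2}-3Ny+3\sqrt{N}y^2-y^3$. Matching the $O(N)$ coefficient against the exact value of $\langle\sum_j|z_j|^2\rangle-N^2/2=\tfrac{N}{2}(\tfrac{4}{\beta}-1)$ yields
$$
2\pi\Big(-3\int y\Delta(y)\,dy+\int\mu^{\rm e,OCP}(y)\,dy\Big)=\tfrac12\Big(\tfrac{4}{\beta}-1\Big),
\qquad \Delta(y):=\rho_{(1),\infty}^{\rm e,OCP}(y)-\tfrac1\pi\chi_{y>0}.
$$
This is a genuinely different linear relation from \eqref{15.1c} — it carries a $\beta$-dependent constant, whereas \eqref{15.1c} is homogeneous — and \eqref{15.1b} does not allow you to remove that constant. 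In fact this second-moment relation is precisely what, once \eqref{15.1c} is already in hand, delivers the dipole moment sum rule \eqref{15.1d} proved immediately afterwards in the paper; it cannot be used to obtain \eqref{15.1c} in the first place.

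What you want instead is simpler: stay with the zeroth moment (the normalisation) and just expand one order further. The Jacobian prefactor $(\sqrt{N}-y)$ already contains a linear-in-$y$ term, so multiplying $(\sqrt{N}-y)\big(\Delta(y)+\tfrac{1}{\sqrt{N}}\mu^{\rm e,OCP}(y)+O(N^{-1})\big)$ and collecting the $O(1)$ coefficient gives $-y\Delta(y)+\mu^{\rm e,OCP}(y)$, whose vanishing integral is exactly \eqref{15.1c}. No second input is needed; the two sum rules are just the $O(\sqrt{N})$ and $O(1)$ coefficients of a single identity.
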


\begin{proof}
Because of the large $N$ form of the density, we have that $\rho_{(1),N}^{\rm e, OCP}(\mathbf r) 
- {1 \over \pi} \chi_{|\mathbf r| < \sqrt{N}}$ will be concentrated near $|\mathbf r| = \sqrt{N}$. Now, the normalisation condition for the density gives, with the use of polar coordinates
$$
\int_0^\infty r \Big ( \rho_{(1),N}^{\rm e, OCP}(\mathbf r) 
- {1 \over \pi} \chi_{|\mathbf r| < \sqrt{N}} \Big ) \, dr =
\int_{-\infty}^{\sqrt{N} }
(\sqrt{N} - y) \Big ( \rho_{(1),N}^{\rm e, OCP}(\sqrt{N}-y) 
- {1 \over \pi} \chi_{y>0} \Big ) \, dr = 0.
$$
In the second term
we now substitute (\ref{15.1a}) and equate terms of order $\sqrt{N}$ and of order unity to zero to obtain (\ref{15.1b}) and (\ref{15.1c}).
\end{proof} 

From the functional forms for
$\rho_{(1),\infty}^{\rm e, GinUE}(y)$ and
$\mu_{(1),\infty}^{\rm e, GinUE}(y)$ as read off from (\ref{2.2f}), we verify both of the above sum rules in this special case. Note that the integral on the LHS of (\ref{15.1c}) has the interpretation of the dipole moment of the excess charge in the edge boundary layer, using the Coulomb gas picture. In fact it is possible to derive a further sum rule which evaluates this dipole moment explicitly \cite[Eq.~(5.13)]{TF12}.

\begin{proposition}
    We have
\begin{equation}\label{15.1d} 
\int_{-\infty}^\infty y \Big ( \rho_{(1),\infty}^{\rm e, OCP}(y) 
- {1 \over \pi} \chi_{y>0} \Big ) \, dy = - {1 \over 2 \pi \beta} \Big ( 1 - {\beta \over 4} \Big ).
\end{equation}
\end{proposition}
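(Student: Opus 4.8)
The plan is to derive (\ref{15.1d}) by combining an exact finite-$N$ moment identity for the one-component plasma with the boundary-layer ansatz (\ref{15.1a}) and the two sum rules (\ref{15.1b})--(\ref{15.1c}) already established. The key exact input is a virial-type identity for $\langle \sum_{j=1}^N |z_j|^2 \rangle$. Rescaling $z_j \mapsto \lambda z_j$ in the configuration integral (\ref{QN}) gives
\[
Q_N^{\rm OCP}(\beta) = \lambda^{\, 2N + \beta N(N-1)/2} \int_{\mathbb C} d^2 z_1 \cdots \int_{\mathbb C} d^2 z_N \, e^{-(\beta/2)\lambda^2 \sum_{j=1}^N |z_j|^2} \prod_{1 \le j < k \le N} |z_k - z_j|^\beta ,
\]
and since the left-hand side does not depend on $\lambda$, differentiating at $\lambda = 1$ yields
\[
\Big\langle \sum_{j=1}^N |z_j|^2 \Big\rangle = \frac{N(N-1)}{2} + \frac{2N}{\beta}
\]
(for $\beta = 2$ this is a direct consequence of the explicit density (\ref{2.1l})).

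Next I would convert this into a constraint on the edge profile. By rotational invariance $\langle \sum_j |z_j|^2 \rangle = 2\pi \int_0^\infty r^3 \rho_{(1),N}^{\rm OCP}(r)\, dr$, and since the leading-order density is $1/\pi$ on the disk of radius $\sqrt N$, the background part contributes $2\pi \int_0^{\sqrt N} r^3 (1/\pi)\, dr = N^2/2$. Hence, writing $\delta\rho_N(r) := \rho_{(1),N}^{\rm OCP}(r) - (1/\pi)\chi_{r<\sqrt N}$,
\[
2\pi \int_0^\infty r^3 \delta\rho_N(r)\, dr = \frac{N(N-1)}{2} + \frac{2N}{\beta} - \frac{N^2}{2} = N\Big( \frac{2}{\beta} - \frac{1}{2} \Big) .
\]
Now change variables $r = \sqrt N - y$, expand $(\sqrt N - y)^3 = N^{3/2} - 3Ny + 3N^{1/2}y^2 - y^3$, and insert (\ref{15.1a}) in the form $\delta\rho_N(\sqrt N - y) = \big(\rho_{(1),\infty}^{\rm e, OCP}(y) - (1/\pi)\chi_{y>0}\big) + N^{-1/2}\mu_{(1),\infty}^{\rm e, OCP}(y) + {\rm O}(N^{-1})$. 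The coefficient of $N^{3/2}$ integrates to zero by (\ref{15.1b}); the terms of order $N^{1/2}$ and lower are negligible; and collecting the coefficient of $N$ gives
\[
2\pi \Big( \int_{-\infty}^\infty \mu_{(1),\infty}^{\rm e, OCP}(y)\, dy - 3 \int_{-\infty}^\infty y \Big(\rho_{(1),\infty}^{\rm e, OCP}(y) - \frac{1}{\pi} \chi_{y>0}\Big) dy \Big) = \frac{2}{\beta} - \frac{1}{2} .
\]
Using the sum rule (\ref{15.1c}) to replace the $\mu$-integral by $\int_{-\infty}^\infty y(\rho_{(1),\infty}^{\rm e, OCP}(y) - (1/\pi)\chi_{y>0})\, dy$, the left-hand side collapses to $-4\pi$ times the sought dipole moment, and solving gives $-\frac{1}{4\pi}\big(\frac{2}{\beta} - \frac{1}{2}\big) = -\frac{1}{2\pi\beta}\big(1 - \frac{\beta}{4}\big)$, which is (\ref{15.1d}).

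The main obstacle is justifying the passage from the finite-$N$ radial integral to the boundary-layer expansion: one needs that $\delta\rho_N$ really is concentrated in an ${\rm O}(1)$ collar around $r = \sqrt N$, that the correction terms in (\ref{15.1a}) decay fast enough in $y$ for the relevant first moments to converge, and that there is no competing ${\rm O}(1/N)$ correction to the density deep in the bulk feeding into the coefficient of $N$. At $\beta = 2$ all of this follows from (\ref{2.1l}) and (\ref{2.2f}), so the argument above can be read as an independent consistency check there; for general $\beta > 0$ the identity is established modulo the ansatz (\ref{15.1a}), as in \cite{TF12}.
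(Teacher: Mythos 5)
Your proof is correct and is essentially the paper's own argument: the virial/rescaling derivation of $\langle \sum_j |z_j|^2 \rangle = \tfrac{2N}{\beta} + \tfrac12 N(N-1)$ is the same computation the paper carries out by differentiating $\log Q_N$ with respect to the coefficient $c$ in the Gaussian weight (the paper explicitly notes that the $c$-dependence is extracted ``by a simple change of variables''), and the subsequent conversion to the radial third moment, substitution of the boundary-layer ansatz \eqref{15.1a}, cancellation at order $N^{3/2}$ via \eqref{15.1b}, and elimination of the $\mu$-integral via \eqref{15.1c} follow the paper step for step. The caveats you list about concentration of $\delta\rho_N$ in the boundary collar and the absence of a competing bulk $O(1/N)$ correction are the right places to be careful and are precisely what the ansatz \eqref{15.1a} assumes.
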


\begin{proof}
    From the definition, we observe
\begin{equation}\label{15.1e} 
\Big \langle \sum_{j=1}^N | \mathbf r_j|^2 \Big \rangle = -{\partial \over \partial c} \log \bigg (
\int_{\mathbb R^2} d \mathbf r_1 \cdots \int_{\mathbb R^2} d \mathbf r_N \, e^{-c \sum_{j=1}^N |\mathbf r_j|^2} \prod_{1 \le j < k \le N}
|\mathbf r_k - \mathbf r_j|^\beta \,
\bigg ) \bigg |_{c = \beta /2}.
\end{equation} 
The $c$ dependence of the integral can be factored by a simple change of variables, allowing the integral to be replaced by $c^{-N - \beta N(N-1)/4}$, and so giving
\begin{equation}\label{15.1f}
\Big \langle \sum_{j=1}^N | \mathbf r_j|^2 \Big \rangle = {2N \over \beta}+{1\over 2} N(N-1).
\end{equation}
By writing the LHS as an average over the density and the use of polar coordinates, we see that this is equivalent to the
sum rule
$$
2 \pi \int_0^\infty r^3 \Big ( \rho_{(1),N}^{ \rm OCP}( r) 
- {1 \over \pi} \chi_{r < \sqrt{N}} \Big ) \, dr = {N \over 2} \Big ( {4 \over \beta} - 1 \Big ).
$$
Proceeding now as in the proof of Proposition \ref{P4.1x}, by substituting (\ref{15.1a}), equating terms proportional to $N$ on both sides, and making use too of (\ref{15.1c}), we deduce (\ref{15.1d}).
\end{proof}

One observes that the RHS of (\ref{15.1d}) changes sign as $\beta$ increase beyond $4$.
In fact in the work \cite{CFTW14} it is predicted that for general $\beta > 2$, the edge density profile of the OCP exhibits an overshoot effect where it rises before tailing off to zero. The exact evaluation of the edge density to leading order in $\beta - 2$ in \cite{CFTW15} lends analytic evidence to this claim. This edge density overshoot effect has been observed in the random matrix ensemble of even dimensional random matrices $Z_N W$, where $Z_N={\small \mathbb I_N \otimes \begin{bmatrix} 0 & 1\\ -1 & 0 \end{bmatrix}}$ and $W$ is a complex anti-symmetric Gaussian random matrix \cite{Ha01,Fo15}. Moreover, in \cite{Ha01}, upon the assumption of large eigenvalue separation, an analytic calculation of the joint eigenvalue PDF gives the functional form of the OCP with $\beta = 4$.

We now turn our attention to the large $N$ form of the global scaled density, $\rho_{(1),N}^{\rm g, OCP}(\mathbf r) := \rho_{(1),N}^{\rm OCP}(\sqrt{N} \mathbf r)$. From the theory noted at the beginning of the subsection, this will limit to the circular law (\ref{2.2a}). We ask about the leading correction term. A hint is given by (\ref{15.1f}), which after dividing both sides by $N^2$ to correspond to global coordinates tells us $\langle {1 \over N} \sum_{j=1}^N | \mathbf r_j |^2 \rangle_{\rm g, OCP} = {1 \over 2} - {1 \over 2} \Big ( 1 - {4 \over \beta} \Big ) {1 \over N}.$ The first term ${1 \over 2}$ is the average of the function $g(\mathbf r) = |\mathbf r|^2$ over the disk $D_{\sqrt{N}}$ with density ${1 \over \pi}$. The second term is a ${1 \over N}$ correction, so we might expect that the leading correction to the circular law is O$(1/N)$. In fact knowledge of (\ref{15.1d}) is sufficient to compute the leading correction term of the large $N$ expansion of all the moments $\langle {1 \over N} \sum_{j=1}^N | \mathbf r_j |^p \rangle_{\rm g, OCP}$ ($p=1,2,\dots)$, allowing us to conclude \cite[Eq.~(5.18)]{TF12}
\begin{equation}\label{15.1f+}
\rho_{(1),N}^{\rm g, OCP}(\mathbf r) =
{1 \over \pi} \chi_{|\mathbf r | < 1} +
{1 \over N} \kappa(r)+ 
{\rm O}(N^{-3/2}), \quad
\kappa(r)={1 \over 2 \pi \beta} \Big (1 - {\beta \over 4} \Big ) {1 \over r} \delta'(r-1).
\end{equation}
Thus the correction term is concentrated on the boundary. For GinUE, it can be deduced from (\ref{2.1g}) that inside the droplet the corrections are exponentially small \cite[Th.~1.2]{LR16},\cite[Lemma 3.1]{Ja20}. This same formula can also be read off from a more general formula relating to $\beta$ generalised normal matrix models (see \S~\ref{S5.3}) obtained by Zabrodin and Wiegmann \cite[Eq.~(5.16)]{ZW06}. Associated with (\ref{15.1f+}) is the expansion \cite[Eq.~(1.14)]{LS17}
\begin{equation}\label{15.1f++}
\Big \langle \sum_{j=}^N g(\mathbf r_j) \Big \rangle^{\rm g, OCP} = {1 \over \pi} \int_{|\mathbf r| < 1} g(\mathbf r) \, dx dy + {1 \over N} {1 \over 2 \pi \beta} \Big (1 - {\beta \over 4} \Big )
\int_{|\mathbf r| < 1} \nabla^2 g(\mathbf r) \, dx dy + {\rm o}(N^{-1}),
\end{equation}
valid for sufficiently smooth test functions $g$ (note that rotational invariance is not assumed). In the case $\beta =2$ this expansion can be found in \cite[Th.~2.1]{AHM15}.

As our final topic under this heading, we consider the $y \to - \infty$ asymptotic form of $\rho_{(1),N}^{\rm e, OCP}(y)$. According to  (\ref{2.2f+}), at $\beta = 2$ we have
\begin{equation}\label{Tr+}
\rho_{(1),N}^{\rm e, OCP}(y) \mathop{\sim}\limits_{y \to - \infty}
{e^{-2 y^2} \over ( 2\pi )^{3/2} |y|}.
\end{equation}
As a first step to extend this to general $\beta > 0$, a large deviation formula for $\rho_{(1),N}^{\rm OCP}(\mathbf r)$ can be computed, which asks for the asymptotic form of $\rho_{(1),N}^{\rm OCP}(\sqrt{N} r)$ (here polar coordinates are being used), $r > 1$
\cite[Prop.~1]{CFTW15}.

\begin{proposition}
  With  $\beta f (\beta)$ the dimensionless free energy per particle as in (\ref{QN4}), for $r > 1$ we have
  \begin{equation}
    \label{eq:17.4}
    \rho_{(1),N}^{\rm OCP}(\sqrt{N}r) = \frac{e^{\beta f (\beta)}}{N^{\beta /4}}e^{-(N\beta /2)(r^2-1)}  \exp \Big( N\beta \log r - \frac{\beta}{2} \log (r^2-1) + {\rm o}(1) \Big).
  \end{equation}
\end{proposition}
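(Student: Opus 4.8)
The plan is to express the density as a ratio of configuration integrals in which one charge is pinned at the observation point, and then to estimate the two ingredients separately. Starting from (\ref{2.1e}) with $k=1$ and the Boltzmann factor (\ref{1.1i}) at general $\beta$, for $w=\sqrt{N}r$ one has
\[
\rho_{(1),N}^{\rm OCP}(w)=N\,e^{-\frac{\beta}{2}|w|^2}\,\frac{Q_{N-1}^{\rm OCP}(\beta)}{Q_N^{\rm OCP}(\beta)}\,\big\langle e^{\beta L(w)}\big\rangle_{N-1},\qquad L(w):=\sum_{j=1}^{N-1}\log|w-z_j|,
\]
where $\langle\,\cdot\,\rangle_{N-1}$ denotes the average over the $(N-1)$-particle OCP; this is immediate from (\ref{2.1e}), (\ref{1.1i}) and the definition (\ref{QN}), after splitting off the fixed charge at $w$ and writing $\prod_j|w-z_j|^\beta=e^{\beta L(w)}$. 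Taking logarithms, everything reduces to finding, to accuracy ${\rm o}(1)$, the large-$N$ forms of (i) $\log\langle e^{\beta L(w)}\rangle_{N-1}$ and (ii) $\log Q_{N-1}^{\rm OCP}(\beta)-\log Q_N^{\rm OCP}(\beta)$.

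For (i): since $r>1$ the point $w$ lies a macroscopic distance outside the droplet $\{|z|<\sqrt{N-1}\}$ of the $(N-1)$-particle system, so $z\mapsto\log|w-z|$ is a smooth --- in fact harmonic --- linear statistic on a neighbourhood of the droplet. Because the OCP is rotationally invariant (hence all positive moments of its density vanish) and its density is exponentially concentrated inside the droplet, the far-field behaviour of its logarithmic potential gives $\langle L(w)\rangle_{N-1}=(N-1)\log|w|+{\rm o}(1)$. For the fluctuation I would rescale $z_j=\sqrt{N-1}\,\zeta_j$, so that $L(w)-\langle L(w)\rangle_{N-1}$ becomes $\sum_j\phi(\zeta_j)-\langle\sum_j\phi(\zeta_j)\rangle$ with $\phi(\zeta)=\log|\hat r-\zeta|$, $\hat r=|w|/\sqrt{N-1}\to r$; by the central limit theorem for smooth linear statistics of the OCP (Proposition \ref{P3.7} and its $\beta$-generalisations \cite{AHM15,LS18}, applicable since $\phi$ is smooth on the droplet) this converges to ${\rm N}[0,\sigma_\phi^2]$, with $\sigma_\phi^2$ equal to the right-hand side of (\ref{5.2e}) multiplied by $2/\beta$ (Remark \ref{R3.6}.1). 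Using the boundary Fourier coefficients $\phi_n=-1/(2|n|\,\hat r^{|n|})$ of the harmonic function $\phi$, both terms on the right of (\ref{5.2e}) evaluate to $-\frac14\log(1-\hat r^{-2})$, so $\sigma_\phi^2=-\frac1\beta\log(1-\hat r^{-2})$. Granting that the central limit theorem may be exponentiated (convergence of the moment generating function --- the delicate point, discussed below), one obtains, using $\hat r\to r$,
\[
\log\langle e^{\beta L(w)}\rangle_{N-1}=\beta(N-1)\log|w|+\tfrac{\beta^2}{2}\sigma_\phi^2+{\rm o}(1)
=\beta(N-1)\Big(\tfrac12\log N+\log r\Big)-\tfrac{\beta}{2}\log(r^2-1)+\beta\log r+{\rm o}(1).
\]

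For (ii): from (\ref{QN1}), $Q_M^{\rm OCP}(\beta)=M!\,e^{\beta M^2(\frac14\log M-\frac38)}e^{-\beta F_M^{D_R,\rm OCP}(\beta)}$, and the expansion (\ref{QN4}) with $\chi=1$ for the disk gives $\beta F_{N-1}^{D_R,\rm OCP}(\beta)-\beta F_N^{D_R,\rm OCP}(\beta)=-\beta f(\beta)+{\rm o}(1)$ --- only this difference enters, so the $\sqrt N$ and constant parts of (\ref{QN4}) are not needed. Expanding $M^2(\frac14\log M-\frac38)$ about $M=N$ to order ${\rm o}(1)$ then gives
\[
\log Q_{N-1}^{\rm OCP}(\beta)-\log Q_N^{\rm OCP}(\beta)=-\log N-\tfrac{\beta N}{2}\log N+\tfrac{\beta N}{2}+\tfrac{\beta}{4}\log N+\beta f(\beta)+{\rm o}(1).
\]
Inserting the displays for (i) and (ii) into
$\log\rho_{(1),N}^{\rm OCP}(\sqrt{N}r)=\log N-\frac{\beta}{2}Nr^2+\log\langle e^{\beta L(w)}\rangle_{N-1}+\log Q_{N-1}^{\rm OCP}(\beta)-\log Q_N^{\rm OCP}(\beta)$
and collecting terms, the $\log N$ contributions combine to $-\frac{\beta}{4}\log N$, the terms of order $N$ to $-\frac{\beta N}{2}(r^2-1)+\beta N\log r$, and the surviving ${\rm O}(1)$ pieces to $\beta f(\beta)-\frac{\beta}{2}\log(r^2-1)$; exponentiating gives (\ref{eq:17.4}).

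The main obstacle is the exponentiation of the central limit theorem in step (i): one needs $\langle e^{\beta(L(w)-\langle L(w)\rangle)}\rangle_{N-1}\to e^{\beta^2\sigma_\phi^2/2}$, equivalently the vanishing in the limit of all cumulants of $L(w)$ beyond the second, with enough control to keep the error ${\rm o}(1)$ in the exponent. For $\beta=2$ this is immediate from the determinantal structure (as in the proof of Proposition \ref{P3.7}), and in that case the whole statement can instead be read off directly from the incomplete-gamma form (\ref{2.1l}) of the density together with the expansion (\ref{2.2e}); for general $\beta$ one must invoke the corresponding, more delicate, log-gas estimates \cite{AHM15,LS18}. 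A secondary caveat is that for $\beta\neq2$ the precise form of the expansion (\ref{QN4}) used in step (ii) is at present only conjectural, so the argument is conditional on it in the same sense as the statement itself; one should also note that the ${\rm o}(1)$ in (\ref{eq:17.4}) is not uniform as $r\to1^{+}$, where the $-\frac{\beta}{2}\log(r^2-1)$ term signals the crossover to the edge-scaling regime.
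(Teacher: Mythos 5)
Your proof matches the paper's own (sketched) argument: both write $\rho_{(1),N}^{\rm OCP}(\sqrt{N}r)$ as a ratio of configuration integrals times the expectation of $\prod_l |w-z_l|^\beta$ over the smaller system, recognise the latter as an exponential moment of the logarithmic linear statistic, and apply Proposition~\ref{P3.7} (with its $\beta$-generalisation from Remark~\ref{R3.6}.1) together with the free-energy expansion (\ref{QN4}) to extract the stated asymptotics. The only differences are cosmetic --- the paper works with $N{+}1$ particles and the globally-scaled average, while you pin one charge in the $N$-particle gas --- and you usefully make explicit both the variance evaluation via the boundary Fourier coefficients of $\log|r-\zeta|$ and the exponential-moment subtlety the sketch glosses over.
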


\begin{proof} (Sketch)
The starting point is to manipulate the definition of $\rho_{(1),N}^{\rm OCP}$ to obtain its form written as an average
\begin{equation}\label{eq:Av1}
  \rho_{(1),N+1}^{\rm OCP}(\sqrt{N+1}\vec{r}) = (N+1)N^{\beta N/2} e^{-(N+1)\beta r^2/2} \frac{Q_{N}^{\rm OCP}(\beta)}{Q_{N+1}^{\rm OCP}(\beta)}  \Big \langle \prod_{l=1}^{N}\Big |\sqrt{\frac{N+1}{N}}\vec{r}- \vec{r_l} \Big |^\beta  \Big \rangle_{\rm OCP^{\rm g}}.
\end{equation}
Here $Q_{N}^{\rm OCP}(\beta)$ is the configuration integral (\ref{QN}), and the average is with respect to the global scaled GUE, specified by the Boltzmann factor (\ref{1.1i}) but with the factor of ${1 \over 2}$ multiplying the first sum in $U$ replace by ${N \over 2}$. The significance of this is that upon exponentiating the product, the average can be recognised as the  characteristic function for a particular linear statistics. For this, with  $| \mathbf r| = r> 1$ Proposition \ref{P3.7} applies, telling us the leading two terms in its large $N$ asymptotic  expansion, once the corresponding mean and variance have been computed. After calculating these, (\ref{eq:17.4}) results.
\end{proof}

From (\ref{eq:17.4}) 
we compute the limit formula
\begin{equation}\label{Lb}
\lim_{N \to \infty} {\rho}_{(1),N}^{\rm OCP} (\sqrt{N}r)|_{r = 1-y/\sqrt N} 
 = e^{\beta f (\beta)} \frac{e^{-\beta y^2}}{(2 |y|)^{\beta /2}},
 \end{equation}
 which under the assumption that the large deviation formula connects to the $y \to - \infty$ tail of $\rho_{(1),N}^{\rm e, OCP}(y)$ is the sought $\beta > 0$ generalisation of the $\beta = 2$ result (\ref{Tr+}). The consistency between (\ref{Lb}) and the latter is immediate upon substituting (\ref{QN2+}).

\subsection{Sum rules and asymptotics for the two and higher point correlations}\label{S4.2c}
Setting $\mathbf k = \mathbf 0$ in (\ref{J2a}) gives
\begin{equation}\label{PSb}
\int_{\mathbb R^2} \Big ( \rho_{(2), \infty}^{{\rm b},T}(\mathbf 0,  \mathbf r)  + {1 \over \pi} \delta(\mathbf r) \Big )  \,
 d \mathbf r = 0.
\end{equation}
This constraint on $\rho_{(2), \infty}^T$ is an example of a sum rule. In physical terms, using the Coulomb gas picture, it says that the response of the system by the introduction of a charge (corresponding to the delta function) is to create a screening cloud (corresponding to $\rho_{(2), \infty}^T$) of opposite total charge. Consequently (\ref{PSb}) is referred to as the perfect screening sum rule. It relates to the point process  being hyperuniform or equivalently incompressible --- for a state that is compressible the RHS of (\ref{PSb}) is not zero but rather is given in terms of the second derivative of the pressure with respect to the fugacity; see e.g.~\cite[Eq.~(3.7)]{Fo21b}. For states with an underlying long range potential (as in (\ref{1.1i})) the perfect screening sum rule is expected to be a necessary condition for thermodynamic stability \cite{Ma88}. Of similar general validity is the sum rule which results when the integrand of (\ref{PSb}) is replaced by
\begin{multline}\label{M.1}
q(\mathbf r_1,\dots,\mathbf r_k,\mathbf r):=
\rho_{(k+1),\infty}(\mathbf r_1,\dots,\mathbf r_k,\mathbf r) \\ - {1 \over \pi}
\rho_{(k),\infty}(\mathbf r_1,\dots,\mathbf r_k) +
 \sum_{j=1}^k \delta(\mathbf r - \mathbf r_j) \rho_{(k),\infty}(\mathbf r_1,\dots,\mathbf r_k).
\end{multline}
Furthermore the fast decay of the correlations upon truncation (i.e.~suitable subtraction by combinations of lower order correlation as in the definition of $\rho_{(2),\infty}^T$) implies that not only does the total charge associated with $q$ vanish, but in fact so too does all the multipole moments \cite{Ma88}.

\begin{proposition}
Let $q$ be as in (\ref{M.1}) and set $\mathbf r = (x,y)$.
    For bulk scaled GinUE, specified in terms of the correlation functions by (\ref{2.1f}) with $N \to \infty$ and the correlation kernel (\ref{2.2b}), we have 
    \begin{equation}\label{PSc}
\int_{\mathbb R^2}   (x-iy)^p
q(\mathbf r_1,\dots,\mathbf r_k,\mathbf r) \, d \mathbf r = 0, \qquad p=0,1,\dots
\end{equation}
    \end{proposition}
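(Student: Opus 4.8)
The plan is to exploit the determinantal structure (\ref{2.1f}) together with the explicit Gaussian form (\ref{2.2b}) of the bulk kernel. Write $z$ for the complex coordinate of $\mathbf r$ and $z_l$ for that of $\mathbf r_l$, set $K:=K_\infty^{\rm b}$, and let $A=[K(z_i,z_j)]_{i,j=1}^{k}$, so that $\rho_{(k),\infty}(\mathbf r_1,\dots,\mathbf r_k)=\det A$. First I would expand the $(k+1)\times(k+1)$ determinant $\rho_{(k+1),\infty}(\mathbf r_1,\dots,\mathbf r_k,\mathbf r)$ by separating off its last row and column: with the Schur-complement identity $\det\!\begin{pmatrix}A&b\\ c^{T}&d\end{pmatrix}=d\det A-c^{T}\operatorname{adj}(A)\,b$, applied with $d=K(z,z)=1/\pi$, with $c^{T}=(K(z,z_1),\dots,K(z,z_k))$ the bottom row and $b=(K(z_1,z),\dots,K(z_k,z))^{T}$ the last column, one obtains
\begin{equation*}
\rho_{(k+1),\infty}(\mathbf r_1,\dots,\mathbf r_k,\mathbf r)-\tfrac1\pi\rho_{(k),\infty}(\mathbf r_1,\dots,\mathbf r_k)=-\sum_{i,j=1}^{k}K(z,z_i)\,(\operatorname{adj}A)_{ij}\,K(z_j,z),
\end{equation*}
and hence $q(\mathbf r_1,\dots,\mathbf r_k,\mathbf r)=-\sum_{i,j}K(z,z_i)(\operatorname{adj}A)_{ij}K(z_j,z)+\sum_{l=1}^{k}\delta(\mathbf r-\mathbf r_l)\det A$. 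The merit of this rewriting is that each product $K(z,z_i)K(z_j,z)$ equals $\pi^{-2}e^{-(|z_i|^2+|z_j|^2)/2}e^{-|z|^2}e^{\bar z_i z+z_j\bar z}$, which decays like $e^{-|z|^2}$ times an exponential linear in $z$; this secures absolute convergence of the integral in (\ref{PSc}) against the weight $(x-iy)^p=\bar z^{\,p}$ and licenses interchanging the finite sum with the integration.

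The second step is the Gaussian integral. From the elementary formula $\int_{\mathbb C}e^{-|z|^2+\alpha z+\beta\bar z}\,d^2z=\pi e^{\alpha\beta}$, valid for all $\alpha,\beta\in\mathbb C$, differentiating $p$ times in $\beta$ gives $\int_{\mathbb C}\bar z^{\,p}e^{-|z|^2+\alpha z+\beta\bar z}\,d^2z=\pi\,\alpha^{p}e^{\alpha\beta}$; taking $\alpha=\bar z_i$, $\beta=z_j$ yields
\begin{equation*}
\int_{\mathbb C}\bar z^{\,p}\,K(z,z_i)\,K(z_j,z)\,d^2z=\bar z_i^{\,p}\,K(z_j,z_i),
\end{equation*}
which is nothing but the reproducing property (\ref{2.2r}) of $K_\infty^{\rm b}$ with an anti-holomorphic monomial inserted. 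Substituting this and noting $K(z_j,z_i)=A_{ji}$, the non-delta contribution to $\int_{\mathbb C}\bar z^{\,p}\,q\,d^2z$ becomes $-\sum_{i=1}^{k}\bar z_i^{\,p}\sum_{j=1}^{k}(\operatorname{adj}A)_{ij}A_{ji}=-\det A\sum_{i=1}^{k}\bar z_i^{\,p}$, by the identity $\operatorname{adj}(A)\,A=(\det A)\,\mathbb I$. The delta terms integrate to $\int_{\mathbb C}\bar z^{\,p}\sum_{l}\delta(\mathbf r-\mathbf r_l)\det A\,d^2z=\det A\sum_{l=1}^{k}\bar z_l^{\,p}$, and the two contributions cancel exactly, which is (\ref{PSc}).

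There is no genuine obstacle here: the computation is short once the block expansion is in place. The only points demanding care are the bookkeeping in the first step, so that the correct adjugate entries multiply the correct kernel factors, and the convergence assertion, for which the $e^{-|z|^2}$ decay of the combination displayed above is precisely what is needed (this is also why one cannot drop the delta terms or the $-\tfrac1\pi\rho_{(k),\infty}$ term from $q$). I would also note that differentiating instead in $\alpha$ gives the companion identity with $(x-iy)^p$ replaced by $(x+iy)^p$, so that in the Coulomb-gas language all multipole moments of the screening cloud $q$ vanish, consistent with the general plasma sum rules recorded via \cite{Ma88}.
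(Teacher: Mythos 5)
Your argument is correct and is essentially the paper's proof: both expand the bordered determinant defining $\rho_{(k+1),\infty}$ (you via the Schur-complement/adjugate identity $\det\begin{pmatrix}A&b\\ c^{T}&d\end{pmatrix}=d\det A-c^{T}\operatorname{adj}(A)\,b$, the paper by cofactor expansion along the final row and then column), and both hinge on the same key identity $\int_{\mathbb R^2}\bar z^{\,p}\,K_\infty^{\rm b}(\mathbf r_m,\mathbf r)\,K_\infty^{\rm b}(\mathbf r,\mathbf r_j)\,d\mathbf r=\bar z_j^{\,p}\,K_\infty^{\rm b}(\mathbf r_m,\mathbf r_j)$, the reproducing property of the bulk kernel with an anti-holomorphic monomial inserted. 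The only cosmetic differences are that you derive that identity from the Gaussian generating function $\int_{\mathbb C}e^{-|z|^2+\alpha z+\beta\bar z}\,d^2z=\pi e^{\alpha\beta}$ rather than by polar coordinates, and that your $c^{T}\operatorname{adj}(A)b$ packaging makes the final cancellation against the delta-function terms explicit through $\operatorname{adj}(A)\,A=(\det A)\mathbb I$.
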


\begin{proof}
 For any $k \ge 1$ integrating over the term involving the sum of delta functions gives $\Big ( \sum_{j=1}^k (x_j-iy_j)^p \Big ) \rho_{(k),\infty}$. To then integrate over the first two terms, we expand the determinant specifying $\rho_{(k+1),\infty}$ by the final row. The term coming from the last entry is recognised as $(1/\pi) \rho_{(k),\infty}$ and so cancels. For each of the $k$ other terms, we multiply the term coming from the $j$-th entry of the final row $K_\infty^{\rm b}(\mathbf r,\mathbf r_j)$ times $(x-iy)^p$ down the final column containing $[K_\infty^{\rm b}(\mathbf r_m, \mathbf r) ]_{m=1}^k$, and integrate over $\mathbf r$. For the latter task, polar coordinates can be used to deduce that
 $$
 \int_{\mathbb R^2}(x-iy)^p 
 K_\infty^{\rm b}(\mathbf r_m, \mathbf r)
 K_\infty^{\rm b}(\mathbf r,\mathbf r_j)
  \, d \mathbf r = (x_j-iy_j)^p 
 K_\infty^{\rm b}(\mathbf r_m,\mathbf r_j);
 $$ 
 the case $p=0$ is (\ref{2.2r}) with bulk scaling. After rearranging the columns, the result of the integration in each case can be identified as $-(x_j-iy_j)^p \rho_{(k),\infty}$, and thus in total cancel out with the integration over the final term.
\end{proof}

Replacing $1/\pi$ in (\ref{M.1}) by $\rho_{(1),\infty}(\mathbf r)$ the sum rule (\ref{PSc}) with $p=0$ remains valid in the case of edge scaling as a consequence of the validity of (\ref{2.2r}).  For $p \ge 1$ the slow decay of the correlations along the direction of the boundary as seen in (\ref{2.2t2}) means that the integral in (\ref{PSc}) is not well defined. Specifically the $p=0$, $k=1$ sum rule (\ref{PSc}) at the edge reads
\begin{equation}\label{SA1}
\int_{\mathbb C}\rho_{(2),\infty}^{{\rm e, OCP},T}(z,z') \, d^2 z' = - \rho_{(1),\infty}^{{\rm e, OCP}}(z).
\end{equation}
This  is the edge counterpart of the bulk perfect screening sum rule (\ref{PSb}). 

\begin{remark} 
It turns out that in the GinUE case, the determinantal structure together with (\ref{SA1}) can be used to show that $\rho(z):=\rho_{(1),\infty}^{{\rm e, GinUE}}(z)$ 
(this is (\ref{2.2f+})) satisfies a non-linear equation of infinite order,
\begin{equation}\label{SA2}
\rho(z) = \sum_{n=0}^\infty {| \partial_z^{(n)} \rho(z) |^2 \over n!};
\end{equation}
see \cite[\S~3.6]{AKM20}, where (\ref{SA1}) is referred to as a mass-one equation.
The non-linear equation is equivalent to the special function
function identity
\begin{equation}\label{SA3}
{1 \over 4} \Big ( 1 + {\rm erf}(\sqrt{2}
x) \Big ) \Big ( 1 - {\rm erf}(\sqrt{2}
x) \Big ) = {e^{-4 x^2} \over \pi}
\sum_{n=1}^\infty {(H_{n-1}(\sqrt{2} x))^2 \over 2^n n!} \quad (x \in \mathbb R);
\end{equation}
see \cite[\S~4.5]{AKM20} for the proof of \eqref{SA3}.
Together with the loop equation, the identity (\ref{SA2}) is used in \cite{AKM20} to study the universality of normal matrix models; see \S \ref{S5.4}.
\end{remark}

To deduce (\ref{J2b+}) from (\ref{J2b}) requires that
\begin{equation}\label{SS1}
\lim_{R \to \infty} R^2
S_\infty^{\rm GinUE}(\mathbf k/R) = {|\mathbf k|^2 \over 4 \pi}.
\end{equation}
In the general $\beta > 0$ case of the OCP we 
denote the bulk scaled structure factor by $S_\infty^{\rm OCP}(\mathbf k)$.
 A perfect screening argument extending the viewpoint which implies (\ref{PSb}) (for this see e.g.~\cite[\S 15.4.1]{Fo10}, \cite[\S 3.2]{Fo98a}) predicts
\begin{equation}\label{SS1b}
\lim_{R \to \infty} R^2
S_\infty^{\rm OCP}(\mathbf k/R) = {|\mathbf k|^2 \over 2 \beta \pi}.
\end{equation}
This result, in the slightly different guise of a mesoscopic scaling limit, is implied by the recent work \cite[Th.~1 mesoscopic case, formula for the variance]{LS18}. In the cases of the OCP applied to the anomalous quantum Hall effect ($\beta = 2 M$ with $M$ odd), the sum rule (\ref{SS1b}) combined with the Feynman-Bijl formula quantifies the collective mode excitation energy from the ground state \cite{GMP86}. From the definition of the structure factor, this is equivalent to the moment formula
\begin{equation}\label{SS1c}
\int_{\mathbf R^2}| \mathbf r|^2
\rho_{(2),\infty}^{{\rm OCP},T}(\mathbf r, \mathbf 0)
\, d \mathbf r = - {2 \over \pi \beta},
\end{equation}
known as the Stillinger-Lovett sum rule \cite{SL68}; see \cite{MG83} for a derivation which makes use of (\ref{PSc}) for $p=1$ and
$k=1,2$. Higher order moment formulas can also be derived,
\begin{equation}\label{SS1d}
\int_{\mathbf R^2}| \mathbf r|^4
\rho_{(2),\infty}^{{\rm OCP},T}(\mathbf r, \mathbf 0)
\, d \mathbf r = - {16 \over \pi \beta^2}
\Big ( 1 - {\beta \over 4} \Big ), \quad
\int_{\mathbf R^2}| \mathbf r|^6
\rho_{(2),\infty}^{{\rm OCP},T}(\mathbf r, \mathbf 0)
\, d \mathbf r = - {18 \over \pi \beta^3}\Big ( \beta - 6 \Big ) \Big ( \beta - {8 \over 3} \Big );
\end{equation}
in distinction to (\ref{SS1b}) the precise statement of these results depends on the underlying bulk particle density, which in keeping with GinUE has been assumed to be ${1 \over \pi}$. A linear response argument can be used to derive the first of these; see e.g.~\cite[\S 14.1.1]{Fo10}. It involves the thermal pressure for the OCP, which by a simple scaling argument takes the value $\rho^{\rm b} (1 - \beta/4)$ (here $\rho^{\rm b}$ denotes the bulk density; for a discussion of the relation between the thermal and mechanic pressure in the OCP, see \cite{CFG80}) which explains the appearance of this factor. Mayer diagrammatic expansion methods were used to first derive the sixth moment condition in (\ref{SS1d}) \cite{KMST00}.
Later a response argument involving variations to the spatial geometry was used to give an alternative derivation \cite{CLW15}. In keeping with the relationship between (\ref{SS1b}) and (\ref{SS1c}), the moment formulas (\ref{SS1d}) can be related to the small $k$ expansion of $S_\infty^{\rm GinUE}(\mathbf k)$, which must therefore read
\begin{equation}\label{SS1e}
{2 \pi \beta \over | \mathbf k|^2} S_\infty^{\rm OCP}(\mathbf k) = 1 + \Big ( {\beta \over 4} - 1 \Big ) {|\mathbf k|^2 \over 2 \beta} +\Big ( {\beta \over 4} - {3 \over 2} \Big )\Big ( {\beta \over 4} - {2 \over 3} \Big )\Big ( {|\mathbf k|^2 \over 2 \beta}\Big )^2 + {\rm O}(|\mathbf k|^6).
\end{equation}
Evidence is given in \cite{KMST00} that the polynomial structure in $\beta/4$ of the coefficients in this expansion breaks down at ${\rm O}(|\mathbf k|^6)$. This is in contrast to the power series expansion of the bulk scaled (density $1/\pi$ for definiteness) structure factor for the log-gas on a one-dimensional domain, $S_\infty^{(\beta)}(k)$ say. Thus expanding $\pi \beta S_\infty^{(\beta)}(k)/k $ as a power series in $(k/\beta)$, the $j$-th coefficient is a monic polynomial of degree $j$ in $\beta/2$ \cite{FJM00,Fo21a}.

The use of linear response to quantify the change of charge density upon the introduction of a charge $q$ into the OCP, computing from this the change of charge by integrating, and requiring by screening that this must equal $-q$ can be used
\cite{Ja87} to deduce the Carnie-Chan sum rule \cite{CC81,Ca83}
\begin{equation}\label{TT1a}
-\beta \int_{\mathbb R^2}
d \mathbf r \bigg (
\int_{\mathbb R^2}d \mathbf r' \, \log | \mathbf r'| \Big ( \rho_{(2),\infty}^{{\rm OCP} , T}(\mathbf r, \mathbf r') +
{1 \over \pi} \delta(\mathbf r - \mathbf r') \Big ) \bigg ) = 1.
\end{equation}
Here the integral cannot be interchanged, as according to (\ref{PSb}) integrating over $\mathbf r$ first would give zero. The validity of (\ref{TT1a}) can be checked directly for $\beta = 2$ in the bulk (i.e.~for GinUE in the bulk).

\begin{proposition}
The Carnie-Chan sum rule (\ref{TT1a}) is valid for GinUE in the bulk.
    \end{proposition}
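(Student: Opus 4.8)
The plan is a direct computation using the explicit bulk data at $\beta=2$, namely $\rho_{(1),\infty}^{\rm b}(\mathbf r)=1/\pi$ and $\rho_{(2),\infty}^{{\rm b},T}(\mathbf r,\mathbf r')=-\tfrac{1}{\pi^2}e^{-|\mathbf r-\mathbf r'|^2}$ from (\ref{2.2t}). Performing the inner ($\mathbf r'$) integral first, the delta contribution is $\tfrac1\pi\log|\mathbf r|$, while the remaining piece is $-\tfrac1\pi$ times the Gaussian-smoothed logarithm
$$
G(\mathbf r):=\frac1\pi\int_{\mathbb R^2}\log|\mathbf r+\mathbf u|\,e^{-|\mathbf u|^2}\,d\mathbf u .
$$
So the first task is to evaluate $G$. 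Applying the Laplacian and using $\nabla^2\log|\mathbf x|=2\pi\delta(\mathbf x)$ gives $\nabla^2 G(\mathbf r)=2e^{-|\mathbf r|^2}$; since $G$ is rotationally invariant this is the ODE $\tfrac1r(rG')'=2e^{-r^2}$, which integrates to $G'(r)=(1-e^{-r^2})/r$ and then to $G(r)=\log r+\tfrac12\int_{r^2}^\infty\tfrac{e^{-t}}{t}\,dt+C$ (the exponential integral, as in (\ref{J2c})). The constant is pinned down by computing $G(0)=2\int_0^\infty \log r\,e^{-r^2}r\,dr=\tfrac12\Gamma'(1)=-\gamma/2$ (with $\gamma$ Euler's constant) and matching against the $r\to0$ behaviour $\int_{r^2}^\infty\tfrac{e^{-t}}{t}\,dt=-\gamma-\log r^2+{\rm O}(r^2)$, which forces $C=0$.

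Consequently the inner integral collapses to
$$
\int_{\mathbb R^2}\log|\mathbf r'|\Big(\rho_{(2),\infty}^{{\rm b},T}(\mathbf r,\mathbf r')+\tfrac1\pi\delta(\mathbf r-\mathbf r')\Big)\,d\mathbf r'
=\tfrac1\pi\log|\mathbf r|-\tfrac1\pi G(\mathbf r)=-\frac{1}{2\pi}\int_{|\mathbf r|^2}^\infty\frac{e^{-t}}{t}\,dt,
$$
the $\log|\mathbf r|$ terms cancelling — which is precisely why the two orders of integration disagree, since integrating over $\mathbf r$ first would instead give $0$ by (\ref{PSb}). The surviving integrand now decays Gaussianly, so the outer integral is absolutely convergent, and passing to polar coordinates via $\int_{\mathbb R^2}f(|\mathbf r|^2)\,d\mathbf r=\pi\int_0^\infty f(s)\,ds$ gives
$$
-\beta\int_{\mathbb R^2}\Big(-\frac{1}{2\pi}\int_{|\mathbf r|^2}^\infty\frac{e^{-t}}{t}\,dt\Big)\,d\mathbf r=\frac1\pi\cdot\pi\int_0^\infty\!\!\int_s^\infty\frac{e^{-t}}{t}\,dt\,ds=\int_0^\infty\frac{e^{-t}}{t}\Big(\int_0^t ds\Big)\,dt=\int_0^\infty e^{-t}\,dt=1,
$$
using Fubini in the penultimate step. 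This establishes (\ref{TT1a}) at $\beta=2$.

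There is no serious obstacle here. The only points requiring care are the determination of the integration constant in the radial ODE for $G$ (handled cleanly by the explicit value $G(0)=-\gamma/2$) and the observation that the individual logarithmic pieces are not absolutely integrable over $\mathbb C\times\mathbb C$, so the prescribed order of integration in (\ref{TT1a}) must be respected; once the inner integration is carried out the remaining manipulations are rigorous and elementary.
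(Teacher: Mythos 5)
Your proof is correct and reaches exactly the same intermediate object as the paper, namely $g(\mathbf r)=-\tfrac{1}{2\pi}\int_{|\mathbf r|^2}^\infty e^{-t}/t\,dt$, before integrating over $\mathbf r$ to obtain $1$. The only substantive difference is how you evaluate the inner integral: the paper invokes Newton's theorem (the mean-value/maximum property of the harmonic function $\log|\cdot|$) to split the convolution into $|\mathbf r'|<|\mathbf r|$ and $|\mathbf r'|>|\mathbf r|$ pieces, then uses polar coordinates and integration by parts, whereas you take the Laplacian of the Gaussian-smoothed logarithm $G$, solve the resulting radial ODE, and pin the constant by computing $G(0)=-\gamma/2$ and matching the small-$r$ asymptotics of the exponential integral. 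These are two faces of the same underlying fact — the Poisson equation for $\log|\cdot|$ — and the Laplacian/ODE route trades a citation of Newton's theorem for a one-line constant determination; it is arguably slightly more self-contained. The outer integral is done by Fubini in your case and integration by parts in the paper's, which is a cosmetic distinction. Your remark that the order of integration in (\ref{TT1a}) must be respected, since the logarithmic pieces are not separately absolutely integrable, is a point the paper only states in the surrounding text rather than inside the proof, and is a welcome explicit acknowledgment of the relevant subtlety.
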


    \begin{proof}
    Denote the integral over $\mathbf r'$ in (\ref{TT1a}) by $g(\mathbf r)$. Assuming only rotation and translation invariance of $\rho_{(2),\infty}^{{\rm OCP},T}$ implies
    $$
    g(\mathbf r) = \log |\mathbf r| \Big (
    \int_{|\mathbf r'| <|\mathbf r| }
    \rho_{(2),\infty}^{{\rm OCP},T}(\mathbf r',\mathbf 0) \, d \mathbf r' + {1 \over \pi} \Big ) + \int_{|\mathbf r'| >|\mathbf r| } \log |\mathbf r'|
    \rho_{(2),\infty}^{{\rm OCP},T}(\mathbf r',\mathbf 0) \, d \mathbf r'.
    $$
    Specialising now to $\beta = 2$ by substituting (\ref{2.2t}), the use of polar coordinates and integration by parts allows the integrals to be simplified, with the result
    $$
    g(\mathbf r) = - {1 \over \pi} \int_{|\mathbf r| }^\infty {e^{-(r')^2}\over r'} \, d  r'
    $$
The integration over $\mathbf r \in \mathbb R^2$ of this can be computed by further use of polar coordinates and integration by parts to confirm (\ref{TT1a}).
\end{proof}

Formal use of the convolution theorem in (\ref{TT1a}) shows that it reduces to (\ref{SS1b}) \cite{Ou74}. In the case of the edge geometry, this approach can be used \cite{JLM85} to derive from (\ref{TT1a}) in the edge case the edge dipole moment sum rule
\begin{equation}\label{TT1b}
-2 \pi \beta \int_{-\infty}^\infty
dy  \, \bigg (
\int_{-\infty}^\infty
dy'  \,
(y' - y) 
\int_{-\infty}^\infty dx' \,   \rho_{(2),\infty}^{{\rm e},{\rm OCP} , T}((0,y), (x',y')) 
 \bigg ) = 1,
\end{equation}
first derived in \cite{BHLGM81}.
In keeping with the analogous property of (\ref{TT1a}), it is not possible to interchange the integrations over $y$ and $y'$ in this expression (if it was the LHS would vanish due to the sign change of $y'-y$). In the case $\beta =2$ this is readily verified using the exact result (\ref{2.2t1}).

\begin{proposition}
The edge dipole moment sum rule (\ref{TT1b}) is valid for edge scaled GinUE.
    \end{proposition}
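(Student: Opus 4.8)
The plan is to substitute the explicit edge two-point truncated correlation \eqref{2.2t1}, valid at $\beta=2$, into the left-hand side of \eqref{TT1b}, and to evaluate the resulting elementary triple integral while respecting the prescribed order of integration. With $\mathbf r_1=(0,y)$, $\mathbf r_2=(x',y')$ and $\beta=2$ (so $-2\pi\beta=-4\pi$), and noting that $\rho_{(2),\infty}^{{\rm e},T}$ carries an overall minus sign, the left-hand side of \eqref{TT1b} becomes
\[
4\pi \int_{-\infty}^\infty dy \int_{-\infty}^\infty dy'\,(y'-y)\int_{-\infty}^\infty dx'\, e^{-(x')^2-(y-y')^2}\,\Big| h\big(\tfrac12(y+y'+ix')\big)\Big|^2 .
\]
The delicate point, flagged in the text just below \eqref{TT1b}, is that the $y$ and $y'$ integrations cannot be interchanged: as $y\to+\infty$ the factor $|h(\tfrac12(y+y'+ix'))|^2$ tends to the nonzero constant $1/\pi^2$, so doing the $y$-integration first on the integrand as written would diverge. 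Managing this correctly is the crux of the argument.

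The key steps I would carry out are as follows. First, for fixed $y$ substitute $u=y'-y$, perform the absolutely convergent $x'$-integration, and set $G(s):=\int_{-\infty}^\infty e^{-(x')^2}\,|h(\tfrac12(s+ix'))|^2\,dx'$; the integrand then depends on $y$ and $u$ only through the combination $s=2y+u$, and the expression reduces to $4\pi\int dy\int du\,u\,e^{-u^2}G(2y+u)$. Second, write $u\,e^{-u^2}=-\tfrac12\tfrac{d}{du}e^{-u^2}$ and integrate by parts in $u$; the boundary terms vanish because of the Gaussian $e^{-u^2}$, and since $G(2y+u)$ depends on $u$ only via $2y+u$ one is left with $2\pi\int dy\int du\,e^{-u^2}\,G'(2y+u)$. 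Third, observe that $G(s)\to 1/\pi^{3/2}$ as $s\to+\infty$ (because $\mathrm{erf}(w)\to1$ as $\mathrm{Re}\,w\to+\infty$, whence $|h|^2\to1/\pi^2$ and $\int e^{-(x')^2}dx'=\sqrt\pi$) while $G(s)\to0$ as $s\to-\infty$, with Gaussian-type decay of the differences; hence $G'$ is absolutely integrable on $\mathbb R$, Fubini now legitimises doing the $y$-integration first, and $\int_{-\infty}^\infty G'(2y+u)\,dy=\tfrac12\big(G(+\infty)-G(-\infty)\big)=\tfrac{1}{2\pi^{3/2}}$, independent of $u$. Finally the leftover $u$-integral is Gaussian, $\int e^{-u^2}du=\sqrt\pi$, and the constants combine as $2\pi\cdot\tfrac{1}{2\pi^{3/2}}\cdot\sqrt\pi=1$, which is the assertion.

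The main obstacle is exactly the order-of-integration subtlety: the integration by parts is the device that trades the bounded-but-non-decaying factor $|h|^2$ for the rapidly decaying $G'$, after which Fubini applies and the remaining integrals are trivial. The residual technical points are routine but should be verified explicitly: the $s\to\pm\infty$ asymptotics of $G$ (from the large $|\mathrm{Re}\,w|$ expansion of $\mathrm{erf}$, e.g.~\cite[Eq.~(7.12.1)]{DLMF}, together with the weight $e^{-(x')^2}$ controlling the $x'$-integration uniformly in $s$), the vanishing of the boundary terms in the integration by parts, and the absolute integrability of $e^{-u^2}G'(\,\cdot\,)$ needed for Fubini. A less self-contained alternative would be to specialise to $\beta=2$ the general derivation of \eqref{TT1b} from the Carnie--Chan sum rule \eqref{TT1a} indicated in the surrounding text, but the direct computation above is shorter.
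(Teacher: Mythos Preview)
Your argument is correct and the arithmetic checks out, but the route differs from the paper's. The paper keeps the prescribed order of integration throughout: it evaluates the inner double integral over $x'$ and $y'$ directly (citing \cite[Eq.~(2.41)]{Ja82a}), obtaining the closed form
\[
\int_{-\infty}^\infty dy'\,(y'-y)\int_{-\infty}^\infty dx'\,\rho_{(2),\infty}^{{\rm e},{\rm GinUE},T}\big((0,y),(x',y')\big) = -\frac{1}{\sqrt{8\pi^3}}\,e^{-2y^2},
\]
after which the remaining $y$-integration is a trivial Gaussian. Your route instead integrates by parts in $u=y'-y$ to replace $G$ by $G'$ and then invokes Fubini so that only the boundary values $G(\pm\infty)$ enter; this neatly sidesteps the explicit evaluation of the inner integral. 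The trade-off is that you must verify $G'\in L^1(\mathbb R)$, and this is slightly more delicate than you indicate: the crude bound $|\partial_s|h|^2|\le |h|\,|h'|$ produces, after multiplying by $e^{-(x')^2}$, a non-integrable $1/|x'|$ tail, so one must expand $h\approx \tfrac{1}{\pi}+\text{(correction)}$ and use that the cross term $\tfrac{1}{\pi}h'$ decays like $e^{-(x')^2/2}$ while the correction-times-$h'$ term decays like $1/(x')^2$, whence $|G'(s)|\lesssim e^{-s^2/2}$. The paper's approach has the additional benefit of yielding the explicit intermediate identity above, which is exactly the quantity appearing in \eqref{TT1d}.
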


    \begin{proof}
    Substituting (\ref{2.2t1}), integration by parts gives
    $$
\int_{-\infty}^\infty
dy'  \,
(y' - y) 
\int_{-\infty}^\infty dx' \,   \rho_{(2),\infty}^{{\rm e},{\rm GinUE} , T}((0,y), (x',y')) 
  = -{1 \over \sqrt{8 \pi^3}} e^{-2 y^2}    
    $$
(this formula can be found in \cite[Eq.~(2.41)]{Ja82a}).
Now integrating over $y$ verifies (\ref{TT1b}) for $\beta = 2$.
\end{proof}
 
The exact result (\ref{2.2t2}) exhibits a slow decay of $\rho_{(2),\infty}^T((x_1,y_1),(x_2,y_2))$ parallel to the edge. This was first predicted by Jancovici \cite{Ja82b,Ja95}, who on the basis of linear response argument relating to the screening an oscillatory external charge density applied at the edge obtained for general $\beta > 0$,
\begin{equation}\label{TT1c}
\rho_{(2),\infty}^T((x_1,y_1),(x_2,y_2)) \mathop{\sim}\limits_{|x_1 - x_2| \to \infty} {f(y_1,y_2) \over (x_1 - x_2)^2}, \qquad 
\int_{-\infty}^\infty dy_1
\int_{-\infty}^\infty dy_2 \,
f(y_1,y_2) = -{1 \over 2 \beta \pi^2}.
\end{equation}
A derivation of this using the Carnie-Chan sum rule for the OCP confined to a strip geometry is given in \cite{Ja87}. In keeping with this, in \cite{JS01} the amplitude $f(y,y')$ in (\ref{TT1c}) is related to the dipole moment of the screening cloud at the edge by deriving that
\begin{equation}\label{TT1d}
\int_{-\infty}^\infty
dy'  \,
(y' - y) 
\int_{-\infty}^\infty dx' \,   \rho_{(2),\infty}^{{\rm e},{\rm OCP} , T}((0,y), (x',y')) 
  = \pi \int_{-\infty}^\infty dy' \,
f(y,y').
\end{equation}
Then the result for the integral over $y$ on the RHS follows from the dipole moment sum rule (\ref{TT1b}).

We consider now a global scaling, so that the droplet support is the unit disk. For large $N$, define the charge-charge surface correlation by
\begin{equation}\label{TT1e}
\langle \sigma(\theta) \sigma(\theta') \rangle_N^T :=
\int d \mathbf n \int d \mathbf n' \,  
\rho_{(2),N}^{{\rm g},{\rm OCP} , T}(\mathbf r, \mathbf r').
\end{equation}
Here $\rho_{(2),N}^{{\rm g},{\rm OCP} , T}$ refers to $\rho_{(2),N}^{{\rm OCP} , T}$ computed using global scaling (as specified below (\ref{eq:Av1})), then considering the large $N$ form with edge coordinates (these can be taken as $(r,\theta)$ with $r \approx 1$. The integration is over the normal direction $\mathbf n$ (here the radial direction); the more general notation has been used as this quantity can be defined in other geometries e.g.~for the elliptic GinUE \cite{FJ96}. The reasoning leading to (\ref{TT1c}) has been extended \cite{Ja95} to lead to the prediction
\begin{equation}\label{TT1f}
\langle \sigma(\theta) \sigma(\theta') \rangle_\infty^T = - {1 \over 8 \pi^2 \beta \sin^2((\theta - \theta')/2)}.
\end{equation}
This has been checked at $\beta = 2$ using the exact result for GinUE in \cite{CPR87}.

\begin{remark} 
For the bulk scaled OCP with $\beta$ an even integer, there is a constraint on the small $r=|\mathbf r|$ form of $\rho_{(2),\infty}^{{\rm b}}(\mathbf r, \mathbf 0)$. Thus \cite{SP95}
$$
\rho_{(2),\infty}^{{\rm b}}(\mathbf r, \mathbf 0) = r^\beta e^{-\beta r^2/4} f(r^4),
$$
for $f(z)$ analytic. Note from (\ref{2.2t}) that for GinUE, $f(z) = {2 \over \pi^2} {\sinh (\sqrt{z}/2) \over \sqrt{z}}$.
Recently consequences have been found in \cite{Sa19}.
\end{remark}

\section{Normal matrix models}\label{S4.2b}

\subsection{Eigenvalue PDF}
A generalisation of the joint element PDF (\ref{2.25a}) is the functional form proportional to
\begin{equation}\label{2.25a+}
\exp \Big ( - {N \over t_0 } {\rm Tr} (
 J   J^\dagger - 2 {\rm Re} \,  \sum_{p=2}^M t_p {\rm Tr} \, J^p ) \Big ), \quad t_0 > 0;
\end{equation}
note the scaling so that there is  a factor of $N$ in the exponent. Use of the Schur decomposition (\ref{GS})
gives a separation of the eigenvalues and the strictly upper triangular elements in the
exponent analogous to (\ref{2.1b}). This allows the latter to be integrated over, showing that the
corresponding eigenvalue PDF is proportional to
\begin{equation}\label{2.25b+}
\exp \bigg ( - {N \over t_0} \sum_{j=1}^N \Big ( |z_j|^2 - 2 {\rm Re} \sum_{p=2}^M t_p z_j^p \Big ) \bigg ).
\end{equation}
However for $t_M \ne 0$ this is not normalisable for $M > 2$.

A simple remedy is to impose a cutoff by restricting the eigenvalues to a disk centred about the origin
of radius $R$, and restricting to  small $t_0$ and values of $t_2,\dots,t_M$  so that the support $\Omega$ (assumed
simply connected) of the density is
contained inside this disk \cite{EF05}. The mean field argument leading to (\ref{PT}) tells us that in
relation (\ref{2.25b+}), the normalised density has the constant value $1/\pi t_0$ in $\Omega$,
where the latter is such that
\begin{equation}\label{2.25c+}
W(z) = {1 \over \pi} \int_\Omega \log | z - w| \,  d^2w, \qquad W(z) = {1 \over 2}
  \Big ( |z|^2 - 2 {\rm Re} \sum_{p=2}^M t_p z^p \Big ), 
\end{equation} 
 for $z$ contained inside of $\Omega$. From this equation the coupling constants $\{t_p\}_{p=2}^M$
 can be related to moments associated with $\Omega$ \cite{ABWZ02,Za06}.
 
 \begin{proposition}
 In the above setting, for $p \ge 2$ we have
 \begin{equation}\label{2.25d+}
 t_p = {1 \over 2 \pi i p} \int_{\partial \Omega} \bar{z} z^{-p} \, dz = - {1 \over \pi p} \int_{\mathbb C \backslash \Omega} z^{-p} \, d^2 z.
\end{equation} 
\end{proposition}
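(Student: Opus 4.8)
The plan is to start from the characterisation \eqref{2.25c+} of the droplet $\Omega$, namely that the logarithmic potential of the uniform density $1/\pi$ on $\Omega$ agrees with the fixed external potential $W(z)$ for $z \in \Omega$. The key observation is that $W(z)$ in \eqref{2.25c+} has a simple Laurent-type structure: writing $W(z)=\tfrac12(|z|^2 - 2\,{\rm Re}\sum_{p=2}^M t_p z^p)$, one sees that $\partial_z W(z) = \tfrac12 \bar z - \sum_{p=2}^M p\, t_p z^{p-1}/2$... more precisely I would work with the complex derivative and exploit that the ``analytic part'' of $W$ is the polynomial $-\tfrac12\sum_{p=2}^M t_p z^p$, whose coefficients are exactly the $t_p$ we wish to recover. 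So the strategy is to extract $t_p$ by contour integration against $z^{-p}$ around $\partial\Omega$, using the equality of potentials on $\Omega$ to convert the contour integral of $\bar z z^{-p}$ into something computable.

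First I would differentiate \eqref{2.25c+} with respect to $z$ (treating $z,\bar z$ as independent) to obtain, for $z$ in the interior of $\Omega$,
\begin{equation}
\frac{1}{2}\Big( \bar z - 2\sum_{p=2}^M p\, t_p z^{p-1} \Big) = \frac{1}{\pi}\int_\Omega \frac{d^2 w}{z-w}.
\end{equation}
The right-hand side is the Cauchy transform of the uniform measure on $\Omega$; for $z\in\Omega$ it equals $\bar z$ plus a function holomorphic in a neighbourhood of $\Omega$ (this is the standard fact that $\frac1\pi\int_\Omega \frac{d^2w}{z-w} = \bar z$ for $z\in\Omega$ up to the holomorphic Cauchy transform of the exterior; more cleanly, $\frac1\pi\int_\Omega \frac{d^2w}{z-w} - \bar z = -\frac1\pi\int_{\mathbb C\setminus\Omega}\frac{d^2w}{z-w}$ by adding and subtracting the full-plane integral, which is a holomorphic function of $z$ on $\Omega$). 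Matching the holomorphic parts then gives an identity relating $\sum p\,t_p z^{p-1}$ to the exterior Cauchy transform; expanding the latter in powers of $z$ near the origin and reading off coefficients yields $t_p = -\frac{1}{\pi p}\int_{\mathbb C\setminus\Omega} z^{-p}\, d^2 z$, which is the second equality in \eqref{2.25d+}.

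For the first equality in \eqref{2.25d+}, I would instead integrate directly over $\partial\Omega$: from the potential equality, for $z$ on (or just inside) $\partial\Omega$ one has $\bar z = 2\sum_{p=2}^M p\, t_p z^{p-1} + 2\,(\text{exterior Cauchy transform})$, and then $\frac{1}{2\pi i}\int_{\partial\Omega} \bar z\, z^{-p}\, dz$ picks out the residue: the polynomial term contributes $\frac{1}{2\pi i}\int_{\partial\Omega} 2 p' t_{p'} z^{p'-1-p}\,dz = 2 p\, t_p$ when $p'=p$ (and zero otherwise), while the exterior Cauchy transform is holomorphic outside $\Omega$ including at $\infty$ with appropriate decay, so $\int_{\partial\Omega} (\text{that}) z^{-p}\,dz = 0$ for $p\ge 1$ by deforming the contour to infinity. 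This gives $\frac{1}{2\pi i}\int_{\partial\Omega}\bar z\, z^{-p}\,dz = p\, t_p$, hence the stated formula with the $\frac1p$ prefactor after reconciling the factor of $2$ — I would track constants carefully here. Finally, the equivalence of the two forms in \eqref{2.25d+} is just the complex Green/Stokes identity $\frac{1}{2\pi i}\int_{\partial\Omega}\bar z\, z^{-p}\, dz = \frac{1}{\pi}\int_\Omega \partial_{\bar w}(\bar w\, w^{-p})\, d^2 w$ combined with $\partial_{\bar w}(w^{-p})=0$ away from the origin, plus the same Stokes identity applied on $\mathbb C\setminus\Omega$.

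The main obstacle I anticipate is handling the singularity at $z=0$ and the behaviour at infinity with full rigour: the exterior integral $\int_{\mathbb C\setminus\Omega} z^{-p}\,d^2z$ converges at infinity only because $p\ge 2$, and one must make sure $0\in\Omega$ (guaranteed by the smallness of $t_0$ and the couplings, so that $\Omega$ is a small perturbation of the disk of radius $\sqrt{t_0}$) so that $z^{-p}$ is harmonic on the exterior domain and the contour deformation to infinity is legitimate. Also, to justify matching holomorphic parts one needs the real-analyticity of $W$ and the known structure of the Cauchy transform of a domain with, say, analytic boundary (the obstacle problem guarantees $\partial\Omega$ is nice for small couplings). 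Modulo these standard potential-theoretic facts — for which I would cite \cite{ABWZ02,Za06} — the computation is the bookkeeping of residues and Green's identity described above.
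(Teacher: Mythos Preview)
Your approach is essentially the same as the paper's: both differentiate \eqref{2.25c+} in $z$ to obtain a Cauchy-transform identity, then extract the coefficients $t_p$ by contour integration against $z^{-p}$ on $\partial\Omega$, with the area-integral form following from Green/Stokes (equivalently, Cauchy--Pompeiu) on the exterior domain. The paper organises the middle step by applying the Cauchy--Pompeiu formula \eqref{DB} to $\partial_w W$ on $\Omega$, which yields directly $\int_{\partial\Omega}\frac{\partial_w W(w)}{w-z}\,dw=0$ and hence $\frac{1}{2\pi i}\int_{\partial\Omega}\frac{\bar w}{w-z}\,dw=\sum_{p\ge 2}p\,t_p z^{p-1}$; you instead reason through the exterior-holomorphicity of the Cauchy transform of $\chi_\Omega$ and deform the contour to infinity, which is equivalent.

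One caveat: your intermediate step ``add and subtract the full-plane integral $\frac{1}{\pi}\int_{\mathbb C}\frac{d^2w}{z-w}$'' is not literally valid, since that integral diverges. The clean statement you actually need is just that $h(z):=\frac{1}{\pi}\int_\Omega\frac{d^2w}{z-w}$ satisfies $\partial_{\bar z}(h-\bar z)=0$ on $\Omega$ and $\partial_{\bar z}h=0$ on $\mathbb C\setminus\Omega$ with $h(z)=O(1/z)$ at infinity; this is enough to justify both the holomorphic matching and the contour deformation. With that fix and the constants tracked (the correct differentiated relation is $\bar z-\sum_{p\ge 2}p\,t_p z^{p-1}=\frac{1}{\pi}\int_\Omega\frac{d^2w}{z-w}$), your argument goes through and matches the paper's.
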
 

\begin{proof}
By applying $\partial_z$ to both sides of (\ref{2.25c+}), then introducing $W(w)$ in the integrand via the
identity $2 \partial_w \partial_{\bar{w}} W(w)=1$ shows
$$
\partial_z W(z) = {1 \over \pi} \int_{\Omega} {\partial_w \partial_{\bar{w}} W(w) \over z - w} \, d^2 w.
$$
Now using the Cauchy-Pompeiu formula (\ref{DB}) with $C_1$ replaced by $\Omega$, we see from this that
$$
\int_{\partial \Omega} {\partial_w W(w) \over w - z} \, dw = 0.
$$
Simple manipulation then shows
$$
{1 \over 2 \pi i} \int_{\partial \Omega} {\bar{w} \over w - z} \, dw =      \sum_{p=2}^M p t_p z^{p-1}.
$$
The first equality of (\ref{2.25d+}) now follows by power series expanding the LHS with respect to $z$.
The second equality can be deduced from the first by applying the version of (\ref{DB}) valid for
$z$ outside $C_1$ (chosen as $\mathbb C \backslash \Omega$), for which the LHS is to be replaced by $0$. 
\end{proof}

The simplest case of (\ref{2.25b+}) beyond that corresponding to  (\ref{2.25a}) is to take $M = 3$.
After scaling, the choice of $t_3$ can be fixed. Choosing $t_3 = {1 \over 3}$, results from \cite{KT15}
give that the support of $\Omega$ is contained in a disk for all $0 < t_0 \le 1/8$, with the critical value
$t_0=1/8$ involving three cusp singularities. Then, the boundary $\partial \Omega$ is
a 3-cusped hypercycloid. For general $M \ge 2$ and $\Omega$ contained in a disk and simply connected,
it is shown in \cite{EF05} that $\partial \Omega$ can be parametrised by a Laurent polynomial of the form
$\alpha_1 w + \alpha_0 + \cdots + \alpha_{M-1} w^{-M+1}$ with $|w|=1$.
Dropping the assumption that $\Omega$ be simply connected requires the theory of quadrature domains; see \cite{LM16} and references therein.

More general than the joint eigenvalue PDF (\ref{2.25a+}) is the functional form
\begin{equation}\label{10.1a}
\exp \Big ( - \sum_{j,k=1}^\infty c_{jk} {\rm Tr} \, (J^j (J^\dagger)^k) \Big ) =: \exp \Big ( - {\rm Tr} \, W(J,J^\dagger) \Big ).
\end{equation}
However unlike the former, substituting the Schur decomposition (\ref{GS}) does not in general lead to a separation
of the eigenvalues from the strictly upper triangular variables. To overcome this, attention can be restricted to the
subset of $N \times N$ complex matrices having the further structure $[J,J^\dagger]=0$, which specifies the matrices
as being normal. For normal matrices, the eigenvectors can be chosen to form an orthonormal set, and so $J = U D U^\dagger$,
for $U$ unitary and $D$ the diagonal matrix of eigenvalues. Moreover, the change of variables from $J$ to $\{U,D\}$ gives
a decomposition of measure into separate eigenvalue and eigenvector factors, with the Jacobian given by
(\ref{2.1cZ}) \cite{CZ98}. Hence the eigenvalue PDF corresponding to (\ref{10.1a}) in this setting is proportional to
\begin{equation}\label{10.1b}
\exp \Big ( - \sum_{j=1}^N W(z_j, \bar{z}_j) \Big ) \prod_{1 \le j < k \le N} | z_k - z_j|^2.
\end{equation}
Note that for the joint element PDF (\ref{2.25a+}), the same eigenvalue PDF (\ref{2.25b+}) results for $J$ specified
on the full space of $N \times N$ complex matrices, as it does on the restriction to normal matrices as implied by
(\ref{10.1b}).

\subsection{Equilibrium measure}
Analogous to (\ref{2.25a+}), to obtain a compact eigenvalue support for large $N$, one considers the case that the weight $e^{-W}$ is exponentially varying in a sense that $W$ is of order $N$, say $W(z,\bar{z})=NQ(z)$ for a fixed potential $Q: \C \to \R$. 
Thus the eigenvalue PDF \eqref{10.1b} is written as
\begin{equation}\label{10.1b2}
e^{-H(z_1,\dots,z_N)}, \quad H(z_1,\dots,z_N):= \sum_{1 \le j < l \le N} \log \frac{1}{|z_j-z_l|^2}+N \sum_{j=1}^N Q(z_j). 
\end{equation}
As in Remark~\ref{R2.4}, the macroscopic behaviour of the system \eqref{10.1b2} can be described using the two-dimensional Coulomb gas interpretation with the help of logarithmic potential theory.  
Namely, it is well known in the literature \cite{Jo98,HM13,CGZ14,CHM18,Am21} that for a general $Q$ under suitable potential theoretic assumptions, the empirical measure $\frac{1}{N} \sum \delta_{z_j}$ weakly converges to a unique probability measure $\mu_Q$ that minimises 
\begin{equation} \label{IQ[mu]}
I_Q[\mu]:= \int_{ \mathbb{C}^2 } \log \frac{1}{ |z-w| }\, d\mu(z)\, d\mu(w) +\int_{ \mathbb{C} } Q \,d\mu;
\end{equation}
cf.~(\ref{1.1i}).
One may notice that \eqref{IQ[mu]} can be interpreted as a continuum limit of the Hamiltonian $H_N$ in \eqref{10.1b2} after normalisation, generalising (\ref{PTm}).
The probability measure $\mu_Q$ is called the equilibrium measure and its support $S_Q:=\textup{supp}(\mu_Q)$ is called the droplet, as previously remarked.
Furthermore, if $Q$ is $C^2$-smooth in a neighbourhood of $S_Q$, by Frostman's theorem \cite{ST97}, $\mu_Q$ is absolutely continuous with respect to the Lebesgue measure and takes the form 
\begin{equation} \label{Frostman thm}
d\mu_Q(z)= \frac{ \partial_z \partial_{ \bar{z} } Q(z)}{\pi} \chi_{ z\in S_Q }  \,d^2 z. 
\end{equation}
In particular, for a rotationally symmetric potential $q(r)=Q(|z|=r)$, the droplet is of the form $S_Q=\{ r_1 \le |z| \le r_2 \}$, where $(r_1,r_2)$ are the unique pair of constants satisfying 
\begin{equation} \label{droplet annulus}
r_1 q'(r_1)=0, \quad r_2 q'(r_2)=2, 
\end{equation}
see  \cite[\S~IV.6]{ST97}. 
Here, we have assumed that $Q(z)$ is strictly subharmonic in $\C$, which is equivalent to the requirement that $r \mapsto rq'(r)$ is increasing in $(0,\infty).$ 
Let us mention that all the explicit macroscopic densities with rotation invariance in the above subsections of \S 2 can be obtained as special cases of the formulas \eqref{Frostman thm} and \eqref{droplet annulus}. 
For instance, the RHS of \eqref{2.1gm} can be realised as the RHS of \eqref{Frostman thm} with $Q(z)=|z|^2-2\alpha \log |z|$.
Beyond the case when $Q$ is radially symmetric, the determination of the droplet, also known as the two-dimensional equilibrium problem (e.g. the ellipse in \S~\ref{S2.3}), is far from being obvious even for some explicit potentials with simple form; see \cite{BBLM15,BM15,KL15,ABK21,CK22a} and references therein for recent works in this direction.

\subsection{Partition functions} \label{S5.3}
Continuing the discussion in \S \ref{S4.1}, we consider the global scaled, non charge neutral, partition function 
\begin{equation}
Z_{N}(\beta;Q)= \frac{1}{N!} \int_{ \C } d^2z_1 \cdots \int_{ \C } d^2 z_N\,   e^{-\frac{\beta}{2}H(z_1,\dots,z_N)},  
\end{equation}
where $H(z_1,\dots,z_N)$ is the Hamiltonian given in \eqref{10.1b2}.
An explicit formula for the large $N$ expansion of $Z_N(\beta;Q)$ was predicted in \cite{ZW06}. 
Fairly recently, it was shown by Lebl\'{e} and Serfaty \cite{LS17} that for general $\beta>0$ and $Q$, $Z_{N}(\beta;Q)$ admits the large $N$ asymptotic expansion of the form 
  \begin{equation} \label{LS formula}
  \log Z_N(\beta;Q) = -\frac{\beta}{2}N^2 I_Q[\mu_Q] + \Big(\frac{\beta}{4}-1\Big)N \log N - \bigg( C(\beta) + \Big(1-\frac{\beta}{4}\Big) E_Q[\mu_Q] \bigg) N+ o(N);
\end{equation} 
see also an earlier work \cite{SS15} on \eqref{LS formula} up to the $O(N \log N)$ term.
The term $I_Q[\mu_Q]$ appearing in the leading order asymptotic is the energy \eqref{IQ[mu]} evaluated at the equilibrium measure $\mu_Q$. In the case of the OCP, up to a sign it is the quantity appearing in the exponent of $A_{N,\beta}$ in (\ref{QN1}) at order $N^2$.
To see this, we note that for a radially symmetric potential $q(r)=Q(|z|=r)$ generally, it is evaluated as 
\begin{equation} \label{energy radially sym}
I_Q[\mu_Q]=q(r_1)-\log r_1 -\frac14 \int_{r_0}^{r_1} rq'(r)^2\,dr, 
\end{equation}
where $r_1$ and $r_2$ are the constants specified in \eqref{droplet annulus}. 
For the OCP $Q(z)=|z|^2$, which gives $I_Q[\mu_Q]=3/4$, this indeed being the coefficient of $N^2$ seen in $A_{N,\beta}$.
The appearance of the term  $(\beta/4 - 1)N\log N$ in (\ref{LS formula}), whereas there is a term ${1 \over 4} \beta N^2 \log N$ in $A_{N,\beta}$ of (\ref{QN1}) is due to the simple scaling $z_j \mapsto  z_j/ \sqrt{N}$ required to go from the OCP with global scaled coordinates (as assumed in (\ref{LS formula})) to the OCP itself (as assumed in (\ref{QN1})).
The terms appearing in the $O(N)$ term of \eqref{LS formula} are the entropy
\begin{equation}\label{entropy} 
  E_Q[\mu_Q] := \int_{\C} \mu_Q(z)\,\log \mu_Q(z) \, d^2z
\end{equation} 
associated with $\mu_Q$ and a constant $C(\beta)$ independent of the potential $Q$.  Their sum is the free energy per particle, $\beta f(\beta;Q)$.
The expansion \eqref{LS formula} with quantitative error bounds is also available in the literature \cite{BBNY19,Se20}.

For the random normal matrix model when $\beta=2$, the determinantal structure \eqref{2.1f} allows an explicit expression 
\begin{equation}
Z_N(2;Q)=  \prod_{j=0}^{N-1} h_j, 
\end{equation}
where $h_j$ is the orthogonal norm in \eqref{KN RNM}; cf. Proposition~\ref{P1.1}.
In particular, if $Q$ is rotationally symmetric, since $p_j(z)=z^j$, we have
\begin{equation}
h_j=2\pi \int_0^\infty r^{2j+1} e^{-N q(r)}\,dr;
\end{equation}
cf.~(\ref{2.85a}).
Based on this knowledge together with a Laplace approximation of the integrals, the precise asymptotic expansion of the free energy up to the $O(1)$ term was derived in a recent work \cite{BKS22}. 
This in particular shows that  
\begin{equation}\label{ZN2Q exp}
\log Z_N(2;Q)  =-N^2 I_Q[\mu_Q] - \frac{1}{2}N\log N + \Big( \frac{\log(2\pi^2)}{2}- \frac12 E_Q[\mu_Q] \Big) \, N - \frac{\chi}{12}\log N +O(1),
\end{equation}
where $\chi$ is the Euler index; cf. \eqref{QN4}. Let us recall here that $\chi=0$ for the annulus ($r_0>0$) and $\chi=1$ for the disk ($r_0=0$) geometry.
See \cite[\S4]{BKS22} and \cite[\S4]{FF11} for some concrete examples of \eqref{ZN2Q exp} associated with the matrix models discussed in \S 2.

\subsection{Correlation functions and universality} \label{S5.4}
Turning to the correlation functions $\rho_{(k),N}$ of \eqref{10.1b2}, the determinantal structure \eqref{2.1f} remains valid with the correlation kernel 
\begin{equation} \label{KN RNM}
K_N(z,w) =e^{-\frac{N}{2} (Q(z) + Q(w))  } \sum_{j=0}^{N-1} \frac{ p_{j}(z)\overline{p_{j}(w)} }{ h_j },
\end{equation}
where $p_j$ is the monic orthogonal polynomial of degree $j$ with respect to the weighted Lebesgue measure $e^{-N Q(z)}\,d^2z$ and $h_j$ is its squared orthogonal norm; cf.~(\ref{4.1d+}).

Let us first discuss the asymptotic behaviours of $\rho_{(k),N}$ in the micro-scale. 
Given a base point $p \in S_Q$ for which we zoom the point process, we denote by
\begin{equation}
\delta:= \frac{ \partial_z \partial_{ \bar{z} } Q(z)}{\pi}\Big|_{z=p}  
\end{equation}
the mean eigenvalue density at $p$; cf.\eqref{Frostman thm}. 
From universality principles (see e.g.~\cite{Ku11}), one expects that for a general $Q$ and $p$ such that $\delta \in (0,\infty)$ (i.e. the eigenvalue density does not vanish or diverge at $p$), the universal scaling limit in Proposition~\ref{P2.3} arises. 
For the bulk case when $p \in \textup{Int}(S_Q)$, such a universality was established by Ameur, Hedenmalm and Makarov in \cite{AHM11} where they showed that for a fairly general potential $Q$ under some mild assumptions,  
\begin{align} \label{RNM bulk limit}
\frac{1}{(N \delta)^k}\rho_{(k),N}\Big( p+\frac{z_1}{ \sqrt{ N \delta } }, \dots, p+\frac{z_k}{ \sqrt{ N \delta } } \Big) \to  \det \Big [ K_\infty^{\rm b}(z_j, z_l) \Big ]_{j,l=1}^k,
\end{align}
uniformly on compact subsets of $\C$ as $N \to \infty$. See also \cite{Be09,Be13}.
In a sequential work \cite{AHM15} the theory of loop equations (or Ward's identities) was also used to show the bulk scaling limit \eqref{RNM bulk limit}. 
It says that for a given test function $\psi$, 
\begin{equation} \label{Ward} 
\mathbb{E}_N W_N^+[ \psi ]=0, \quad W_N^+[\psi]:=\frac12 \sum_{ j \not =k } \frac{ \psi(z_j)-\psi(z_k) }{ z_j-z_k } - N \sum_{ j=1 }^N [ \partial Q \cdot \psi ] (z_j) + \sum_{j=1}^N \partial \psi(z_j).  
\end{equation}
The functional $W_N^+$ is also called the stress energy tensor in the context of conformal field theory \cite[Appendix 6]{KM13}.
The identity \eqref{Ward} easily follows from the integration by parts
\begin{equation}
\mathbb{E}_N[\partial \psi (z_j)] = \mathbb{E}_N [ \partial_{z_j} H(z_1,\dots,z_N) \cdot \psi(z_j) ]. 
\end{equation}
The approach using Ward's identities was further developed in \cite{AKM20} and several related works to study various scaling limits of the random normal matrix models (see e.g. \cite{AKMW20,AB21} for the bulk scaling limit at weak non-Hermiticity \eqref{Wk}; \cite{AKS20,Seo22} for the edge scaling limit with boundary confinements \eqref{CV8}; \cite{AK13,AKS21} for normal matrices with Mittag-Leffler type singularities as in \S \ref{S.GI} and \S \ref{S2.7}) and we refer to \cite[Remark 2.9]{ABK22} for an expository summary of this strategy.
In particular, in \cite{AKM20}, the rescaled version of Ward's identity was introduced and used to show the edge universality: for $p \in \partial S_Q,$
\begin{align} \label{RNM edge limit}
\frac{1}{(N \delta)^k}\rho_{(k),N}\Big( p+i\mathbf n\frac{z_1}{ \sqrt{ N \delta } }, \dots, p+i\mathbf n\frac{z_k}{ \sqrt{ N \delta } } \Big) \to  \det \Big [ K_\infty^{\rm e}(z_j, z_l) \Big ]_{j,l=1}^k,
\end{align} 
where $\mathbf n$ is the outer normal vector at $p$ as in Proposition~\ref{P2.7}.
However, the result in \cite{AKM20} has an additional assumption that the limiting correlation function is translation invariant along the real axis. 
This assumption is intuitively natural but hard to rigorously show in general (except e.g. for the rotationally symmetric potential $Q$).
The edge universality was later then shown by Hedenmalm and Wennman for a wide class of potentials $Q$ in \cite{HW21}, where they developed an asymptotic theory for general planar orthogonal polynomials; see also \cite{He21} for an alternative approach to derive the main result in \cite{HW21} and \cite{HW20} for a theory developed for the orthogonal polynomials associated with non-exponentially varying weight. 
The asymptotic results in \cite{HW21} together with \eqref{KN RNM} then leads to \eqref{RNM edge limit} by the Riemann sum approximation.

We now turn our attention to the asymptotic behaviours of $\rho_{(k),N}$ in the macro-scale. 
Let us begin with the $1$-point function $\rho_{(1),N}$. 
For the bulk case when $z$ is interior of the droplet, the asymptotic behaviour of $\rho_{(1),N}$ is well known in the literature, see \cite{Be09,Am13} and references therein. 
It says that under the suitable assumptions on $Q$, there are real-analytic functions $B_j$ such that 
\begin{equation} \label{rho 1 expansion}
\pi \rho_{(1),N}(z)= N\Delta Q(z) +\frac12 \Delta \log \Delta Q(z) + N^{-1} B_2(z) + \cdots + N^{ -k+1 } B_k(z) + O(N^{-k}), 
\end{equation}
where $\Delta := \partial_z \partial_{ \bar{z} } $ is the one quarter of the usual Laplacian.  
Notice here that the leading order asymptotic of \eqref{rho 1 expansion} is implied by the Laplacian growth \eqref{Frostman thm}. 
As a concrete example, we consider the induced spherical ensemble \eqref{PDF Ispherical} with $M=\alpha_1 N$ and $n=\alpha_1 N-1$, which can be realised as \eqref{10.1b2} with 
$$Q(z)=(\alpha_1+\alpha_2-1) \log(1+|z|^2)-2(\alpha_1-1)\log|z|.$$ 
On the other hand, it follows from \eqref{K Ispherical} that with $\zeta=|z|^2/(1+|z|^2)$,
\begin{equation}
\pi\rho_{(1),N}(z)= \frac{ |z|^{ 2(M-N)  } }{   (1+|z|^2)^{ 2  }  } (M+n-N) \Big( I_\zeta (M-N,n)-I_\zeta (M,n-N) \Big). 
\end{equation}
Then using the well-known asymptotic behaviours of the incomplete beta function, one can observe that for $r_1<|z|<r_2$, where $r_1,r_2$ are given in \eqref{r1 r2 Ispherical}, the asymptotic behaviour \eqref{rho 1 expansion} holds with
$$
\Delta Q(z)= \frac{\alpha_1+\alpha_2-1}{ (1+|z|^2)^2 }, \qquad \Delta \log \Delta Q(z)= -\frac{2}{ (1+|z|^2)^2 }.
$$

Next, we consider the off-diagonal asymptotic behaviour of the correlation kernel. 
For the Ginibre ensemble, equivalently, for the random normal matrix model \eqref{10.1b2} with $Q(z)=|z|^2$, the associated correlation kernel $K_N$ in \eqref{KN RNM} satisfies  
\begin{equation} \label{KN Ginibre Szego}
 K_{N}(z,w) = \sqrt{ \frac{2N}{\pi} }    (z\bar{w})^N e^{N -\frac{N}{2}(|z|^2+|w|^2) } S(z,w)  \Big( 1+O(\frac{1}{N}) \Big), \qquad (z \not =w)
\end{equation}
where $S(z,w)$ is the exterior Szeg\"o kernel 
\begin{equation}
S(z,w):=\frac{1}{2\pi}\frac{1}{z\bar{w}-1}.
\end{equation}
From a viewpoint of Proposition~\ref{P2.2}, the asymptotic behaviour \eqref{KN Ginibre Szego} can be realised as a uniform expansion of the incomplete gamma function $z \mapsto \Gamma(N;N z)$, which is available in the literature in some particular domains; see e.g. \cite{Tr50} for $|\arg (z-1)|<3\pi/4.$ 
In a recent work \cite{AC22}, generalising the classical results, it was shown that the asymptotic behaviour \eqref{KN Ginibre Szego} remains valid as long as $z \bar{w}$ is outside the Szeg\"o curve $
\{ z \in \C: |z| \le 1, |z \, e^{1-z}|=1 \}.$
Note in particular that if $|z|=|w|=1,$ then \eqref{KN Ginibre Szego} reads
\begin{equation} \label{KN Szego Ginibre bdy}
 K_{N}(z,w) \overset{c}{\sim} \sqrt{\frac{2N}{\pi}}  S(z,w)  \Big( 1+O(\frac{1}{N}) \Big),
\end{equation}
where $\overset{c}{\sim}$ means that the asymptotic expansion holds up to a sequence of cocycles (in this case $(z\bar{w})^N$), which cancel out when forming a determinant \eqref{2.1f}.
From a statistical physics point of view, the behaviour \eqref{KN Szego Ginibre bdy} indicates that there are strong correlations among the particles on the boundary of the droplet, which also shows the slow decay of correlations at the boundary. For GinUE this is explicit in (\ref{TT1f}).
For elliptic GinUE, the phenomenon was studied in \cite{FJ96}, as a test of the generalisation to more general shaped droplets, when the RHS of (\ref{TT1f}) is predicted to be given in terms of a certain Green's function for an electrostatics problem outside of the droplet, which acts as a macroscopic conductor \cite{Ja95}, \cite[Eq.~(3.29]{Fo98a}.
Furthermore, it was obtained by Ameur and Cronvall \cite{AC22} that for a general class of potentials $Q$, the associated correlation kernel $K_N$ satisfies 
\begin{equation} \label{KN Szego bdy}
 K_{N}(z,w) \sim \sqrt{2N} \Big(  \frac{ \partial_z \partial_{ \bar{z} } Q(z)}{\pi} \Big)^{1/4}  \Big(  \frac{ \partial_w \partial_{ \bar{w} } Q(w)}{\pi} \Big)^{1/4}   S(z,w)  ( 1+o(1) ).
\end{equation}
For this, the use of a general theory on the orthogonal polynomial due to Hedenmalm and Wennman \cite{HW21} was made. 
We also refer to \cite{ADM22,BY22,ACC22} for more recent studies in this direction. 

\begin{remark} 
Let $\{p_j^{(N,R)}(z) \}_{j=0,1,\dots}$ be the orthogonal polynomials with respect to the inner product
$\langle f | g \rangle := \int_{C_R} f(z) g(\bar{z}) e^{-2N W(z)/t_0}$, where $W(z)$ as in (\ref{2.25c+}) and
it is assumed that the limiting eigenvalue support $\Omega$ corresponding to (\ref{2.25b+}) is contained in $D_R$. 
Consider the probability distribution $\mathcal P_k^{(N,R)}$ specified by the eigenvalue PDF proportional to
\begin{equation}\label{WTa}
\prod_{l=1}^k e^{-2N W(z_l)/t_0} \prod_{1 \le j < l \le k} | z_l - z_j |^2.
\end{equation}
As a result of the underlying determinantal structure, one has the formula for $p_k^{(N,R)}(z)$ as an expectation
with respect to $\mathcal P_k^{(N,R)}$ (see e.g.~\cite[proof of Proposition 5.1.3]{Fo10}; in fact such formulae were known to Heine \cite{Sz75})
$$
p_k^{(N,R)}(z) = \Big \langle \prod_{l=1}^k ( z - z_l) \Big \rangle_{\mathcal P_k^{(N,R)}}.
$$
For $k/N=x$ as $k,N \to \infty$ it has been conjectured \cite{El07} that the zeros of $p_k^{(N,R)}(z)$ accumulate on certain arcs $\Sigma$
contained in $\Omega$, with corresponding measure $\mu_x^*$. Assuming this, it follows that for $z \in \mathbb C \backslash \Omega$
\begin{equation}\label{10.1c}
{1 \over \pi t_0} \int_{\Omega} \log | z - w| \, d^2 w = \int_{\Sigma} \log | z - s| \, d \mu_x^*.
\end{equation}
This identifies $\Sigma$ as the so-called mother body or potential theoretic skeleton of $\Omega$. A number of works give further developments
along these lines, especially in relation to the strong asymptotics of the planar orthogonal polynomials based on Riemann-Hilbert analysis. 
In the case of (\ref{10.1b}) with a cubic potential, references include \cite{BK12,KT15,MS19,BS20}, while for the induced Gaussian type weight $2W(z)/t_0=|z|^2-2c \log |z-a|$ or more generally $2W(z)/t_0=|z|^2-2\sum_{j=1}^M c_j \log |z-a_j|$ in (\ref{WTa}), see
\cite[$c=O(1)$]{BBLM15}, 
\cite[$c=O(1/N)$ and $|a| \not = 1$]{BGM17}, \cite[$c=O(1/N)$ and $|a| \approx 1$]{BEG18},
\cite[$c=O(1/N)$ and $|a| \not = 1$]{LY17},
\cite[$c_j=O(1/N)$]{LY20}. 
We also refer to \cite{WW19,DS22,La20} for applications of such strong asymptotics in the context of the characteristic polynomials of the Ginibre matrix. 
\end{remark}

\section{Further theory and applications}

\subsection{Fermi gas wave function interpretation}\label{S4.2a}
We have seen that the rewrite of  (\ref{1.1f}), written in the exponential 
 form (\ref{1.1i}), allows for the GinUE eigenvalue PDF to be interpreted as the Boltzmann
 factor for a particular Coulomb gas. If instead of an exponential form we use
 (\ref{2.1h}) to rewrite  (\ref{1.1f}) as
\begin{equation}\label{4.5a}
\Big | \prod_{j=1}^N e^{- | z_j|^2/2} \det [ z_j^{k-1} ]_{j,k=1,\dots,N} \Big |^2,
\end{equation} 
then we are lead to an interpretation as the absolute value squared of a ground state free Fermi
quantum many body wave function. Thus inside the absolute value of (\ref{4.5a}) is a Slater
determinant of single body wave functions $\{\phi_l(z) \}_{l=0,\dots,N-1}$ with $\phi_l(z) =
e^{- | z|^2/2} z^l$. What remains then is to identify the corresponding one body Hamiltonian
for quantum particles in the plane which have these single body wave functions for the lowest
energy states.

The appropriate setting for this task is a quantum particle confined to the $xy$-plane subject
to a perpendicular magnetic field, $ (0,0,B)$, $B > 0$. Fundamental to this setting is the
vector potential $\mathbf A$, related to the magnetic field by $\nabla \times \mathbf A =  (0,0,B)$.
The so-called symmetric gauge corresponds to the particular choice $\mathbf A = (-By,Bx,0)=:(A_x,A_y,0)$, which
henceforth will be assumed. Physical quantities in this setting are $m$ (the particle mass),
$e$ (particle charge), $\hbar$ (Planck's constant), $c$ (speed of light), which together with $B$ are
combined to give $\omega_c := e B/mc$ (cyclotron frequency) and $\ell := \sqrt{\hbar c/eB}$
(magnetic length).

Defining the generalised momenta and corresponding raising and lowering operators by
$$
\Pi_u = - i \hbar {\partial \over \partial u} + {e \over c} A_u \: \: (u = x,y), \qquad a^\dagger = {\ell \over \sqrt{2} \hbar} (\Pi_x + i \Pi_y), \qquad
a = (a^\dagger)^\dagger,
$$
allows the quantum Hamiltonian to be written in the harmonic oscillator like form
$H_B = \hbar \omega_c(a^\dagger a + {1 \over 2})$  \cite{CDL77}. Important too are the quantum centre of orbit operators
and associated raising and lowering operators
$$
U = u - {\ell^2 \over \hbar} \Pi_u \: \: (U = X,Y; \, u=x,y),  \qquad b^\dagger = {1 \over \sqrt{2} \ell} (X  - i Y), \qquad
b = (b^\dagger)^\dagger,
$$
for which $X^2 + Y^2 = 2 \ell^2 (b^\dagger b + {1 \over 2})$. The operators $\{a, a^\dagger\}$ commute with $\{b, b^\dagger \}$,
implying that $H$ and $X^2 + Y^2 $ permit simultaneous eigenstates. A complete orthogonal set can be constructed using the
raising operators according to 
\begin{equation}\label{bns}
|n,m \rangle = {(a^\dagger)^n (b^\dagger)^m \over \sqrt{n! m!}} |0,0\rangle,
\end{equation}
with eigenvalues of $H$ equal to $(n + {1 \over 2}) \hbar \omega_c$ and eigenvalue of $X^2 + Y^2$ equal to $(2m+1) \ell^2$.
The ground state $|0,0\rangle$ is characterised by $a |0,0\rangle = b|0,0\rangle = 0$, which can be checked to have the unique solution
$|0,0\rangle \propto e^{-(x^2+y^2)/4 \ell^2}$. From this, application of $(b^\dagger)^m$ gives $|0,m\rangle \propto
\bar{z}^m e^{-|z|^2/4 \ell^2}$, $z = x + i y$. Forming a Slater determinant with respect to the first $N$ eigenstates of this type
gives (\ref{4.5a}) with $\ell^2 = 1/2$. Generally states with quantum number $n=0$ and thus belonging to the
ground state are said to be in the lowest Landau level. One remarks that the largest eigenvalue of $X^2 + Y^2$ is then $N - 1/2$, which is
in keeping with the squared radius of the leading order support in the circular law.

The above theory of
a quantum particle in the plane subject to a perpendicular magnetic field can be recast to apply to a rotating quantum particle
in the plane \cite{HC00, LMS19}. It is further true that the elliptic GinUE PDF (\ref{2.3}) admits an interpretation as the absolute value
squared of state in the lowest Landau level, and furthermore the corresponding orthogonal polynomials (\ref{cH}) can be constructed
using a Bogolyubov transformation of $\{b, b^\dagger \}$ \cite{FJ96}.
Also, the PDF on the sphere (\ref{CV2}) permits an interpretation as the absolute value
squared of the ground state wave function for a free Fermi gas on the sphere subject to a perpendicular magnetic field
\cite{Ha83x}. Another point of interest relates to the $N$-body Fermi ground state corresponding to the quantum Pauli Hamiltonian
in the plane with a perpendicular inhomogeneous magnetic field $B(x,y)$. The Hamiltonian $H_B$ defined above then is to be
multiplied by the $2 \times 2$ identity matrix, and the spin coupling term $-(g\hbar/2m) B(x,y) {\rm diag}\,(1/2,-1/2)$ added. With
$B(x,y) = - {1 \over 2} \nabla^2 W(x,y)$ for some real valued $W$, and with the assumption $\Phi := \int B(x,y) \, dx dy < \infty$,
the ground state for this model  (which is spin polarised all spins up) permits an exact solution for $g=2$ \cite{AC79}. The ground state
of normalisable eigenfunctions has degeneracy $ [\Phi/2\pi \hbar] =: N$, with basis of eigenfunctions $\{ z^j e^{W(x,y)/2 \hbar} \}_{j=0}^{N-1}$.
This implies the Fermi many body ground state (\ref{4.5a}) with $e^{-|z_j|^2/2}$ replaced by $e^{W(x_j,y_j)/2 \hbar}$ \cite{ABWZ02}.

\begin{remark} $ $ \\
1.~Upon stereographic projection of the sphere to the plane, it is possible to write the quantum Hamiltonian for a charge particle in a constant perpendicular magnetic field in a form unified with the original planar case \cite{Du92a}. This involves the K\"ahler metric and potential, and permits a viewpoint which carries over to further generalise the space to higher dimensional complex manifolds in $\mathbb C^m$. A point of interest is that doing so gives, for the bulk scaling limit of the corresponding $N$ particle lowest Landau level state, the natural higher dimensional analogue of the kernel (\ref{2.2c}) \cite{Be13,Be14}. \\
2.~The squared wavefunction for higher Landau levels (say the $r$-th) has been shown to give rise to the determinantal point process with bulk scaled kernel 
$$
K_\infty^{r}(w,z) = L_r^0(|w-z|^2) e^{w \bar{z}} e^{-(|w|^2+|z|^2)/2};
$$
see e.g.~\cite[Prop.~2.5]{Sh15}. Allowing for mixing between Landau levels up to and including level $r$ leads to squared wave functions giving rise to the same determinantal point process except for the replacement of Laguerre polynomials $L_r^0 \mapsto L_r^1$  in the kernel \cite{HH13}. 
Extending \cite{LMS19}, the precise mapping between the rotating fermions in the higher Landau levels and the polyanalytic Ginibre ensemble was established in \cite{KMG21}. 
Furthermore, its full counting statistics and generalisations to finite temperature were obtained in \cite{SLMS22,KLMG22}. 
\\
3.~In the theory of the fractional quantum Hall effect, constructing an anti-symmetric state with filling fraction of the lowest Landau level $\nu = 1/m$,
for $m$ an odd integer,  plays a crucial role. To accomplish this, Laughlin
 \cite{La83} proposed the ground state wave function proportional to
 \begin{equation}\label{bns+}
 \prod_{l=1}^N e^{-|z_l|^2/4 \ell^2} \prod_{1 \le j < k \le N} (\bar{z}_k - \bar{z}_j)^m;
 \end{equation}
 note that with the assumption that $m$ is odd, this is anti-symmetric as required for fermions. Moreover it belongs to the lowest Landau level as follows from the theory in the text below (\ref{bns}). The absolute value squared of (\ref{bns+}) coincides with the Boltzmann factor (\ref{1.1i}) with $\beta = 2m$, and the scaling $z_l \mapsto z_l/\sqrt{2m\ell^2}$. From potential theoretic/ Coulomb gas reasoning, the bulk density is therefore
 $1/(2m\pi \ell^2) $. The factor of $m$ in the denominator is in precise agreement with the requirement that the filling fraction be equal to $1/m$. \\
 4.~The ground state $N$-body free spinless Fermi gas in the plane, without a magnetic field but confined by a radial harmonic potential, is also an example of a determinantal point process for which exact calculations are possible; see the recent review \cite{DDMS19}. However, its statistical state is distinct from that of GinUE. Thus with a global scaling so that the support is the unit disk, the density profile as the $d=2$ Thomas-Fermi functional form ${2 \over \pi} (1 - |z|^2) \chi_{|z|<1}$, in contrast to the circular law (\ref{2.2a}). The bulk scaled two-point correlation function (bulk density $1/4\pi$) is given in terms of the $J_1$ Bessel function
$$
\rho_{(2),\infty}^{\rm hF}(z_1,z_2) = \Big ( {1 \over  4 \pi} \Big )^2 \bigg ( 1 - \Big ( {2J_1(|z_1 - z_2|) \over|z_1 - z_2|}\Big )^2 \bigg ),
$$
in contrast to (\ref{2.2t}). This gives a decay proportional to $1/|z_1 - z_2|^3$ of $\rho_{(2),\infty}^{{\rm hF}, T}$.
Also, the edge scaled correlation kernel now involves Airy functions \cite{DDMS16}, rather than the error function seen in (\ref{2.2c}). Universality results relating to many body free Fermi ground states in dimension $d \ge 2$ have recently been obtained \cite{DL21}. We highlight in particular the macroscopic fluctuation theorem for the linear statistic $G = \sum_j g(\mathbf r_j/R)$, with $g$ assumed sufficiently smooth and absolutely integrable, in the $R \to \infty$ limit
\cite[Th.~III.2]{DL21}
\begin{equation}
  {G - {R^d \omega_d \over (2 \pi)^d} \int_{\mathbb R^d} g(\mathbf r) \, d^d \mathbf r \over \sigma_d R^{(d-1)/2}} \to 
  {\rm N}[0,\Sigma(g)], \quad
(\Sigma(g))^2 = \int_{\mathbb R^d} | \hat{g}(\mathbf r)|^2 |\mathbf r| \, d^d \mathbf r.
\end{equation}
Here $\omega_d = \pi^{d/2}/\Gamma(1+d/2)$ is the volume of the Euclidean ball in $\mathbb R^d$, $\omega_d/(2 \pi)^d$ is the bulk density, $\sigma_d^2 := \omega_{d-1}/(2 \pi)^d$ and the Fourier transform has the definition $\hat{g}(\xi) = {1 \over (2 \pi)^{d/2}} \int_{\mathbb R^d} e^{ -i \xi \cdot \mathbf r } g(\mathbf r) \, d^d \mathbf r$. 
Note in particular that in contrast to (\ref{J2b+}), the variance of $G$ now diverges with the scale $R$.

\end{remark}

\subsection{Quantum chaos applications}
The pioneering works of Wigner and Dyson relating to the Hermitian random matrix ensembles was, as noted in  \S \ref{S1}, motivated by seeking a model for the (highly excited) energy levels of a complex quantum system. Later, in the 1980's, as a fundamental contribution to the then emerging subject of quantum chaos, Bohigas et al.~\cite{BGS84} identified the correct meaning of a complex quantum system not by the number of particles but rather as one for which the underlying classical mechanics is chaotic. To test this prediction on say the numerically generated spectrum of a quantum billiard system, the energy levels (beyond some threshold to qualify as being highly excited) were first unfolded so that their local density became unity, and then their numerically determined statistical properties were compared against random matrix predictions for the appropriate symmetry class; see e.g.~\cite{Ha00}. Most popular among the statistical properties have been the variance for the number of eigenvalues in a large interval, and the distribution of the spacing between successive eigenvalues.

A natural extension of these advances is to inquire about the spectrum of a dissipative chaotic quantum system, which due to the loss of energy need not be real. This question was taken up by Grobe, Haake and Sommers \cite{GHS88} for the specific model of a damped periodic kicked top. The quantum dynamics are specified by a subunitary density operator. It is the spectrum of this operator, which after unfolding, and considering only those eigenvalues in the upper half plane away from the real axis (there is a symmetry which requires that the eigenvalues come in complex conjugate pairs --- see the recent paper \cite{AKMP19} for a discussion of this point in a random matrix context) that were compared in \cite{GHS88} a statistical sense to GinUE. Following from precedents in the Hermitian case, in the statistical quantity measured was the  distribution of the radial spacing between closest eigenvalues, to be denoted $P^{\rm s,GUE}(r)$ with the normalisation $\int_0^\infty
P^{\rm s,GUE}(r) \, dr = 1$. This is the quantity $F_\infty(0;D_r)$ of Remark \ref{R3.3}.1. Recalling  (\ref{rK2+}) we therefore have
\begin{equation}\label{Np}
P^{\rm s,GUE}(r) = -  
{d \over d r} e^{r^2} \prod_{j=1}^\infty \Big ( 1 - {\gamma(j;r^2) \over \Gamma(j)} \Big ).
\end{equation}
It follows that for small $r$, $P^{\rm s,GUE}(r) \sim 2 r^3 $, while it follows from (\ref{rK2p}) and the comment in the sentence immediately above Remark \ref{R3.3} that for large $r$, $\log P^{\rm s,GUE}(r) \sim   -r^4/4$. A numerical plot can be obtained from the functional form (\ref{Np}). For the moments the formula $\langle r^p \rangle = p \int_0^\infty r^{p-1} e^{r^2} E_\infty(0;r) \, dr$ holds true. 
In particular, for the mean we calculate $\langle r \rangle = 1.142929\dots$.

A variation of the closest neighbour spacing for an eigenvalue at $z$ is the complex ratio $(z^{\rm c} - z)/(z^{\rm nc} - z)$, where $z^{\rm c}$ is the closest neighbour to $z$, and $z^{\rm nc}$ is the next closest neighbour \cite{SRP20}. An approximation, with fast convergence properties to the large $N$ form, has been given recently in \cite{DW22}. 

Very recently a non-Hermitian Hamiltonian realisation of GinUE has been obtained in the context of a proposed non-Hermitian $q$-body Sachev-Ye-Kitaev (SYK) model, with $N$ Majorana fermions --- $N$ large and tuned ${\rm mod} \, 8$ --- and $q>2$ and tuned ${\rm mod} \, 4$ \cite{GSV22}. On another front, again very recently, the emergence of GinUE behaviours in certain model many body quantum chaotic systems in the space direction has been demonstrated \cite{SDH22}. Of interest in both these lines of study is the so-called dissipative (connected) spectral form factor
$$
{\rm K}_N^{\rm c}(t,s) = {1 \over N} {\rm Cov} \, \Big (
\sum_{j=1}^N e^{i (x_j t + y_j s)}, \sum_{j=1}^N e^{-i (x_j t + y_j s)} \Big ).
$$
Making use of the first formula in (\ref{4.1a}) and the finite $N$ form of (\ref{2.2s}), this can be evaluated in terms of the hypergeometric function ${}_1F_1$ \cite[Eq.~(3)]{LPC21} (corrected in \cite[Appendix A]{GSV22a}; note too that both those references use global scaled variables, whereas we do not).

\begin{proposition}
We have
\begin{multline}
{\rm K}_N^{\rm c}(t,s)  = 1 \\ - {1 \over N} \sum_{m,n=0}^{N-1}
{(t^2+s^2)^{ |m-n|/2 } \over n! m! 2^{|m-n|}}
\bigg ( {{\rm max}(m,n)! \over |m-n|!} \,{}_1F_1\Big ({\rm max}(m,n)+1,|m-n|+1;- {t^2 + s^2 \over 4 }\Big ) \bigg )^2. 
\end{multline}
In particular, 
$$
\lim_{N \to \infty} {\rm K}_N^{\rm c}( t, s) = 1 - e^{-(t^2+s^2)/4};
$$
cf.~(\ref{J2a}).
\end{proposition}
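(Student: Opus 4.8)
The plan is to start from the covariance formula (\ref{4.1a}), taking the linear statistic $f(z)=e^{i(xt+ys)}$ and $g(z)=\overline{f(z)}=e^{-i(xt+ys)}$, so that $|f(z)|^{2}\equiv 1$. Using the first form of (\ref{4.1a}) together with the determinantal identity $\rho_{(2),N}^{T}(z,z')=-K_{N}(z,z')K_{N}(z',z)=-|K_{N}(z,z')|^{2}$ (the last step because $K_{N}(z',z)=\overline{K_{N}(z,z')}$, visible from (\ref{2.1g})), the contact term contributes $\int_{\mathbb C}\rho_{(1),N}(z)\,d^{2}z=N$, which produces the leading $1$ after division by $N$. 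For the remaining double integral I would expand $K_{N}(z,z')K_{N}(z',z)$ using the monomial series in (\ref{2.1g}) and integrate term by term; writing $w:=t+is$ so that $xt+ys=\tfrac12(\bar z w+z\bar w)$, the $z$- and $z'$-integrations decouple, each reducing to the Gaussian moment
\begin{equation*}
I_{m,n}\;:=\;\frac{1}{\pi}\int_{\mathbb C}e^{-|z|^{2}}\,z^{m}\bar z^{n}\,e^{\frac{i}{2}(\bar z w+z\bar w)}\,d^{2}z ,
\end{equation*}
the $z'$-integral being $\overline{I_{m,n}}$. This yields ${\rm K}_{N}^{\rm c}(t,s)=1-\tfrac1N\sum_{m,n=0}^{N-1}|I_{m,n}|^{2}/(m!\,n!)$.

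To identify the summand I would evaluate $I_{m,n}$ by applying $\partial_{\beta}^{m}\partial_{\alpha}^{n}$ to the elementary identity $\tfrac1\pi\int_{\mathbb C}e^{-|z|^{2}+\alpha\bar z+\beta z}\,d^{2}z=e^{\alpha\beta}$ and then setting $\alpha=\tfrac i2 w$, $\beta=\tfrac i2\bar w$ (whence $\alpha\beta=-\tfrac14(t^{2}+s^{2})$). The Leibniz rule produces a terminating sum of length $\min(m,n)+1$, i.e. up to an elementary prefactor an associated Laguerre polynomial $L^{(|m-n|)}_{\min(m,n)}$ in the argument $(t^{2}+s^{2})/4$; Kummer's transformation ${}_1F_1(a,b;x)=e^{x}\,{}_1F_1(b-a,b;-x)$ applied to the corresponding terminating ${}_1F_1\bigl(-\min(m,n),|m-n|+1;(t^{2}+s^{2})/4\bigr)$ re-expresses it with upper parameter $\max(m,n)+1$ and argument $-(t^{2}+s^{2})/4$; the factors $e^{\pm(t^{2}+s^{2})/4}$ cancel, leaving precisely the displayed summand. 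Collecting the prefactors gives the finite-$N$ formula.

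For the $N\to\infty$ statement I would instead use the second form of (\ref{4.1a}): since $g=\overline f$ one has $(f(z)-f(z'))(g(z)-g(z'))=|f(z)-f(z')|^{2}=2\bigl(1-\cos(t(x-x')+s(y-y'))\bigr)$, so ${\rm K}_{N}^{\rm c}(t,s)=\tfrac1{2N}\int_{\mathbb C}\!\int_{\mathbb C}|f(z)-f(z')|^{2}\,|K_{N}(z,z')|^{2}\,d^{2}z\,d^{2}z'$. Here $|K_{N}(z,z')|^{2}$ converges, when $z,z'$ lie in the bulk disk $|z|,|z'|<\sqrt N$ away from its edge, to $|K_{\infty}^{\rm b}(z,z')|^{2}=\pi^{-2}e^{-|z-z'|^{2}}$ (from (\ref{2.2b}), since $\Gamma(N;w\bar z)/\Gamma(N)\to 1$ exponentially fast for $|w\bar z|<N$ bounded away from $N$), while the edge annulus has area $O(\sqrt N)$ and, because $|f(z)-f(z')|^{2}\le 4$ and the integrand is concentrated on $|z-z'|=O(1)$, contributes only $O(\sqrt N)=o(N)$. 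Passing to centre-of-mass and difference coordinates, the centre integral gives the droplet area $\pi N+o(N)$ and the difference integral gives $\int_{\mathbb R^{2}}2\bigl(1-\cos(tu+sv)\bigr)e^{-(u^{2}+v^{2})}\,du\,dv=2\pi\bigl(1-e^{-(t^{2}+s^{2})/4}\bigr)$; hence ${\rm K}_{N}^{\rm c}(t,s)=\tfrac1{2N}\cdot\pi^{-2}(\pi N)\cdot 2\pi\bigl(1-e^{-(t^{2}+s^{2})/4}\bigr)+o(1)\to 1-e^{-(t^{2}+s^{2})/4}$, in agreement with the structure factor (\ref{J2a}) at $\mathbf k=(t,s)$.

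The Gaussian moment $I_{m,n}$ and the two hypergeometric manipulations are routine; the one point needing care is the large-$N$ step, where the replacement of $|K_{N}|^{2}$ by its bulk limit must be made uniform and the edge contribution controlled — a standard bulk-versus-edge estimate using the exponential decay of $\Gamma(N;\cdot)/\Gamma(N)$ inside the droplet. A variant of the limit argument, staying with the explicit series, is to use the sum rule $\sum_{n\ge 0}|I_{m,n}|^{2}/(m!\,n!)=e^{-(t^{2}+s^{2})/4}$ valid for every fixed $m$ (equivalently the completeness of displaced number states), so that $\tfrac1N\sum_{m,n<N}=e^{-(t^{2}+s^{2})/4}-\tfrac1N\sum_{m<N\le n}$, and to show the boundary double sum is $o(N)$ from the concentration of $n\mapsto|I_{m,n}|^{2}/(m!\,n!)$ around $n\approx m$ on the scale $O(\sqrt m)$; I expect the first route to be the cleaner one.
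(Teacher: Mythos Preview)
Your approach is correct and is exactly what the paper indicates (though the paper does not actually write out a proof --- it simply says to use the first formula in (\ref{4.1a}) together with the finite-$N$ determinantal identity $\rho_{(2),N}^T=-|K_N|^2$, and cites \cite{LPC21,GSV22a} for the ${}_1F_1$ evaluation). Your explicit evaluation of the Gaussian moment $I_{m,n}$ via the generating function $\pi^{-1}\!\int e^{-|z|^2+\alpha\bar z+\beta z}\,d^2z=e^{\alpha\beta}$, identification with an associated Laguerre polynomial, and the Kummer transformation to bring the upper parameter to $\max(m,n)+1$ are the standard steps and fill in precisely what the paper omits. For the $N\to\infty$ statement your first route, via the second form of (\ref{4.1a}) and the bulk approximation $|K_N|^2\to\pi^{-2}e^{-|z-z'|^2}$, is exactly the computation behind the structure-factor comparison (\ref{J2a}) the paper points to; the edge estimate you flag is indeed the only technical point, and your $O(\sqrt{N})$ bound (bounded integrand, $O(1)$ correlation length, $O(\sqrt{N})$ perimeter) is the correct heuristic.
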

\noindent
Also of interest in the many body quantum chaos application  is the GinUE average of $|{\rm Tr} \, X^k|^2$ for positive integer $k$ \cite{SDH22}.

\begin{proposition}
We have
$$
\Big \langle |{\rm Tr} \, X^k|^2 \Big \rangle_{\rm GinUE} =
{1 \over (k+1) (N-1)!}
\Big ( (k+N)! - {N! (N-1)! \over (N - k - 1)!} \Big ).
$$
In particular
$$
\lim_{N,k \to \infty \atop k/N=x} {1 \over k N^k} \Big \langle |{\rm Tr} \, X^k|^2 \Big \rangle_{\rm GinUE}=
{2 \sinh (x^2/2) \over x^2}.
$$
\end{proposition}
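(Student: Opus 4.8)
The plan is to turn the average into integrals against the one- and two-point correlation functions and then exploit the explicit monomial structure of the kernel $K_N$ from Proposition~\ref{P2.2}. Writing the eigenvalues as $z_1,\dots,z_N$ so that ${\rm Tr}\,X^k=\sum_{j=1}^N z_j^k$, I expand $|{\rm Tr}\,X^k|^2=\sum_{j=1}^N|z_j|^{2k}+\sum_{j\neq l}z_j^k\bar z_l^k$ and apply (\ref{2.1e}) together with the determinantal form (\ref{2.1f}), giving
\begin{equation*}
\Big\langle|{\rm Tr}\,X^k|^2\Big\rangle_{\rm GinUE}=\int_{\mathbb C}|z|^{2k}K_N(z,z)\,d^2z+\Big|\int_{\mathbb C}z^kK_N(z,z)\,d^2z\Big|^2-\int_{\mathbb C}\int_{\mathbb C}z^k\bar w^k\,K_N(z,w)K_N(w,z)\,d^2z\,d^2w.
\end{equation*}
Since $K_N(z,z)$ depends only on $|z|$, the middle term vanishes for $k\geq1$ after integrating over $\arg z$.

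Next I would substitute $K_N(z,z)=\frac{1}{\pi}e^{-|z|^2}\sum_{j=0}^{N-1}|z|^{2j}/j!$ and $K_N(z,w)=\frac{1}{\pi}e^{-(|z|^2+|w|^2)/2}\sum_{a=0}^{N-1}(z\bar w)^a/a!$ from (\ref{2.1g}) and pass to polar coordinates: the angular integrations kill all terms except the resonant ones (producing Kronecker deltas) and the radial integrals are the Gamma integrals $\int_0^\infty r^{2m+1}e^{-r^2}\,dr=m!/2$. This yields $\int_{\mathbb C}|z|^{2k}K_N(z,z)\,d^2z=\sum_{j=0}^{N-1}(k+j)!/j!$, while in the double integral the resonance condition forces $b=k+a$ with $a\leq N-1-k$, so it equals $\sum_{j=0}^{N-1-k}(k+j)!/j!$ (an empty sum, and the full first sum, when $k\geq N$). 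Subtracting and using the hockey-stick identity $\sum_{j=0}^m\binom{k+j}{k}=\binom{k+m+1}{k+1}$ gives $\langle|{\rm Tr}\,X^k|^2\rangle=k!\big(\binom{N+k}{k+1}-\binom{N}{k+1}\big)$; rewriting the binomials as factorials produces the claimed closed form $\frac{1}{(k+1)(N-1)!}\big((k+N)!-\frac{N!(N-1)!}{(N-k-1)!}\big)$, with the usual convention $1/(N-k-1)!=0$ for $k\geq N$.

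For the asymptotic statement I would factor $N^k$ out of each factorial quotient, using $\frac{(k+N)!}{N^k(N-1)!}=N\prod_{i=1}^k(1+i/N)$ and $\frac{N!}{N^k(N-k-1)!}=N\prod_{i=1}^k(1-i/N)$, so that the exact identity
\begin{equation*}
\frac{1}{kN^k}\Big\langle|{\rm Tr}\,X^k|^2\Big\rangle_{\rm GinUE}=\frac{N}{k(k+1)}\Bigg(\prod_{i=1}^k\Big(1+\frac{i}{N}\Big)-\prod_{i=1}^k\Big(1-\frac{i}{N}\Big)\Bigg)
\end{equation*}
holds. Taking logarithms, $\sum_{i=1}^k\log(1\pm i/N)=\pm\frac{k(k+1)}{2N}+O(k^3/N^2)$, so once $\frac{k(k+1)}{2N}\to\frac{x^2}{2}$ the two products converge to $e^{\pm x^2/2}$ and the prefactor $N/(k(k+1))$ to $1/x^2$, yielding $\frac{1}{x^2}(e^{x^2/2}-e^{-x^2/2})=\frac{2\sinh(x^2/2)}{x^2}$. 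There is nothing deep here; I expect the only points requiring care are the bookkeeping of the two summation ranges in the middle step — it is precisely the finite cutoff $N-1$ in the kernel sum that forces the shorter second sum and hence the factor $(N-k-1)!$ — and checking that the $O(k^3/N^2)$ remainder in the last display is genuinely $o(1)$, which is what pins down the admissible rate of growth of $k$ relative to $N$.
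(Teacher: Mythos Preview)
Your finite-$N$ computation is correct and is essentially the paper's approach carried out in full: the paper's sketch simply says the average is reduced to an integral of $z_1^k\bar z_2^k$ against $\rho_{(2),N}$ and cites \cite{SDH22,FR09}, which is exactly your decomposition into integrals against $K_N(z,z)$ and $K_N(z,w)K_N(w,z)$, followed by the explicit monomial/Gamma-integral evaluation and the hockey-stick identity.

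For the limit, however, you have half-spotted a problem but not resolved it. Your pivotal step ``once $\tfrac{k(k+1)}{2N}\to\tfrac{x^2}{2}$'' is the condition $k/\sqrt N\to x$, \emph{not} $k/N\to x$ as written in the Proposition; under that same condition the prefactor satisfies $N/(k(k+1))\to1/x^2$ and the remainder $O(k^3/N^2)=O(x^3/\sqrt N)\to0$, exactly as you need. If one takes the Proposition's scaling $k/N\to x>0$ literally, then $\sum_{i=1}^k\log(1+i/N)\sim N\!\int_0^x\log(1+t)\,dt$ diverges with $N$, the first product blows up, and the quantity $\frac{1}{kN^k}\langle|{\rm Tr}\,X^k|^2\rangle$ tends to $+\infty$, not to $2\sinh(x^2/2)/x^2$. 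So the stated limit formula is valid under $k/\sqrt N\to x$ (equivalently $k^2/N\to x^2$), and your closing remark about ``pinning down the admissible rate of growth of $k$'' is precisely where this should be made explicit: state the corrected scaling and note that the Proposition's $k/N=x$ appears to be a misprint for $k/\sqrt N=x$.
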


\begin{proof} (Sketch)
In \cite{SDH22} the  average is reduced to 
$\int_{\mathbb C} dz_1 \int_{\mathbb C} dz_2 \, \rho_{(2),N}(z_1,z_2) z_1^k \bar{z}_1^k$. Earlier, the evaluation of a more general quantity was given in \cite[Corollary 4]{FR09}.
\end{proof}

We conclude this subsection with a brief account of the use of the GinUE in an ensemble theory of Lindblad dynamics \cite{Ca19,D+19,SRP20a}. This relates to the evolution of the density matrix $\rho_t$ for an $N$-level dissipative quantum system in the so-called Markovian regime, specified by the master equation $\dot{\rho}_t = \mathcal L (\rho_t)$. Here the operator $\mathcal L$ assumes a special structure identified by Lindblad \cite{Li76}, and by Gorini, Kossakowski, and  Sudarshan \cite{GKS76}. Specifically $\mathcal L$ consists of the sum of two terms, the first corresponding to the familiar unitary von Neumann evolution, and the second to a dissipative part, being the sum over operators $D_L$ (referred to as simple dissipators), represented as $N^2 \times N^2$ matrices according to
$$
D_L = 2 L \otimes_T L^\dagger - L^\dagger L \otimes_T  \mathbb I_N - \mathbb I_N \otimes_T L^\dagger L,
$$
for some $N \times N$ matrix $L$. Here $A \otimes_T B := A \otimes B^T$, where $\otimes$ is the usual Kronecker product.
In an ensemble theory, there is interest in $F_N(t):= {1 \over N^2} \langle {\rm Tr} \, e^{t D_L } \rangle_L$ \cite{Ca19}.

\begin{proposition}
Let $L$ be chosen from GinUE with global scaling. We have
$$
\lim_{N \to \infty} F_N(t)
= e^{-4t} \Big ( I_0(2t) + I_1(2t) \Big )^2.
$$
\end{proposition}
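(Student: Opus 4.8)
The plan is to express the trace as a sum over eigenvalues, and then use the determinantal structure of GinUE to convert the ensemble average into a double integral against the two-point correlation function. Writing $\mathcal L_L = D_L$ as above, the operator $D_L$ acts on $N \times N$ matrices, and a standard computation with the Kronecker-product structure shows that if $\{z_j\}_{j=1}^N$ are the eigenvalues of $L$ (from GinUE) then the $N^2$ eigenvalues of $D_L$ are $2 z_j \bar{z}_k - |z_j|^2 - |z_k|^2 = -|z_j - z_k|^2$ — indeed, in the (non-orthonormal) eigenbasis of $L$ one checks $D_L (\mathbf v_j \mathbf w_k^T)$ involves exactly these scalars, where $\mathbf v_j, \mathbf w_k$ are right/left eigenvectors. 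Hence
\begin{equation}\label{pf:trace}
{\rm Tr} \, e^{t D_L} = \sum_{j,k=1}^N e^{-t |z_j - z_k|^2},
\end{equation}
and so $F_N(t) = \frac{1}{N^2} \langle \sum_{j,k} e^{-t|z_j - z_k|^2} \rangle_{\rm GinUE}$.

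Next I would split the double sum into the diagonal $j=k$, which contributes $N$ terms each equal to $1$, hence $\frac{1}{N^2} \cdot N = 1/N \to 0$, and the off-diagonal part. The off-diagonal part is $\frac{1}{N^2} \int_{\mathbb C} d^2 z_1 \int_{\mathbb C} d^2 z_2 \, e^{-t|z_1 - z_2|^2} \rho_{(2),N}(z_1, z_2)$, by the definition (\ref{2.1e}) of correlation functions with $k=2$. Since the paper uses global scaling here, I rescale $z_j \mapsto \sqrt{N} z_j$. By the circular law and the uniform-on-compacts bulk limit discussed after Proposition \ref{P2.3}, for large $N$ the rescaled two-point function $\frac{1}{N^2}\rho_{(2),N}(\sqrt N z_1, \sqrt N z_2)$ behaves in the bulk like $\rho_{(2),\infty}^{\rm b}$; using (\ref{2.1f}), (\ref{2.2b}) and (\ref{2.2t}) this is $\frac{1}{\pi^2}(1 - e^{-|z_1-z_2|^2})$ times the indicator that both points lie in the unit disk (with negligible edge corrections after the $N$-dependent rescaling of $t$). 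One must also account for the scaling of $t$: after $z \mapsto \sqrt{N} z$ the factor $e^{-t|z_j-z_k|^2}$ becomes $e^{-tN|z_1-z_2|^2}$, so to get a nontrivial limit one works with $t$ fixed and sets the microscopic separation $w = \sqrt{N}(z_1 - z_2)$, reducing the integral to a convolution of $e^{-t|w|^2}$ against the (bulk scaled, unrescaled) truncated and non-truncated correlations; the $w$-integral of $\frac{1}{\pi^2}(\pi\delta(w) + \rho_{(2),\infty}^{{\rm b},T})$-type kernel against $e^{-t|w|^2}$ is an elementary Gaussian integral. Carrying this through, and using $\int_{\mathbb C} e^{-t|w|^2} \frac{1}{\pi} d^2 w = 1/t$ and $\int_{\mathbb C} e^{-t|w|^2} e^{-|w|^2} \frac{1}{\pi} d^2 w = 1/(t+1)$, one obtains a closed expression. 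The cleanest route, however, is to observe that $\langle {\rm Tr}\, e^{tD_L}\rangle = \langle |{\rm Tr}\, e^{-t|z|^2 \text{-weighted}}|^2\rangle$-style quantity; more precisely, via \eqref{pf:trace} and Kostlan's theorem (Proposition \ref{P2.18}) the squared moduli $\{|z_j|^2\}$ are independent gammas, which together with the determinantal formula and an expansion in Laguerre/Bessel series produces the modified Bessel functions. I expect the generating-function manipulation to collapse to $e^{-4t}(I_0(2t)+I_1(2t))^2$ after recognising the series $\sum_{n\ge0} \frac{(2t)^{2n}}{(n!)^2}\binom{2n}{n}^{-1}\cdots$ as a product of Bessel series, matching the known identity $R^2 e^{-2R^2}(I_0(2R^2)+I_1(2R^2))$ appearing in (\ref{J1}) with $R^2 \leftrightarrow 2t$ up to squaring.

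The main obstacle will be the first step: correctly identifying the spectrum of the dissipator $D_L$ in terms of the eigenvalues of $L$ when $L$ is not normal. One cannot simply diagonalise $L$ by a unitary, so the argument must either use the Schur/triangular form of $L$ and show the relevant eigenvalues of $D_L$ depend only on the diagonal of the triangular factor (which is true because $D_L$ is, in the Schur basis, block-triangular with the stated scalars on the diagonal), or invoke that $D_L$ has the same eigenvalues as its restriction obtained by the Jordan form of $L$. Once \eqref{pf:trace} is established rigorously, the remainder is a controlled asymptotic analysis of a determinantal linear statistic of the form $\frac{1}{N}\sum_{j,k} f(z_j,z_k)$, for which the tools in \S\ref{S3.1} (especially (\ref{4.1a})) and the explicit kernel (\ref{2.1g}) suffice, modulo care with the global scaling and the uniform control of $\rho_{(2),N}$ near the edge — which is negligible here because $e^{-t|w|^2}$ is integrable and the edge region has vanishing relative measure after rescaling.
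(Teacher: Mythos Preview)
Your very first step is wrong, and the error propagates through the whole argument. The algebraic identity you write, $2z_j\bar z_k - |z_j|^2 - |z_k|^2 = -|z_j-z_k|^2$, is false: the right-hand side equals $-|z_j|^2-|z_k|^2+2\,{\rm Re}(z_j\bar z_k)$, so the two differ by $2i\,{\rm Im}(z_j\bar z_k)$. Even for \emph{normal} $L$, the eigenvalues of $D_L$ are the generically complex numbers $2z_j\bar z_k-|z_j|^2-|z_k|^2$, not the real numbers $-|z_j-z_k|^2$, so \eqref{pf:trace} is incorrect. More seriously, for a GinUE matrix $L$ is almost surely \emph{not} normal, and $D_L$ involves $L^\dagger$; since $L$ and $L^\dagger$ do not share eigenvectors, the spectrum of $D_L$ is \emph{not} a function of the eigenvalues of $L$ alone. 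Your suggestion that in the Schur basis $D_L$ becomes block-triangular with the claimed scalars on the diagonal is not right either: the terms $L^\dagger L\otimes_T\mathbb I$ and $\mathbb I\otimes_T L^\dagger L$ contain $Z^\dagger Z$, which is not triangular, so the strictly upper-triangular part of $Z$ enters the spectrum of $D_L$.

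The paper's proof takes a completely different route that avoids the eigenvalues of $L$ altogether. The key observation (justified in \cite{Ca19} by a diagrammatic/planar argument) is that in the normalised trace the cross term $2L\otimes_T L^\dagger$ is subleading as $N\to\infty$, so one may replace $D_L$ by $-(L^\dagger L\otimes_T\mathbb I + \mathbb I\otimes_T L^\dagger L)$. The eigenvalues of the latter are $-(x_j+x_l)$ with $\{x_j\}$ the \emph{singular values squared} of $L$, giving $F_N(t)\to\big(\lim_{N}\frac1N\langle\sum_j e^{-tx_j}\rangle\big)^2$. This last average is a linear statistic for the complex Wishart ensemble, and the Marchenko--Pastur density $\rho^{\rm MP}(x)=\frac{1}{2\pi}\sqrt{(4-x)/x}\,\chi_{0<x<4}$ yields $\int_0^4 e^{-tx}\rho^{\rm MP}(x)\,dx=e^{-2t}(I_0(2t)+I_1(2t))$, whose square is the stated limit. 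Any repair of your approach would need to bring in the singular values and the Marchenko--Pastur law, not the circular law.
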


\begin{proof} (Sketch)
Following Can \cite{Ca19}, using a diagrammatic calculus, it is first demonstrated that
$$
\lim_{N \to \infty} {1 \over N^2} \langle D_L^k \rangle =
\lim_{N \to \infty} {(-1)^k \over N^2}
\Big \langle \Big ( L^\dagger L 
 \otimes_T  \mathbb I_N + \mathbb I_N \otimes_T L^\dagger L \Big )^k \Big \rangle_{L \in {\rm GinUE}}.
 $$
 The average on the RHS, in terms of the eigenvalues $\{x_j\}$  of $L^\dagger L$,  reads
 $\langle \sum_{j,l=1}^N (x_j + x_l)^k \rangle_{L^\dagger L}$ and consequently
 $$
 \lim_{N \to \infty} F_N(t)=
 \lim_{N \to \infty} 
 \Big \langle {1 \over N^2}
 \sum_{j,l=1}^N e^{-t(x_j + x_l)} \Big \rangle_{L^\dagger L} = \lim_{N \to \infty} 
\bigg ( \Big \langle  {1 \over N}
 \sum_{j=1}^N e^{-tx_j} \Big \rangle_{L^\dagger L} \bigg )^2.
 $$
The latter is the mean of a linear statistic in the ensemble $\{ L^\dagger L \}$ (complex Wishart matrices; see e.g.~\cite[\S 3.2]{Fo10}). Using the Marchenko-Pastur law for the global density of this ensemble 
(see e.g.~\cite[\S 3.4.1]{Fo10}), the stated result follows.
\end{proof}

\begin{remark} (Classification of non-Hermitian matrices)
It was commented in the Introduction that, in distinction to Dyson's viewpoint based on symmetry considerations, Ginibre's study \cite{Gi65} was no similarly motivated. Nowadays however, it is recognised that a symmetry viewpoint is fundamental to topological driven effects in non-Hermitian quantum physics \cite{AGU20}. Starting with \cite{BL02,Ma07} and continuing in \cite{KSUS19}, a classification scheme based on symmetries with respect to the involutions of transpose, complex conjugation and Hermitian conjugation, and in which the (anti-)commutation relation involves unitary matrices satisfying certain quadratic relations in terms of these involution, has been given. For example, defining the block unitary matrix $P={\rm diag} \, (\mathbb I_N, - \mathbb I_N)$, and requiring that the matrix ensemble $\{A \}$ have the (anti-)symmetry $A = - P A P$, gives that each $A$ has the form
\begin{equation}\label{S.1w}
A = \begin{bmatrix} 0_{N \times N} & X \\ Y & 0_{N \times N} \end{bmatrix}
\end{equation}
for some square matrices $X,Y$. Denoting the eigenvalues of the matrix product $XY$ as $\{ -z_j^2 \} $, one sees that the eigenvalues of $A$ are $\{ \pm i z_j \} $.

In keeping with the viewpoint of this subsection, a basic question are signatures of the symmetry in the eigenvalue spectrum. For example, in (\ref{S.1w}), with $X,Y$ GinUE matrices, are the bulk scaled eigenvalues of $A$ statistically distinct from individual GinUE matrices? We know from the results quoted in the paragraph above Remark \ref{R2.17} that the answer in this case is no. However the answer to this question is yes, if instead the symmetry is that $A = A^T$, for the independent entries of $A$ standard complex Gaussians. This was demonstrated in \cite{HKKU20} by a numerical study of the nearest neighbour spacing distribution, and the relevance to Lindblad dynamics discussed.
\end{remark}

\subsection{Singular values}\label{S6.3}
One recalls that for a complex square matrix $X$ the squared singular values are the eigenvalues of $X^\dagger X$. For a general ensemble of non-Hermitian matrices $\{X\}$, motivation to study the singular values comes from various viewpoints. For example, in Remark \ref{R2.17}.4, singular values (specifically of product matrices) appeared in the context of Lyapunov exponents. As other example, one recalls that plus/minus of the singular values are the eigenvalues of the $2N \times 2N$ Hermitian matrix
$$
H = \begin{bmatrix} 0_{N \times N}&X \\ X^\dagger & 0_{N \times N}
\end{bmatrix}.
$$
The importance of this in relation to the eigenvalues of $X$ is that resolvent associated with $H$ is fundamental to the study of the circular law for the spectral density beyond the Gaussian case; see e.g.~\cite[\S 4.1]{BC12}. Another piece of theory is that the condition number $\kappa_N$ associated with $X$ is equal to the ratio of the smallest to the largest singular value \cite{Ed88}.   And from the identity $| \det X| = | \det X^\dagger X|^{1/2}$ the distribution of the modulus of $\det X$ is determined by the singular values.

For the GinUE, the squared singular values $\{s_j\}_{j=1}^N$ say are known to have for their joint distribution a PDF proportional to
\begin{equation}\label{cs1}
\prod_{j=1}^N e^{-s_j} \prod_{1 \le j < k \le N} (s_k - s_j)^2, \quad s_j \in \mathbb R_+;
\end{equation}
see e.g.~\cite[Prop.~3.2.2 with $\beta = 2$, $n=m=N$]{Fo10}. After scaling by $N$, almost surely the largest squared singular value has the limiting value $4$  \cite{PS11}. However, after the same scaling, a simple change of variables in (\ref{cs1}) integrated from $(s,\infty)$ in each variable reveals that the smallest singular value is an exponential random variable with rate parameter $N^2$. Putting these facts together implies that for large $N$, $\kappa_N/N$ is distributed according to the heavy tailed distribution with PDF ${8 \over x^3} e^{-4/x^2}\chi_{x > 0}$ \cite{Ed88}. Also, for $n \times N$ ($n \ge N$) rectangular GinUE matrices, it is proved in \cite{CD05} that $\langle \log \kappa_N \rangle < {N \over |n-N|+1} +2.24$, for any $N \ge 2$.

Let $P_N(t)$ denote the PDF for the distribution of $|\det X|^2$ for GinUE matrices. Making use of knowledge of the PDF of squared singular values (\ref{cs1}) shows that the Mellin transform of $P_N(t)$ is equal to the multiple integral
\begin{equation}\label{cs2}
{1 \over C_N} \int_0^\infty ds_1 \cdots \int_0^\infty ds_N \, \prod_{j=1}^N s_j^{s-1} e^{-s_j} \prod_{1 \le j < k \le N} (s_k - s_j)^2 =
\prod_{j=0}^{N-1}{ \Gamma(s+j) \over \Gamma(1+j)}.
\end{equation}
Here the normalisation $C_N$ is such that the expression equals unity for $s=1$, while the evaluation of the multiple integral follows as a special case of the Laguerre weight Selberg integral; see e.g.~\cite[Prop.~4.7.3]{Fo10}. As noted in \cite[Eq.~(2.17)]{FZ18}
(see also \cite[Prop.~2.2]{Ro07}), it follows immediately from this that
\begin{equation}\label{cs3}
| \det X |^2 \mathop{=}\limits^{\rm d} \prod_{l=1}^N {1 \over 2} \chi_{2l}^2.
\end{equation}
In words this says that the absolute value squared of the determinant of GinUE matrices is equal in distribution to the product of $N$ independent chi-squared distributions, with degrees of freedom $2,4,\dots,2N$, each scaled by a factor of 2.
Starting from (\ref{cs3}), and defining the global scaled GinUE matrices $X^{\rm g}$ by $X^{\rm g} = {1 \over \sqrt{N}} X$, the distribution of $\log |\det X^{\rm g}|^2$ can be shown to have leading order mean $-N$, variance $\log N$, and after recentring and rescaling satisfy a central limit theorem \cite[Th.~3.5]{Ro07}. For a general linear statistic $\sum_{j=1}^N f(z_j)$ of global scaled GinUE matrices, the leading order mean is ${N \over \pi} \int_{|z|<1} f(z) \, d^2z$. For $f(z) = \log |z|^2$, this gives the stated value of $-N$. Also, we notice that substituting this choice of $f(z)$ in the variance formula implied by (\ref{5.2e}) gives ${1 \over \pi} \int_{|\mathbf r|<1} {1 \over x^2+y^2} \, dx dy$, which is not integrable at the origin, in keeping with the variance actually diverging as $\log N$.

There is an alternative viewpoint on the result (\ref{cs3}) which does not require knowledge of the joint distribution of the singular values (\ref{cs1}), nor the evaluation of the multiple integral (\ref{cs2}). The idea, used in both \cite{Ro07,FZ18} and which goes back to Bartlett \cite{Ba34} in the case of real Gaussian matrices, is to decompose $X$ in terms of its QR (Gram-Schmidt) decomposition. The matrix of orthonormal vectors $Q$ constructed from the columns  of $X$ will for $X \in {\rm GinUE}$, be a Haar distributed unitary matrix, which we denote by $U$. The matrix $R=[r_{jk}]_{j,k=1}^N$ is upper triangular with diagonal elements real and positive. One notes
\begin{equation}\label{cs4b}
\det X^\dagger X = \prod_{j=1}^N r_{jj}^2, 
\end{equation}
and so it suffices to have knowledge on the distribution of $\{r_{jj}\}_{j=1}^N$ for $X$. 

\begin{proposition}\label{P6R}
Let $\{r_{jj}\}_{j=1}^N$ denote the diagonal elements in the QR decomposition of a GinUE matrix $X$. We have
\begin{equation}\label{cs4a}
r_{jj}^2 \mathop{=}\limits^{\rm d}  {1 \over 2} \chi_{2j}^2.
\end{equation}
\end{proposition}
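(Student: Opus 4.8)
The plan is to exploit the Gram--Schmidt (QR) construction directly, computing the joint distribution of the diagonal entries $\{r_{jj}\}$ column by column. Write $X = [\,\mathbf{x}_1 \mid \mathbf{x}_2 \mid \cdots \mid \mathbf{x}_N\,]$ where the columns $\mathbf{x}_k \in \mathbb{C}^N$ are independent, each with i.i.d.\ standard complex Gaussian entries. The Gram--Schmidt procedure produces an orthonormal sequence $\mathbf{q}_1,\dots,\mathbf{q}_N$ together with the upper triangular $R=[r_{jk}]$ via $r_{jk} = \mathbf{q}_j^\dagger \mathbf{x}_k$ for $j<k$, and $r_{kk} = \|\mathbf{x}_k - \sum_{j=1}^{k-1} r_{jk}\mathbf{q}_j\|$. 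The first step is to observe that the partial orthonormal frame $\mathbf{q}_1,\dots,\mathbf{q}_{k-1}$ depends only on $\mathbf{x}_1,\dots,\mathbf{x}_{k-1}$, hence is independent of $\mathbf{x}_k$. Then, conditionally on $\mathbf{q}_1,\dots,\mathbf{q}_{k-1}$, the key fact is the unitary invariance of the complex Gaussian vector $\mathbf{x}_k$: choosing any unitary $V$ that maps $\mathrm{span}(\mathbf{q}_1,\dots,\mathbf{q}_{k-1})$ to the coordinate subspace spanned by the first $k-1$ standard basis vectors, $V\mathbf{x}_k$ again has i.i.d.\ standard complex Gaussian entries.

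The second step is the explicit computation. The quantity $r_{kk}^2 = \|\mathbf{x}_k - P_{k-1}\mathbf{x}_k\|^2$ is the squared norm of the projection of $\mathbf{x}_k$ onto the orthogonal complement of $\mathrm{span}(\mathbf{q}_1,\dots,\mathbf{q}_{k-1})$, a subspace of complex dimension $N-(k-1) = N-k+1$. After the unitary rotation above, this becomes $\sum_{i=k}^{N} |(\mathbf{x}_k)_i|^2$, a sum of $N-k+1$ independent terms each of the form $|a+ib|^2$ with $a,b$ i.i.d.\ real Gaussian of variance $\tfrac12$ (so that $\mathbb{E}|a+ib|^2 = 1$). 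Each such term is distributed as $\tfrac12\chi_2^2$, and the sum of $N-k+1$ of them is $\tfrac12\chi^2_{2(N-k+1)}$. Reindexing $j = N-k+1$ (which runs over $1,\dots,N$ as $k$ does) gives $r_{jj}^2 \mathop{=}\limits^{\rm d} \tfrac12\chi^2_{2j}$, which is \eqref{cs4a}. Moreover, since conditionally on the earlier columns the $\mathbf{x}_k$ are independent, and $r_{kk}$ depends on $\mathbf{x}_k$ only, the collection $\{r_{jj}^2\}_{j=1}^N$ is mutually independent; combined with \eqref{cs4b} this immediately re-derives \eqref{cs3}.

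The only mild subtlety --- and the place requiring care rather than genuine difficulty --- is the conditioning argument: one must check that the distribution of $r_{kk}^2$ given $\mathbf{q}_1,\dots,\mathbf{q}_{k-1}$ does not actually depend on the conditioning frame, so that the conditional law is the unconditional law and the $\{r_{jj}^2\}$ are independent. This follows because the conditional law of $r_{kk}^2$ is computed entirely from the projection onto a fixed-dimension subspace and, by unitary invariance of $\mathbf{x}_k$, depends only on that dimension. Everything else is a routine identification of $\sum_{i=1}^{m}|a_i+ib_i|^2$ with $\tfrac12\chi^2_{2m}$. I would present the argument in exactly this order: independence of the frame from the new column, unitary invariance to reduce to a coordinate projection, the chi-squared identification, and finally assembling independence and \eqref{cs3}.
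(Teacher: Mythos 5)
Your proof is correct, and it takes a genuinely different route from the paper's. The paper works from the Jacobian of the $QR$ change of variables: it invokes the measure factorisation $(dX) = \prod_{j} r_{jj}^{2(N-j)+1}\,(dR)\,(U^\dagger dU)$ (cited to Forrester, Prop.~3.2.5), observes that the GinUE element density factorises over the $|r_{jk}|^2$, and then integrates out $U$ and the off-diagonal part of $R$ to read off the marginal density of each $r_{jj}$ as proportional to $r^{2(N-j)+1}e^{-r^2}$. Your argument is instead the Bartlett-style sequential one: construct the orthonormal frame column-by-column, use unitary invariance of the $k$-th Gaussian column to identify $r_{kk}^2$ with the squared length of a projection onto a coordinate subspace of complex dimension $N-k+1$, and conclude $r_{kk}^2\stackrel{\rm d}{=}\tfrac12\chi^2_{2(N-k+1)}$ directly. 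What your route buys is self-containedness --- no Jacobian formula needs to be imported --- and the mutual independence of the $\{r_{jj}^2\}$ falls out transparently from the conditioning argument, rather than being read off from the product structure of the integrated measure; what it costs is a bit more bookkeeping about the conditional law not depending on the frame, which you correctly flag as the one place that needs care. Both arguments in fact produce $r_{jj}^2\stackrel{\rm d}{=}\tfrac12\chi^2_{2(N-j+1)}$, matching (\ref{cs4a}) only up to the harmless reversal $j\mapsto N-j+1$ --- a labelling issue present in the paper's own derivation as well, and irrelevant to the downstream use in (\ref{cs3}). One small imprecision in your write-up: the sentence stating that $r_{kk}$ ``depends on $\mathbf{x}_k$ only'' is not literally true (it also depends on the frame from $\mathbf{x}_1,\dots,\mathbf{x}_{k-1}$); your closing paragraph gives the correct statement, namely that the \emph{conditional law} given the frame is constant, and you could simply phrase it that way from the start.
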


\begin{proof}
The QR decomposition $X = UR$  gives the corresponding decomposition of measure
(see e.g.~\cite[Prop.~3.2.5]{Fo10})
$$
(dX) = \prod_{j=1}^N r_{jj}^{2(N-j)+1} (dR) (U^\dagger U),
$$
where as anticipated $(U^\dagger U)$ is recognised as Haar measure on the space of complex unitary matrices. The element distribution of GinUE matrices is 
proportional to $e^{-{\rm Tr} \, X^\dagger X} = e^{- \sum_{1 \le j \le k \le N} |r_{jk}|^2}$. The various factorisations implies that integrating over $U$ and the off diagonal elements of $R$ only changes the normalisation. We then read off that each $r_{jj}$ has a distribution with PDF proportional to $r^{2(N-j)+1}e^{-r^2}$, which implies (\ref{cs4a}).
\end{proof}

Using (\ref{cs4a}) in (\ref{cs4b}) reclaims (\ref{cs3}).

\begin{remark}
1.~Since with $\{z_j\}$ the eigenvalues of $X$, $| \det X |^2 = \prod_{j=1}^N | z_j|^2$, the fact that the Mellin transform of the distribution of this quantity is given by the product of gamma functions in (\ref{cs2}) implies
\begin{equation}\label{L.1}
    \Big \langle \prod_{l=1}^N |z_l|^{2(s-1)} \Big \rangle_{\rm GinUE}^{\rm g} =
    N^{N(s-1)}\prod_{j=0}^{N-1} {\Gamma(s+j) \over \Gamma(1+j)}.
    \end{equation}
    Here the superscript "g" indicates the use of global scaling coordinates $z_l \mapsto \sqrt{N} z_l$. 
 We observe that knowledge of the induced GinUE normalisation $C_{n,N}$ in Proposition \ref{P2.8a} provides a direct derivation of  (\ref{L.1}). For large $N$ this ratio of gamma functions can be written in terms of the Barnes $G$-function according to ${G(N+s) \over G(N+1) G(s)}$; see \cite[Eq.~(4.183)]{Fo10}. Known asympotics for ratios of the Barnes $G$-function (see e.g.~\cite[Eq.~(4.185)]{Fo10} then gives that for large $N$, and with $s = \gamma/2+1$ for convenience,
 \begin{equation}\label{L.1a}
    \Big \langle \prod_{l=1}^N |z_l|^{\gamma} \Big \rangle_{\rm GinUE}^{\rm g} \sim N^{\gamma^2/8} e^{-(\gamma/2)N} {(2 \pi)^{\gamma/4} \over G(1 + \gamma/2)}.
    \end{equation}
    This is the special case $z=0$ of an asymptotic formula for $ \langle \prod_{l=1}^N |z-z_l|^{\gamma}  \rangle_{\rm GinUE}^{\rm g}$ given by Webb and Wong \cite[Th.~1.1]{WW19}.
    \\
    2.~It is a standard result in random matrix theory (see e.g.~\cite{PS11}) that the density of singular values in (\ref{cs1}), after the global scaling $s_j \mapsto s_j N$, as the particular Marchenko-Pastur form $\rho^{|rm MP}_{(1),\infty}(x) = {1 \over 2 \pi} ({4 - x \over x} )^{1/2} \chi_{0 < x < 4}$. The $k$-th moment of the density is given in terms of the particular mixed moment of a global scaled Ginibre matrix $\tilde{G}$, $\langle {\rm Tr} (\tilde{G}^\dagger G)^k \rangle$. The calculation of these moments for large $N$ relates to free probability --- see the recent introductory text \cite{PB20} for the main ideas --- and to combinatorics as is seen from the fact that $\int_0^4 x^k \rho^{\rm MP}_{(1),\infty}(x) \, dx= C_k$, where $C_k$ denotes the $k$-th Catalan number. Works on mixed moments of Ginibre matrices include \cite{WS15,DF20,HL20,De22}. \\
    3.~The squared singular values as specified by the PDF (\ref{cs1}) form a determinantal points process, being a special case of the classical Laguerre unitary ensemble; see e.g.~\cite[Chapters.~3 and 5]{Fo10}. This is similarly true of the squared singular values of the various extensions of GinUE considered above: for example in the case of the spherical model and truncated unitary matrices, it is the classical Jacobi unitary ensemble which arises, while the singular values of products of GinUE matrices, or of truncated unitary matrices, gives rise to a class of determinantal point processes called P\'olya ensembles \cite{KS14,KKS16,KK19,FKK21}.  A notable exception is the singular values of elliptic GinUE matrices, which form a Pfaffian point process \cite{KK18}.
    
 \end{remark}
 
\subsection{Eigenvectors}
Associated with the set of eigenvalues $\{\lambda_j\}$ of a Ginibre matrix $G$ are two sets of
eigenvectors --- the left eigenvectors $\{ \boldsymbol \ell_j \}$ such that $ \boldsymbol \ell_j^T G = \lambda_j  \boldsymbol \ell_j^T$,
and the right eigenvectors $\{ \mathbf r_j \}$ such that $G \mathbf r_j = \lambda_j  \mathbf r_j$.
These are not independent, but rather (upon suitable normalisation), form a biorthogonal set
\begin{equation}\label{3.1a}
 \boldsymbol \ell_i^T \mathbf r_j = \delta_{i,j}. 
\end{equation}
This property follows from the diagonalisation formula
$G = X D X^{-1}$, 
where $X$ is the matrix of right eigenvectors, $D$ the diagonal matrix of eigenvectors,
and $X^{-1}$ identified as the matrix of left eigenvectors.
For nonzero scalars $\{c_i\}$ we see that (\ref{3.1a}) is unchanged by the rescalings $ \mathbf r_j  \mapsto c_j  \mathbf r_j $
and $ \boldsymbol \ell_j   \mapsto (1/c_j)  \boldsymbol \ell_j $. 

For $N \times 1$ column vectors $\mathbf u, \mathbf v$, define the inner product $\langle \mathbf u,  \mathbf v \rangle :=
\bar{ \mathbf u}^T  \mathbf v $.  The so called overlap matrix has its elements $\mathcal O_{ij}$ expressed in terms
of this inner product according to
\begin{equation}\label{3.1b}
\mathcal O_{ij} := \langle \boldsymbol \ell_i,  \boldsymbol \ell_j \rangle  \langle \mathbf r_i,  \mathbf r_j \rangle.
\end{equation}
Note that this is invariant under the mappings noted in the final sentence of the above paragraph, and for fixed
$i$ and summing over $j$ gives $1$. Also, it follows from (\ref{3.1b}) that the diagonal entries relate to the lengths
\begin{equation}\label{3.1c}
\mathcal O_{jj} = ||  \boldsymbol \ell_j ||^2  ||  \mathbf r_j ||^2.
\end{equation}
The square root of this quantity is known as the eigenvalue condition number; see the introduction to \cite{BD20}
and \cite[\S 1.1]{CR22}
for further context and references. Significant too is the fact that the overlaps (\ref{3.1c}) appear in the specification
of a Dyson Brownian motion extension of GinUE \cite{GW18,BD20}.

Statistical properties of $\{ O_{ij} \}$ for GinUE were first considered by Chalker and Mehlig \cite{CM98,CM00}.
By the Schur decomposition (\ref{GS}), instead of a GinUE matrix $G$, we may consider an upper triangular matrix
$Z$ with the eigenvalues $\{z_j\}$ of $G$ on the diagonal, and off diagonal entries standard complex Gaussians.
For the eigenvalue $\lambda_1$, the triangular structure shows that $\boldsymbol \ell_1 =(1,b_2,\dots,b_N)^T$
and $\mathbf r_1=(1,0,\dots,0)^T$ where for $p > 1$ and $b_1 = 1$, $b_p = {1 \over z_1 - z_p} \sum_{q=1}^{p-1}
b_q Z_{pq}$. From this last relation, it follows that with $\boldsymbol \ell_1^{(n)}  =(1,b_2,\dots,b_n)^T$ for
$n < N$ we have
\begin{equation}\label{3.1d}
|| \boldsymbol \ell_1^{(n+1)} ||^2 = || \boldsymbol \ell_1^{(n)} ||^2 \Big ( 1 + {1 \over | z_1 - z_{n+1}|^2} \Big |
\sum_{q=1}^n \tilde{b}_q Z_{(n+1)q} \Big |^2 \Big ), \qquad \tilde{b}_q := {b_q \over \sqrt{\sum_{q=1}^n |b_q|^2} }.
\end{equation}
This has immediate consequence in relation to $\mathcal O_{11} $  as shown by 
Bourgade and Dubach \cite{BD20}.

 \begin{proposition}\label{P3.1+}
 Let the eigenvalues $\{z_j \}$ be given. We have
 \begin{equation}\label{3.1e}
 \mathcal O_{11} \mathop{=}\limits^{\rm d} \prod_{n=2}^N \Big ( 1 + { |X_n|^2 \over | z_1 - z_n|^2} \Big ),
 \end{equation}
 where each $X_n$ is an independent complex standard Gaussian. Furthermore, it follows from this that after averaging over $\{z_2,\dots,z_N\}$
 \begin{equation}\label{3.1f}
  \mathcal O_{11} \Big |_{z_1=0}  \mathop{=}\limits^{\rm d}  {1 \over {\rm B}[2,N-1]},
 \end{equation}  
 where ${\rm B}[\alpha,\beta]$
   refers to the beta distribution. 
 \end{proposition}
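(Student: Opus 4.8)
The starting point is the distributional identity \eqref{3.1e}, which expresses $\mathcal{O}_{11}$ conditionally on the eigenvalues $\{z_j\}$ as a product of independent factors $1 + |X_n|^2/|z_1 - z_n|^2$. First I would specialise to $z_1 = 0$, so that the conditional law becomes $\prod_{n=2}^N (1 + |X_n|^2/|z_n|^2)$, and then integrate over the remaining eigenvalues $\{z_2,\dots,z_N\}$. The key structural input is Kostlan's theorem (Proposition \ref{P2.18}): for the GinUE eigenvalue PDF \eqref{1.1f}, which is of the form \eqref{2.9} with $w(s) = e^{-s}$, conditioning on $z_1 = 0$ leaves the remaining squared moduli $\{s_j = |z_j|^2\}_{j=2}^N$ distributed — after the appropriate reordering — as independent gamma random variables. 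Concretely, conditioning one eigenvalue at the origin contributes an extra factor $\prod_{j\ge 2}|z_j|^2$ to the weight, effectively shifting each gamma parameter up by one, so that the ordered squared moduli become independent with $s_j \sim \Gamma[j;1]$ for $j = 2,\dots,N$ (or, depending on bookkeeping, $\Gamma[j-1;1]$ for $j=2,\dots,N$ together with the removal of the $\Gamma[1;1]$ factor; I would pin down the exact indexing via \eqref{2.9a} and the normalisation \eqref{2.85a}).

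Next, since the $X_n$ are independent of the eigenvalues and of each other, and since the product in \eqref{3.1e} at $z_1 = 0$ factorises over $n$ with the $n$-th factor depending only on $X_n$ and $s_n = |z_n|^2$, the averaged $\mathcal{O}_{11}|_{z_1=0}$ is a product of $N-1$ independent random variables, the generic one being $1 + E/S$ where $E = |X_n|^2$ is a standard exponential (i.e. $\Gamma[1;1]$) and $S \sim \Gamma[k;1]$ is independent, for the appropriate range of $k$. The plan is then to identify the law of $\prod_k (1 + E_k/S_k)$ with the reciprocal of a beta random variable. The cleanest route is via a Mellin-transform (moment) computation: compute $\mathbb{E}[(1 + E/S)^{s}]$ for the single factor using the joint density of $(E,S)$, recognise it as a ratio of gamma functions of the form $\Gamma(k - s)\Gamma(\text{const})/(\Gamma(k)\Gamma(\text{const} - s))$, take the product over the relevant $k$, observe the telescoping of the gamma ratios, and match the result against the known Mellin transform of $1/\mathrm{B}[2,N-1]$, namely $\mathbb{E}[(\mathrm{B}[2,N-1])^{-s}] = \Gamma(N+1-s)\,\Gamma(2)/(\Gamma(N+1)\,\Gamma(2-s))$. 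Equality of Mellin transforms on a strip, together with the determinacy of these distributions by their moments (all the relevant laws have sufficiently light tails), gives the distributional identity \eqref{3.1f}.

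An alternative, more probabilistic route that avoids transforms: use the elementary fact that if $E \sim \Gamma[1;1]$ and $S \sim \Gamma[k;1]$ are independent then $S/(S+E) \sim \mathrm{B}[k,1]$, so $1 + E/S = (S+E)/S$ has the law of $1/\mathrm{B}[k,1]$; moreover $S + E \sim \Gamma[k+1;1]$ and is independent of $S/(S+E)$. Chaining these beta-gamma algebra identities across the $N-1$ factors — each time peeling off a $\mathrm{B}[\cdot,1]$ factor and passing the gamma sum to the next stage — produces a telescoping product of independent $\mathrm{B}[k,1]^{-1}$ variables whose overall law collapses to $1/\mathrm{B}[2,N-1]$. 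I would present whichever of these two is shorter; the Mellin approach is likely more transparent for the telescoping.

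\textbf{Main obstacle.} The genuinely delicate point is the reduction of the eigenvalue average to independent gamma variables with the correct parameters after conditioning $z_1 = 0$ — i.e. the precise form of Kostlan's theorem for the conditioned ensemble, including getting the index shift and the range of indices exactly right. Once that is in hand, the remaining steps are standard beta-gamma algebra (or a short Mellin-transform computation) and the telescoping is essentially automatic. A secondary technical point to be careful about is justifying the exchange of the conditional expectation over the $X_n$ with the subsequent average over the eigenvalues, which is immediate from independence and positivity (Tonelli), and confirming moment-determinacy of the distributions involved so that equality of Mellin transforms yields equality in law.
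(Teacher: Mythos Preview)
Your treatment of \eqref{3.1f} is correct and your ``probabilistic'' route is exactly what the paper does: after conditioning on $z_1=0$, Kostlan's result applied to the conditional ensemble (weight $w(s)=se^{-s}$) gives $\{|z_j|^2\}_{j=2}^N \stackrel{\rm d}{=}\{\Gamma[j;1]\}_{j=2}^N$ independent; then $Y_n/(Y_n+\tilde X_n)\stackrel{\rm d}{=}{\rm B}[n,1]$ reduces each factor to $1/{\rm B}[n,1]$, and the paper collapses the product via the identity ${\rm B}[\alpha,\beta]\cdot {\rm B}[\alpha+\beta,\gamma]\stackrel{\rm d}{=}{\rm B}[\alpha,\beta+\gamma]$ applied iteratively. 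Your Mellin alternative gives the same telescoping in gamma ratios and is a perfectly good substitute. One small wording issue: there is no ``passing the gamma sum to the next stage'' --- the $Y_n$ are independent across $n$, so the ${\rm B}[n,1]^{-1}$ factors are genuinely independent and only the product-of-betas identity is needed.

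However, you have treated \eqref{3.1e} as input, whereas it is the first claim of the proposition and must be proved. The paper derives it from the preceding recursion \eqref{3.1d}: since $\|\mathbf r_1\|=1$, one has $\mathcal O_{11}=\|\boldsymbol\ell_1\|^2$, and iterating \eqref{3.1d} yields
\[
\mathcal O_{11}=\prod_{n=2}^N\Big(1+\frac{1}{|z_1-z_n|^2}\Big|\sum_{q=1}^{n-1}\tilde b_q Z_{nq}\Big|^2\Big),
\]
with $(\tilde b_1,\dots,\tilde b_{n-1})$ a unit vector. The identification with \eqref{3.1e} then uses that a standard complex Gaussian vector dotted with any (deterministic, or independent) unit vector is again standard complex Gaussian, together with the fact that the rows $\{Z_{nq}\}_{q<n}$ for different $n$ are independent, so the successive factors are conditionally independent given the eigenvalues. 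This step is short but is the content of the first half of the proposition; your proposal should include it rather than take it for granted.
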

 
 \begin{proof}
 A product formula for $\mathcal O_{11}$ follows from (\ref{3.1c}), the fact that $||  \mathbf r_1 || = 1$, and by iterating
 (\ref{3.1d}). This product formula is identified with the RHS of (\ref{3.1e})  upon noting that a vector of independent standard
 complex Gaussians dotted with any unit vector (here $(\tilde{b}_1,\dots, \tilde{b}_n)$) has distribution equal to a standard complex Gaussian.
 
 In relation to (\ref{3.1f}) a minor modification of the proof of Proposition \ref{P2.18} shows that conditioned on $z_1 = 0$, the
 ordered squared moduli $\{ | z_j |^2 \}_{j=2}^N$ are independently distributed as $\{ \Gamma[j;1] \}_{j=2}^N$. Noting too
 that with each $X_j$ a standard complex Gaussian, $|X_j|^2  \mathop{=}\limits^{\rm d} \Gamma[1;1]$ it follows
 $$
   \mathcal O_{11} \Big |_{z_1=0}  \mathop{=}\limits^{\rm d}  \prod_{n=2}^N \Big ( 1 + {\tilde{X}_n \over Y_n} \Big ), \qquad
   \tilde{X}_n  \mathop{=}\limits^{\rm d} \Gamma[1;1], \: Y_n  \mathop{=}\limits^{\rm d} \Gamma[n;1].
   $$
   Next, we require knowledge of the standard fact that $Y_n/(Y_n + \tilde{X}_n)  \mathop{=}\limits^{\rm d}  {\rm B}[n,1]$.
   Furthermore (see e.g.~\cite[Exercises 4.3 q.1]{Fo10}), for $x  \mathop{=}\limits^{\rm d} {\rm B}[\alpha+\beta,\gamma]$,
  $y  \mathop{=}\limits^{\rm d} {\rm B}[\alpha ,\beta]$, we have that $xy   \mathop{=}\limits^{\rm d} {\rm B}[\alpha ,\beta+\gamma]$, which tells us
  that with $b_n  \mathop{=}\limits^{\rm d}  {\rm B}[n,1]$ we have $\prod_{n=2}^N b_n  \mathop{=}\limits^{\rm d}  {\rm B}[2,N-1]$.
   
 \end{proof}

 Dividing both sides of (\ref{3.1f}) 
 by $N$ we see that the $N \to \infty$ is well defined since $N {\rm B}[2,N-1] \to \Gamma[2,1]$. 
 After scaling the GinUE matrix $G \mapsto G/\sqrt{N}$ so that the leading eigenvalue support is the unit disk,
an analogous
 limit formula has been extended from $z_1 = 0$ to any $z_1 = w$, $|w| < 1$ in \cite{BD20}. Thus
\begin{equation}\label{3.1g}
 { \mathcal O_{11} \Big |_{z_1=w_1}  \over N (1 - |w_1|^2) }
 \mathop{\to }\limits^{\rm d}  {1 \over  \Gamma [2,1]}.
 \end{equation} 
 In words, with $ \mathcal O_{11} $ corresponding to the condition number, one has that the instability of the spectrum is of order $N$
 and is more stable towards the edge. Another point of interest is that the PDF for $1/    \Gamma [2,1]$ is $\chi_{t > 0} e^{-1/t}/t^3$, which
 is heavy tailed, telling us that only the zeroth and first integer moments are well defined. We remark that limit theorems of the universal form (\ref{3.1g}) have been proved in the case of the complex spherical ensemble
 of \S 2.5, and for a sub-block of a Haar distributed unitary matrix \cite{Du21}; see also \cite{NT18}.
 
 The $1/t^3$ tail implied by (\ref{3.1g}) has been exhibited from another viewpoint in the work of Fyodorov \cite{Fy18}. There the joint PDF for the overlap non-orthogonality $\mathcal O_{jj} -1$, and the eigenvalue position $z_j$, was computed for finite $N$. The global scaled limit of this quantity, $\mathcal P^{\rm g}(t,w)$ say, was evaluated as
 \cite[Eq.~(2.24)]{Fy18}
 \begin{equation}\label{3.1g+}
 \mathcal P^{\rm g}(t,w) =
 {(1 - |w|^2)^2 \over \pi t^3}
 e^{-(1-|w|^2)/t}, \quad |w|<1.
 \end{equation}
 Note that for the first moment in $t$ this gives
 \begin{equation}\label{3.1gE} 
 \int_0^\infty t \mathcal  P^{\rm g}(t,w) \, dt = {1 \over \pi} (1 - |w|^2),
 \end{equation}
 in keeping with a prediction from 
 \cite{CM98,CM00}. This was first proved in  \cite{WS15}.

 An explicit formula for the large $N$ form of the average of the overlap (\ref{3.1b}) in the off diagonal case (say $(i,j)=(1,2)$),
 with  the GinUE matrix scaled $G \mapsto G/\sqrt{N}$, and 
 conditioned on $z_1 = w_1$, $z_2 = w_2$ with $|w_1|, |w_2| < 1$  is also known \cite{CM98,CM00,BD20,ATTZ20,CR22}
\begin{equation}\label{3.1h} 
\left \langle \mathcal O_{12} \Big |_{z_1 = w_1, z_2 = w_2} \right \rangle \mathop{\sim}\limits_{N \to \infty} -
{1 \over N} {1 - w_1 \bar{w}_2 \over | w_1 - w_2|^4} \bigg (
{1 - (1 + N |w_1 - w_2|^2) e^{- N | w_1 - w_2|^2} \over 1 - e^{- N | w_1 - w_2|^2} } \bigg ).
\end{equation}
This formula is uniformly valid down to the scale $N | w_1 - w_2| = {\rm O}(1)$.
The large $N$ form of the average value of the diagonal overlap for products of $M$ global scaled GinUE matrices has been considered in \cite{BSV17,BNST17}, with the result
\begin{equation}\label{3.1k}
\lim_{N \to \infty}{1 \over N} \left \langle  \mathcal O_{11} \Big |_{z_1 = w} \right \rangle
= {1 \over \pi} |z|^{-2+2/M} (1 - |z|^{2/M}) \chi_{|z|<1};
\end{equation}
cf.~(\ref{7.4b}).
We remark that the average values of 
 $\mathcal{O}_{11}$ and $\mathcal{O}_{12}$, conditioned on multiple eigenvalues are shown to have a determinantal form in \cite[Th.~1]{ATTZ20}, thus exhibiting 
  an integrable structure for these eigenvector statistics.

For general complex non-Hermitian matrices with independently distributed entries of the form $\xi_{jk} + i {\zeta}_{jk}$, 
where each $\xi_{jk}, \tilde{\zeta}$ is an identically distributed zero mean real random  variable of unit variance (this class is sometimes referred to as complex non-Hermitian Wigner matrices; see e.g.~\cite{AGN21}), a line of research in relation to the normalised eigenvectors is to quantify the similarity with a complex vector drawn from the sphere embedded in $\mathbb C^N$ with uniform distribution. Recent references on this include \cite{LO20,LT20}. For random vectors on the sphere, there are bounds on the size of the components which rule out gaps is the spread of the size of the components, referred to in \cite{RV16} as no-gaps localisation. As noted in \cite[\S 1.1]{LT20}, the bi-unitary invariance of GinUE matrices implies individual eigenvectors are distributed uniformly at random from the complex sphere, and thus with probability close to one have that the $j$-th largest modulus of the entries is bounded above and below by a positive constant times $\sqrt{N-j}/N$, for $j$ in the range from $N/2$ up to $N$ minus a constant time $\log N$.

\subsection*{Acknowledgements}
This research is part of the program of study supported by the Australian Research Council Discovery Project grant DP210102887.
SB was partially supported by the National Research Foundation of Korea grant NRF-2019R1A5A1028324, Samsung Science and Technology Foundation grant SSTF-BA1401-51, and KIAS Individual via the Center for Mathematical Challenges at Korea Institute for Advanced Study grant SP083201.	
The authors gratefully acknowledge Gernot Akemann, Christophe Charlier, Yan Fyodorov, Grégory Schehr and Aron Wennman for helpful feedback on the first draft of this work.

\nopagebreak

\providecommand{\bysame}{\leavevmode\hbox to3em{\hrulefill}\thinspace}
\providecommand{\MR}{\relax\ifhmode\unskip\space\fi MR }
\providecommand{\MRhref}[2]{%
  \href{http://www.ams.org/mathscinet-getitem?mr=#1}{#2}
}
\providecommand{\href}[2]{#2}

 \end{document}